\documentclass[letterpaper, 10pt,conference,final]{IEEEtran} 
\pdfoutput=1

\newif\ifDiffForRevision
\newif\ifIncludeTechnicalReport
\newif\ifUseLinks

\UseLinkstrue
\IncludeTechnicalReporttrue
\DiffForRevisionfalse
\ifCLASSOPTIONdraft
 
\fi

\usepackage[utf8]{inputenc}
\usepackage[T1]{fontenc}

\usepackage{mathptmx}
\usepackage{xspace}
\usepackage[usenames,dvipsnames,svgnames,table]{xcolor}
\usepackage[final,stretch=10,shrink=10]{microtype}
\usepackage{ifdraft}
\usepackage{tikz}
\usetikzlibrary{fit, positioning, backgrounds,snakes,arrows, arrows.meta, decorations.pathreplacing, decorations.markings,calc}
\usepackage[noend]{algpseudocode}
\usepackage{algorithm}
\usepackage{algorithmicx}
\usepackage{varwidth} %
\usepackage{soul} %
\usepackage{printlen}

\usepackage{sectodo}

\usepackage[compress]{cite} %
\usepackage{placeins} %

\usepackage{calc}

\usepackage{captcont}
\usepackage{caption}
\DeclareCaptionLabelSeparator{dot}{.\enspace}
\clearcaptionsetup{figure}
\captionsetup[figure]{format=plain,font={small,up},labelfont={bf},labelsep=dot}

\usepackage[thmmarks]{ntheorem}
\usepackage{macros}
\usepackage{amsmath}
\usepackage{amssymb}

\usepackage[english]{babel}
\usepackage[final,breaklinks=true]{hyperref}

\usepackage{etoolbox}

\appto\UrlBreaks{\do\-}  %

\usepackage{color}
\definecolor{figurehighlight1}{RGB}{190, 0, 0}
\definecolor{figurehighlight2}{RGB}{0, 0, 190}

\definecolor{anthrazit}{RGB}{ 50, 50, 50}
\definecolor{mittelblau}{RGB}{0,  65,  145}
\definecolor{hellblau}{RGB}{  0,  190, 255}
\definecolor{dunkelrot}{RGB}{  190,  0, 0}
\definecolor{kommentare}{RGB}{0, 99, 0}
\definecolor{gelb}{RGB}{222, 173, 0}
\definecolor{linkcolor}{rgb}{0,0,0.5}

\ifUseLinks
  \hypersetup{
    colorlinks=true,
    linkcolor=linkcolor,
    anchorcolor=linkcolor,
    citecolor=linkcolor,
    filecolor=linkcolor,
    menucolor=linkcolor,
    runcolor=linkcolor,
    urlcolor=linkcolor,
  }
\else
  \hypersetup{draft, colorlinks=false
  }
\fi

\usepackage{linegoal} %
\usepackage{style}

\usepackage[outline]{contour}
\usepackage{mathtools}
\usepackage{graphicx}

\def\protostep#1{\resizebox{!}{0.8\baselineskip}{\begin{tikzpicture}[baseline={([yshift=-0.5pt]O.base)}] \node (O) [sharp corners,fill=blue,inner sep=1ex]{\color{white}\textbf{\refstepcounter{protostep}\theprotostep\label{protostep:#1}}};\end{tikzpicture}}}
\def\refprotostep#1{\resizebox{!}{0.6\baselineskip}{\begin{tikzpicture}[baseline={([yshift=-1.5pt]O.base)}] \node (O) [draw,sharp corners]{\ref{protostep:#1}};\end{tikzpicture}}}

\ifdraft{
  \usepackage[scrtime]{prelim2e}
  \usepackage{showlabels}
  \pagestyle{plain}
  \ifdefined\fastmode
    \usepackage{comment}
    \excludecomment{figure}
    
  \fi
}{}

\begin{document}
\hyphenation{Brow-serID}
\hyphenation{in-fra-struc-ture}
\hyphenation{brow-ser}
\hyphenation{doc-u-ment}
\hyphenation{Chro-mi-um}
\hyphenation{meth-od}
\hyphenation{sec-ond-ary}
\hyphenation{Java-Script}
\hyphenation{Mo-zil-la}
\hyphenation{post-Mes-sage}

\title{An Extensive Formal Security Analysis of the OpenID Financial-grade API}

 \author{
   \IEEEauthorblockN{Daniel Fett}
   \IEEEauthorblockA{
     yes.com AG\\
     \small\texttt{mail@danielfett.de} 
   }
   \and
   \IEEEauthorblockN{Pedram Hosseyni}
   \IEEEauthorblockA{
     University of Stuttgart,
     Germany\\
      \small\texttt{pedram.hosseyni@sec.uni-stuttgart.de}
   }
   \and
   \IEEEauthorblockN{Ralf K\"usters}
   \IEEEauthorblockA{
     University of Stuttgart,
     Germany\\
      \small\texttt{ralf.kuesters@sec.uni-stuttgart.de}
   }
 }

\maketitle

\ifIncludeTechnicalReport
  \pagestyle{plain}
\fi

\ifdraft{
\listoftodos

\tableofcontents

\section*{Target Structure and Page Limits}
As discussed in skype meeting on Sep 7, but restored from memory since the document was not saved:
\begin{enumerate}
\item Introduction (2 Seiten mit Titel und Abstract)
\item OAuth and New Defense Mechanisms (2.5 Seiten)
\item
  \begin{enumerate}
  \item OAuth [short recap]
  \item PKCE
  \item Token Binding
  \end{enumerate}
\item FAPI (mit Varianten und Grafik) (2.5 Seiten)
\item Attacks (3.5 Seiten)
\item The FKS model (1.5 Seiten --- evtl. in Appendix?)
\item Analysis (2.5 Seiten)
\item
  \begin{enumerate}
  \item Model
  \item Security Properties
  \item Proof
  \end{enumerate}
\item Conclusion (0.5 Seiten) --- Bis hier max 13 Seiten
\item References ()
\item ---  max 18 Seiten

\end{enumerate}

}{ }

\insertExplanationForDiff

\begin{abstract}
  Forced by regulations and industry demand, banks worldwide are working to open their customers' online banking accounts to third-party services via web-based APIs. By using these so-called \emph{Open Banking} APIs, third-party companies, such as FinTechs, are able to read information about and initiate payments from their\highlightIfDiffMinor{users'}bank accounts. Such access to financial data and resources needs to meet particularly high security requirements to protect customers.

One of the most promising standards in this segment is the \emph{OpenID Financial-grade API (FAPI)}, currently under development in an open process by the OpenID Foundation and backed by large industry partners. The FAPI is a profile of OAuth~2.0 designed for high-risk scenarios and aiming to be secure against very strong attackers. To achieve this level of security, the FAPI employs a range of mechanisms that have been developed to harden OAuth~2.0, such as \emph{Code and Token Binding} (including mTLS and OAUTB), \emph{JWS Client Assertions}, and \emph{Proof Key for Code Exchange}. 

In this paper, we perform a rigorous, systematic formal analysis of the security of the FAPI, based on an existing comprehensive model of the web infrastructure---the \emph{Web Infrastructure Model (WIM)} proposed by Fett, K{\"u}sters, and Schmitz. To this end, we first develop a precise model of the FAPI in the WIM, including different profiles for read-only and read-write access, different flows, different types of clients, and different combinations of security features, capturing the complex interactions in a web-based environment. We then use our model of the FAPI to precisely define central security properties. In an attempt to prove these properties, we uncover partly severe attacks, breaking authentication, authorization, and session integrity properties. We develop mitigations against these attacks and finally are able to formally prove the security of\highlightIfDiffMinor{a fixed version of the FAPI.}

Although financial applications are high-stakes environments, this work is the first to formally analyze and, importantly, verify an Open Banking security profile.

By itself, this analysis is an important contribution to the development of the FAPI since it helps to define exact security properties and attacker models, and to avoid severe security risks before the first implementations of the standard go live.

Of independent interest, we also uncover weaknesses in the aforementioned security mechanisms for hardening OAuth 2.0. We illustrate that these mechanisms do not necessarily achieve the security properties they have been designed for.

\end{abstract}

\section{Introduction}

Delivering financial services has long been a field exclusive to
traditional banks. This has changed with the emergence of FinTech
companies that are expected to deliver more than 20\% of all financial
services\highlightIfDiffMinor{in}2020 \cite{pwc-fintech-report-2016}. Many FinTechs %
provide services that are based on access to a customers online
banking account information or on initiating payments from a customers
bank account.

For a long time, screen scraping has been the primary means of these
service providers to access the customer's data at the bank. Screen
scraping means that the customer enters online banking login
credentials at the service provider's website, which then uses this
data to log into the customer's online banking account by emulating a
web browser. The service provider then retrieves account information
(such as the balance or recent activities) and can trigger, for example,
a cash transfer, which may require the user to enter her second-factor
authentication credential (such as a TAN) at the service provider's web
interface.

Screen scraping is inherently insecure: first of all, the service
provider gets to know all login credentials, including the
second-factor authentication of the customer. Also, screen scraping is
prone to errors, for example, when the website of a bank changes.

Over the last years, the terms \emph{API banking} and \emph{Open
  Banking} have emerged to mark the introduction of standardized
interfaces to financial institutions' data. These interfaces enable
third parties, in particular FinTech companies, to access users' bank
account information and initiate payments through well-defined APIs.
All around the world, API banking is being promoted by law or by
industry demand: In Europe, the \emph{Payment Services Directive 2
  (PSD2)} regulation\highlightIfDiffMinor{mandates all}banks to introduce Open Banking
APIs\highlightIfDiffMinor{by}September 2019~\cite{eu-psd2}. The U.S.~Department of the %
Treasury recommends the implementation of such APIs as
well~\cite{us-treasury-nonbank-financials}. In South Korea, India,
Australia, and Japan, open banking is being pushed by large financial
corporations~\cite{openid-uk-open-banking}.

One important open banking standard currently under
development for this scenario is the \emph{OpenID Financial-grade API
  (FAPI)}.\footnote{In its current form, the FAPI does not (despite its
  name) define an API itself, but defines a security profile for the
  access to APIs.} The FAPI~\cite{fapi-ceb0f82} is a 
  profile\highlightIfDiffMinor{(i.e., a set of concrete protocol flows
  with extensions)}of the \emph{OAuth~2.0
  Authorization Framework} and the identity layer \emph{OpenID
  Connect} to provide a secure authorization and authentication scheme
for high-risk scenarios. The FAPI is under development at the OpenID
Foundation and supported by many large corporations, such as Microsoft
and the largest Japanese consulting firm, Nomura Research Institute.
The OpenID Foundation is also cooperating with other banking
standardization groups: The UK Open Banking Implementation Entity,
backed by nine major UK banks, has 
adopted the FAPI security profile.

The basic idea behind the FAPI is as follows: The owner of the bank
account (\emph{resource owner}, also called user in what follows)
visits some website or uses an app which provides some financial
service. The website or app is called a \emph{client} in the FAPI
terminology. The client redirects the user to the \emph{authorization
  server}, which is typically operated by the bank. The authorization
server asks for the user's bank account credentials. The user is
then redirected back to the client with some token. The client uses
this token to obtain bank account information or initiate a payment at
the \emph{resource server}, which is typically also operated by the
bank.

The FAPI aims to be\highlightIfDiffMinor{secure against}much stronger attackers than its %
foundations, OAuth~2.0 and OpenID Connect: the FAPI assumes that
sensitive tokens leak to an attacker through the user's browser or
operating system, and that endpoint URLs can be
misconfigured. On the one hand, both assumptions are well motivated by
real-world attacks and the high stakes nature of the environment where
the FAPI is to be used. On the other hand, they directly break the
security of OAuth~2.0 and OpenID Connect.

To provide security against such strong attackers, the FAPI employs a
range of OAuth 2.0 security extensions beyond those used in plain OAuth~2.0 and
OpenID Connect: the FAPI uses the so-called Proof Key for Code
Exchange (PKCE)\footnote{Pronounced \emph{pixie}, RFC~7636.} extension
to prevent unauthorized use of tokens. For client authentication
towards the authorization server, the FAPI employs \emph{JWS Client
  Assertions} or \emph{mutual TLS}. Additionally, \emph{OAuth token
  binding}\footnote{\url{https://tools.ietf.org/html/draft-ietf-oauth-token-binding-07}}
or \emph{certificate-bound access
  tokens}\footnote{\url{https://tools.ietf.org/html/draft-ietf-oauth-mtls-11}}
can be used as holder-of-key mechanisms. To introduce yet another new
feature, the FAPI is the first standard to make use of the so-called
JWT Secured Authorization Response Mode (JARM).

The FAPI consists of two main so-called \emph{parts}, here also called
modes, that stipulate different security profiles for read-only access
to resource servers (e.g., to retrieve bank account information) and
read-write access (e.g., for payment initiation). Both modes can be
used by \emph{confidential} clients, i.e., clients that can store and
protect secrets (such as web servers), and by \emph{public} clients
that cannot securely store secrets, such as JavaScript browser
applications. Combined with the new security features, this gives rise
to many different settings and configurations in which the FAPI can
run (see also Figure~\ref{fig:fapi-overview}).

This, the expected wide adoption, the exceptionally strong attacker
model, and the new security features make the FAPI a particularly
interesting, challenging, and important subject for a detailed
security analysis.  While the security of (plain) OAuth~2.0 and OpenID
Connect has been studied formally and informally many times before
\cite{FettKuestersSchmitz-CCS-2016,FettKuestersSchmitz-CSF-2017,Kumar-OAuth-2012,BansalBhargavanMaffeis-CSF-2012,BansalBhargavanetal-JCS-2014,Wangetal-USENIX-Explicating-SDKs-2013,PaiSharmaKumarPaiSingh-2011,ChariJutlaRoy-IACR-2011,SantsaiBeznosov-CCS-2012-OAuth,LiMitchell-ISC-2014,Yangetal-AsiaCCS-2016,Shernanetal-DIMVA-2015,Chenetal-2014,ShehabMohsen-2014,LiMitchell-DIMVA-2016,MladenovMainkaKrautwaldFeldmannSchwenk-OpenIDConnect-arXiv-2016},
there is no such analysis for the FAPI---or any other open banking
API---so far. In particular, there are no results in the strong
attacker model adopted for the FAPI, and there has been no formal
security analysis of the additional OAuth security mechanisms employed
by the FAPI (PKCE, JWS Client Assertions, mTLS Client Authentication,
OAuth Token Binding, Certificate-Bound Access Tokens, JARM), which is
of practical relevance in its own right.

In this paper, we therefore study the security of the FAPI in-depth,
including the OAuth security extensions. Based on a detailed formal
model of the web, we formalize the FAPI with its various 
configurations as well as its security properties. We discover four
previously unknown and severe attacks, propose fixes, and prove the
security of the fixed protocol based on our formal model of the FAPI,
again considering the various  configurations in which the
FAPI can run. Importantly, this also sheds light on  new OAuth
2.0 security extensions.  In detail, our contributions are as follows:

\subsubsection*{Contributions of this Paper} We build a
\textbf{detailed formal model of the FAPI} based on a comprehensive formal model of
the web infrastructure proposed by Fett et al.~in
\cite{FettKuestersSchmitz-SP-2014}, which we refer to as the Web
Infrastructure Model (WIM). The WIM has been successfully used to find
vulnerabilities in and prove the security of several web applications
and
standards~\cite{FettKuestersSchmitz-CSF-2017,FettKuestersSchmitz-CCS-2016,FettKuestersSchmitz-CCS-2015,FettKuestersSchmitz-ESORICS-BrowserID-Primary-2015,FettKuestersSchmitz-SP-2014}. 
It captures a wide set of web features from DNS to JavaScript in
unrivaled detail and comprehensiveness. In particular, it accounts for
the intricate inner workings of web browsers and their interactions
with the web environment. The WIM is ideally suited to identify
logical flaws in web protocols, detect a range of standard web
vulnerabilities (like cross-site request forgery, session fixation,
misuse of certain web browser features, etc.), and even to find new
classes of web attacks.

Based on the generic descriptions of web servers in the WIM, our
models for FAPI clients and authorization servers contain all
important features currently proposed in the FAPI standards. This
includes the flows from both parts of the FAPI, as well as the
different options for client authentication, holder-of-key mechanisms,
and token binding mentioned above. 

Using this model of the FAPI, we define precise \textbf{security
  properties} for authorization, authentication, and session
integrity. 
Roughly speaking, the 
authorization property requires that an attacker is unable to access
the resources of another user at a bank, or act on that user's behalf
towards the bank.
Authentication means that
an attacker is unable to log in at a client using the identity of
another user. 
Session integrity means that an attacker is unable
to force a user to be logged in at a client under the attackers
identity, or force a user to access (through the client) the
attacker's resources instead of the user's own resources (session
fixation).

During our first attempts to prove these properties, we
\textbf{discovered four unknown attacks} on the FAPI. With these
attacks, adversaries can gain access to the bank account of a user,
break session integrity,
and, interestingly, circumvent certain OAuth security extensions, such
as PKCE and Token Binding, employed by the FAPI.

We notified the OpenID FAPI Working Group of the attacks and
vulnerabilities found by our analysis and are working together with
them to fix the standard. To this end, we first \textbf{developed
  mitigations against the vulnerabilities}. We then, as another main
contribution of our work and to support design decisions during the
further development of the FAPI, implemented the fixes in our formal
model and provided the \textbf{first formal proof of the security of
  the FAPI}\highlightIfDiffMinor{(with our fixes
applied)}within our model of the FAPI, including all configurations
of the FAPI and the various ways in which the new OAuth security
extensions are employed in the FAPI (see
Figure~\ref{fig:fapi-overview}). This makes the FAPI the only open banking API to
enjoy a thorough and detailed formal security analysis. 

Our findings also show that (1) several \textbf{OAuth 2.0 security
  extensions} do not necessarily achieve the security properties they
have been designed for and that (2) combining these extensions in a
secure way is far from trivial.  These results are relevant for all
web applications and standards which employ such extensions.

\subsubsection*{Structure of this Paper} We first, in
Section~\ref{sec:oauth-and-new-defence-mechanisms}, recall OAuth~2.0
and OpenID Connect as the foundations of the FAPI. We also introduce
the new defense mechanisms that set the FAPI apart from
``traditional'' OAuth~2.0 and OpenID Connect flows. This sets the
stage for Section~\ref{sec:fapi} where we go into the details of the
FAPI and explain its design and features. In Section~\ref{sec:att}, we
present the attacks on the FAPI (and the new security mechanisms it
uses), which are the results of our initial proof attempts, and also
present our proposed fixes. The model of the FAPI and the analysis are
outlined in Section~\ref{sec:analysis}, along with a high-level
introduction to the Web Infrastructure Model we use as the basis for
our formal model and analysis of the FAPI. We conclude in
Section~\ref{sec:conclusion-outlook}.\chooseReferenceTRPaper{
  Full details and proofs are provided in the appendices.
}{
The appendix contains further
details. Full details and proofs are provided in our technical
report~\cite{FettHosseyniKuesters-TR-FAPI-2018}.
}

\section{OAuth and New Defense Mechanisms} \label{sec:oauth-and-new-defence-mechanisms}

The \emph{OpenID Financial-grade API} builds upon the OAuth~2.0
Authorization Framework~\cite{rfc6749-oauth2}. Compared to the
original OAuth~2.0 protocol, the FAPI aims at providing a much higher
degree of security. For achieving this, the FAPI security profiles
incorporate mechanisms defined in \emph{OpenID
  Connect}~\cite{openid-connect-core-1.0} (which itself builds upon
OAuth~2.0), and importantly, security extensions for OAuth
2.0 developed only recently by the IETF and the OpenID
Foundation.

In the following, we give a brief overview of both OAuth~2.0 and
OpenID Connect, and their security extensions used (among others) within the FAPI, namely
\emph{Proof Key for Code Exchange}, 
\emph{JWS Client Assertions},
\emph{OAuth 2.0 Mutual TLS for
  Client Authentication and Certificate Bound Access Tokens},
\emph{OAuth 2.0 Token Binding} and the \emph{JWT Secured Authorization
  Response Mode}.  The FAPI itself is presented in Section~\ref{sec:fapi}.

\subsection{Fundamentals of OAuth 2.0 and OpenID Connect}\label{sec:fundamentalsOAuthOIDC}

OAuth 2.0 and OpenID Connect are widely used for various
authentication and authorization tasks. In what follows, we first
explain OAuth 2.0 and then briefly OpenID Connect, which is based on
OAuth 2.0.

\subsubsection{OAuth 2.0} \label{subsection:oauth}

On a high level, OAuth~2.0 allows a \emph{resource owner}, or user, to
enable a \emph{client}, a website or an application, to access her
resources at some \emph{resource server}. In order for the user to
grant the client access to her resources, the user has to authenticate
herself at an \emph{authorization server}.

For example, in the context of the FAPI, resources include the user's
account information (like balance and previous transactions) at her
bank or the initiation of a payment transaction (cash transfer). The
client can be a FinTech company which wants to provide a 
financial service to the user via access to the user's bank
account. More specifically, the client might be the website of such a
company (\emph{web server client}) or the company's app on the
user's device. The resource and authorization servers would typically
be run by the user's bank. One client can make use of several
authorization and resource servers.

\begin{figure}[bt]
  \centering  
          \begin{tikzpicture}[]
        \pgfdeclarelayer{arrows}
        \pgfdeclarelayer{groups}
        \pgfdeclarelayer{markers}
        \pgfsetlayers{groups,arrows,main,markers}

        \matrix [column sep={0.15\textwidth,between origins}, row sep=0.1ex]
        {
        \node[annex_matrix_node,inner sep=0,outer sep=0](pos-0-0){}; &\node[annex_matrix_node,inner sep=0,outer sep=0](pos-1-0){}; &\node[annex_matrix_node,inner sep=0,outer sep=0](pos-2-0){};\node[annex_matrix_dummy_height,minimum height=5ex,anchor=center]{};\node[annex_matrix_dummy_height,minimum height=5ex,anchor=center]{};\node[annex_matrix_dummy_height,minimum height=5ex,anchor=center]{};\\
\node[annex_matrix_node,inner sep=0,outer sep=0](pos-0-1){}; &\node[annex_matrix_node,inner sep=0,outer sep=0](pos-1-1){}; &\node[annex_matrix_node,inner sep=0,outer sep=0](pos-2-1){};\node[annex_matrix_dummy_height,minimum height=4ex,anchor=south,yshift=-1ex]{};\\
\node[annex_matrix_node,inner sep=0,outer sep=0](pos-0-2){}; &\node[annex_matrix_node,inner sep=0,outer sep=0](pos-1-2){}; &\node[annex_matrix_node,inner sep=0,outer sep=0](pos-2-2){};\node[annex_matrix_dummy_height,minimum height=4ex+2ex+2ex,anchor=north,yshift=3ex]{};\\
\node[annex_matrix_node,inner sep=0,outer sep=0](pos-0-3){}; &\node[annex_matrix_node,inner sep=0,outer sep=0](pos-1-3){}; &\node[annex_matrix_node,inner sep=0,outer sep=0](pos-2-3){};\node[annex_matrix_dummy_height,minimum height=4ex+2ex,anchor=north,yshift=3ex]{};\\
\node[annex_matrix_node,inner sep=0,outer sep=0](pos-0-4){}; &\node[annex_matrix_node,inner sep=0,outer sep=0](pos-1-4){}; &\node[annex_matrix_node,inner sep=0,outer sep=0](pos-2-4){};\node[annex_matrix_dummy_height,minimum height=4ex,anchor=south,yshift=-1ex]{};\\
\node[annex_matrix_node,inner sep=0,outer sep=0](pos-0-5){}; &\node[annex_matrix_node,inner sep=0,outer sep=0](pos-1-5){}; &\node[annex_matrix_node,inner sep=0,outer sep=0](pos-2-5){};\node[annex_matrix_dummy_height,minimum height=4ex+2ex,anchor=north,yshift=3ex]{};\\
\node[annex_matrix_node,inner sep=0,outer sep=0](pos-0-6){}; &\node[annex_matrix_node,inner sep=0,outer sep=0](pos-1-6){}; &\node[annex_matrix_node,inner sep=0,outer sep=0](pos-2-6){};\node[annex_matrix_dummy_height,minimum height=4ex+2ex,anchor=north,yshift=3ex]{};\\
\node[annex_matrix_node,inner sep=0,outer sep=0](pos-0-7){}; &\node[annex_matrix_node,inner sep=0,outer sep=0](pos-1-7){}; &\node[annex_matrix_node,inner sep=0,outer sep=0](pos-2-7){};\node[annex_matrix_dummy_height,minimum height=4ex+2ex,anchor=north,yshift=3ex]{};\\
\node[annex_matrix_node,inner sep=0,outer sep=0](pos-0-8){}; &\node[annex_matrix_node,inner sep=0,outer sep=0](pos-1-8){}; &\node[annex_matrix_node,inner sep=0,outer sep=0](pos-2-8){};\node[annex_matrix_dummy_height,minimum height=4ex+2ex,anchor=north,yshift=3ex]{};\\
\node[annex_matrix_node,inner sep=0,outer sep=0](pos-0-9){}; &\node[annex_matrix_node,inner sep=0,outer sep=0](pos-1-9){}; &\node[annex_matrix_node,inner sep=0,outer sep=0](pos-2-9){};\node[annex_matrix_dummy_height,minimum height=5ex,anchor=center]{};\\
\node[annex_matrix_node,inner sep=0,outer sep=0](pos-0-10){}; &\node[annex_matrix_node,inner sep=0,outer sep=0](pos-1-10){}; &\node[annex_matrix_node,inner sep=0,outer sep=0](pos-2-10){};\node[annex_matrix_dummy_height,minimum height=5ex,anchor=center]{};\\
\node[annex_matrix_node,inner sep=0,outer sep=0](pos-0-11){}; &\node[annex_matrix_node,inner sep=0,outer sep=0](pos-1-11){}; &\node[annex_matrix_node,inner sep=0,outer sep=0](pos-2-11){};\node[annex_matrix_dummy_height,minimum height=4ex+2ex,anchor=north,yshift=3ex]{};\\
\node[annex_matrix_node,inner sep=0,outer sep=0](pos-0-12){}; &\node[annex_matrix_node,inner sep=0,outer sep=0](pos-1-12){}; &\node[annex_matrix_node,inner sep=0,outer sep=0](pos-2-12){};\node[annex_matrix_dummy_height,minimum height=4ex+2ex,anchor=north,yshift=3ex]{};\\
\node[annex_matrix_node,inner sep=0,outer sep=0](pos-0-13){}; &\node[annex_matrix_node,inner sep=0,outer sep=0](pos-1-13){}; &\node[annex_matrix_node,inner sep=0,outer sep=0](pos-2-13){};\node[annex_matrix_dummy_height,minimum height=5ex,anchor=center]{};\node[annex_matrix_dummy_height,minimum height=5ex,anchor=center]{};\node[annex_matrix_dummy_height,minimum height=5ex,anchor=center]{};\\
};

\node[name=StartParty_9_0,annex_start_party_box,] at (pos-0-0) {Browser (B)};

\node[name=StartParty_10_0,annex_start_party_box,] at (pos-1-0) {Client (C)};

\node[name=StartParty_11_0,annex_start_party_box,] at (pos-2-0) {Authorization Server (AS)};

\node[name=EndParty_20_0,annex_end_party_box,] at (pos-2-9) {Authorization Server (AS)};

\node[name=StartParty_21_0,annex_start_party_box,] at (pos-2-10) {Resource Server (RS)};

\node[name=EndParty_24_0,annex_end_party_box,] at (pos-1-13) {Client (C)};

\node[name=EndParty_25_0,annex_end_party_box,] at (pos-2-13) {Resource Server (RS)};

\node[name=EndParty_26_0,annex_end_party_box,] at (pos-0-13) {Browser (B)};

\begin{pgfonlayer}{arrows}%

\draw[annex_lifeline] (pos-0-0) -- (pos-0-13);

\draw[annex_lifeline] (pos-1-0) -- (pos-1-13);

\draw[annex_lifeline] (pos-2-0) -- (pos-2-9);

        \draw[annex_http_request] (pos-0-1) to node [annex_arrow_text,above=2.6pt,anchor=base](HTTPRequest_12_0){\setcounter{protostep}{0}\protostep{oauth-code:auth-req-1-start} \contour{white}{POST \nolinkurl{/start}}}  (pos-1-1);

        \draw[annex_http_response] (pos-1-2) to node [annex_arrow_text,above=2.6pt,anchor=base](HTTPResponse_13_0){\setcounter{protostep}{1}\protostep{oauth-code:auth-req-1} \contour{white}{Response}} node [annex_arrow_text,below=8pt,anchor=base](HTTPResponse_13_1){\contour{white}{Redirect to AS \nolinkurl{/authorization_endpoint}}} node [annex_arrow_text,below=8pt+8pt,anchor=base](HTTPResponse_13_2){\contour{white}{(client\_id, redirect\_uri, state)}}  (pos-0-2);

        \draw[annex_http_request] (pos-0-3) to node [annex_arrow_text,above=2.6pt,anchor=base](HTTPRequest_14_0){\setcounter{protostep}{2}\protostep{oauth-code:auth-req-2} \contour{white}{GET \nolinkurl{/authorization_endpoint} \textit{(Authorization Request)}}} node [annex_arrow_text,below=8pt,anchor=base](HTTPRequest_14_1){\contour{white}{(client\_id, redirect\_uri, state)}}  (pos-2-3);

        \draw[annex_http_response,transform canvas={yshift=0.25ex}] (pos-2-4) to node [annex_arrow_text,above=2.6pt,anchor=base](HTTPResponseRequest_15_0){\setcounter{protostep}{3}\protostep{oauth-code:ro-authN} \contour{white}{\textit{resource owner authenticates}}} (pos-0-4);
        \draw[annex_http_request,transform canvas={yshift=-0.25ex}] (pos-0-4) to  (pos-2-4);

        \draw[annex_http_response] (pos-2-5) to node [annex_arrow_text,above=2.6pt,anchor=base](HTTPResponse_16_0){\setcounter{protostep}{4}\protostep{oauth-code:auth-resp-1} \contour{white}{Response}} node [annex_arrow_text,below=8pt,anchor=base](HTTPResponse_16_1){\contour{white}{Redirect to C \nolinkurl{/redirect\_uri} (code, state)}}  (pos-0-5);

        \draw[annex_http_request] (pos-0-6) to node [annex_arrow_text,above=2.6pt,anchor=base](HTTPRequest_17_0){\setcounter{protostep}{5}\protostep{oauth-code:auth-resp-2} \contour{white}{GET \nolinkurl{/redirect_uri} \textit{(Authorization Response})}} node [annex_arrow_text,below=8pt,anchor=base](HTTPRequest_17_1){\contour{white}{(code, state)}}  (pos-1-6);

        \draw[annex_http_request] (pos-1-7) to node [annex_arrow_text,above=2.6pt,anchor=base](HTTPRequest_18_0){\setcounter{protostep}{6}\protostep{oauth-code:c-token-req} \contour{white}{POST \nolinkurl{/token_endpoint} \textit{(Token Request)}}} node [annex_arrow_text,below=8pt,anchor=base](HTTPRequest_18_1){\contour{white}{(code, client\_id, [client authentication])}}  (pos-2-7);

        \draw[annex_http_response] (pos-2-8) to node [annex_arrow_text,above=2.6pt,anchor=base](HTTPResponse_19_0){\setcounter{protostep}{7}\protostep{oauth-code:c-token-resp} \contour{white}{Response}} node [annex_arrow_text,below=8pt,anchor=base](HTTPResponse_19_1){\contour{white}{(access token)}}  (pos-1-8);

\draw[annex_lifeline] (pos-2-10) -- (pos-2-13);

        \draw[annex_http_request] (pos-1-11) to node [annex_arrow_text,above=2.6pt,anchor=base](HTTPRequest_22_0){\setcounter{protostep}{8}\protostep{oauth-code:resource-req} \contour{white}{GET \nolinkurl{/resource}}} node [annex_arrow_text,below=8pt,anchor=base](HTTPRequest_22_1){\contour{white}{(access token)}}  (pos-2-11);

        \draw[annex_http_response] (pos-2-12) to node [annex_arrow_text,above=2.6pt,anchor=base](HTTPResponse_23_0){\setcounter{protostep}{9}\protostep{oauth-code:resource-resp} \contour{white}{Response}} node [annex_arrow_text,below=8pt,anchor=base](HTTPResponse_23_1){\contour{white}{(resource)}}  (pos-1-12);

\end{pgfonlayer}

\begin{pgfonlayer}{markers}%

\end{pgfonlayer}
        \end{tikzpicture}
        
  \caption{Overview of the OAuth Authorization Code Flow}
  \label{fig:oauth-overview}
\end{figure}
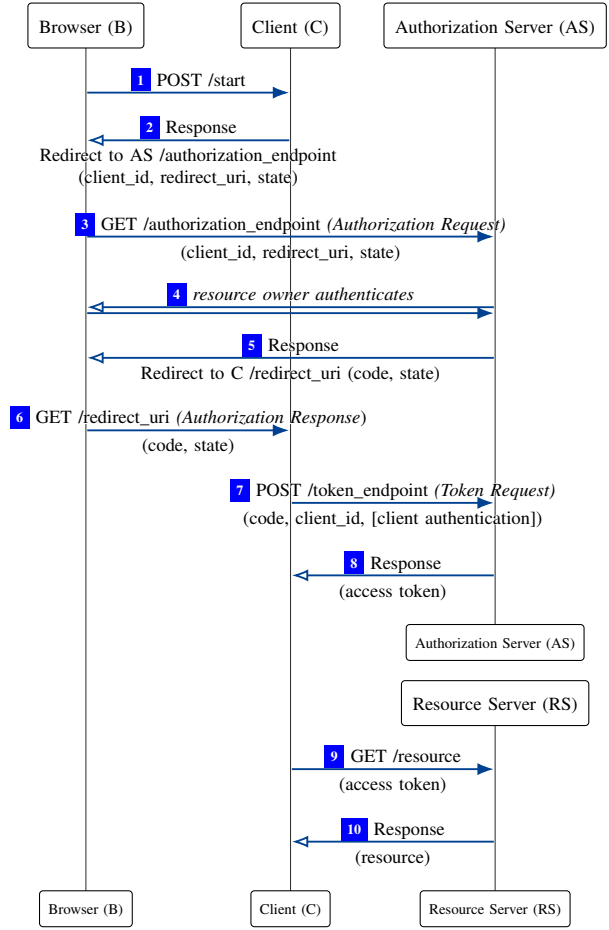

RFC 6749~\cite{rfc6749-oauth2} defines multiple modes of operation for
OAuth~2.0, so-called \emph{grant types}. We here focus on the
\emph{authorization code grant} since the other grant types are not
used in the FAPI. 

Figure~\ref{fig:oauth-overview} shows the authorization code grant,
which works as
follows: %
The user first visits the client's website or opens the client's app
on her smartphone and selects to log in or to give the client access
to her resources (Step~\refprotostep{oauth-code:auth-req-1-start}).
The client then redirects
the user to the so-called \emph{authorization endpoint} at the
authorization server (AS) in
Steps~\refprotostep{oauth-code:auth-req-1} and
\refprotostep{oauth-code:auth-req-2}. (Endpoints are URIs used in the
OAuth flow.) In this redirection, the client passes several parameters
to the AS, for example, the \emph{client id} which identifies the
client at the AS, a \emph{state} value that is used for CSRF
protection,\footnote{The state value is a nonce. The client later
  ensures that it receives the same nonce in the authorization
  response. Otherwise, an attacker could authenticate to the AS with
  his own identity and use the corresponding authorization response
  for logging in an honest user under the attacker's identity with a
  CSRF attack. This attack is also known as \emph{session
    swapping}.\label{footnote:CSRF}} a \emph{scope} parameter (not
shown in Figure~\ref{fig:oauth-overview}) that describes the
permissions requested by the client, and a redirection URI explained
below. Note that if the client's app is used, the redirection from the
app to the AS (Step~\refprotostep{oauth-code:auth-req-1}) is done by
opening the website of the AS in a browser window.
The AS authenticates the user (e.g., by the user entering username and
password) in Step~\refprotostep{oauth-code:ro-authN} and asks for her
consent to give the client access to her resources. The AS then
creates a so-called \emph{authorization code} (typically a nonce) and 
redirects the user back to the so-called \emph{redirection
  endpoint} of the client via the user's browser in
Steps~\refprotostep{oauth-code:auth-resp-1} and
\refprotostep{oauth-code:auth-resp-2}. (If the client's app is used, a
special redirect URI scheme, e.g., \nolinkurl{some-app://}, is used
which causes the operating system to forward the URI to the client's
app.) At the AS, one or more redirection endpoints for a client are
preregistered.\footnote{Without preregistration, a
  malicious client starting a login flow with the client id of an
  honest client could receive a code associated with the honest
  client.} In Step~\refprotostep{oauth-code:auth-req-1}, the
client chooses one of these preregistered URIs. The authorization
response (Step~\refprotostep{oauth-code:auth-resp-1}) is a redirection to this URI,
with the authorization code, the state value from the request, and
optionally further values appended as URI parameters.

When receiving the request resulting from the redirection
in Step~\refprotostep{oauth-code:auth-resp-2}, the client first checks that the
state value is the same as the one in the authorization request,
typically by looking it up in the user's session with the client. If
it is not the same, then the client suspects that an attacker tried to
inject an authorization code into the client's session (cross-site
request forgery, CSRF) and aborts the flow (see also
Footnote~\ref{footnote:CSRF}). Otherwise, the client now exchanges the
code for an \emph{access token} at the so-called \emph{token endpoint}
of the AS in Steps~\refprotostep{oauth-code:c-token-req} and
\refprotostep{oauth-code:c-token-resp}. For this purpose, the client
might be required to authenticate to the AS (see below). With this
access token, the client can finally access the resources at the
resource server (RS), as shown in
Steps~\refprotostep{oauth-code:resource-req} and
\refprotostep{oauth-code:resource-resp}.

The RS can use different methods to check the validity of
an access token presented by a client. The access token can, for
example, be a document signed by the AS containing all necessary information. Often, the access
token is not a structured document but a nonce. In this case, the
RS uses Token
Introspection~\cite{rfc7662-oauth-token-introspection}, i.e., it sends
the access token to the \emph{introspection endpoint} of the AS and
receives the information associated with the token from the AS.
An RS typically has only one (fixed) AS, which means that when
the RS receives an access token, it sends the introspection request to this
AS.

\subsubsection*{Public and Confidential
  Clients} \label{oauth-clients-pub-conf} Depending on whether a
client can keep long-term secrets, it is either called a \emph{public}
or a \emph{confidential} client. If the client is not able to maintain
secrets, as is typically the case for applications running on end-user
devices, the client is not required to authenticate itself at the
token endpoint of the AS. These kinds of clients are called
\emph{public} clients. Clients able to maintain secrets, such as web
server clients, must authenticate to the token endpoint (in
Step~\refprotostep{oauth-code:c-token-req} of
Figure~\ref{fig:oauth-overview}) and are called \emph{confidential}
clients.

For confidential clients, client authentication ensures that only a
legitimate client can exchange the authorization code for an access
token. OAuth~2.0 allows for several methods for client authentication
at the token endpoint, including sending a password or proving
possession of a secret \cite[Section 2.3]{rfc6749-oauth2}. 
For public clients, other measures are available, such as PKCE (see below), to obtain a sufficient level of security. 

\subsubsection{OpenID Connect}\label{sec:openidconnect}
OAuth~2.0 is built for \emph{authorization} only, i.e., the client
gets access to the resources of the user only if the user consented to
this access. It does not per se provide \emph{authentication}, i.e., proving
the identity of the user to the client. This is what OpenID Connect~\cite{openid-connect-core-1.0} was developed for. It adds an \emph{id
  token} to OAuth~2.0 which is issued by the AS and
contains identity information about the end-user. ID tokens can be
issued in the response from the authorization endpoint 
(Step~\refprotostep{oauth-code:auth-resp-1} of Figure~\ref{fig:oauth-overview})  and/or
at the token endpoint (Step~\refprotostep{oauth-code:c-token-resp} of Figure~\ref{fig:oauth-overview}).
They are signed by the AS and can be bound to other parameters of the response, such as the
hash of authorization codes or access
tokens. %
Therefore, they can also be used to protect responses against modification.

\subsection{Proof Key for Code Exchange}  \label{section-pkce}

The \emph{Proof Key for Code Exchange} (PKCE) extension (RFC 7636)\highlightIfDiffMinor{was}initially
created for OAuth public clients and
independently of the FAPI. Its goal is to protect against the
use of intercepted authorization codes. Before we explain how it works, we introduce the attack scenario against which PKCE should
protect according to RFC 7636.

This attack starts with the leakage of the
authorization code
after the browser receives it in the response from the
authorization endpoint (Step~\refprotostep{oauth-code:auth-resp-1}
of Figure~\ref{fig:oauth-overview}). A multitude of problems can lead to a leak of
the code, even if TLS is used to protect the network communication:
\begin{itemize}
\item On mobile operating systems, multiple apps can register
  themselves onto the same custom URI scheme (e.g.,
  \nolinkurl{some-app://redirection-response}). When receiving the
  authorization response, the operating system may forward the
  response (and the code) to a malicious app instead of the honest app
  (see \cite[Section~1]{rfc7636} and
  \cite[Section~8.1]{DennissBradley-RFC-2017}).
\item Mix-up attacks, in which a different AS is used than the client
  expects (see~\cite{FettKuestersSchmitz-CCS-2016} for details), can be
  used to leak an authorization code to a malicious server.
\item As highlighted in \cite{FettKuestersSchmitz-CSF-2017}, a
  Referer header can leak the code to an adversary.
\item The code can also appear in HTTP logs that can be disclosed
  (accidentally) to third parties or (intentionally) to
  administrators.
\end{itemize}
In a setting with a public client (i.e., without client authentication
at the token endpoint), an authorization code leaked to the attacker
can be redeemed directly by the attacker at the authorization server
to obtain an access token.

RFC 7636 aims to protect against such attacks even if not only the
authorization response leaks but also the authorization request as
well. Such leaks can happen, for example, from HTTP logs (Precondition
4b of Section 1 of RFC 7636) or unencrypted HTTP connections.

PKCE works as follows: Before sending the authorization request, the client creates a random value
called \emph{code verifier}. The client then creates the \emph{code
  challenge} by hashing the verifier\footnote{If it is assumed that the
  authorization request never leaks to the attacker, it is sufficient
  and allowed by RFC 7636 to use the verifier as the challenge, i.e.,
  without hashing.} and includes the challenge in the
authorization request (Step~\refprotostep{oauth-code:auth-req-1} of
Figure~\ref{fig:oauth-overview}). The AS associates the
generated authorization code with this challenge. Now, when the client redeems the
code in the request to the token endpoint
(Step~\refprotostep{oauth-code:c-token-req} of
Figure~\ref{fig:oauth-overview}), it includes the code verifier in the
token request. This message is sent directly to the AS and protected by TLS, which means that the verifier cannot be
intercepted. The idea is that if the authorization code leaked to the attacker, the attacker still cannot redeem the code to obtain the access token since he does not know the code verifier.

\subsection{Client Authentication using JWS Client Assertions} \label{cauthn-jws-client-assertion} %
As mentioned above, the goal of client authentication
 is to bind an authorization code to a certain
confidential client such that only this client can redeem the code at the AS. One
method for client authentication  is the use of JWS Client
Assertions \cite[Section 9]{openid-connect-core-1.0}, which requires proving possession of a key instead of
sending a password directly to the authorization server, as in plain
OAuth~2.0.

To this end, the client first generates a short document containing its
client identifier and the URI of the token endpoint. Now, depending on
whether the \emph{client secret} is a private (asymmetric) or a
symmetric key, the client either signs or MACs this document. It is then appended to the token request
(Step~\refprotostep{oauth-code:c-token-req} of
Figure~\ref{fig:oauth-overview}). As the document contains the URI of
the receiver, attacks in which the attacker tricks the client into
using a wrong URI are prevented, as the attacker cannot reuse the
document for the real endpoint (cf.
Section~\ref{sec:FAPImisconfiguredTEassumption}).
Technically, the short document is encoded as a JSON Web
Token (JWT)~\cite{rfc7519-jwt} to which its signature/MAC is attached to create
a so-called JSON Web Signature (JWS)~\cite{rfc7515-jws}.

\subsection{OAuth 2.0 Mutual TLS} \label{fapi:mTLS} %
\label{fapi:mTLS-CAuthN} \label{fapi:mTLS-TB} 

\emph{OAuth 2.0 Mutual TLS for Client Authentication and Certificate 
Bound Access Tokens} (mTLS)~\cite{ietf-oauth-mtls-09} provides a method for both client authentication
and token binding.

OAuth~2.0 Mutual TLS Client Authentication makes use of 
\emph{TLS
client authentication}\footnote{As noted in \cite{ietf-oauth-mtls-09}, Section 5.1  %
this extension supports all TLS versions with
certificate-based client authentication.} at the token endpoint (in
Step~\refprotostep{oauth-code:c-token-req} of
Figure~\ref{fig:oauth-overview}). In TLS client authentication, not
only the server authenticates to the client (as is common for TLS) but the client also authenticates to the server. To this
end, the client proves that it knows the private key belonging to a
certificate that is either (a) self-signed and preconfigured at the
respective AS or that is (b) issued for the respective client id by a predefined certificate authority within a public key infrastructure (PKI).

Token binding means binding an access token to a client such that only this
client is able to use the access token at the RS.
To achieve this, the AS associates the access token with
the certificate used by the client for the TLS connection to the token
endpoint. In the TLS connection  to the RS
(in Step~\refprotostep{oauth-code:resource-req} of
Figure~\ref{fig:oauth-overview}), the client then authenticates using the
same certificate. The RS
accepts the access token only if the client certificate is the one associated with the access token.\footnote{As mentioned above,
  the RS can read this information either directly from
  the access token if it is a signed document, or uses token
  introspection to retrieve the data from the AS.}

\subsection{OAuth 2.0 Token Binding} \label{fapi:OAUTB}

\emph{OAuth 2.0 Token Binding}
(OAUTB)~\cite{draft-ietf-oauth-token-binding-07} is used to
bind access tokens and/or authorization codes to certain TLS connections.
It is based on the \emph{Token Binding}
protocol~\cite{rfc8471,rfc8472,rfc8473,rfc5705} and\highlightIfDiffMinor{can be
used with all TLS versions.}In the following, we first sketch token binding in general before we explain  OAuth 2.0 Token Binding.

\subsubsection{Basics}
For simplicity of presentation, in the following, we assume that a
browser connects to a web server. The protocol remains the same if the
browser is replaced by another server. (In the context of OAuth 2.0,
in some settings in fact the client takes the role of the browser as explained below.)

At its core, token binding works as follows: When a web server
indicates (during TLS connection establishment) that it wants to use
token binding, the browser making the HTTP request over this TLS %
connection creates a public/private key pair for the web
server's origin. It then sends the public key to the server and proves
possession of the private key by using it to create a signature over a
value unique to the current TLS connection. Since the browser re-uses
the same key pair for future connections to the same origin, the web
server will be able to unambiguously recognize the browser in future
visits.

Central for the security of token binding is that the private key
remains secret inside the browser. To prevent replay attacks, the browser has to prove
possession of the private key by signing a value that is unique for
each TLS session. To this end, token binding uses the
\emph{Exported Keying Material} (EKM) of the TLS connection, a value
derived from data of the TLS handshake between the two participants,
as specified in \cite{rfc5705}. As long as at least one party follows
the protocol, the EKM will be unique for each TLS connection.

We can now illustrate the usage of token binding in the context of a
simplified protocol in which a browser $B$ requests a token from a
server $S$:
First, $B$ initiates a TLS connection to $S$, where $B$ and $S$ use TLS
extensions~\cite{rfc8472} to negotiate the use of token
binding and technical details thereof. Browser $B$ then creates a
public/private key pair $(k_{\mi{B},\mi{S}}, k'_{\mi{B},\mi{S}})$ for
the origin of $S$, unless such a key pair exists already. The public
key $k_{\mi{B},\mi{S}}$ (together with technical details about the key, such as its bit
length) is called \emph{Token Binding ID} (for the specific origin).

When sending the first HTTP request over the established TLS connection, $B$ includes in an HTTP header the
so-called \emph{Token Binding Message}:
\begin{align}\label{eq:firstTBMsg}
 \mathsf{TB\mhyphen{}Msg}[k_{\mi{B},\mi{S}}, \mathsf{sig}(\mi{EKM},k'_{\mi{B},\mi{S}})]
\end{align}
It contains both the Token Binding ID (i.e., essentially
$k_{\mi{B},\mi{S}}$) and the signed EKM value from the TLS connection,
as specified in~\cite{ietf-tokbind-https-17}. The server $S$ checks the
signature using $k_{\mi{B},\mi{S}}$ as included in this message and
then creates a token and associates it with the Token Binding ID as
the unique identifier of the browser.

When $B$ wants to redeem the token in a new TLS connection to $S$, $B$
creates a new Token Binding Message using the same Token Binding ID,
but signs the new EKM value:
\begin{align}\label{eq:secondTBMsg}
  \mathsf{TB\mhyphen{}Msg}[k_{\mi{B},\mi{S}}, \mathsf{sig}(\mi{\overline{EKM}}, k'_{\mi{B},\mi{S}})]
\end{align}
As the EKM values are unique to each TLS connection, $S$ concludes
that the sender of the message knows the private key of the Token
Binding ID, and as the sender used the same Token Binding ID as
before, the same party that requested the token in the first request
is using it now.

The above describes the simple situation that $B$
wants to redeem the token received from $S$ again at $S$, i.e., from the
same origin. In this case, we call the token binding message in
(\ref{eq:firstTBMsg}) a \emph{provided} token binding message. If $B$
wants to redeem the token received from $S$ at another origin, say at $C$,
then instead of just sending the provided token message in (\ref{eq:firstTBMsg}), $B$ would in
addition also send the so-called \emph{referred} token binding message, i.e., instead of (\ref{eq:firstTBMsg}) B would send 
\begin{equation}
  \begin{aligned}\label{eq:thirdTBMsg}
    \mathsf{TB\mhyphen{}prov\mhyphen{}Msg}[k_{\mi{B},\mi{S}}, \mathsf{sig}(\mi{EKM},k'_{\mi{B},\mi{S}})], \\ 
    \mathsf{TB\mhyphen{}ref\mhyphen{}Msg}[k_{\mi{B},\mi{C}}, \mathsf{sig}(\mi{EKM}, k'_{\mi{B},\mi{C}})].
  \end{aligned}
\end{equation}
Note that the EKM is the same in both messages, namely the EKM value
of the TLS connection between $B$ and $S$ (rather than between $B$ and
$C$, which has not happened yet anyway). Later when $B$ wants to redeem the token at $C$, $B$ would use
$k_{\mi{B},\mi{C}}$ in its (provided) token message to $C$.

\subsubsection{Token Binding for OAuth}

In the following, we explain how token binding is used in OAuth in the
case of app clients. The case of web server clients is discussed
below.

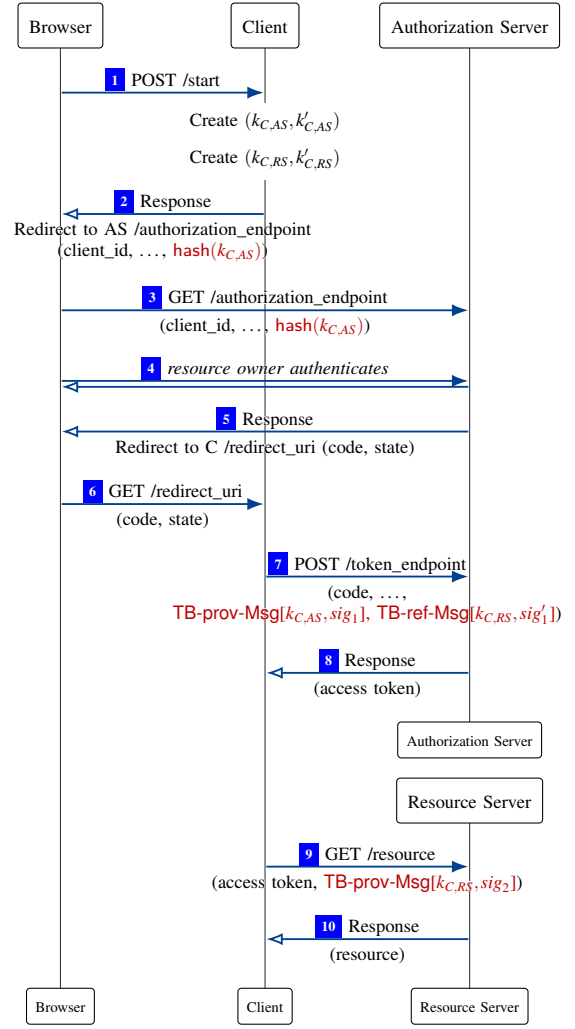
\begin{figure}[tbh]\label{fig:OAUTB}
   \centering  
           \begin{tikzpicture}[]
        \pgfdeclarelayer{arrows}
        \pgfdeclarelayer{groups}
        \pgfdeclarelayer{markers}
        \pgfsetlayers{groups,arrows,main,markers}

        \matrix [column sep={0.15\textwidth,between origins}, row sep=0.1ex]
        {
        \node[annex_matrix_node,inner sep=0,outer sep=0](pos-0-0){}; &\node[annex_matrix_node,inner sep=0,outer sep=0](pos-1-0){}; &\node[annex_matrix_node,inner sep=0,outer sep=0](pos-2-0){};\node[annex_matrix_dummy_height,minimum height=5ex,anchor=center]{};\node[annex_matrix_dummy_height,minimum height=5ex,anchor=center]{};\node[annex_matrix_dummy_height,minimum height=5ex,anchor=center]{};\\
\node[annex_matrix_node,inner sep=0,outer sep=0](pos-0-1){}; &\node[annex_matrix_node,inner sep=0,outer sep=0](pos-1-1){}; &\node[annex_matrix_node,inner sep=0,outer sep=0](pos-2-1){};\node[annex_matrix_dummy_height,minimum height=4ex,anchor=south,yshift=-1ex]{};\\
\node[annex_matrix_node,inner sep=0,outer sep=0](pos-0-2){}; &\node[annex_matrix_node,inner sep=0,outer sep=0](pos-1-2){}; &\node[annex_matrix_node,inner sep=0,outer sep=0](pos-2-2){};\node[annex_matrix_dummy_height,minimum height=3ex,anchor=center]{};\\
\node[annex_matrix_node,inner sep=0,outer sep=0](pos-0-3){}; &\node[annex_matrix_node,inner sep=0,outer sep=0](pos-1-3){}; &\node[annex_matrix_node,inner sep=0,outer sep=0](pos-2-3){};\node[annex_matrix_dummy_height,minimum height=3ex,anchor=center]{};\\
\node[annex_matrix_node,inner sep=0,outer sep=0](pos-0-4){}; &\node[annex_matrix_node,inner sep=0,outer sep=0](pos-1-4){}; &\node[annex_matrix_node,inner sep=0,outer sep=0](pos-2-4){};\node[annex_matrix_dummy_height,minimum height=4ex+2ex+2ex,anchor=north,yshift=3ex]{};\\
\node[annex_matrix_node,inner sep=0,outer sep=0](pos-0-5){}; &\node[annex_matrix_node,inner sep=0,outer sep=0](pos-1-5){}; &\node[annex_matrix_node,inner sep=0,outer sep=0](pos-2-5){};\node[annex_matrix_dummy_height,minimum height=4ex+2ex,anchor=north,yshift=3ex]{};\\
\node[annex_matrix_node,inner sep=0,outer sep=0](pos-0-6){}; &\node[annex_matrix_node,inner sep=0,outer sep=0](pos-1-6){}; &\node[annex_matrix_node,inner sep=0,outer sep=0](pos-2-6){};\node[annex_matrix_dummy_height,minimum height=4ex,anchor=south,yshift=-1ex]{};\\
\node[annex_matrix_node,inner sep=0,outer sep=0](pos-0-7){}; &\node[annex_matrix_node,inner sep=0,outer sep=0](pos-1-7){}; &\node[annex_matrix_node,inner sep=0,outer sep=0](pos-2-7){};\node[annex_matrix_dummy_height,minimum height=4ex+2ex,anchor=north,yshift=3ex]{};\\
\node[annex_matrix_node,inner sep=0,outer sep=0](pos-0-8){}; &\node[annex_matrix_node,inner sep=0,outer sep=0](pos-1-8){}; &\node[annex_matrix_node,inner sep=0,outer sep=0](pos-2-8){};\node[annex_matrix_dummy_height,minimum height=4ex+2ex,anchor=north,yshift=3ex]{};\\
\node[annex_matrix_node,inner sep=0,outer sep=0](pos-0-9){}; &\node[annex_matrix_node,inner sep=0,outer sep=0](pos-1-9){}; &\node[annex_matrix_node,inner sep=0,outer sep=0](pos-2-9){};\node[annex_matrix_dummy_height,minimum height=4ex+2ex+2ex,anchor=north,yshift=3ex]{};\\
\node[annex_matrix_node,inner sep=0,outer sep=0](pos-0-10){}; &\node[annex_matrix_node,inner sep=0,outer sep=0](pos-1-10){}; &\node[annex_matrix_node,inner sep=0,outer sep=0](pos-2-10){};\node[annex_matrix_dummy_height,minimum height=4ex+2ex,anchor=north,yshift=3ex]{};\\
\node[annex_matrix_node,inner sep=0,outer sep=0](pos-0-11){}; &\node[annex_matrix_node,inner sep=0,outer sep=0](pos-1-11){}; &\node[annex_matrix_node,inner sep=0,outer sep=0](pos-2-11){};\node[annex_matrix_dummy_height,minimum height=5ex,anchor=center]{};\\
\node[annex_matrix_node,inner sep=0,outer sep=0](pos-0-12){}; &\node[annex_matrix_node,inner sep=0,outer sep=0](pos-1-12){}; &\node[annex_matrix_node,inner sep=0,outer sep=0](pos-2-12){};\node[annex_matrix_dummy_height,minimum height=5ex,anchor=center]{};\\
\node[annex_matrix_node,inner sep=0,outer sep=0](pos-0-13){}; &\node[annex_matrix_node,inner sep=0,outer sep=0](pos-1-13){}; &\node[annex_matrix_node,inner sep=0,outer sep=0](pos-2-13){};\node[annex_matrix_dummy_height,minimum height=4ex+2ex,anchor=north,yshift=3ex]{};\\
\node[annex_matrix_node,inner sep=0,outer sep=0](pos-0-14){}; &\node[annex_matrix_node,inner sep=0,outer sep=0](pos-1-14){}; &\node[annex_matrix_node,inner sep=0,outer sep=0](pos-2-14){};\node[annex_matrix_dummy_height,minimum height=4ex+2ex,anchor=north,yshift=3ex]{};\\
\node[annex_matrix_node,inner sep=0,outer sep=0](pos-0-15){}; &\node[annex_matrix_node,inner sep=0,outer sep=0](pos-1-15){}; &\node[annex_matrix_node,inner sep=0,outer sep=0](pos-2-15){};\node[annex_matrix_dummy_height,minimum height=5ex,anchor=center]{};\node[annex_matrix_dummy_height,minimum height=5ex,anchor=center]{};\node[annex_matrix_dummy_height,minimum height=5ex,anchor=center]{};\\
};

\node[name=StartParty_9_0,annex_start_party_box,] at (pos-0-0) {Browser};

\node[name=StartParty_10_0,annex_start_party_box,] at (pos-1-0) {Client};

\node[name=StartParty_11_0,annex_start_party_box,] at (pos-2-0) {Authorization Server};

\node[annex_action,name=Action_13_0] at (pos-1-2) {\contour{white}{Create $(k_{\mi{C},\mi{AS}}, k'_{\mi{C},\mi{AS}})$}};

\node[annex_action,name=Action_14_0] at (pos-1-3) {\contour{white}{Create $(k_{\mi{C},\mi{RS}}, k'_{\mi{C},\mi{RS}})$}};

\node[name=EndParty_22_0,annex_end_party_box,] at (pos-2-11) {Authorization Server};

\node[name=StartParty_23_0,annex_start_party_box,] at (pos-2-12) {Resource Server};

\node[name=EndParty_26_0,annex_end_party_box,] at (pos-1-15) {Client};

\node[name=EndParty_27_0,annex_end_party_box,] at (pos-2-15) {Resource Server};

\node[name=EndParty_28_0,annex_end_party_box,] at (pos-0-15) {Browser};

\begin{pgfonlayer}{arrows}%

\draw[annex_lifeline] (pos-0-0) -- (pos-0-15);

\draw[annex_lifeline] (pos-1-0) -- (pos-1-15);

\draw[annex_lifeline] (pos-2-0) -- (pos-2-11);

        \draw[annex_http_request] (pos-0-1) to node [annex_arrow_text,above=2.6pt,anchor=base](HTTPRequest_12_0){\setcounter{protostep}{0}\protostep{oauth-code-oautb-app-client:auth-req-1-start} \contour{white}{POST \nolinkurl{/start}}}  (pos-1-1);

        \draw[annex_http_response] (pos-1-4) to node [annex_arrow_text,above=2.6pt,anchor=base](HTTPResponse_15_0){\setcounter{protostep}{1}\protostep{oauth-code-oautb-app-client:auth-req-1} \contour{white}{Response}} node [annex_arrow_text,below=8pt,anchor=base](HTTPResponse_15_1){\contour{white}{Redirect to AS \nolinkurl{/authorization_endpoint}}} node [annex_arrow_text,below=8pt+8pt,anchor=base](HTTPResponse_15_2){\contour{white}{(client\_id, \dots, $\textcolor{figurehighlight1}{\mathsf{hash}(k_{\mi{C},\mi{AS}})}$)}}  (pos-0-4);

        \draw[annex_http_request] (pos-0-5) to node [annex_arrow_text,above=2.6pt,anchor=base](HTTPRequest_16_0){\setcounter{protostep}{2}\protostep{oauth-code-oautb-app-client:auth-req-2} \contour{white}{GET \nolinkurl{/authorization_endpoint}}} node [annex_arrow_text,below=8pt,anchor=base](HTTPRequest_16_1){\contour{white}{(client\_id, \dots, $\textcolor{figurehighlight1}{\mathsf{hash}(k_{\mi{C},\mi{AS}})}$)}}  (pos-2-5);

        \draw[annex_http_request,transform canvas={yshift=0.25ex}] (pos-0-6) to node [annex_arrow_text,above=2.6pt,anchor=base](HTTPRequestResponse_17_0){\setcounter{protostep}{3}\protostep{oauth-code-oautb-app-client:ro-authN} \contour{white}{\textit{resource owner authenticates}}} (pos-2-6);
        \draw[annex_http_response,transform canvas={yshift=-0.25ex}] (pos-2-6) to  (pos-0-6);

        \draw[annex_http_response] (pos-2-7) to node [annex_arrow_text,above=2.6pt,anchor=base](HTTPResponse_18_0){\setcounter{protostep}{4}\protostep{oauth-code-oautb-app-client:auth-resp-1} \contour{white}{Response}} node [annex_arrow_text,below=8pt,anchor=base](HTTPResponse_18_1){\contour{white}{Redirect to C \nolinkurl{/redirect\_uri} (code, state)}}  (pos-0-7);

        \draw[annex_http_request] (pos-0-8) to node [annex_arrow_text,above=2.6pt,anchor=base](HTTPRequest_19_0){\setcounter{protostep}{5}\protostep{oauth-code-oautb-app-client:auth-resp-2} \contour{white}{GET \nolinkurl{/redirect_uri}}} node [annex_arrow_text,below=8pt,anchor=base](HTTPRequest_19_1){\contour{white}{(code, state)}}  (pos-1-8);

        \draw[annex_http_request] (pos-1-9) to node [annex_arrow_text,above=2.6pt,anchor=base](HTTPRequest_20_0){\setcounter{protostep}{6}\protostep{oauth-code-oautb-app-client:c-token-req} \contour{white}{POST \nolinkurl{/token_endpoint}}} node [annex_arrow_text,below=8pt,anchor=base](HTTPRequest_20_1){\contour{white}{(code, \dots,}} node [annex_arrow_text,below=8pt+8pt,anchor=base](HTTPRequest_20_2){\contour{white}{\textcolor{figurehighlight1}{\textsf{TB-prov-Msg}[$k_{\mi{C},\mi{AS}}, \mi{sig}_1$], \textsf{TB-ref-Msg}[$k_{\mi{C},\mi{RS}}, \mi{sig}'_1$]})}}  (pos-2-9);

        \draw[annex_http_response] (pos-2-10) to node [annex_arrow_text,above=2.6pt,anchor=base](HTTPResponse_21_0){\setcounter{protostep}{7}\protostep{oauth-code-oautb-app-client:c-token-resp} \contour{white}{Response}} node [annex_arrow_text,below=8pt,anchor=base](HTTPResponse_21_1){\contour{white}{(access token)}}  (pos-1-10);

\draw[annex_lifeline] (pos-2-12) -- (pos-2-15);

        \draw[annex_http_request] (pos-1-13) to node [annex_arrow_text,above=2.6pt,anchor=base](HTTPRequest_24_0){\setcounter{protostep}{8}\protostep{oauth-code-oautb-app-client:resource-req} \contour{white}{GET \nolinkurl{/resource}}} node [annex_arrow_text,below=8pt,anchor=base](HTTPRequest_24_1){\contour{white}{(access token, \textcolor{figurehighlight1}{\textsf{TB-prov-Msg}[$k_{\mi{C},\mi{RS}}, \mi{sig}_2$]})}}  (pos-2-13);

        \draw[annex_http_response] (pos-2-14) to node [annex_arrow_text,above=2.6pt,anchor=base](HTTPResponse_25_0){\setcounter{protostep}{9}\protostep{oauth-code-oautb-app-client:resource-resp} \contour{white}{Response}} node [annex_arrow_text,below=8pt,anchor=base](HTTPResponse_25_1){\contour{white}{(resource)}}  (pos-1-14);

\end{pgfonlayer}

\begin{pgfonlayer}{markers}%

\end{pgfonlayer}
        \end{tikzpicture}
        
   \caption{OAUTB for App Clients}
   \label{fig:oautb-app-client-full-flow}
\end{figure}

The flow is shown in Figure~\ref{fig:oautb-app-client-full-flow}.
Note that in this case, token binding is used between the OAuth client
and the authorization and resource servers; the browser in
Figure~\ref{fig:oauth-overview} is not involved.

The client has two token binding key pairs, one for the AS and one for
the RS (if these key pairs do not already exist, the client creates
them during the flow). When sending the authorization request
(Step~\refprotostep{oauth-code-oautb-app-client:auth-req-1} of
Figure~\ref{fig:oautb-app-client-full-flow}), the client includes the
hash of the Token Binding ID it uses for the AS as a PKCE challenge
(cf.~Section~\ref{section-pkce}). When exchanging the code for an
access token in
Step~\refprotostep{oauth-code-oautb-app-client:c-token-req}, the
client proves possession of the private key of this Token Binding ID,
and the AS only accepts the request when the hash of the Token Binding
ID is the same as the PKCE challenge. Therefore, the code can only be
exchanged by the participant that created the authorization request.
Note that for this purpose the AS only takes the \emph{provided} token
binding message sent to the AS in
Step~\refprotostep{oauth-code-oautb-app-client:c-token-req} into
account. However, the AS also checks the validity of the
\emph{referred} token binding message (using the same EKM value) and
associates $k_{C,RS}$
with the token issued by the AS in
Step~\refprotostep{oauth-code-oautb-app-client:c-token-resp}.

The token binding ID $k_{C,RS}$ is used in Step~\refprotostep{oauth-code-oautb-app-client:resource-req} by the client to redeem the token at the RS. The RS then checks if this is the same
token binding ID that is associated with the access token. This
information can be contained in the access token if it is structured
and  readable by the RS or via token
introspection.

Altogether, Token Binding for OAuth (in the case of app clients) is
supposed to bind both the authorization code and the access token to
the client. That is, only the client who initiated the flow (in
Step~\refprotostep{oauth-code-oautb-app-client:auth-req-1}) can redeem
the authorization code at the AS and the corresponding access token at
the RS, and hence, get access to the resource at the RS.

\subsubsection{Binding Authorization Codes for Web Server Clients}  %

In the case that the client is a web server, the binding of the
authorization code to the client is already done by client
authentication, as a web server client is always confidential
(cf. Section~\ref{subsection:oauth}).  Therefore, the client does
\emph{not} include the hash of a Token Binding ID in the authorization
request (Step~\refprotostep{oauth-code-oautb-app-client:auth-req-1} of
Figure~\ref{fig:oautb-app-client-full-flow}). Instead, the mechanism
defined in OAUTB aims at binding the authorization code to the browser/client
pair. (The binding of the access token to the client is done in the
same way as for an app client).

More precisely, for web server clients, the authorization code is
bound to the token binding ID that the browser uses for the
client. For this purpose, the client includes an additional HTTP
header\highlightIfDiffMinor{in the first response}to the browser %
(Step~\refprotostep{oauth-code-oautb-app-client:auth-req-1} of
Figure~\ref{fig:oautb-app-client-full-flow}), which signals the
browser that it should give the token binding ID it uses for the
client to the authorization server. When sending the authorization
request to the authorization server in
Step~\refprotostep{oauth-code-oautb-app-client:auth-req-2}, the
browser thus includes a provided and a referred token binding message,
where the referred message contains the token binding ID, that the browser later uses for the
client (say, $k_{B,C}$). When generating the authorization code, the authorization
server associates the code with $k_{B,C}$.

When redirecting the code to the client in
Step~\refprotostep{oauth-code-oautb-app-client:auth-resp-2}, the browser includes a
token binding message for $k_{B,C}$, thereby proving possession of
the private key.

When sending the token request in
Step~\refprotostep{oauth-code-oautb-app-client:c-token-req}, the
client includes $k_{B,C}$. We highlight that the client does not send a token binding message for $k_{B,C}$ since the client does not know the corresponding private key (only the browser does). 

The authorization server checks if this key is the same token binding ID
it associated the authorization code with, and therefore, can check if the code was
redirected to the client by the same browser that made the
authorization request. 
In other words, by this the authorization code is bound to the browser/client pair.

\subsection{JWT Secured Authorization Response Mode}\label{sec:JWT}

The recently developed \emph{JWT Secured Authorization Response Mode}
(JARM)~\cite{jarm-ceb0f82} aims at protecting the OAuth
authorization response (Step~\refprotostep{oauth-code:auth-resp-1} of
Figure~\ref{fig:oauth-overview}) by having the AS sign (and optionally
encrypt) the response. The authorization response is then encoded as a JWT 
(see Section~\ref{cauthn-jws-client-assertion}).
The JARM extension can be used with any OAuth~2.0 flow.

In addition to the regular parameters of the authorization response,
the JWT also contains its issuer (identifying the AS) and its audience (client id). For example, if
combined with the Authorization Code Flow, the response JWT contains
the issuer, audience, authorization code, and state values. 

By using JARM, the authorization response is integrity protected and injection of leaked authorization codes is prevented. 

\section{The OpenID Financial-grade API}\label{sec:fapi}

The OpenID Financial-grade API~\cite{fapi-ceb0f82} currently comprises two implementer's drafts.
One defines a profile for read-only access, the other one for read-write access.
Building on
Section~\ref{sec:oauth-and-new-defence-mechanisms}, here we
describe both profiles and the various configurations in which these
profiles can run (see Figure~\ref{fig:fapi-overview}). Furthermore,
we explain the assumptions made within the FAPI standard and the
underlying OAuth 2.0 extensions.

\begin{figure}[t]
  \centering  
  \vspace{-2em}
  \scalebox{0.9}{
    \input{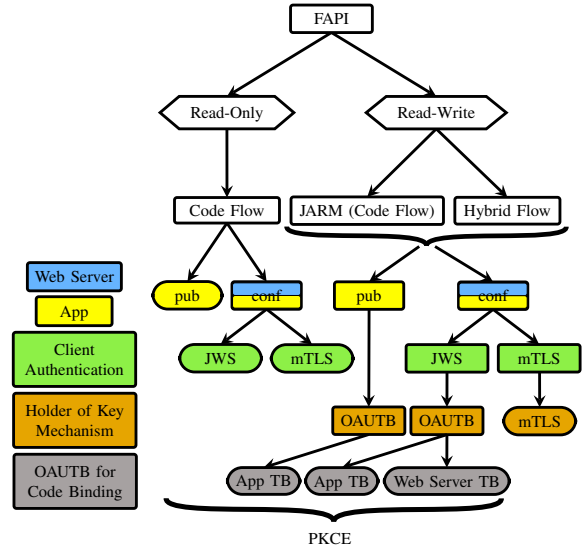}
  }
  \caption{Overview of the FAPI. One path (terminated by a box with
    rounded corners) describes one possible configuration of the FAPI.
    The paths marked with PKCE use PKCE. JARM and Hybrid flows both
    allow for the configurations shown.}
  \label{fig:fapi-overview}
\end{figure}

\subsection{Financial-grade API: Read-Only Profile}

In the following, we
explain the Read-Only flow as described in~\cite{fapi-ceb0f82-read-only}. %
The Read-Only profile %
aims at providing a secure way for accessing data that needs a higher
degree of protection than regular OAuth, e.g., for read access to
financial data. 

The Read-Only flow is essentially an \textbf{OAuth Authorization Code
  flow} (cf.~Section~\ref{sec:oauth-and-new-defence-mechanisms}).
Additionally, the client can request an ID Token (see Section~\ref{sec:openidconnect}) from the token
endpoint by adding a \emph{scope} parameter to the authorization
request (Step~\refprotostep{oauth-code:auth-req-1} of
Figure~\ref{fig:oauth-overview}) with the value $\str{openid}$.

In contrast to regular OAuth and OpenID Connect, the client is
required to have a different set of \textbf{redirection URIs} for each
authorization server. This separation prevents mix-up attacks, where
the authorization response (Step \refprotostep{oauth-code:auth-resp-2}  in 
Figure~\ref{fig:oauth-overview}) comes from a different AS than the client 
expects (see~\cite{FettKuestersSchmitz-CCS-2016} and
\cite{draft-ietf-oauth-security-topics}
for more details on mix-up attacks). 
When receiving the authorization response, the client checks if the
response was received at the redirection URI specified in the
authorization request (Step \refprotostep{oauth-code:auth-req-1} in Figure~\ref{fig:oauth-overview}).

One of the main additions to the regular OAuth flow is the use of
\textbf{PKCE} as explained in Section~\ref{section-pkce}. The PKCE
challenge is created by hashing a nonce.

The FAPI furthermore requires \textbf{confidential clients to
  authenticate} at the token endpoint  (in Step~\refprotostep{oauth-code:c-token-req} of
Figure~\ref{fig:oauth-overview}) using either \emph{JWS
  Client Assertions} (cf.~Section~\ref{cauthn-jws-client-assertion})
or \emph{Mutual TLS} (cf.~Section~\ref{fapi:mTLS-CAuthN}). 
Public clients do not use client authentication.

\subsection{Financial-grade API: Read-Write
  Profile} \label{subsection-fapi-read-write-profile}

The Read-Write profile~\cite{fapi-ceb0f82-read-write} aims at being secure under stronger assumptions
than the Read-Only profile, in order to be suitable for scenarios such
as write access to financial data. The full set of assumptions is
described in Section~\ref{sec-assumptions}.

The flow can be either an \textbf{OpenID Connect (OIDC) Hybrid flow}, which means that
both the authorization response (Step~\refprotostep{oauth-code:auth-resp-1} in Figure~\ref{fig:oauth-overview}) and the token 
response\highlightIfDiffMinor{(Step~\refprotostep{oauth-code:c-token-resp} in Figure~\ref{fig:oauth-overview})}contain an id %
token (see Section~\ref{sec:openidconnect}), or any other OAuth-based flow used together with \textbf{JARM} (see Section~\ref{sec:JWT}).
When using the Hybrid flow, the FAPI
profile also requires that the hash of the state value is included in
the first id token.

In addition to the parameters of the Read-Only flow, the authorization
request prepared by the client
(Step~\refprotostep{oauth-code:auth-req-1} of
Figure~\ref{fig:oauth-overview}) is required to contain a
\textbf{request JWS},\highlightIfDiffMinor{which is a JWT, signed by the client, containing all request
parameters together with the audience of the request (cf.~Section~\ref{cauthn-jws-client-assertion}).}

One of the main security features of the profile is the \textbf{binding of the
  authorization code and the access token} to the client, which is achieved by using
either 
mTLS (cf.~Section~\ref{fapi:mTLS})
or
OAUTB (OAuth~2.0 Token Binding, see Section~\ref{fapi:OAUTB}).
A public client is
required to use OAUTB, while a confidential client can use either
OAUTB or mTLS.

If the client is a confidential client using mTLS, the request does
not contain a PKCE challenge. When using OAUTB, the client uses a
\textbf{variant of PKCE}, depending on whether the client is a web
server client or an app client (cf.~Section~\ref{fapi:OAUTB}).

In the case of a confidential client, the \textbf{client
  authentication at the token endpoint} is done in the same way as for
the Read-Only flow, i.e., by using either JWS Client Assertions
(cf.~Section~\ref{cauthn-jws-client-assertion}) or Mutual TLS
(cf.~Section~\ref{fapi:mTLS-CAuthN}).

\subsection{Overview of Assumptions and Mitigations} \label{sec-assumptions}
In the following, we explain the conditions under which the FAPI
profiles and the OAuth extensions 
aim to be secure according to their specifications.

\subsubsection{Leak of Authorization Response}\label{sec:FAPIleakauthorizationresponseassumption} As described in
Section~\ref{section-pkce} in the context of PKCE, there are several
scenarios in which the authorization response
(Step~\refprotostep{oauth-code:auth-resp-2} of
Figure~\ref{fig:oauth-overview}), and hence, the authorization code,
can leak to the attacker (in clear), in particular in the case of app
clients. In our model of the FAPI, we therefore assume
that the authorization response is given to the attacker if the client
is an app. %
At first glance, leakage of the authorization code is indeed mitigated
by the use of PKCE since an attacker does not know the code verifier,
and hence, cannot redeem the code at the AS.  However, our attack
described in Section~\ref{attack:pkce} shows that the protection
provided by PKCE can be circumvented.

\subsubsection{Leak of Authorization
  Request} \label{sec:FAPIleakofauthorizationrequestassumption} The
Read-Only profile of the FAPI explicitly states that the PKCE
challenge should be created by hashing the verifier. The use of
hashing should protect the PKCE challenge even if the authorization
request leaks (e.g., by leaking HTTP logs,
cf.~Section~\ref{section-pkce}), and therefore, we assume in our model
that the authorization request
(Step~\refprotostep{oauth-code:auth-req-1} of
Figure~\ref{fig:oauth-overview}) leaks to the attacker.

\subsubsection{Leak of Access
  Token} \label{sec:FAPIleakofaccesstokenassumption} In the Read-Write
profile, it is  assumed that the access token might leak due to
phishing \cite[Section~8.3.5]{fapi-ceb0f82-read-write}. In our model, we therefore assume that the access token might leak in
Step~\refprotostep{oauth-code:auth-resp-1} of
Figure~\ref{fig:oauth-overview}. This problem is seemingly mitigated
by using either mTLS or OAUTB, which bind the access token to the
legitimate client, and hence, only the legitimate client should be
able to redeem the access token at the RS even if the access token
leaked. The FAPI specification states: ``When the FAPI client uses
MTLS or OAUTB, the access token is bound to the TLS channel, it is
access token phishing resistant as the phished access tokens cannot be
used.''~\cite[Section 8.3.5]{fapi-ceb0f82-read-write}. However, our
attack presented in Section~\ref{attack4-phished-at-malicious-as}
shows that this is not the case.

\subsubsection{Misconfigured Token
  Endpoint} \label{sec:FAPImisconfiguredTEassumption} An explicit
design decision by the FAPI working group was to make the Read-Write
profile secure even if the token request
(Step~\refprotostep{oauth-code:c-token-req} of
Figure~\ref{fig:oauth-overview}) leaks. The FAPI specification
describes this attack as follows: ``In this attack, the client
developer is social engineered into believing that the token endpoint
has changed to the URL that is controlled by the attacker. As the
result, the client sends the code and the client secret to the
attacker, which will be replayed subsequently.''~\cite[Section
8.3.2]{fapi-ceb0f82-read-write}.

Therefore, we make this assumption also in our FAPI model. Seemingly,
this problem is mitigated by code binding through client
authentication or OAUTB, which means that the attacker cannot use the
stolen code at the legitimate token endpoint. ``When the FAPI client
uses MTLS or OAUTB, the authorization code is bound to the TLS
channel, any phished client credentials and authorization codes
submitted to the token endpoint cannot be used since the authorization
code is bound to a particular TLS channel.''~\cite[Section
8.3.2]{fapi-ceb0f82-read-write}. Note that in the FAPI the client does
not authenticate by using the client secret as a password, but by
proving possession (either using JWS Client Assertions or mTLS), which
means that the attacker cannot reuse credentials.

However, our attack presented in Section~\ref{attack1-wrong-token-ep}
shows that this intuition is misleading.

\section{Attacks} \label{sec:att}

As already\highlightIfDiffMinor{mentioned}in the introduction, in Section~\ref{sec:analysis}
we present our rigorous formal analysis of the FAPI based on the Web
Infrastructure Model. Through this formal analysis of the FAPI with
the various OAuth 2.0 extensions it uses, we not only found attacks on
the FAPI but also on some of the OAuth 2.0 extensions, showing that
(1) these extensions do not achieve the security properties they have
been designed for and (2) that combining these extensions in a secure
way is far from trivial. Along with the attacks, we also propose fixes
to the standards. Our formal analysis presented in
Section~\ref{sec:analysis} considers the fixed versions.

We start by describing two attacks on Token Binding, followed by an
attack on PKCE, and one vulnerability hidden in the assumptions of
PKCE.

We emphasize that our attacks work even if all communication uses
TLS and even if the attacker is merely a web attacker, i.e., does
not control the network but only certain parties.

As already mentioned in the introduction, we notified the OpenID FAPI
Working Group of the attacks found by our analysis and are working
together with them to fix the standard.

\FloatBarrier
\subsection{Cuckoo's Token Attack } \label{attack4-phished-at-malicious-as}

As explained in Section~\ref{sec:FAPIleakofaccesstokenassumption},
the Read-Write profile of the FAPI aims at providing security even 
if the attacker obtains an access token, e.g., due to phishing.
Intuitively, this protection seems to be achieved by binding the access
token to the client via mTLS (see Section~\ref{fapi:mTLS}) 
or OAUTB (see Section~\ref{fapi:OAUTB}).

However, these mechanisms prevent the attacker only from directly
using the access token in the same flow. As illustrated next, in a
second flow, the attacker \emph{can} inject the bound access token and
let the client (to which the token is bound) use this token, which
enables the attacker to access resources belonging to an honest
identity.

This attack affects all configurations of the Read-Write
profile (see Figure~\ref{fig:fapi-overview}). Also, the Read-Only
profile is vulnerable to this attack; this profile is, however, not
meant to defend against stolen access tokens.

We note that the underlying principle of the attack should be relevant
to other use-cases of token binding as well, i.e., whenever a token is
bound to a participant, the involuntary use of a leaked token (by the
participant to which the token is bound) should be prevented.

\begin{figure}[tbh]
  \centering  
          \begin{tikzpicture}[]
        \pgfdeclarelayer{arrows}
        \pgfdeclarelayer{groups}
        \pgfdeclarelayer{markers}
        \pgfsetlayers{groups,arrows,main,markers}

        \matrix [column sep={0.15\textwidth,between origins}, row sep=0.1ex]
        {
        \node[annex_matrix_node,inner sep=0,outer sep=0](pos-0-0){}; &\node[annex_matrix_node,inner sep=0,outer sep=0](pos-1-0){}; &\node[annex_matrix_node,inner sep=0,outer sep=0](pos-2-0){};\node[annex_matrix_dummy_height,minimum height=5ex,anchor=center]{};\node[annex_matrix_dummy_height,minimum height=5ex,anchor=center]{};\node[annex_matrix_dummy_height,minimum height=5ex,anchor=center]{};\\
\node[annex_matrix_node,inner sep=0,outer sep=0](pos-0-1){}; &\node[annex_matrix_node,inner sep=0,outer sep=0](pos-1-1){}; &\node[annex_matrix_node,inner sep=0,outer sep=0](pos-2-1){};\node[annex_matrix_dummy_height,minimum height=4ex,anchor=south,yshift=-1ex]{};\\
\node[annex_matrix_node,inner sep=0,outer sep=0](pos-0-2){}; &\node[annex_matrix_node,inner sep=0,outer sep=0](pos-1-2){}; &\node[annex_matrix_node,inner sep=0,outer sep=0](pos-2-2){};\node[annex_matrix_dummy_height,minimum height=4ex+2ex,anchor=north,yshift=3ex]{};\\
\node[annex_matrix_node,inner sep=0,outer sep=0](pos-0-3){}; &\node[annex_matrix_node,inner sep=0,outer sep=0](pos-1-3){}; &\node[annex_matrix_node,inner sep=0,outer sep=0](pos-2-3){};\node[annex_matrix_dummy_height,minimum height=4ex+2ex,anchor=north,yshift=3ex]{};\\
\node[annex_matrix_node,inner sep=0,outer sep=0](pos-0-4){}; &\node[annex_matrix_node,inner sep=0,outer sep=0](pos-1-4){}; &\node[annex_matrix_node,inner sep=0,outer sep=0](pos-2-4){};\node[annex_matrix_dummy_height,minimum height=4ex+2ex,anchor=north,yshift=3ex]{};\\
\node[annex_matrix_node,inner sep=0,outer sep=0](pos-0-5){}; &\node[annex_matrix_node,inner sep=0,outer sep=0](pos-1-5){}; &\node[annex_matrix_node,inner sep=0,outer sep=0](pos-2-5){};\node[annex_matrix_dummy_height,minimum height=4ex+2ex,anchor=north,yshift=3ex]{};\\
\node[annex_matrix_node,inner sep=0,outer sep=0](pos-0-6){}; &\node[annex_matrix_node,inner sep=0,outer sep=0](pos-1-6){}; &\node[annex_matrix_node,inner sep=0,outer sep=0](pos-2-6){};\node[annex_matrix_dummy_height,minimum height=5ex,anchor=center]{};\\
\node[annex_matrix_node,inner sep=0,outer sep=0](pos-0-7){}; &\node[annex_matrix_node,inner sep=0,outer sep=0](pos-1-7){}; &\node[annex_matrix_node,inner sep=0,outer sep=0](pos-2-7){};\node[annex_matrix_dummy_height,minimum height=5ex,anchor=center]{};\\
\node[annex_matrix_node,inner sep=0,outer sep=0](pos-0-8){}; &\node[annex_matrix_node,inner sep=0,outer sep=0](pos-1-8){}; &\node[annex_matrix_node,inner sep=0,outer sep=0](pos-2-8){};\node[annex_matrix_dummy_height,minimum height=4ex+2ex,anchor=north,yshift=3ex]{};\\
\node[annex_matrix_node,inner sep=0,outer sep=0](pos-0-9){}; &\node[annex_matrix_node,inner sep=0,outer sep=0](pos-1-9){}; &\node[annex_matrix_node,inner sep=0,outer sep=0](pos-2-9){};\node[annex_matrix_dummy_height,minimum height=4ex+2ex,anchor=north,yshift=3ex]{};\\
\node[annex_matrix_node,inner sep=0,outer sep=0](pos-0-10){}; &\node[annex_matrix_node,inner sep=0,outer sep=0](pos-1-10){}; &\node[annex_matrix_node,inner sep=0,outer sep=0](pos-2-10){};\node[annex_matrix_dummy_height,minimum height=5ex,anchor=center]{};\node[annex_matrix_dummy_height,minimum height=5ex,anchor=center]{};\node[annex_matrix_dummy_height,minimum height=5ex,anchor=center]{};\\
};

\node[name=StartParty_9_0,annex_start_party_box,attacker] at (pos-0-0) {Attacker (User)};

\node[name=StartParty_10_0,annex_start_party_box,] at (pos-1-0) {Client};

\node[name=StartParty_11_0,annex_start_party_box,attacker] at (pos-2-0) {Attacker (AS)};

\node[name=EndParty_17_0,annex_end_party_box,attacker] at (pos-2-6) {Attacker (AS)};

\node[name=StartParty_18_0,annex_start_party_box,] at (pos-2-7) {Resource Server};

\node[name=EndParty_21_0,annex_end_party_box,attacker] at (pos-0-10) {Attacker (User)};

\node[name=EndParty_22_0,annex_end_party_box,] at (pos-1-10) {Client};

\node[name=EndParty_23_0,annex_end_party_box,] at (pos-2-10) {Resource Server};

\begin{pgfonlayer}{arrows}%

\draw[annex_lifeline] (pos-0-0) -- (pos-0-10);

\draw[annex_lifeline] (pos-1-0) -- (pos-1-10);

\draw[annex_lifeline] (pos-2-0) -- (pos-2-6);

        \draw[annex_http_request] (pos-0-1) to node [annex_arrow_text,above=2.6pt,anchor=base](HTTPRequest_12_0){\setcounter{protostep}{0}\protostep{cct-att:HTTPRequest_12} \contour{white}{POST \nolinkurl{/start}}}  (pos-1-1);

        \draw[annex_http_response] (pos-1-2) to node [annex_arrow_text,above=2.6pt,anchor=base](HTTPResponse_13_0){\setcounter{protostep}{1}\protostep{cct-att:auth-req-1} \contour{white}{Response}} node [annex_arrow_text,below=8pt,anchor=base](HTTPResponse_13_1){\contour{white}{Redirect to AS (client\_id, redirect\_uri, state)}}  (pos-0-2);

        \draw[annex_http_request] (pos-0-3) to node [annex_arrow_text,above=2.6pt,anchor=base](HTTPRequest_14_0){\setcounter{protostep}{2}\protostep{cct-att:auth-resp-2} \contour{white}{GET \nolinkurl{/redirect_uri}}} node [annex_arrow_text,below=8pt,anchor=base](HTTPRequest_14_1){\contour{white}{(code, state, id token$_1$)}}  (pos-1-3);

        \draw[annex_http_request] (pos-1-4) to node [annex_arrow_text,above=2.6pt,anchor=base](HTTPRequest_15_0){\setcounter{protostep}{3}\protostep{cct-att:c-token-req} \contour{white}{POST \nolinkurl{/token_endpoint}}} node [annex_arrow_text,below=8pt,anchor=base](HTTPRequest_15_1){\contour{white}{(code, client\_id)}}  (pos-2-4);

        \draw[annex_http_response] (pos-2-5) to node [annex_arrow_text,above=2.6pt,anchor=base](HTTPResponse_16_0){\setcounter{protostep}{4}\protostep{cct-att:c-token-resp} \contour{white}{Response}} node [annex_arrow_text,below=8pt,anchor=base](HTTPResponse_16_1){\contour{white}{(\textcolor{figurehighlight1}{access token}, id token$_2$)}}  (pos-1-5);

\draw[annex_lifeline] (pos-2-7) -- (pos-2-10);

        \draw[annex_http_request] (pos-1-8) to node [annex_arrow_text,above=2.6pt,anchor=base](HTTPRequest_19_0){\setcounter{protostep}{5}\protostep{cct-att:c-resource-req} \contour{white}{GET \nolinkurl{/resource}}} node [annex_arrow_text,below=8pt,anchor=base](HTTPRequest_19_1){\contour{white}{(access token)}}  (pos-2-8);

        \draw[annex_http_response] (pos-2-9) to node [annex_arrow_text,above=2.6pt,anchor=base](HTTPResponse_20_0){\setcounter{protostep}{6}\protostep{cct-att:c-resource-resp} \contour{white}{Response}} node [annex_arrow_text,below=8pt,anchor=base](HTTPResponse_20_1){\contour{white}{resource}}  (pos-1-9);

\end{pgfonlayer}

\begin{pgfonlayer}{markers}%

\end{pgfonlayer}
        \end{tikzpicture}
        
  \caption{Cuckoo's Token Attack}
  \label{fig:attack4}
\end{figure}
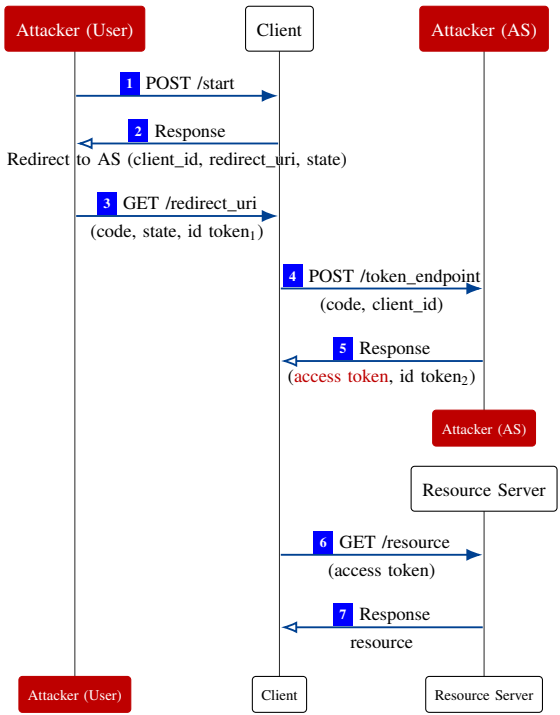

Figure~\ref{fig:attack4} depicts the attack for the OIDC Hybrid Flow,
i.e., when both responses of the AS contain id tokens (see
Section~\ref{subsection-fapi-read-write-profile}). The
attack works analogously for the code flow in combination with JARM
(see Section~\ref{subsection-fapi-read-write-profile}).

As explained, we assume that the attacker already obtained (phished)
an access token issued by an honest AS to an honest client for accessing resources of an
honest user. We also assume that the honest client supports the use of
several ASs (a common setting in practice, as already mentioned in
Section~\ref{sec:oauth-and-new-defence-mechanisms}), where in this
case one of the ASs is dishonest.\footnote{We highlight that we do not assume that the attacker  %
  controls the AS that issued the access token (i.e., the AS
  at which the honest user is registered).
  This means that the (honest) user uses an honest client
  and an honest authorization server.}

First, the attacker starts the flow
at the client and chooses his own AS. Since he is redirected to his own AS in
Step~\refprotostep{cct-att:auth-req-1}, he can skip the user
authentication step and return an authorization response immediately.
Apart from that, the flow continues normally until
Step~\refprotostep{cct-att:c-token-req}, where the client sends the
code to the attacker AS. In Step~\refprotostep{cct-att:c-token-resp},
the attacker AS returns the previously phished access token together
with the second id token.

Until here, all checks done by the client pass successfully, as the
attacker AS adheres to the protocol. The only
difference to an honest authorization server is that the attacker AS returns a
phished access token.
In Step~\refprotostep{cct-att:c-resource-req}, the resource server
receives the (phished) access token and provides the client access to
the honest resource owner's resources for the phished access
token,\footnote{Which RS
  is used in combination with an AS depends on the configuration of
  the client, which is acquired through means not defined in OAuth.
  Especially in scenarios where this configuration is done
  dynamically, a dishonest AS might be used in combination with an
  honest RS. But also if the client is configured manually, as is
  often the case today, it
  might be misconfigured or social engineered into using
  specific endpoints. Recall from
  Section~\ref{sec:fundamentalsOAuthOIDC} that the access token might
  be a document signed by the (honest) AS containing all information
  the RS needs to process the access token. Alternatively, and more
  common, the RS performs token introspection, if the access token is
  just a nonce. The RS typically uses only one AS (in this
  case, the honest AS) to which it will send the introspection
  request.} which implies that now the attacker has access to these
resources through the client.

To prevent the use of leaked access tokens, the client should include,
in the request to the RS, the identity of the AS the client received the access token
from. The client can take this value from the second id token. Now, the RS would only continue the flow if its belief is consistent with the one of the RS. We apply an analogous fix for flows
with JARM. These fixes are included in our model and shown to
work in Section~\ref{sec:analysis}.

\FloatBarrier
\subsection{Access Token Injection with ID Token Replay} \label{attack1-wrong-token-ep}

As described in Section~\ref{sec:FAPIleakofaccesstokenassumption}, the
Read-Write profile aims to be secure if an attacker acquires an access
token for an honest user. The profile also aims to be secure even if
the token endpoint URI is changed to an attacker-controlled URI (see
Section~\ref{sec:FAPImisconfiguredTEassumption}). Now, interestingly,
these two threat scenarios combined in this order are the base for the
attack described in the following. In this
attack, the attacker returns an access token at the misconfigured
token endpoint. While the attack looks similar to the previous attack
at first glance, here the attacker first interacts with the
\emph{honest} AS and later \emph{replays an id token} at the token
endpoint. Both attacks necessitate different fixes. The outcome,
however, is the same, and, just as the previous attack, this attack
affects all configurations of the Read-Write profile, even if JARM is
used. We explain the attack using
the Hybrid Flow.

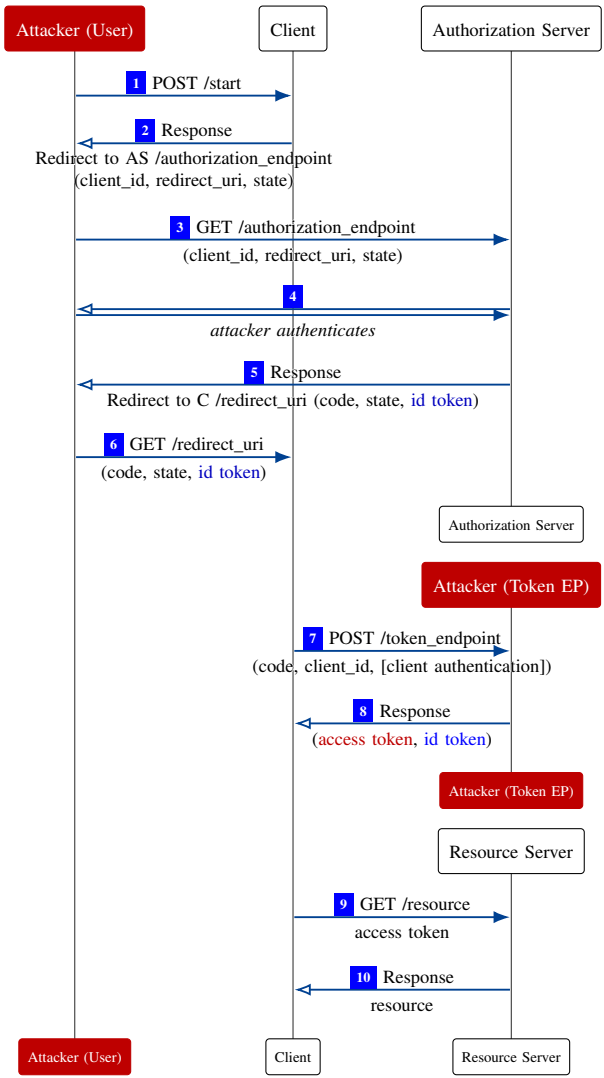
\begin{figure}[tbh]
  \centering  
          \begin{tikzpicture}[]
        \pgfdeclarelayer{arrows}
        \pgfdeclarelayer{groups}
        \pgfdeclarelayer{markers}
        \pgfsetlayers{groups,arrows,main,markers}

        \matrix [column sep={0.16\textwidth,between origins}, row sep=0.1ex]
        {
        \node[annex_matrix_node,inner sep=0,outer sep=0](pos-0-0){}; &\node[annex_matrix_node,inner sep=0,outer sep=0](pos-1-0){}; &\node[annex_matrix_node,inner sep=0,outer sep=0](pos-2-0){};\node[annex_matrix_dummy_height,minimum height=5ex,anchor=center]{};\node[annex_matrix_dummy_height,minimum height=5ex,anchor=center]{};\node[annex_matrix_dummy_height,minimum height=5ex,anchor=center]{};\\
\node[annex_matrix_node,inner sep=0,outer sep=0](pos-0-1){}; &\node[annex_matrix_node,inner sep=0,outer sep=0](pos-1-1){}; &\node[annex_matrix_node,inner sep=0,outer sep=0](pos-2-1){};\node[annex_matrix_dummy_height,minimum height=4ex,anchor=south,yshift=-1ex]{};\\
\node[annex_matrix_node,inner sep=0,outer sep=0](pos-0-2){}; &\node[annex_matrix_node,inner sep=0,outer sep=0](pos-1-2){}; &\node[annex_matrix_node,inner sep=0,outer sep=0](pos-2-2){};\node[annex_matrix_dummy_height,minimum height=4ex+2ex+2ex,anchor=north,yshift=3ex]{};\\
\node[annex_matrix_node,inner sep=0,outer sep=0](pos-0-3){}; &\node[annex_matrix_node,inner sep=0,outer sep=0](pos-1-3){}; &\node[annex_matrix_node,inner sep=0,outer sep=0](pos-2-3){};\node[annex_matrix_dummy_height,minimum height=4ex+2ex,anchor=north,yshift=3ex]{};\\
\node[annex_matrix_node,inner sep=0,outer sep=0](pos-0-4){}; &\node[annex_matrix_node,inner sep=0,outer sep=0](pos-1-4){}; &\node[annex_matrix_node,inner sep=0,outer sep=0](pos-2-4){};\node[annex_matrix_dummy_height,minimum height=4ex+2ex,anchor=north,yshift=3ex]{};\\
\node[annex_matrix_node,inner sep=0,outer sep=0](pos-0-5){}; &\node[annex_matrix_node,inner sep=0,outer sep=0](pos-1-5){}; &\node[annex_matrix_node,inner sep=0,outer sep=0](pos-2-5){};\node[annex_matrix_dummy_height,minimum height=4ex+2ex,anchor=north,yshift=3ex]{};\\
\node[annex_matrix_node,inner sep=0,outer sep=0](pos-0-6){}; &\node[annex_matrix_node,inner sep=0,outer sep=0](pos-1-6){}; &\node[annex_matrix_node,inner sep=0,outer sep=0](pos-2-6){};\node[annex_matrix_dummy_height,minimum height=4ex+2ex,anchor=north,yshift=3ex]{};\\
\node[annex_matrix_node,inner sep=0,outer sep=0](pos-0-7){}; &\node[annex_matrix_node,inner sep=0,outer sep=0](pos-1-7){}; &\node[annex_matrix_node,inner sep=0,outer sep=0](pos-2-7){};\node[annex_matrix_dummy_height,minimum height=5ex,anchor=center]{};\\
\node[annex_matrix_node,inner sep=0,outer sep=0](pos-0-8){}; &\node[annex_matrix_node,inner sep=0,outer sep=0](pos-1-8){}; &\node[annex_matrix_node,inner sep=0,outer sep=0](pos-2-8){};\node[annex_matrix_dummy_height,minimum height=5ex,anchor=center]{};\\
\node[annex_matrix_node,inner sep=0,outer sep=0](pos-0-9){}; &\node[annex_matrix_node,inner sep=0,outer sep=0](pos-1-9){}; &\node[annex_matrix_node,inner sep=0,outer sep=0](pos-2-9){};\node[annex_matrix_dummy_height,minimum height=4ex+2ex,anchor=north,yshift=3ex]{};\\
\node[annex_matrix_node,inner sep=0,outer sep=0](pos-0-10){}; &\node[annex_matrix_node,inner sep=0,outer sep=0](pos-1-10){}; &\node[annex_matrix_node,inner sep=0,outer sep=0](pos-2-10){};\node[annex_matrix_dummy_height,minimum height=4ex+2ex,anchor=north,yshift=3ex]{};\\
\node[annex_matrix_node,inner sep=0,outer sep=0](pos-0-11){}; &\node[annex_matrix_node,inner sep=0,outer sep=0](pos-1-11){}; &\node[annex_matrix_node,inner sep=0,outer sep=0](pos-2-11){};\node[annex_matrix_dummy_height,minimum height=5ex,anchor=center]{};\\
\node[annex_matrix_node,inner sep=0,outer sep=0](pos-0-12){}; &\node[annex_matrix_node,inner sep=0,outer sep=0](pos-1-12){}; &\node[annex_matrix_node,inner sep=0,outer sep=0](pos-2-12){};\node[annex_matrix_dummy_height,minimum height=5ex,anchor=center]{};\\
\node[annex_matrix_node,inner sep=0,outer sep=0](pos-0-13){}; &\node[annex_matrix_node,inner sep=0,outer sep=0](pos-1-13){}; &\node[annex_matrix_node,inner sep=0,outer sep=0](pos-2-13){};\node[annex_matrix_dummy_height,minimum height=4ex+2ex,anchor=north,yshift=3ex]{};\\
\node[annex_matrix_node,inner sep=0,outer sep=0](pos-0-14){}; &\node[annex_matrix_node,inner sep=0,outer sep=0](pos-1-14){}; &\node[annex_matrix_node,inner sep=0,outer sep=0](pos-2-14){};\node[annex_matrix_dummy_height,minimum height=4ex+2ex,anchor=north,yshift=3ex]{};\\
\node[annex_matrix_node,inner sep=0,outer sep=0](pos-0-15){}; &\node[annex_matrix_node,inner sep=0,outer sep=0](pos-1-15){}; &\node[annex_matrix_node,inner sep=0,outer sep=0](pos-2-15){};\node[annex_matrix_dummy_height,minimum height=5ex,anchor=center]{};\node[annex_matrix_dummy_height,minimum height=5ex,anchor=center]{};\node[annex_matrix_dummy_height,minimum height=5ex,anchor=center]{};\\
};

\node[name=StartParty_10_0,annex_start_party_box,attacker] at (pos-0-0) {Attacker (User)};

\node[name=StartParty_11_0,annex_start_party_box,] at (pos-1-0) {Client};

\node[name=StartParty_12_0,annex_start_party_box,] at (pos-2-0) {Authorization Server};

\node[name=EndParty_19_0,annex_end_party_box,] at (pos-2-7) {Authorization Server};

\node[name=StartParty_20_0,annex_start_party_box,attacker] at (pos-2-8) {Attacker (Token EP)};

\node[name=EndParty_23_0,annex_end_party_box,attacker] at (pos-2-11) {Attacker (Token EP)};

\node[name=StartParty_24_0,annex_start_party_box,] at (pos-2-12) {Resource Server};

\node[name=EndParty_27_0,annex_end_party_box,] at (pos-1-15) {Client};

\node[name=EndParty_28_0,annex_end_party_box,] at (pos-2-15) {Resource Server};

\node[name=EndParty_29_0,annex_end_party_box,attacker] at (pos-0-15) {Attacker (User)};

\begin{pgfonlayer}{arrows}%

\draw[annex_lifeline] (pos-0-0) -- (pos-0-15);

\draw[annex_lifeline] (pos-1-0) -- (pos-1-15);

\draw[annex_lifeline] (pos-2-0) -- (pos-2-7);

        \draw[annex_http_request] (pos-0-1) to node [annex_arrow_text,above=2.6pt,anchor=base](HTTPRequest_13_0){\setcounter{protostep}{0}\protostep{att-tep:auth-req-0-start} \contour{white}{POST \nolinkurl{/start}}}  (pos-1-1);

        \draw[annex_http_response] (pos-1-2) to node [annex_arrow_text,above=2.6pt,anchor=base](HTTPResponse_14_0){\setcounter{protostep}{1}\protostep{att-tep:auth-resp-0} \contour{white}{Response}} node [annex_arrow_text,below=8pt,anchor=base](HTTPResponse_14_1){\contour{white}{Redirect to AS \nolinkurl{/authorization_endpoint}}} node [annex_arrow_text,below=8pt+8pt,anchor=base](HTTPResponse_14_2){\contour{white}{(client\_id, redirect\_uri, state)}}  (pos-0-2);

        \draw[annex_http_request] (pos-0-3) to node [annex_arrow_text,above=2.6pt,anchor=base](HTTPRequest_15_0){\setcounter{protostep}{2}\protostep{att-tep:auth-req-2} \contour{white}{GET \nolinkurl{/authorization_endpoint}}} node [annex_arrow_text,below=8pt,anchor=base](HTTPRequest_15_1){\contour{white}{(client\_id, redirect\_uri, state)}}  (pos-2-3);

        \draw[annex_http_response,transform canvas={yshift=0.25ex}] (pos-2-4) to node [annex_arrow_text,above=2.6pt,anchor=base](HTTPResponseRequest_16_0){\setcounter{protostep}{3}\protostep{att-tep:ro-authN} } (pos-0-4);
        \draw[annex_http_request,transform canvas={yshift=-0.25ex}] (pos-0-4) to node [annex_arrow_text,below=8pt,anchor=base](HTTPResponseRequest_16_1){\contour{white}{\textit{attacker authenticates}}}  (pos-2-4);

        \draw[annex_http_response] (pos-2-5) to node [annex_arrow_text,above=2.6pt,anchor=base](HTTPResponse_17_0){\setcounter{protostep}{4}\protostep{att-tep:auth-resp-1} \contour{white}{Response}} node [annex_arrow_text,below=8pt,anchor=base](HTTPResponse_17_1){\contour{white}{Redirect to C \nolinkurl{/redirect\_uri} (code, state, \textcolor{figurehighlight2}{id token})}}  (pos-0-5);

        \draw[annex_http_request] (pos-0-6) to node [annex_arrow_text,above=2.6pt,anchor=base](HTTPRequest_18_0){\setcounter{protostep}{5}\protostep{att-tep:auth-resp-2} \contour{white}{GET \nolinkurl{/redirect_uri}}} node [annex_arrow_text,below=8pt,anchor=base](HTTPRequest_18_1){\contour{white}{(code, state, \textcolor{figurehighlight2}{id token})}}  (pos-1-6);

\draw[annex_lifeline] (pos-2-8) -- (pos-2-11);

        \draw[annex_http_request] (pos-1-9) to node [annex_arrow_text,above=2.6pt,anchor=base](HTTPRequest_21_0){\setcounter{protostep}{6}\protostep{att-tep:c-token-req} \contour{white}{POST \nolinkurl{/token_endpoint}}} node [annex_arrow_text,below=8pt,anchor=base](HTTPRequest_21_1){\contour{white}{(code, client\_id, [client authentication])}}  (pos-2-9);

        \draw[annex_http_response] (pos-2-10) to node [annex_arrow_text,above=2.6pt,anchor=base](HTTPResponse_22_0){\setcounter{protostep}{7}\protostep{att-tep:c-token-resp} \contour{white}{Response}} node [annex_arrow_text,below=8pt,anchor=base](HTTPResponse_22_1){\contour{white}{(\textcolor{figurehighlight1}{access token}, \textcolor{blue}{id token})}}  (pos-1-10);

\draw[annex_lifeline] (pos-2-12) -- (pos-2-15);

        \draw[annex_http_request] (pos-1-13) to node [annex_arrow_text,above=2.6pt,anchor=base](HTTPRequest_25_0){\setcounter{protostep}{8}\protostep{att-tep:c-resource-req} \contour{white}{GET \nolinkurl{/resource}}} node [annex_arrow_text,below=8pt,anchor=base](HTTPRequest_25_1){\contour{white}{access token}}  (pos-2-13);

        \draw[annex_http_response] (pos-2-14) to node [annex_arrow_text,above=2.6pt,anchor=base](HTTPResponse_26_0){\setcounter{protostep}{9}\protostep{att-tep:c-resource-resp} \contour{white}{Response}} node [annex_arrow_text,below=8pt,anchor=base](HTTPResponse_26_1){\contour{white}{resource}}  (pos-1-14);

\end{pgfonlayer}

\begin{pgfonlayer}{markers}%

\end{pgfonlayer}
        \end{tikzpicture}
        
  \caption{Access Token Injection with ID Token Replay Attack}
  \label{fig:attack1}
\end{figure}

Figure~\ref{fig:attack1} shows how the attack proceeds.
The attacker initiates the Read-Write flow at the client and follows the regular
flow until Step~\refprotostep{att-tep:auth-resp-2}. As the
authorization response was created by the honest AS, the state 
and all  values of the id token are correct and the client accepts
the authorization response.

In Step~\refprotostep{att-tep:c-token-req}, the client sends the token
request to the misconfigured token endpoint controlled by the
attacker. The value of the code and the checks regarding client
authentication and proof of possession of keys are not relevant for
the attacker.

In Step~\refprotostep{att-tep:c-token-resp}, the attacker sends the
token response containing the phished access token. As the flow is an
OIDC Hybrid Flow, the attacker is required to return an id token.
Here, he returns the same id token that he received in
Step~\refprotostep{att-tep:auth-resp-1}, which is signed by the honest AS.
The client is required to ensure that both id tokens have the same
subject and issuer values, which in this case holds true since they
are identical.

The client sends the access token to the honest resource server, by
which the attacker gets read-write access to the resource of the
honest resource owner through the client. %

As we show in our security analysis (see Section~\ref{sec:analysis}),
this scenario is prevented if the second id token is required to
contain the hash of the access token that is returned to the client,
as the attacker cannot create id tokens with a valid signature of the
AS. A similar fix also works for flows with JARM. 
The fixes are already included in our model.

\FloatBarrier
\subsection{PKCE Chosen Challenge Attack} \label{attack:pkce}

As detailed in
Section~\ref{sec:FAPIleakauthorizationresponseassumption}, the FAPI
uses PKCE in order to protect against leaked authorization codes. This
is particularly important for public clients as these clients, unlike confidential ones,  do not
authenticate to an AS when trying to exchange the code for an access
token.

Recall that the idea of PKCE is that a client creates a PKCE challenge
(hash of a nonce), gives it to the AS, and when redeeming the
authorization code at the AS, the client has to present the correct
PKCE verifier (the nonce). This idea works when just considering an
honest flow in which the code leaks to the attacker, who does not know
the PKCE verifier. However, our attack shows that the protection can
be circumvented by an attacker who pretends to be an honest client.

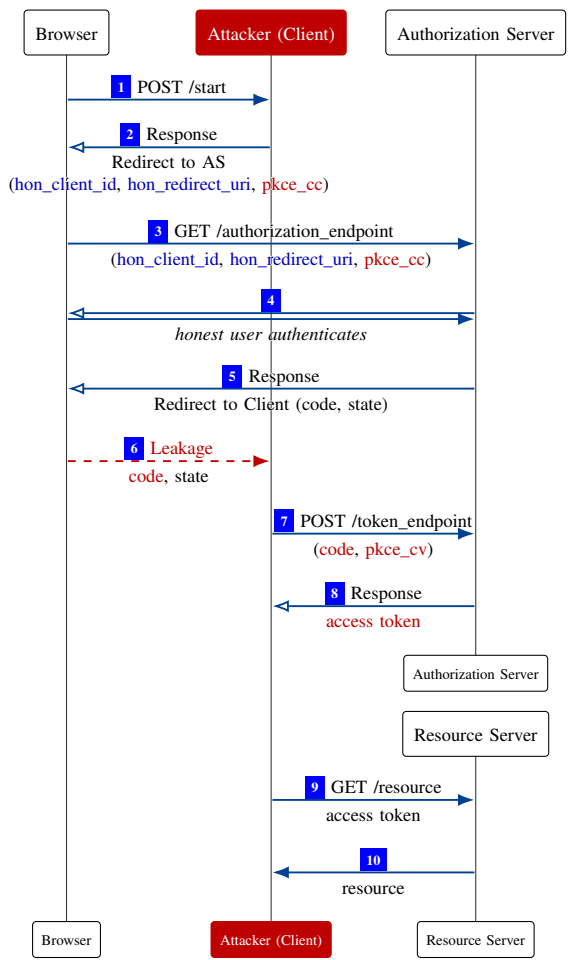
\begin{figure}[tb]
  \centering  
          \begin{tikzpicture}[]
        \pgfdeclarelayer{arrows}
        \pgfdeclarelayer{groups}
        \pgfdeclarelayer{markers}
        \pgfsetlayers{groups,arrows,main,markers}

        \matrix [column sep={0.15\textwidth,between origins}, row sep=0.1ex]
        {
        \node[annex_matrix_node,inner sep=0,outer sep=0](pos-0-0){}; &\node[annex_matrix_node,inner sep=0,outer sep=0](pos-1-0){}; &\node[annex_matrix_node,inner sep=0,outer sep=0](pos-2-0){};\node[annex_matrix_dummy_height,minimum height=5ex,anchor=center]{};\node[annex_matrix_dummy_height,minimum height=5ex,anchor=center]{};\node[annex_matrix_dummy_height,minimum height=5ex,anchor=center]{};\\
\node[annex_matrix_node,inner sep=0,outer sep=0](pos-0-1){}; &\node[annex_matrix_node,inner sep=0,outer sep=0](pos-1-1){}; &\node[annex_matrix_node,inner sep=0,outer sep=0](pos-2-1){};\node[annex_matrix_dummy_height,minimum height=4ex,anchor=south,yshift=-1ex]{};\\
\node[annex_matrix_node,inner sep=0,outer sep=0](pos-0-2){}; &\node[annex_matrix_node,inner sep=0,outer sep=0](pos-1-2){}; &\node[annex_matrix_node,inner sep=0,outer sep=0](pos-2-2){};\node[annex_matrix_dummy_height,minimum height=4ex+2ex+2ex,anchor=north,yshift=3ex]{};\\
\node[annex_matrix_node,inner sep=0,outer sep=0](pos-0-3){}; &\node[annex_matrix_node,inner sep=0,outer sep=0](pos-1-3){}; &\node[annex_matrix_node,inner sep=0,outer sep=0](pos-2-3){};\node[annex_matrix_dummy_height,minimum height=4ex+2ex,anchor=north,yshift=3ex]{};\\
\node[annex_matrix_node,inner sep=0,outer sep=0](pos-0-4){}; &\node[annex_matrix_node,inner sep=0,outer sep=0](pos-1-4){}; &\node[annex_matrix_node,inner sep=0,outer sep=0](pos-2-4){};\node[annex_matrix_dummy_height,minimum height=4ex+2ex,anchor=north,yshift=3ex]{};\\
\node[annex_matrix_node,inner sep=0,outer sep=0](pos-0-5){}; &\node[annex_matrix_node,inner sep=0,outer sep=0](pos-1-5){}; &\node[annex_matrix_node,inner sep=0,outer sep=0](pos-2-5){};\node[annex_matrix_dummy_height,minimum height=4ex+2ex,anchor=north,yshift=3ex]{};\\
\node[annex_matrix_node,inner sep=0,outer sep=0](pos-0-6){}; &\node[annex_matrix_node,inner sep=0,outer sep=0](pos-1-6){}; &\node[annex_matrix_node,inner sep=0,outer sep=0](pos-2-6){};\node[annex_matrix_dummy_height,minimum height=4ex+2ex,anchor=north,yshift=3ex]{};\\
\node[annex_matrix_node,inner sep=0,outer sep=0](pos-0-7){}; &\node[annex_matrix_node,inner sep=0,outer sep=0](pos-1-7){}; &\node[annex_matrix_node,inner sep=0,outer sep=0](pos-2-7){};\node[annex_matrix_dummy_height,minimum height=4ex+2ex,anchor=north,yshift=3ex]{};\\
\node[annex_matrix_node,inner sep=0,outer sep=0](pos-0-8){}; &\node[annex_matrix_node,inner sep=0,outer sep=0](pos-1-8){}; &\node[annex_matrix_node,inner sep=0,outer sep=0](pos-2-8){};\node[annex_matrix_dummy_height,minimum height=4ex+2ex,anchor=north,yshift=3ex]{};\\
\node[annex_matrix_node,inner sep=0,outer sep=0](pos-0-9){}; &\node[annex_matrix_node,inner sep=0,outer sep=0](pos-1-9){}; &\node[annex_matrix_node,inner sep=0,outer sep=0](pos-2-9){};\node[annex_matrix_dummy_height,minimum height=5ex,anchor=center]{};\\
\node[annex_matrix_node,inner sep=0,outer sep=0](pos-0-10){}; &\node[annex_matrix_node,inner sep=0,outer sep=0](pos-1-10){}; &\node[annex_matrix_node,inner sep=0,outer sep=0](pos-2-10){};\node[annex_matrix_dummy_height,minimum height=5ex,anchor=center]{};\\
\node[annex_matrix_node,inner sep=0,outer sep=0](pos-0-11){}; &\node[annex_matrix_node,inner sep=0,outer sep=0](pos-1-11){}; &\node[annex_matrix_node,inner sep=0,outer sep=0](pos-2-11){};\node[annex_matrix_dummy_height,minimum height=4ex+2ex,anchor=north,yshift=3ex]{};\\
\node[annex_matrix_node,inner sep=0,outer sep=0](pos-0-12){}; &\node[annex_matrix_node,inner sep=0,outer sep=0](pos-1-12){}; &\node[annex_matrix_node,inner sep=0,outer sep=0](pos-2-12){};\node[annex_matrix_dummy_height,minimum height=4ex+2ex,anchor=north,yshift=3ex]{};\\
\node[annex_matrix_node,inner sep=0,outer sep=0](pos-0-13){}; &\node[annex_matrix_node,inner sep=0,outer sep=0](pos-1-13){}; &\node[annex_matrix_node,inner sep=0,outer sep=0](pos-2-13){};\node[annex_matrix_dummy_height,minimum height=5ex,anchor=center]{};\node[annex_matrix_dummy_height,minimum height=5ex,anchor=center]{};\node[annex_matrix_dummy_height,minimum height=5ex,anchor=center]{};\\
};

\node[name=StartParty_9_0,annex_start_party_box,] at (pos-0-0) {Browser};

\node[name=StartParty_10_0,annex_start_party_box,attacker] at (pos-1-0) {Attacker (Client)};

\node[name=StartParty_11_0,annex_start_party_box,] at (pos-2-0) {Authorization Server};

\node[name=EndParty_20_0,annex_end_party_box,] at (pos-2-9) {Authorization Server};

\node[name=StartParty_21_0,annex_start_party_box,] at (pos-2-10) {Resource Server};

\node[name=EndParty_24_0,annex_end_party_box,] at (pos-0-13) {Browser};

\node[name=EndParty_25_0,annex_end_party_box,] at (pos-2-13) {Resource Server};

\node[name=EndParty_26_0,annex_end_party_box,attacker] at (pos-1-13) {Attacker (Client)};

\begin{pgfonlayer}{arrows}%

\draw[annex_lifeline] (pos-0-0) -- (pos-0-13);

\draw[annex_lifeline] (pos-1-0) -- (pos-1-13);

\draw[annex_lifeline] (pos-2-0) -- (pos-2-9);

        \draw[annex_http_request] (pos-0-1) to node [annex_arrow_text,above=2.6pt,anchor=base](HTTPRequest_12_0){\setcounter{protostep}{0}\protostep{att-pkce-cc:HTTPRequest_12} \contour{white}{POST \nolinkurl{/start}}}  (pos-1-1);

        \draw[annex_http_response] (pos-1-2) to node [annex_arrow_text,above=2.6pt,anchor=base](HTTPResponse_13_0){\setcounter{protostep}{1}\protostep{att-pkce-cc:auth-req-1} \contour{white}{Response}} node [annex_arrow_text,below=8pt,anchor=base](HTTPResponse_13_1){\contour{white}{Redirect to AS}} node [annex_arrow_text,below=8pt+8pt,anchor=base](HTTPResponse_13_2){\contour{white}{(\textcolor{figurehighlight2}{hon\_client\_id}, \textcolor{figurehighlight2}{hon\_redirect\_uri}, \textcolor{figurehighlight1}{pkce\_cc})}}  (pos-0-2);

        \draw[annex_http_request] (pos-0-3) to node [annex_arrow_text,above=2.6pt,anchor=base](HTTPRequest_14_0){\setcounter{protostep}{2}\protostep{att-pkce-cc:auth-req-2} \contour{white}{GET \nolinkurl{/authorization_endpoint}}} node [annex_arrow_text,below=8pt,anchor=base](HTTPRequest_14_1){\contour{white}{(\textcolor{figurehighlight2}{hon\_client\_id}, \textcolor{figurehighlight2}{hon\_redirect\_uri}, \textcolor{figurehighlight1}{pkce\_cc})}}  (pos-2-3);

        \draw[annex_http_response,transform canvas={yshift=0.25ex}] (pos-2-4) to node [annex_arrow_text,above=2.6pt,anchor=base](HTTPResponseRequest_15_0){\setcounter{protostep}{3}\protostep{att-pkce-cc:ro-authN} } (pos-0-4);
        \draw[annex_http_request,transform canvas={yshift=-0.25ex}] (pos-0-4) to node [annex_arrow_text,below=8pt,anchor=base](HTTPResponseRequest_15_1){\contour{white}{\textit{honest user authenticates}}}  (pos-2-4);

        \draw[annex_http_response] (pos-2-5) to node [annex_arrow_text,above=2.6pt,anchor=base](HTTPResponse_16_0){\setcounter{protostep}{4}\protostep{att-pkce-cc:auth-resp-1} \contour{white}{Response}} node [annex_arrow_text,below=8pt,anchor=base](HTTPResponse_16_1){\contour{white}{Redirect to Client (code, state)}}  (pos-0-5);

        \draw[annex_http_request,draw=figurehighlight1,dashed] (pos-0-6) to node [annex_arrow_text,above=2.6pt,anchor=base](HTTPRequest_17_0){\setcounter{protostep}{5}\protostep{att-pkce-cc:leaked-auth-resp} \contour{white}{\textcolor{figurehighlight1}{Leakage}}} node [annex_arrow_text,below=8pt,anchor=base](HTTPRequest_17_1){\contour{white}{\textcolor{figurehighlight1}{code}, state}}  (pos-1-6);

        \draw[annex_http_request] (pos-1-7) to node [annex_arrow_text,above=2.6pt,anchor=base](HTTPRequest_18_0){\setcounter{protostep}{6}\protostep{att-pkce-cc:c-token-req} \contour{white}{POST \nolinkurl{/token\_endpoint}}} node [annex_arrow_text,below=8pt,anchor=base](HTTPRequest_18_1){\contour{white}{(\textcolor{figurehighlight1}{code}, \textcolor{figurehighlight1}{pkce\_cv})}}  (pos-2-7);

        \draw[annex_http_response] (pos-2-8) to node [annex_arrow_text,above=2.6pt,anchor=base](HTTPResponse_19_0){\setcounter{protostep}{7}\protostep{att-pkce-cc:c-token-resp} \contour{white}{Response}} node [annex_arrow_text,below=8pt,anchor=base](HTTPResponse_19_1){\contour{white}{\textcolor{figurehighlight1}{access token}}}  (pos-1-8);

\draw[annex_lifeline] (pos-2-10) -- (pos-2-13);

        \draw[annex_http_request] (pos-1-11) to node [annex_arrow_text,above=2.6pt,anchor=base](HTTPRequest_22_0){\setcounter{protostep}{8}\protostep{att-pkce-cc:c-resource-req} \contour{white}{GET \nolinkurl{/resource}}} node [annex_arrow_text,below=8pt,anchor=base](HTTPRequest_22_1){\contour{white}{access token}}  (pos-2-11);

        \draw[annex_http_request] (pos-2-12) to node [annex_arrow_text,above=2.6pt,anchor=base](HTTPRequest_23_0){\setcounter{protostep}{9}\protostep{att-pkce-cc:c-resource-resp} } node [annex_arrow_text,below=8pt,anchor=base](HTTPRequest_23_1){\contour{white}{resource}}  (pos-1-12);

\end{pgfonlayer}

\begin{pgfonlayer}{markers}%

\end{pgfonlayer}
        \end{tikzpicture}
        
  \caption{PKCE Chosen Challenge Attack}
  \label{fig:attack2}
\end{figure}

This attack affects public clients who use the Read-Only profile of
the FAPI. It works as follows (see Figure~\ref{fig:attack2}): As in RFC
7636, two apps are installed on a user's device, an honest app and a
malicious app. The honest app is a client of an honest AS with the
client identifier $\mi{hon\_client\_id}$
and the redirection URI $\mi{hon\_redir\_uri}$.
The malicious app is not registered at the AS.

The Read-Only flow starts at the malicious app, which prompts the user
to log in. Now, the malicious app prepares an authorization request
containing the client id and a redirect URI of the honest client
(Step~\refprotostep{att-pkce-cc:auth-req-1}). At this point, the
malicious app also creates a PKCE verifier and includes the
corresponding challenge in the authorization request.

The flow continues until the browser receives the authorization
response in Step~\refprotostep{att-pkce-cc:auth-resp-1}. As the
redirection URIs are preregistered at the AS, the
redirection URI in the authorization request was chosen from the set
of redirect URIs of the honest app, and therefore, the authorization
response is redirected to the honest client after the browser receives
it.

As described in Sections~\ref{section-pkce} and
\ref{sec:FAPIleakauthorizationresponseassumption}, at this point, the
authorization response with the authorization code might leak
to the attacker (Step~\refprotostep{att-pkce-cc:leaked-auth-resp}).
The malicious app is now able to exchange the code (associated with
the honest client) at the token endpoint in
Steps~\refprotostep{att-pkce-cc:c-token-req} and~\refprotostep{att-pkce-cc:c-token-resp}, as it knows the correct PKCE
verifier and, as the honest app is a public client, without authenticating to the AS.

To prevent this scenario, an honest AS must ensure that the PKCE
challenge was created by the client with the id
$\mi{hon\_client\_id}$.
To achieve this, for public clients in the Read-Only flow we use the
same mechanism that the FAPI uses for public clients in the Read-Write
flow, namely the authorization request should contain a signed JWT
(see also Section~\ref{cauthn-jws-client-assertion}, although JWTs are
now used in a different way). This ensures that the client stated in
the request actually made the request, and hence, no other client
should know the PKCE verifier. Note that by using signed JWTs for public
clients the FAPI assumes that public clients can store some secrets
(which might, for example, be protected by user passwords). Our fix is
already included in the model and our analysis (Section~\ref{sec:analysis}) shows that it works.
\subsection{Authorization Request Leak Attacks} \label{attack:SI}

As explained in Section~\ref{sec:FAPIleakofauthorizationrequestassumption},
the PKCE challenge is created such that PKCE is supposed to work even if the authorization
request leaks (see also Section~\ref{section-pkce}). 

However, if a leak of the authorization request occurs not only the
PKCE challenge leaks to the attacker but also the state value, since
both values are contained in the authorization request. Our
attack shows that an attacker who knows the state value can circumvent
the CSRF protection the state value was supposed to provide. As a
result of the attack, the honest user is logged in under the identity
of the attacker and uses the resources of the attacker, which breaks
session integrity.   The details of this attack are presented in
Appendix~\ref{app:authorizationrequestleakattack}.

This is a well-known class of attacks for plain OAuth
flows~\cite{rfc6819-oauth2-security}, but it is important to highlight
that the protections designed into the FAPI do not sufficiently protect most flows against
such attacks, even though PKCE explicitly foresees the attack
vector.

To prevent this attack, one essentially has to prevent CSRF forgery in
this context. However, this is non-trivial because of the very strong
attacker model considered by the OpenID FAPI Working Group: leaks
and misconfigurations are assumed to occur at various places. As
further explained in
Appendix~\ref{app:authorizationrequestleakattack}, just assuming that
the authorization request does not leak to the attacker would not fix
the problem in general; one at least would have to assume that the
authorization response does not leak either.  Making these
assumptions, however, of course contradicts the OpenID FAPI Working
Group's intention, namely providing security even in the presence of very
strong attackers.

Fortunately, we can prove that regular FAPI web server clients which
use OAUTB are not vulnerable to this attack even in the presence of the
strong attackers assumed by the OpenID FAPI Working Group and
throughout this paper.  More specifically, we can prove session
integrity of the FAPI for such clients (and strong attackers), which
in particular excludes the above attack (see
Section~\ref{sec:analysis}). For all other types of clients, our attack
works, and there does not seem to be a fix which would not massively
change the flows, and hence, the standards, as argued in
Appendix~\ref{app:authorizationrequestleakattack}. In this sense, our
results for session integrity appear to be the best we can obtain for
the FAPI.

\FloatBarrier

\section{Formal Security Analysis}\label{sec:analysis}

In this section, we present our formal analysis of the FAPI. We start
by very briefly recalling the Web Infrastructure Model (WIM), followed
by a sketch of our formal model of the FAPI, which as already
mentioned uses the WIM as its basic web infrastructure model. We then
introduce central security properties the FAPI is supposed to satisfy,
along with our main theorem stating that these properties are
satisfied.

Since we cannot present the full formal details here, we provide\chooseReferenceTRPaper{the complete analysis in Appendices~\ref{appendix-additions-to-the-wim}--\ref{fapi-formal-proofs}.}
{some more details in the appendix, with full details and proofs provided in
  our technical report~\cite{FettHosseyniKuesters-TR-FAPI-2018}.}This includes the precise
formalization of clients, authorization servers, and resource servers,
as well as full detailed proofs.

\subsection{The Web Infrastructure Model} \label{fks-highlevel}

The Web Infrastructure Model (WIM) was
introduced by Fett, K{\"u}sters, and Schmitz in
\cite{FettKuestersSchmitz-SP-2014} (therefore also called the FKS
model) and further developed in subsequent work. The appendix of
\cite{FettKuestersSchmitz-TR-OIDC-2017} provides a detailed
description of the model; a comparison with other models and a
discussion of its scope and limitations can be found in
\cite{FettKuestersSchmitz-SP-2014,
  FettKuestersSchmitz-CCS-2015,FettKuestersSchmitz-ESORICS-BrowserID-Primary-2015}.
We here only give a brief overview of the WIM following the
description in \cite{FettKuestersSchmitz-CSF-2017}, with some more
details presented in Appendix~\ref{app:wim-sketch}. As explained
there, we slightly extend the WIM, among others to model OAUTB. We
choose the WIM for our work because, as mentioned in the
introduction, the WIM is the most comprehensive model of the web
infrastructure to date.

The WIM is designed independently of a specific web application and
closely mimics published (de-facto) standards and specifications for
the web, for example, the HTTP/1.1 and HTML5 standards and associated
(proposed) standards. Among others, HTTP(S) requests and 
responses,\footnote{We note that the WIM models %
TLS at a high level of abstraction such that messages are exchanged in 
a secure way.} including several headers, such as cookie, location, referer,
authorization, strict transport security (STS), and origin headers,
are modeled. The model of web browsers captures the concepts of
windows, documents, and iframes, including the complex navigation
rules, as well as modern technologies, such as web storage, web
messaging (via postMessage), and referrer policies. JavaScript is
modeled in an abstract way by so-called \emph{scripts} which can be
sent around and, among others, can create iframes, access other
windows, and initiate XMLHttpRequests.

The WIM defines a general communication
model, and, based on it, web systems consisting of web browsers, DNS
servers, and web servers as well as web and network attackers. 
The main entities in the model are \emph{(atomic) processes}, which
are used to model browsers, servers, and attackers. Each process
listens to one or more (IP) addresses.
Processes communicate via \emph{events}, which consist of a message as
well as a receiver and a sender address. In every step of a run, one
event is chosen non-deterministically from a ``pool'' of waiting
events and is delivered to one of the processes that listens to the
event's receiver address. The process can then handle the event and
output new events, which are added to the pool of events, and so on.
The WIM follows the Dolev-Yao approach (see, e.g.,
\cite{AbadiFournet-POPL-2001}). That is, messages are expressed as
formal terms over a signature $\Sigma$ which  contains
constants (for addresses, strings, nonces) as well as sequence,
projection, and function symbols (e.g., for encryption/decryption and
signatures).

A \emph{(Dolev-Yao) process} consists of a set of addresses the
process listens to, a set of states (terms), an initial state, and a
relation that takes an event and a state as input and
(non-deterministically) returns a new state and a sequence of events.
The relation models a computation step of the
process.
It is required that the output can be computed (formally, derived
in the usual Dolev-Yao style) from the input event and the state.

The so-called \emph{attacker process} 
records all messages it receives and outputs all events it can
possibly derive from its recorded messages. Hence, an attacker process
carries out all attacks any Dolev-Yao process could possibly perform.
Attackers can corrupt other parties, browsers, and servers.

A \emph{script} models JavaScript running in a browser. Scripts are
defined similarly to Dolev-Yao processes, but run in and interact with
the browser. Similar to an attacker process, an \emph{attacker script} can
(non-deterministically) perform every action a script can possibly
perform within a browser.

A \emph{system} is a set of processes. A \emph{configuration} of a
system is a tuple of the form $(S,E,N)$ where $S$ maps every process
of the system to its state, $E$ is the pool of waiting events, and $N$
is a sequence of unused nonces. In what follows, $s^p_0$ denotes the
initial state of process $p$. Systems induce \emph{runs}, i.e.,
sequences of configurations, where each configuration is obtained by
delivering one of the waiting events of the preceding configuration to
a process, which then performs a computation step.

A \emph{web system} formalizes the web infrastructure and web
applications. It contains a system consisting of honest and attacker
processes. Honest processes can be web browsers, web servers, or DNS
servers. Attackers can be either \emph{web attackers} (who can listen
to and send messages from their own addresses only) or \emph{network
  attackers} (who may listen to and spoof all addresses and therefore
are the most powerful attackers). A web system further contains a set
of scripts (comprising honest scripts and the attacker script).

In our FAPI model, we need to specify only the behavior of servers and
scripts. These are not defined by the WIM since they depend on the
specific application, unless they become corrupted, in which case they
behave like attacker processes and attacker scripts. We
assume the presence of a strong network attacker which also controls
all DNS servers (but we assume a working PKI).

\subsection{Sketch of the Formal FAPI Model} \label{fapi-informal-description-of-the-model}

A \textbf{FAPI web system (with a network attacker)}, denoted by
$\fapiwebsystem$, is a web system (as explained in
Section~\ref{fks-highlevel}) and can contain an unbounded finite
number of clients, authorization servers, resource servers, browsers,
and a network attacker. Note that a network attacker is the most
powerful attacker, which subsumes all other attackers. Except for the
attacker, all processes are initially honest and can become
(dynamically) corrupted by the attacker at any time.

In a FAPI web system, clients, authorization servers, and resource
servers act according to the specification of the FAPI presented in
Section~\ref{sec:fapi}. (As mentioned in Section~\ref{fks-highlevel},
the behavior of browsers is fixed by the standards. Their modeling is
independent of the FAPI and already contained in the WIM.) Our models
for clients and servers follow the latest recommendations regarding
the security of OAuth~2.0~\cite{draft-ietf-oauth-security-topics} to mitigate all previously known attacks. The model also contains the fixes pointed out in
Section~\ref{sec:att}, as otherwise, we would not be able to prove the
desired security properties (see below).

The primary goal of the FAPI is to provide a
high degree of security. Its flows are intended to be secure even if
information leaks to an attacker. As already outlined in Section~\ref{sec-assumptions}, we model this by
sending the authorization response (in the case of an app client), the
access token (in the case of a Read-Write flow), and the authorization request to an arbitrary
(non-deterministically chosen) IP address. Furthermore, in the
Read-Write profile, the token request can be sent to an arbitrary URI.

Importantly, one FAPI web system contains all possible settings in
which the FAPI can run, as depicted in Figure~\ref{fig:fapi-overview},
in particular, we consider all OAuth 2.0 extensions employed in the
FAPI.  More precisely, every client in a FAPI web system runs one of
the possible configurations (i.e., it implements on one path in
Figure~\ref{fig:fapi-overview}). Different clients may implement
different configurations. Every authorization and resource server in a
FAPI web system supports all configurations at once. When interacting
with a specific client, a server just chooses the configuration the
client supports. In our model, the various endpoints (authorization,
redirection, token), the information which client
supports which FAPI configuration, client credentials, etc.~are
preconfigured and contained in the initial states of the
processes. How this information is acquired is out of the scope of the
FAPI.

We emphasize that when proving security properties of the FAPI, we
prove these properties for all FAPI web systems, where different FAPI
web systems can differ in the number of clients and servers, and their
preconfigured information.

Furthermore, we note that there is no notion of time in the WIM,
hence, tokens do not expire. This is a safe overapproximation as 
it\highlightIfDiffMinor{gives the attacker more power.}

\highlightIfDiffAdded{
To give a feel for our formal FAPI model, an excerpt of the model is provided in Appendix~\ref{app:excerpt-of-client-model}.
}

\subsection{Security Properties and Main Theorem}\label{sec:secpropertiesandtheorem}

In the following, we define the security properties the FAPI should
fulfill, namely authorization, authentication, and session
integrity. These properties have been central to also OAuth 2.0 and
OpenID Connect
\cite{FettKuestersSchmitz-CCS-2016,FettKuestersSchmitz-CSF-2017}. But
as mentioned, the FAPI has been designed to fulfill these properties
under stronger adversaries, therefore using various OAuth
extensions. While our formulations of these properties are inspired by
those for OAuth 2.0 and OpenID Connect, they had to be adapted and
extended for the FAPI, e.g., to capture properties of resource
servers, which previously have not been modeled. We also state our main theorem.

We give an overview of each security property. For the
authorization property, we provide an in-depth explanation, together with
the formal definition.
  Appendix~\ref{app:proofsketch} contains a proof sketch for
  the authorization property. 
  Full details and proofs of all properties are given
  in\chooseReferenceTRPaper{Appendix~\ref{fapi-formal-properties}.}
  {our technical report~\cite{FettHosseyniKuesters-TR-FAPI-2018}.
  }\subsubsection{Authorization} 

Informally speaking, for authorization we require that an attacker
cannot access resources belonging to an honest user (browser).  A bit
more precise, we require that in all runs $\rho$ of a FAPI web system
$\fapiwebsystem$ if an honest resource server receives an access token
that is associated with an honest client, an honest authorization
server, and an identity of an honest user, then access to the
corresponding resource is not provided to the attacker in any way.  We
highlight that this does not only mean that the attacker cannot access
the resource directly at the resource server, but also that the
attacker cannot access the resource through a client.

\highlightIfDiffAdded{

  In order to formalize this property, we first need to define what it
  means for an access token to be associated with a client, an AS, and
  a user identity (see below for an explanation of this definition).
  
  \begin{definition}[Access Token associated with C, AS and ID]\label{informal-def:AT-associated-with-c}
    Let $c$ be a client with client id $\mi{clientId}$ issued to $c$
    by the authorization server $\mi{as}$, and let
    $\mi{id} \in \mathsf{ID}^\mi{as}$, where $\mathsf{ID}^\mi{as}$
    denotes the set of identities governed by as.  We say that an
    \emph{access token $t$ is associated with $c$, $\mi{as}$ and
      $\mi{id}$} in state $S$ of the configuration $(S, E,N)$ of a run
    $\rho$ of a FAPI web system, if there is a sequence
    $s \in S(\mi{as}).\str{accessTokens}$ such that
    $s \equiv \an{\mi{id}, \mi{clientId}, t, \str{r}}$,
    $s \equiv \an{\str{MTLS}, \mi{id}, \mi{clientId}, t, \mi{key},
      \str{rw}}$ or
    $s \equiv \an{\str{OAUTB}, \mi{id}, \mi{clientId}, t, \mi{key'},
      \str{rw}}$, for some $\mi{key}$ and $\mi{key'}$.
  \end{definition}
  
  Intuitively, an access token $t$ is associated with a client $c$, authorization server $\mi{as}$,
  and user identity $\mi{id}$, if $t$ was created by the authorization server
  $\mi{as}$ and if the AS has created $t$ for the client $c$ and the identity $\mi{id}$.
  
  More precisely, the access token is exchanged for an authorization code (at the token endpoint
  of the AS), which is issued for a specific client. This is also the client to which the access token
  is associated with.
  The user identity with which the access token is associated is the user identity that 
  authenticated at the AS (i.e., logged in at the website of the AS).
  In the model, the AS associates the access token with the client identifier and
  user identity by storing a sequence containing the identity, the client identifier
  and the access token 
  (i.e., $\an{\mi{id}, \mi{clientId}, t, \str{r}}$, 
    $\an{\str{MTLS}, \mi{id}, \mi{clientId}, t, \mi{key}, \str{rw}}$ or
    $\an{\str{OAUTB}, \mi{id}, \mi{clientId}, t, \mi{key'}, \str{rw}}$). Furthermore, the last entry of the sequence indicates if the client is using
  the Read-Only or the Read-Write flow. In addition to this, for the Read-Write flow, the AS
  stores whether the access token is bound via mTLS or OAUTB (along  with the corresponding key with which the 
  access token is associated).

  We can now define authorization formally, again the explanation of
  this definition follows below.
  
  \begin{definition}[Authorization Property]\label{informal-def:property-authz-a} 
    We say that the FAPI web system with a network attacker
    \emph{$\fapiwebsystem$ is secure w.r.t.~authorization} iff for
    every run $\rho$ of $\fapiwebsystem$, every configuration
    $(S, E, N)$ in $\rho$, every authorization server
    $\mi{as} \in \fAP{AS}$ that is honest in $S$ with
    $s^\mi{as}_0.\str{resource\_servers}$ being domains of honest
    resource servers used by $\mi{as}$, every identity $\mi{id} \in \mathsf{ID}^\mi{as}$
    for which the corresponding browser, say $b$, is honest in $S$,
    every client $c \in \fAP{C}$ that is honest in $S$ with client id
    $\mi{clientId}$ issued to $c$ by $\mi{as}$, every resource server
    $\mi{rs} \in \fAP{RS}$ that is honest in $S$ such that
    $\mi{id} \in s^\mi{rs}_0.\str{ids}$ (set of IDs handled by $\mi{rs}$),
    $s^\mi{rs}_0.\str{authServ} \in \mathsf{dom}(\mi{as})$ (set of domains controlled by $\mi{as}$) and with
    $\mi{dom}_\mi{rs} \in s_0^\mi{as}.\str{resource\_servers}$ (with
    $\mi{dom}_\mi{rs} \in \mathsf{dom}(\mi{rs})$), every access token
    $t$ associated with $c$, $\mi{as}$ and $\mi{id}$ and every
    resource access nonce
    $r \in s^\mi{rs}_0.\str{rNonce}[\mi{id}] \cup
    s^\mi{rs}_0.\str{wNonce}[\mi{id}]$ it holds true that:
  
    If $r$ is contained in a response to a request $m$ sent to $\mi{rs}$
    with $t \equiv m.\mi{header}[\str{Authorization}]$,
    then 
    $r$ is not derivable from the attackers knowledge in $S$.
  \end{definition}
  
  As outlined above, the authorization property states that if the honest resource server
  receives an access token associated with a client identifier, authorization server, and
  user identifier, then the corresponding resource access is not given to the attacker.
  Access to resources is modeled by nonces called \emph{resource
    access nonces}. For each user identity, there is one set of nonces
  representing read access, and another set representing write
  access. In our model of the FAPI, when a resource server receives an
  access token associated with a user from a client, the resource
  server returns to the client one of the resource access nonces of
  the user, which in turn the client forwards to the user's
  browser. The above security property requires that the attacker does
  not obtain such a resource access nonce (under the assumptions state
  in the property). This captures that there should be no direct or
  indirect way for the attacker to access the corresponding
  resource. In particular, the attacker should not be able to use a
  client such that he can access the resource through the client.
  
  For the authorization property to be meaningful, we require that the involved participants are
  honest. For example, we require that the authorization server at which the
  identity is registered is honest. If this is not the case (i.e., the
  attacker controls the AS), then the attacker could trivially access 
  resources.
  The same holds true for the client for which the access token is issued:
  If the user chooses a client that is controlled by the attacker, then
  the attacker can trivially access the resource (as the user authorized
  the attacker client to do so).
  In our model of the FAPI, the client (non-deterministically) chooses
  a resource server that the authorization server supports (this can
  be different for each login flow).  As in the Read-Only flow, the
  access token would trivially leak to the attacker if the resource
  server is controlled by the attacker, we require that the resource
  servers that the AS supports are honest.  Furthermore, in the WIM,
  the behavior of the user is subsumed in the browser model,
  therefore, we require that the browser that is responsible for the
  user identity that is involved in the flow should be
  honest. Otherwise, the attacker could trivially obtain the
  credentials of the user.

}

\subsubsection{Authentication}\label{sec:authenticationpropertyinformal}

Informally speaking, the authentication property states that an
attacker should not be able to log in at a client under the identity
of an honest user. More precisely, we require that in all runs $\rho$
of a FAPI web system $\fapiwebsystem$ if in $\rho$ a client considers an honest user
(browser) whose ID is governed by an honest AS to be logged in
(indicated by a service token which a user can use at the client),
then the adversary cannot obtain the service token.

\subsubsection{Session Integrity}\label{sec:sessionintegritypropertyinformal}

There are two session integrity properties that capture that an honest
user should not be logged in under the identity of the attacker and
should not use resources of the attacker. As shown  in
Section~\ref{attack:SI}, session integrity is not given for
all configurations available in the FAPI. Therefore, we show a limited
session integrity property that captures session integrity for
web server clients that use OAUTB.

Nonetheless, our session integrity property here is stronger than
those used
in~\cite{FettKuestersSchmitz-CCS-2016,FettKuestersSchmitz-CSF-2017} in
the sense that we define (and prove) session integrity not only in the
presence of web attackers, but also for the much stronger network
attacker.  (This is enabled by using the \emph{\_\_Secure-} prefix for
cookies.)

\emph{Session Integrity for Authorization for Web Server Clients with OAUTB:}
Intuitively, this property states that for all runs $\rho$ of a FAPI web system $\fapiwebsystem$,
if an honest user can access the resource of some identity $u$ (registered at AS $\mi{as}$)
through the honest 
web server client $c$, where $c$ uses OAUTB as the holder of key mechanism,
then (1) the user started the flow at $c$ and
(2) if $\mi{as}$ is honest, the user authenticated at the $\mi{as}$ using the identity
$u$.

\emph{Session Integrity for Authentication for Web Server Clients with OAUTB:}
Similar to the previous property, this property states that for all runs $\rho$ of a FAPI web system $\fapiwebsystem$,
if an honest user is logged in at the honest client $c$ under some identity $u$
(registered at AS $\mi{as}$),
with $c$ being a web server client using OAUTB as the holder of 
key mechanism,
then (1) the user started the flow at $c$ and
(2) if $\mi{as}$ is honest, the user authenticated at the $\mi{as}$ using the identity
$u$.

By \emph{Session Integrity for Web Server Clients with OAUTB} we denote
the conjunction of both properties.

Now, our main theorem says that these properties are satisfied for all
FAPI web systems.

\begin{theorem}\label{thm:theorem-1}
  Let $\fapiwebsystem$ be a FAPI web system with a network
  attacker. Then, $\fapiwebsystem$ is secure w.r.t. authorization and
  authentication. Furthermore, $\fapiwebsystem$ is secure w.r.t.
  session integrity for web server clients with OAUTB.
\end{theorem}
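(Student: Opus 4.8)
The plan is to prove the three properties—authorization, authentication, and session integrity—separately, each by contradiction: assume a run $\rho$ of some FAPI web system $\fapiwebsystem$ violates the property, and derive a contradiction by tracing back the flow of the relevant secret (a resource access nonce, a service token, or a session identifier). The backbone of each argument is a collection of \emph{structural lemmas} about the honest participants, established by induction over the length of runs. Concretely, I would first prove auxiliary invariants of the form ``if an honest AS stores an access token associated with $c$, $\mi{as}$, and $\mi{id}$, then this token was issued in a token response over a TLS channel whose client side proved possession of the key bound to the code/token (via mTLS or OAUTB), or was authenticated as the confidential client $c$.'' Similar invariants are needed for how honest clients handle codes, id tokens, and access tokens, and for how honest resource servers release resource nonces only upon presentation of a validly associated token carrying the correct issuer identity (the fix from Section~\ref{attack4-phished-at-malicious-as}).

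For \textbf{authorization}, I would start from a resource access nonce $r$ that is, by assumption, derivable from the attacker's knowledge, and argue that $r$ can only have been emitted by the honest $\mi{rs}$ in response to a request carrying an access token $t$ associated with $c$, $\mi{as}$, $\mi{id}$. Since $\mi{rs}$, $\mi{as}$, $c$, and the browser $b$ governing $\mi{id}$ are all honest, I would show the token $t$ is bound (via mTLS or OAUTB) to $c$'s key, so only $c$ can successfully present it to $\mi{rs}$; hence the resource nonce is returned to $c$ and then forwarded only to $b$ over TLS. The new issuer-identity check at the RS is what closes the Cuckoo's Token gap: $\mi{rs}$ only accepts $t$ if the AS identity the client reports matches the AS that $\mi{rs}$ introspects against, which is the honest $\mi{as}$. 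One then shows, by the secrecy invariants for honest browsers and the TLS abstraction, that $r$ never leaves the set of honest parties, contradicting derivability by the attacker. The analogous reasoning for the Read-Write access token leak relies on the id-token/access-token hash binding (the fix from Section~\ref{attack1-wrong-token-ep}).

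For \textbf{authentication} the structure mirrors the authorization case: assume the attacker derives the service token certifying login of an honest user under an honest-AS identity, and trace it back to an id token signed by the honest $\mi{as}$. Here I would use the unforgeability of the AS signature together with invariants showing honest clients only issue a service token after verifying the id token's issuer, audience, and the code/state/nonce bindings, so the token is bound to the session of the honest browser and never reaches the attacker. For \textbf{session integrity for web server clients with OAUTB}, I would prove, using the OAUTB code-binding invariant, that the authorization code is bound to the browser/client token binding ID $k_{B,C}$, so an attacker cannot inject a code obtained in a different session; combined with the \texttt{\_\_Secure-} cookie prefix (which defeats even the network attacker's cookie-setting ability), this forces the flow the user completes at $c$ to be the one the user started, and—when $\mi{as}$ is honest—to correspond to the identity the user actually authenticated with.

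The main obstacle I anticipate is the sheer case explosion: because a single FAPI web system simultaneously contains every configuration in Figure~\ref{fig:fapi-overview} (Read-Only vs.\ Read-Write, public vs.\ confidential, JWS vs.\ mTLS client authentication, mTLS vs.\ OAUTB holder-of-key, app vs.\ web-server clients, Hybrid vs.\ JARM), each structural invariant must be stated uniformly and then verified across all branches, and the attacker is granted the strong leakage capabilities from Section~\ref{sec-assumptions} (leaking authorization requests, authorization responses for app clients, access tokens, and a misconfigured token endpoint). The delicate part is isolating exactly which secret remains protected in each branch despite these leaks, and ensuring the invariants are strong enough to be inductive yet precise enough to survive the permitted leaks—in particular, showing that the leaked request/response/token does not help the attacker because the binding key (PKCE verifier tied to a signed JWT for public clients, or the token binding private key held only by $c$ or $b$) never leaks. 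Session integrity is the tightest: as noted in Section~\ref{attack:SI}, it provably holds \emph{only} for the OAUTB web-server configuration, so the proof must carefully exploit the referred-token-binding mechanism and cannot be weakened to other branches.
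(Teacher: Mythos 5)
Your proposal matches the paper's own proof in both structure and substance: the paper likewise proves each property by contradiction, tracing the relevant secret (resource access nonce, service session id, login session cookie) back through a stack of non-leakage and binding lemmas (client authentication, PKCE challenge origin, id-token/response-JWS secrecy, mTLS/OAUTB token binding, the RS issuer check, the \texttt{at\_hash} binding, and the \emph{\_\_Secure-} prefix), with exactly the case split over configurations you anticipate. The only cosmetic difference is that the paper's invariants are phrased as ``this term can only have been created at line X'' arguments rather than an explicit induction on run length, which amounts to the same thing.
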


We emphasize that the FAPI web systems take
into account the strong attacker the FAPI is supposed to withstand as
explained in Section~\ref{sec-assumptions}. Such attackers immediately break plain
OAuth 2.0 and OpenID Connect. 
This, together with the various OAuth 2.0 security extensions which
the FAPI uses and combines in different ways, and which have not
formally been analyzed before, makes the proof challenging.

\section{Conclusion}\label{sec:conclusion-outlook}

In this paper, we performed the first formal
analysis of an Open Banking API, namely the OpenID Financial-grade
API. Based on the Web Infrastructure Model, we built a comprehensive
model comprising all protocol participants (clients, authorization
servers, and resource servers) and all important options employed in
the FAPI: clients can be app clients or web server clients and can
make use of either the Read-Only or the Read-Write profile. We modeled
all specified methods for authenticating at the authorization server
and both mechanisms for binding tokens to the client, namely, Mutual
TLS and OAuth 2.0 Token Binding. We also modeled PKCE, JWS Client
Assertions, and the JWT Secured Authorization Response Mode (JARM).

Based on this model, we then defined precise security properties for
the FAPI, namely authorization, authentication, and session integrity.
While trying to prove these properties for the FAPI, we found several
vulnerabilities that can enable an attacker to access protected
resources belonging to an honest user or
perform attacks on session integrity. We developed fixes against these
attacks and formally verified the security of the (fixed) OpenID FAPI.

This is an important result since the FAPI enjoys wide industry
support and is a promising candidate for the future lead in open
banking APIs. Financial-grade applications entail very high security
requirements that make a thorough formal security analysis, as 
performed in this paper, indispensable.

Our work also constitutes the very first analysis of various OAuth
security extensions, namely PKCE, OAuth mTLS, OAUTB, JARM, and JWS
Client Assertions.

 \smallskip\noindent\emph{Acknowledgements.}
 This work was partially supported by \textit{Deu\-tsche
 Forschungsgemeinschaft} (DFG) through  Grant KU\ 1434/10-2.

\newpage

\ifIncludeTechnicalReport
\FloatBarrier
\onecolumn
\fi
\appendices

\section{Authorization Request Leak Attack -- Details}\label{app:authorizationrequestleakattack}

We here provide further details about the authorization request leak
attack, which was only sketched in Section~\ref{attack:SI}.

A concrete instantiation of this attack is shown in
Figure~\ref{fig:app-attack-SI}, where the scenario is based on the
Read-Only flow of a public client. As explained
below, similar attacks
also work for all other configurations of the FAPI (except for web
server clients which use OAUTB, for which, as mentioned, we show that
they are not susceptible in Section~\ref{sec:analysis}).

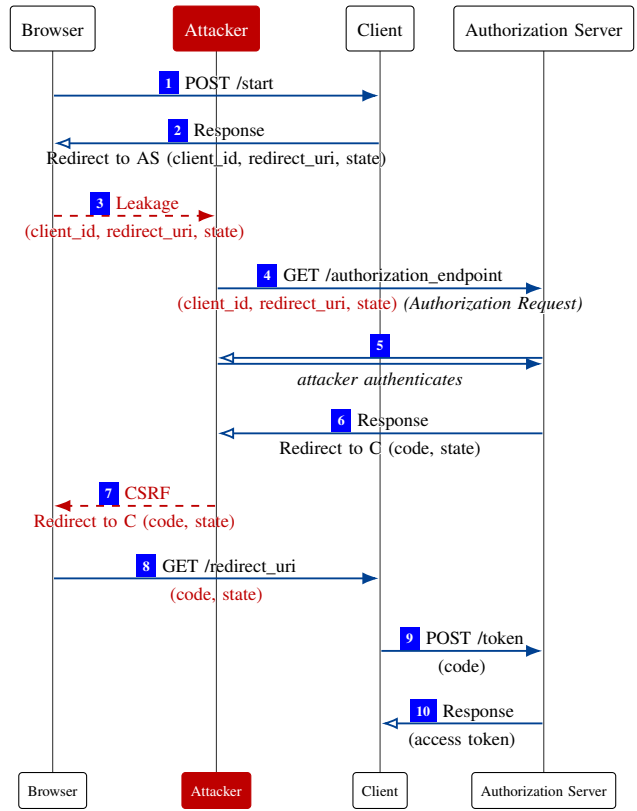
\begin{figure}[tb]
  \centering  
          \begin{tikzpicture}[]
        \pgfdeclarelayer{arrows}
        \pgfdeclarelayer{groups}
        \pgfdeclarelayer{markers}
        \pgfsetlayers{groups,arrows,main,markers}

        \matrix [column sep={0.12\textwidth,between origins}, row sep=0.1ex]
        {
        \node[annex_matrix_node,inner sep=0,outer sep=0](pos-0-0){}; &\node[annex_matrix_node,inner sep=0,outer sep=0](pos-1-0){}; &\node[annex_matrix_node,inner sep=0,outer sep=0](pos-2-0){}; &\node[annex_matrix_node,inner sep=0,outer sep=0](pos-3-0){};\node[annex_matrix_dummy_height,minimum height=5ex,anchor=center]{};\node[annex_matrix_dummy_height,minimum height=5ex,anchor=center]{};\node[annex_matrix_dummy_height,minimum height=5ex,anchor=center]{};\node[annex_matrix_dummy_height,minimum height=5ex,anchor=center]{};\\
\node[annex_matrix_node,inner sep=0,outer sep=0](pos-0-1){}; &\node[annex_matrix_node,inner sep=0,outer sep=0](pos-1-1){}; &\node[annex_matrix_node,inner sep=0,outer sep=0](pos-2-1){}; &\node[annex_matrix_node,inner sep=0,outer sep=0](pos-3-1){};\node[annex_matrix_dummy_height,minimum height=4ex,anchor=south,yshift=-1ex]{};\\
\node[annex_matrix_node,inner sep=0,outer sep=0](pos-0-2){}; &\node[annex_matrix_node,inner sep=0,outer sep=0](pos-1-2){}; &\node[annex_matrix_node,inner sep=0,outer sep=0](pos-2-2){}; &\node[annex_matrix_node,inner sep=0,outer sep=0](pos-3-2){};\node[annex_matrix_dummy_height,minimum height=4ex+2ex,anchor=north,yshift=3ex]{};\\
\node[annex_matrix_node,inner sep=0,outer sep=0](pos-0-3){}; &\node[annex_matrix_node,inner sep=0,outer sep=0](pos-1-3){}; &\node[annex_matrix_node,inner sep=0,outer sep=0](pos-2-3){}; &\node[annex_matrix_node,inner sep=0,outer sep=0](pos-3-3){};\node[annex_matrix_dummy_height,minimum height=4ex+2ex,anchor=north,yshift=3ex]{};\\
\node[annex_matrix_node,inner sep=0,outer sep=0](pos-0-4){}; &\node[annex_matrix_node,inner sep=0,outer sep=0](pos-1-4){}; &\node[annex_matrix_node,inner sep=0,outer sep=0](pos-2-4){}; &\node[annex_matrix_node,inner sep=0,outer sep=0](pos-3-4){};\node[annex_matrix_dummy_height,minimum height=4ex+2ex,anchor=north,yshift=3ex]{};\\
\node[annex_matrix_node,inner sep=0,outer sep=0](pos-0-5){}; &\node[annex_matrix_node,inner sep=0,outer sep=0](pos-1-5){}; &\node[annex_matrix_node,inner sep=0,outer sep=0](pos-2-5){}; &\node[annex_matrix_node,inner sep=0,outer sep=0](pos-3-5){};\node[annex_matrix_dummy_height,minimum height=4ex+2ex,anchor=north,yshift=3ex]{};\\
\node[annex_matrix_node,inner sep=0,outer sep=0](pos-0-6){}; &\node[annex_matrix_node,inner sep=0,outer sep=0](pos-1-6){}; &\node[annex_matrix_node,inner sep=0,outer sep=0](pos-2-6){}; &\node[annex_matrix_node,inner sep=0,outer sep=0](pos-3-6){};\node[annex_matrix_dummy_height,minimum height=4ex+2ex,anchor=north,yshift=3ex]{};\\
\node[annex_matrix_node,inner sep=0,outer sep=0](pos-0-7){}; &\node[annex_matrix_node,inner sep=0,outer sep=0](pos-1-7){}; &\node[annex_matrix_node,inner sep=0,outer sep=0](pos-2-7){}; &\node[annex_matrix_node,inner sep=0,outer sep=0](pos-3-7){};\node[annex_matrix_dummy_height,minimum height=4ex+2ex,anchor=north,yshift=3ex]{};\\
\node[annex_matrix_node,inner sep=0,outer sep=0](pos-0-8){}; &\node[annex_matrix_node,inner sep=0,outer sep=0](pos-1-8){}; &\node[annex_matrix_node,inner sep=0,outer sep=0](pos-2-8){}; &\node[annex_matrix_node,inner sep=0,outer sep=0](pos-3-8){};\node[annex_matrix_dummy_height,minimum height=4ex+2ex,anchor=north,yshift=3ex]{};\\
\node[annex_matrix_node,inner sep=0,outer sep=0](pos-0-9){}; &\node[annex_matrix_node,inner sep=0,outer sep=0](pos-1-9){}; &\node[annex_matrix_node,inner sep=0,outer sep=0](pos-2-9){}; &\node[annex_matrix_node,inner sep=0,outer sep=0](pos-3-9){};\node[annex_matrix_dummy_height,minimum height=4ex+2ex,anchor=north,yshift=3ex]{};\\
\node[annex_matrix_node,inner sep=0,outer sep=0](pos-0-10){}; &\node[annex_matrix_node,inner sep=0,outer sep=0](pos-1-10){}; &\node[annex_matrix_node,inner sep=0,outer sep=0](pos-2-10){}; &\node[annex_matrix_node,inner sep=0,outer sep=0](pos-3-10){};\node[annex_matrix_dummy_height,minimum height=4ex+2ex,anchor=north,yshift=3ex]{};\\
\node[annex_matrix_node,inner sep=0,outer sep=0](pos-0-11){}; &\node[annex_matrix_node,inner sep=0,outer sep=0](pos-1-11){}; &\node[annex_matrix_node,inner sep=0,outer sep=0](pos-2-11){}; &\node[annex_matrix_node,inner sep=0,outer sep=0](pos-3-11){};\node[annex_matrix_dummy_height,minimum height=5ex,anchor=center]{};\node[annex_matrix_dummy_height,minimum height=5ex,anchor=center]{};\node[annex_matrix_dummy_height,minimum height=5ex,anchor=center]{};\node[annex_matrix_dummy_height,minimum height=5ex,anchor=center]{};\\
};

\node[name=StartParty_9_0,annex_start_party_box,attacker] at (pos-1-0) {Attacker};

\node[name=StartParty_10_0,annex_start_party_box,] at (pos-0-0) {Browser};

\node[name=StartParty_11_0,annex_start_party_box,] at (pos-2-0) {Client};

\node[name=StartParty_12_0,annex_start_party_box,] at (pos-3-0) {Authorization Server};

\node[name=EndParty_23_0,annex_end_party_box,attacker] at (pos-1-11) {Attacker};

\node[name=EndParty_24_0,annex_end_party_box,] at (pos-2-11) {Client};

\node[name=EndParty_25_0,annex_end_party_box,] at (pos-3-11) {Authorization Server};

\node[name=EndParty_26_0,annex_end_party_box,] at (pos-0-11) {Browser};

\begin{pgfonlayer}{arrows}%

\draw[annex_lifeline] (pos-1-0) -- (pos-1-11);

\draw[annex_lifeline] (pos-0-0) -- (pos-0-11);

\draw[annex_lifeline] (pos-2-0) -- (pos-2-11);

\draw[annex_lifeline] (pos-3-0) -- (pos-3-11);

        \draw[annex_http_request] (pos-0-1) to node [annex_arrow_text,above=2.6pt,anchor=base](HTTPRequest_13_0){\setcounter{protostep}{0}\protostep{state-leak:HTTPRequest_13} \contour{white}{POST \nolinkurl{/start}}}  (pos-2-1);

        \draw[annex_http_response] (pos-2-2) to node [annex_arrow_text,above=2.6pt,anchor=base](HTTPResponse_14_0){\setcounter{protostep}{1}\protostep{state-leak:auth-req-1} \contour{white}{Response}} node [annex_arrow_text,below=8pt,anchor=base](HTTPResponse_14_1){\contour{white}{Redirect to AS (client\_id, redirect\_uri, state)}}  (pos-0-2);

        \draw[annex_http_request,draw=figurehighlight1,dashed] (pos-0-3) to node [annex_arrow_text,above=2.6pt,anchor=base](HTTPRequest_15_0){\setcounter{protostep}{2}\protostep{state-leak:auth-req-leak-1} \contour{white}{\textcolor{figurehighlight1}{Leakage}}} node [annex_arrow_text,below=8pt,anchor=base](HTTPRequest_15_1){\contour{white}{\textcolor{figurehighlight1}{(client\_id, redirect\_uri, state)}}}  (pos-1-3);

        \draw[annex_http_request] (pos-1-4) to node [annex_arrow_text,above=2.6pt,anchor=base](HTTPRequest_16_0){\setcounter{protostep}{3}\protostep{state-leak:auth-req-leak-2} \contour{white}{GET \nolinkurl{/authorization_endpoint}}} node [annex_arrow_text,below=8pt,anchor=base](HTTPRequest_16_1){\contour{white}{\textcolor{figurehighlight1}{(client\_id, redirect\_uri, state)} \textit{(Authorization Request)}}}  (pos-3-4);

        \draw[annex_http_response,transform canvas={yshift=0.25ex}] (pos-3-5) to node [annex_arrow_text,above=2.6pt,anchor=base](HTTPResponseRequest_17_0){\setcounter{protostep}{4}\protostep{state-leak:att-authN} } (pos-1-5);
        \draw[annex_http_request,transform canvas={yshift=-0.25ex}] (pos-1-5) to node [annex_arrow_text,below=8pt,anchor=base](HTTPResponseRequest_17_1){\contour{white}{\textit{attacker authenticates}}}  (pos-3-5);

        \draw[annex_http_response] (pos-3-6) to node [annex_arrow_text,above=2.6pt,anchor=base](HTTPResponse_18_0){\setcounter{protostep}{5}\protostep{state-leak:auth-resp-1} \contour{white}{Response}} node [annex_arrow_text,below=8pt,anchor=base](HTTPResponse_18_1){\contour{white}{Redirect to C (code, state)}}  (pos-1-6);

        \draw[annex_http_request,draw=figurehighlight1,dashed] (pos-1-7) to node [annex_arrow_text,above=2.6pt,anchor=base](HTTPRequest_19_0){\setcounter{protostep}{6}\protostep{state-leak:auth-resp-csrf} \contour{white}{\textcolor{figurehighlight1}{CSRF}}} node [annex_arrow_text,below=8pt,anchor=base](HTTPRequest_19_1){\contour{white}{\textcolor{figurehighlight1}{Redirect to C (code, state)}}}  (pos-0-7);

        \draw[annex_http_request] (pos-0-8) to node [annex_arrow_text,above=2.6pt,anchor=base](HTTPRequest_20_0){\setcounter{protostep}{7}\protostep{state-leak:auth-resp-2} \contour{white}{GET \nolinkurl{/redirect_uri}}} node [annex_arrow_text,below=8pt,anchor=base](HTTPRequest_20_1){\contour{white}{\textcolor{figurehighlight1}{(code, state)}}}  (pos-2-8);

        \draw[annex_http_request] (pos-2-9) to node [annex_arrow_text,above=2.6pt,anchor=base](HTTPRequest_21_0){\setcounter{protostep}{8}\protostep{state-leak:c-token-req} \contour{white}{POST \nolinkurl{/token}}} node [annex_arrow_text,below=8pt,anchor=base](HTTPRequest_21_1){\contour{white}{(code)}}  (pos-3-9);

        \draw[annex_http_response] (pos-3-10) to node [annex_arrow_text,above=2.6pt,anchor=base](HTTPResponse_22_0){\setcounter{protostep}{9}\protostep{state-leak:c-token-resp} \contour{white}{Response}} node [annex_arrow_text,below=8pt,anchor=base](HTTPResponse_22_1){\contour{white}{(access token)}}  (pos-2-10);

\end{pgfonlayer}

\begin{pgfonlayer}{markers}%

\end{pgfonlayer}
        \end{tikzpicture}
        
  \caption{Leakage of Authorization Request Attack}
  \label{fig:app-attack-SI}
\end{figure}

In the Authorization Request Leak Attack, the client sends the
authorization request to the browser in
Step~\refprotostep{state-leak:auth-req-1}, where it leaks to the
attacker in Step~\refprotostep{state-leak:auth-req-leak-1}. From here
on, the attacker behaves as the browser and logs himself in
(Step~\refprotostep{state-leak:att-authN}), hence, the authorization
code received in Step~\refprotostep{state-leak:auth-resp-1} is
associated with the identity of the attacker.

The state value used in the authorization request aims at preventing Cross-Site Request Forgery (CSRF) attacks.
However, as the state value leaks, this protection does not work. For showing that this is the case,
we assume that a CSRF attack happens. If, for example, the user is visiting a website that
is controlled by the attacker, then the attacker can 
send, from the browser of the user, a request to the AS containing the 
code and the state value (Step~\refprotostep{state-leak:auth-resp-2}).
As the state received by the client is the same that it included in the 
authorization request, the client continues the flow and uses the 
code to retrieve an access token in
Steps~\refprotostep{state-leak:c-token-req} and
\refprotostep{state-leak:c-token-resp}.

This access token is associated with the attacker, which means that
the honest user is accessing resources belonging to the attacker.

As a result, the honest user can be logged in under the
identity of the attacker if the authorization server returns an id
token. In the case of the Read-Write flow, the honest user can modify
resources of the attacker: for example, she might upload personal documents
to the account of the attacker.

As noted above, this attack might happen for all configurations,
except for the Read-Write flow when the client is a web server client using OAUTB
(see Figure~\ref{fig:fapi-overview}).

In all other configurations, this attack can happen as the  attacker can behave exactly like the browser of the honest user, i.e., 
after receiving the authorization request, the attacker can send this request to the AS, log in under his own identity, 
and would then receive a response that the client accepts.  The only 
flow in which this is different is the Read-Write flow where the client is a web server and uses OAUTB, 
as here, the browser (and therefore, also the attacker) needs to prove possession of a key pair 
(i.e., the key pair used for the client). As the attacker\highlightIfDiffMinor{cannot}prove possession of the private key of the 
key pair which the browser uses for the client, the AS would then stop the flow. 
(In the other flows, the AS does not check if the response was sent by the browser that logged in the user.)

If we say that the FAPI is not required to be secure if the authorization request leaks 
(i.e., if we remove the assumption that the authorization request leaks), then the flow is 
still not secure, as the authorization response might still leak to the attacker (see Section~\ref{sec:FAPIleakauthorizationresponseassumption}), 
which also contains the state value.  More precisely, the authorization response might leak in 
the case of app clients due to the operating system sending the response to the attacker app (for details, see Section~\ref{section-pkce}).
After receiving the authorization response, the attacker app knows the state value and can start a new flow using this value. 
The attacker can\highlightIfDiffMinor{then}continue from Step~\refprotostep{state-leak:auth-req-leak-1} (Figure~\ref{fig:app-attack-SI}),
and when receiving the authorization response 
(which is a URI containing the OAuth parameters), he could, using his own app that runs on the device of the victim, 
call the legitimate client app with this URI (i.e., with the code that is associated with the identity of the attacker and the state 
value with which the client started the flow).  The effect of this is that the legitimate app, 
at which the honest user started the flow, would continue the flow using an authorization code associated with the attacker.  
Therefore, the honest user would either be logged in with the identity of the attacker or use the resources of the attacker.

 We note that even encrypting the state value contained in the
 authorization request does not solve the problem, as the attacker is
 using the whole authorization request. (Strictly speaking, he acts as
 the browser of the honest user).

\ifIncludeTechnicalReport\clearpage\fi
\section{The WIM:\highlightIfDiffMinor{Some Background}}\label{app:wim-sketch}

We here provide more details about the Web Infrastructure Model.

\paragraph{Signature and Messages} As mentioned, the WIM follows the
Dolev-Yao approach where messages are expressed as formal terms over a
signature $\Sigma$.  For example, in the WIM an HTTP request is
represented as a term $r$ containing a nonce, an HTTP method, a domain
name, a path, URI parameters, request headers, and a message body. For
instance, an HTTP request for the URI \url{http://ex.com/show?p=1} is
represented as
$\mi{r} := \langle \cHttpReq, n_1, \mGet, \str{ex.com}, \str{/show},
\an{\an{\str{p},1}}, \an{}, \an{} \rangle$ where the body and the list
of request headers is empty. An HTTPS request for $r$ is of the form
$\ehreqWithVariable{r}{k'}{\pub(k_\text{ex.com})}$, where $k'$ is a
fresh symmetric key (a nonce) generated by the sender of the request
(typically a browser); the responder is supposed to use this key to
encrypt the response.

The \emph{equational theory} associated
with $\Sigma$
is defined as usual in Dolev-Yao models. The theory induces a congruence
relation $\equiv$
on terms, capturing the meaning of the function symbols in $\Sigma$.
For instance, the equation in the equational theory which captures
asymmetric decryption is $\dec{\enc x{\pub(y)}}{y}=x$.
With this, we have that, for example, $\dec{\ehreqWithVariable{r}{k'}{\pub(k_\text{ex.com})}}{k_\text{ex.com}}\equiv
  \an{r,k'}\,,$
i.e., these two terms are equivalent w.r.t.~the equational theory.

\paragraph{Scripts} A \emph{script} models JavaScript running in a browser. Scripts are
defined similarly to Dolev-Yao processes. When triggered by a browser, a
script is provided with state information. The script then outputs a
term representing a new internal state and a command to be interpreted
by the browser (see also the specification of browsers below). 
Similarly to an attacker process, the so-called
\emph{attacker script} outputs everything that is derivable from
the input.

\paragraph{Running a system} As mentioned, a run of a system is a
sequence of configurations.  The transition from one configuration to
the next configuration in a run is called a \emph{processing~step}. We
write, for example, $Q = (S, E, N) \xrightarrow[]{} (S', E', N')$ to
denote the transition from the configuration $(S, E, N)$ to the
configuration $(S', E', N')$, where $S$ and $S'$ are the states of the
processes in the system, $E$ and $E'$ are~pools of waiting events, and
$N$ and $N'$ are sequences of unused~nonces.

\paragraph{Web Browsers}
An honest browser is thought to be used by one honest user, who is
modeled as part of the browser. User actions, such as following a link, are
modeled as non-deterministic actions of the web browser. User
credentials are stored in the initial state of the browser and are
given to selected web pages when needed. Besides user credentials, the
state of a web browser contains (among others) a tree of windows and
documents, cookies, and web storage data (localStorage and
sessionStorage).

A \emph{window} inside a browser contains a set of
\emph{documents} (one being active at any time), modeling the
history of documents presented in this window. Each represents one
loaded web page and contains (among others) a script and a list of
subwindows (modeling iframes). The script, when triggered by the
browser, is provided with all data it has access to, such as a
(limited) view on other documents and windows, certain cookies, and
web storage data. Scripts then output a command and a new state. This
way, scripts can navigate or create windows, send XMLHttpRequests and
postMessages, submit forms, set/change cookies and web storage data,
and create iframes. Navigation and security rules ensure that scripts
can manipulate only specific aspects of the browser's state, according
to the relevant web standards.

A browser can output messages on the network of different types,
namely DNS and HTTP(S) (including XMLHttpRequests), and it processes
the responses. Several HTTP(S) headers are modeled, including, for
example, cookie, location, strict transport security (STS), and origin
headers. A browser, at any time, can also receive a so-called trigger
message upon which the browser non-de\-ter\-min\-is\-tically chooses an
action, for instance, to trigger a script in some document. The script
now outputs a command, as described above, which is then further
processed by the browser. Browsers can also become corrupted, i.e., be
taken over by web and network attackers. Once corrupted, a browser
behaves like an attacker process.

As detailed in\chooseReferenceTRPaper{Appendix~\ref{appendix-additions-to-the-wim},}
{our technical report~\cite{FettHosseyniKuesters-TR-FAPI-2018},}we extended the browser model of the WIM slightly in order
to incorporate OAUTB in the browser model.  We furthermore added the
behavior of the \emph{\_\_Secure-} prefix of cookies to the model,
which specifies that such cookies shall only be accepted when they are
transmitted over secure channels~\cite{rfc6265bis-httpwg}. Note that
for the FAPI, mTLS is only needed between clients and
servers. Therefore, mTLS has been modeled on top of the WIM, i.e., as
part of the modeling of FAPI clients and servers. The servers we
modeled for the FAPI of course also support OAUTB.

\highlightIfDiffRemoved{
 \subsection{Authorization Property}
  \input{appendix-authZ-from-initial-submission} %
}
\highlightIfDiffAdded{
  \ifIncludeTechnicalReport\clearpage\fi
\section{Excerpt of Client Model} \label{app:excerpt-of-client-model}

In this section, we provide a brief excerpt of the client model 
in order to give an impression of the formal
model.\chooseReferenceTRPaper{See Appendix~\ref{label:appendix-full-fapi-model}
for the full formal model of the FAPI.}{See our technical
report~\cite{FettHosseyniKuesters-TR-FAPI-2018} for the full formal model
of the FAPI.}

The excerpt given in Algorithm~\ref{excerpt:client-token-request}  
shows how the client prepares and
sends the token request to the authorization server, i.e.,
the part in which the client sends the authorization code in exchange for
an access token (and depending on the flow, also an id 
token).

This function is called by the client. The first two inputs are the
session identifier of the session (i.e., the session of the
resource owner at the client) and the authorization code
that the client wants to send to the AS. The value
$\mi{responseValue}$ contains information related to mTLS
or OAUTB (if used for the current flow). The last input is
the current state of the client.

In Lines~\ref{excerpt-str-check-if-misconf-tep} to
\ref{excerpt-line:r-client-chooses-right-tep}, the client chooses
either the token endpoint of the AS or some URL that was chosen non-deterministically.
This models the assumption shown in Section~\ref{sec:FAPImisconfiguredTEassumption}, 
which requires the Read-Write profile of
the FAPI to be secure even if the token endpoint is misconfigured. 

Starting from Line~\ref{excerpt-c-STR-first-if},
the function chooses the parameters of the request that depend
on the flow and configuration (see Figure~\ref{fig:fapi-overview}).

If the client uses the Read-Only profile, the token request always contains
the PKCE verifier (Line~\ref{excerpt-c-STR-first-if}).
For a confidential client (which means that the client has to authenticate 
at the token endpoint), the client either authenticates using JWS Client Assertions
(Line~\ref{excerpt-branch:r-client-creates-assertion}, see also Section~\ref{cauthn-jws-client-assertion}),
or with mTLS (Line~\ref{excerpt-branch:STR-client-using-MTLS}; 
for details on our model of mTLS refer to\chooseReferenceTRPaper{Appendix~\ref{appendix-mtls}).}{our technical
report~\cite{FettHosseyniKuesters-TR-FAPI-2018}).}

If the client uses the Read-Write profile, the client uses either
mTLS (again Line~\ref{excerpt-branch:STR-client-using-MTLS})
or OAUTB (Line~\ref{excerpt-client-str-branch-oautb-sig-2};
for details on our model of OAUTB refer to\chooseReferenceTRPaper{Appendix~\ref{appendix:OAUTB}).}{our technical
report~\cite{FettHosseyniKuesters-TR-FAPI-2018}).}

\begin{algorithm}
\caption{\label{excerpt:client-token-request} Relation of a Client $R^c$ -- Request to token endpoint.}
\begin{algorithmic}[1]
\algrenewcommand\algorithmicindent{0.8em}%
\Function{$\mathsf{SEND\_TOKEN\_REQUEST}$}{$\mi{sessionId}$, $\mi{code}$, $\mi{responseValue}$, $s'$}
\raggedright
  \Let{$\mi{session}$}{$s'.\str{sessions}[\mi{sessionId}]$}
  \Let{$\mi{identity}$}{$\mi{session}[\str{identity}]$} \label{excerpt-line:str-client-chooses-identity}
  \Let{$\mi{issuer}$}{$s'.\str{issuerCache}[\mi{identity}]$} \label{excerpt-line:str-client-chooses-issuer}
  \If{$\mi{session}[\str{misconfiguredTEp}] \equiv \True$} \label{excerpt-str-check-if-misconf-tep}
  	\Let{$\mi{url}$}{$\mi{session}[\str{token\_ep}] $} 
	\Else
    \Let{$\mi{url}$}{$s'.\str{oidcConfigCache}[\mi{issuer}][\str{token\_ep}]$} \label{excerpt-line:r-client-chooses-right-tep}
	\EndIf

	\Let{$\mi{credentials}$}{$s'.\str{clientCredentialsCache}[\mi{issuer}]$}
  \Let{$\mi{clientId}$}{$\mi{credentials}[\str{client\_id}]$}

  \Let{$\mi{clientType}$}{$\mi{credentials}[\str{client\_type}]$} 
 	\Let{$\mi{profile}$}{$\mi{credentials}[\str{profile}]$}  
 	\Let{$\mi{isApp}$}{$\mi{credentials}[\str{is\_app}]$}  

	\Let{$\mi{body}$}{$[\str{grant\_type}{:} \str{authorization\_code}, \str{code}{:}\mi{code}, \str{redirect\_uri}{:}\mi{session}[\str{redirect\_uri}], \str{client\_id}{:}\mi{clientId}	]$} \label{excerpt-c-STR-main-body}
  \If{$\mi{profile} \equiv \str{r}$} \label{excerpt-c-STR-first-if}
			\Let{$\mi{body}[\str{pkce\_verifier}]$}{$\mi{session}[\str{pkce\_verifier}] $} 
	\EndIf
  \If{$\mi{profile} \equiv \str{r} \wedge \mi{clientType} \equiv \str{pub} $}
		\Let{$\mi{message}$}{
      $\hreq{ nonce=\nu_2,
				method=\mPost,
				xhost=\mi{url}.\str{domain},
				path=\mi{url}.\str{path},
				parameters=\mi{url}.\str{parameters},
				headers=\bot,
				xbody=\mi{body}}$}  
    \CallFun{HTTPS\_SIMPLE\_SEND}{$[\str{responseTo}{:}\str{TOKEN}, \str{session}{:}\mi{sessionId}]$, $\mi{message}$, $s'$} \label{excerpt-line:r-client-sends-token-req-pub}
  	\ElsIf{$\mi{profile} \equiv \str{r} \wedge \mi{clientType} \equiv \str{conf\_JWS}$} \label{excerpt-branch:r-client-creates-assertion}
			\Let{$\mi{clientSecret}$}{$\mi{credentials}[\str{client\_secret}]$} 
			\Let{$\mi{jwt}$}{$[ \str{iss}{:}\mi{clientId}, 
													\str{aud}{:}\mi{url}.\str{domain} 
												]$} 
			\Let{$\mi{body}[\str{assertion}]$}{$\mathsf{mac}(\mi{jwt}, \mi{clientSecret})$}
			\Let{$\mi{message}$}{$\hreq{ nonce=\nu_2,
					method=\mPost,
					xhost=\mi{url}.\str{domain},
					path=\mi{url}.\str{path},
					parameters=\mi{url}.\str{parameters},
					headers=\bot,
          xbody=\mi{body}}$}   \label{excerpt-line:r-client-assertion-rec}
			\CallFun{HTTPS\_SIMPLE\_SEND}{$[\str{responseTo}{:}\str{TOKEN}, \str{session}{:}\mi{sessionId}]$, $\mi{message}$, $s'$} \label{excerpt-line:r-client-sends-assertion} \label{excerpt-line:r-client-sends-token-req-jws} %

  \ElsIf{$\mi{clientType} \equiv \str{conf\_MTLS} $} \Comment{both profiles} \label{excerpt-branch:STR-client-using-MTLS}
    \If{$\mi{responseValue}[\str{type}] \not \equiv \str{MTLS}$} \label{excerpt-c-str-mtls-check-resp-val-type}
         \Stop{\DefStop} 
    \EndIf
		\Let{$\mi{body}[\str{TLS\_AuthN}]$}{$\mi{responseValue}[\str{mtls\_nonce}]$} 
		\Let{$\mi{message}$}{$\hreq{ nonce=\nu_2,
			method=\mPost,
			xhost=\mi{url}.\str{domain},
			path=\mi{url}.\str{path},
			parameters=\mi{url}.\str{parameters},
			headers=\bot,
			xbody=\mi{body}}$}  
		\CallFun{HTTPS\_SIMPLE\_SEND}{$[\str{responseTo}{:}\str{TOKEN}$, $\str{session}{:}\mi{sessionId}]$, $\mi{message}$, $s'$} \label{excerpt-line:client-sends-token-req-mtls}
	\Else \Comment{rw with OAUTB} \label{excerpt-client-str-branch-oautb-sig-2}
    \If{$\mi{responseValue}[\str{type}] \not \equiv \str{OAUTB}$} \label{excerpt-client-str-resp-value-oautb}
         \Stop{\DefStop} 
    \EndIf
    \Let{$\mi{ekm}$}{$\mi{responseValue}[\str{ekm}]$} \label{excerpt-line:str-oautb-respVal1}

		\Let{$\mi{TB\_AS}$}{$s'.\str{TBindings}[\mi{url}.\str{host}]$} \Comment{priv. key}
		\Let{$\mi{TB\_RS}$}{$s'.\str{TBindings}[\mi{session}[\str{RS}]]$} \Comment{priv. key} \label{excerpt-c-str-retrieve-rs-tb}
	 	\Let{$\mi{TB\_Msg\_prov}$}{$[\str{id}{:} \mathsf{pub}(\mi{TB\_AS}), \str{sig}{:} \mathsf{sig}( \mi{ekm} , \mi{TB\_AS})]$} 
		\Let{$\mi{TB\_Msg\_ref}$}{$[\str{id}{:} \mathsf{pub}(\mi{TB\_RS}),  \str{sig}{:} \mathsf{sig}( \mi{ekm} , \mi{TB\_RS})]$} 
			
		\Let{$\mi{headers}$}{$[\str{Sec\mhyphen{}Token\mhyphen{}Binding}{:}[\str{prov}{:}\mi{TB\_Msg\_prov}$, $\str{ref}{:}\mi{TB\_Msg\_ref}]]$}

    \If{$\mi{clientType} \equiv \str{conf\_OAUTB}$} \Comment{client authentication}
				\Let{$\mi{clientSecret}$}{$\mi{credentials}[\str{client\_secret}]$} 
        \Let{$\mi{jwt}$}{$[ \str{iss}{:}\mi{clientId},  \label{excerpt-line:jws-oautb-set-aud}
														\str{aud}{:}\mi{url}.\str{domain},  
													]$} 
			\Let{$\mi{body}[\str{assertion}]$}{$\mathsf{mac}(\mi{jwt}, \mi{clientSecret})$} \label{excerpt-line:rw-client-creates-assertion} %
		\EndIf
    \If{$\mi{isApp} \equiv \bot$} \Comment{W.S. client: TBID used by browser}
			\Let{$\mi{body}[\str{pkce\_verifier}]$}{$\mi{session}[\str{browserTBID}] $}  \label{excerpt-line:rw-client-pkce-cv-browserTBID}
		\EndIf
    \Let{$\mi{message}$}{$\hreq{ nonce=\nu_2, 
  		method=\mPost,
  		xhost=\mi{url}.\str{domain},
  		path=\mi{url}.\str{path},
  		parameters=\mi{url}.\str{parameters},
  		headers=\mi{headers},
  		xbody=\mi{body}}$}  \label{excerpt-line:jws-oautb-msg}
		\CallFun{HTTPS\_SIMPLE\_SEND}{$[\str{responseTo}{:}\str{TOKEN}$, $\str{session}{:}\mi{sessionId}]$, $\mi{message}$, $s'$} \label{excerpt-line:jws-oautb-cont-send}
	\EndIf
\EndFunction
\end{algorithmic} %
\end{algorithm}

}
\ifIncludeTechnicalReport
\FloatBarrier\clearpage\fi
\section{Proof Sketch of Theorem~\ref{thm:theorem-1}, Authorization}\label{app:proofsketch}

We here provide a proof sketch of Theorem~\ref{thm:theorem-1} that is
concerned with the authorization property. The complete formal proof
of this theorem is given in\chooseReferenceTRPaper{Appendix~\ref{fapi-formal-proofs}.}
{the technical report \cite{FettHosseyniKuesters-TR-FAPI-2018}.}

For proving the authorization property, we show that\highlightIfDiffMinor{when a participant}provides
access to a resource, i.e., by sending a resource access nonce, this
access is not provided to the attacker:

\paragraph{Resource server does not provide the attacker access to resources}
We show that the resource server does not provide the attacker access to resources of an honest user.

In case of the Read-Only flow, we show that an access token associated with an honest client, an honest authorization
server, and an honest identity
does not leak to the attacker, and therefore, the attacker cannot obtain access to resources.

In case of the Read-Write flow, such an access token might leak to the attacker, but this
token cannot be used by the attacker at the resource server due to Token Binding, either via OAUTB or mTLS.

\paragraph{Web server client does not provide the attacker access to resources}

App clients are only usable via
the device they are running on, i.e., they are not usable over the network 
(by which we mean that if, for example, the user wants to view one of her documents
with an app client, she does this directly using the device).
Therefore, we only look at the case
of web server clients, as such a client can be used over the network,
e.g., by the browser of the end-user or by the attacker.

In the following, we show that honest web server clients do not provide the attacker
access to resources belonging to an honest
identity. We show this for all possible configurations that could trick the
client into doing so, e.g., with a misconfigured token endpoint or with an authorization server
controlled by the attacker that returns a leaked access token.

The access to the resource is provided to the sender of the redirection request. To access
a resource, this means that the attacker must have sent the request to the redirection
endpoint of the client.

For a Read-Only flow, the token endpoint 
is configured correctly.
This means that the attacker must include a code in the request
such that the client can exchange it for an access token.
We show that such a code (associated with an honest identity and the client) does not
leak to an attacker.

For a Read-Write flow, the token endpoint 
can be misconfigured such that it is controlled by the attacker, and
we also assume that access tokens leak to the attacker (see 
Section~\ref{sec-assumptions}).

We show that a leaked access token cannot be used at the client by the
attacker.  If only the token endpoint is controlled by the attacker,
he must include an id token (when using the OIDC Hybrid flow, see
below for the Authorization Code flow 
with JARM) in the token response
such that it contains the hash of the access token and be signed by
the honest authorization 
server (the hash of the access token was not
included in the original draft and was included by us as a mitigation
in Section~\ref{attack1-wrong-token-ep}). However, such an id token
does 
not leak to the attacker, which prevents the use of leaked access 
tokens at misconfigured token endpoints.
For the Authorization  
Code flow with JARM, the attacker would need a
response JWS. As in the case of the Hybrid flow, we show that the
response JWS needed by the client for accessing resources of an honest
identity does not leak.

A leaked access token can also be used by the attacker if
the client chooses an authorization server under the control of the 
attacker. Here, the id tokens are created by the attacker and accepted by the 
client. For preventing the use of this access token, the client
includes the 
issuer of the second id token (or of the response JWS defined by JARM)
in the request to the resource server,
as detailed in Section~\ref{attack4-phished-at-malicious-as}.
As each resource server has one preconfigured authorization server,
the resource server does not provide access to a resource in this case.

The only remaining case is that the attacker includes a code associated with
the honest user in the request to the redirection endpoint of the client. For the Hybrid flow, both id tokens
contained in the authorization response and in the token response are required to 
have the same subject attribute and the same issuer value, which means
that they are both signed by the authorization server. However, such an id
token does not leak to the attacker, which means that the client will stop
the flow when receiving the second id token contained in the token response.
When using JARM, this would require the attacker to send a response JWS signed
by the authorization server that contains the code that belongs to an honest client
and an honest user identity. In the technical report,
we show that such a response JWS does not leak to the attacker.

\ifIncludeTechnicalReport
  
  \FloatBarrier\clearpage
  \section{Modeling mTLS} \label{appendix-mtls}  
The WIM models TLS at a high level of abstraction.
An HTTP request is encrypted with the public key of the recipient and contains a
symmetric key, which is used for encrypting the HTTP response. 
Furthermore, the model contains no certificates or public key infrastructures but uses a
function that maps domains to their public key.

Figure~\ref{fig:mtls-overview1} shows an overview of how we modeled mTLS.
The basic idea is that the server sends a nonce encrypted with the public key
of the client. The client proves possession of the private key by
decrypting this message.
In Step~\refprotostep{mtls1:mtls-init-req}, the client sends its client identifier to 
the authorization server. The authorization server then looks up the
public key associated with the client identifier, chooses a nonce
and encrypts it with the public key. As depicted in Step~\refprotostep{mtls1:mtls-init-resp},
the server additionally includes its public key. When the client decrypts the message,
it checks if the public key belongs to the server it wants to send the original 
message to. This prevents man-in-the-middle attacks, as only the honest client can
decrypt the response and as the public key of the server cannot be changed
by an attacker.
In Step~\refprotostep{mtls1:mtls-second-req}, the client sends the original request
with the decrypted nonce. When the server receives this message, 
it knows that the nonce was decrypted by the honest client
(as only the client knows the corresponding private key)
and that the client had chosen to send the nonce to the server
(due to the public key included in the response). Therefore,
the server can conclude that the message was sent by the honest client.

In effect, this resembles the behavior of the TLS handshake, as the verification of the client
certificate in TLS is done by signing all handshake messages
\cite[Section 7.4.8]{rfc5246}, which also includes information
about the server certificate, which means that the signature cannot
be reused for another server.
Instead of signing a sequence that contains information about the receiver,
in our model, the client checks the sender of the nonce, and only sends the decrypted
nonce to the creator of the nonce. In other words, a nonce decrypted by an honest
server that gets decrypted by the honest client is never sent to the attacker.

As explained in Section~\ref{fapi:mTLS}, the client uses the same certificate it used for the token request
when sending the access token to the resource server. 
While the resource server has to check the possession of corresponding private keys,
the validity of the certificate was already checked at the authorization server and can be ignored by the resource server. 
Therefore, in our model of the FAPI, the client does not send its client id
to the resource server, but its public key, and the resource server
encrypts the message with this public key.

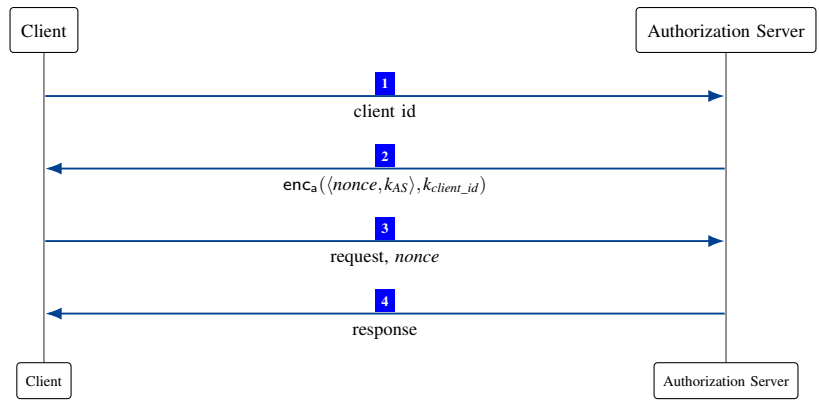
\begin{figure}[h!]
  \centering  
          \begin{tikzpicture}[]
        \pgfdeclarelayer{arrows}
        \pgfdeclarelayer{groups}
        \pgfdeclarelayer{markers}
        \pgfsetlayers{groups,arrows,main,markers}

        \matrix [column sep={0.5\textwidth,between origins}, row sep=0.1ex]
        {
        \node[annex_matrix_node,inner sep=0,outer sep=0](pos-0-0){}; &\node[annex_matrix_node,inner sep=0,outer sep=0](pos-1-0){};\node[annex_matrix_dummy_height,minimum height=5ex,anchor=center]{};\node[annex_matrix_dummy_height,minimum height=5ex,anchor=center]{};\\
\node[annex_matrix_node,inner sep=0,outer sep=0](pos-0-1){}; &\node[annex_matrix_node,inner sep=0,outer sep=0](pos-1-1){};\node[annex_matrix_dummy_height,minimum height=4ex+2ex,anchor=north,yshift=3ex]{};\\
\node[annex_matrix_node,inner sep=0,outer sep=0](pos-0-2){}; &\node[annex_matrix_node,inner sep=0,outer sep=0](pos-1-2){};\node[annex_matrix_dummy_height,minimum height=4ex+2ex,anchor=north,yshift=3ex]{};\\
\node[annex_matrix_node,inner sep=0,outer sep=0](pos-0-3){}; &\node[annex_matrix_node,inner sep=0,outer sep=0](pos-1-3){};\node[annex_matrix_dummy_height,minimum height=4ex+2ex,anchor=north,yshift=3ex]{};\\
\node[annex_matrix_node,inner sep=0,outer sep=0](pos-0-4){}; &\node[annex_matrix_node,inner sep=0,outer sep=0](pos-1-4){};\node[annex_matrix_dummy_height,minimum height=4ex+2ex,anchor=north,yshift=3ex]{};\\
\node[annex_matrix_node,inner sep=0,outer sep=0](pos-0-5){}; &\node[annex_matrix_node,inner sep=0,outer sep=0](pos-1-5){};\node[annex_matrix_dummy_height,minimum height=5ex,anchor=center]{};\node[annex_matrix_dummy_height,minimum height=5ex,anchor=center]{};\\
};

\node[name=StartParty_7_0,annex_start_party_box,] at (pos-0-0) {Client};

\node[name=StartParty_8_0,annex_start_party_box,] at (pos-1-0) {Authorization Server};

\node[name=EndParty_13_0,annex_end_party_box,] at (pos-0-5) {Client};

\node[name=EndParty_14_0,annex_end_party_box,] at (pos-1-5) {Authorization Server};

\begin{pgfonlayer}{arrows}%

\draw[annex_lifeline] (pos-0-0) -- (pos-0-5);

\draw[annex_lifeline] (pos-1-0) -- (pos-1-5);

        \draw[annex_http_request] (pos-0-1) to node [annex_arrow_text,above=2.6pt,anchor=base](HTTPRequest_9_0){\setcounter{protostep}{0}\protostep{mtls1:mtls-init-req} } node [annex_arrow_text,below=8pt,anchor=base](HTTPRequest_9_1){\contour{white}{client id}}  (pos-1-1);

        \draw[annex_http_request] (pos-1-2) to node [annex_arrow_text,above=2.6pt,anchor=base](HTTPRequest_10_0){\setcounter{protostep}{1}\protostep{mtls1:mtls-init-resp} } node [annex_arrow_text,below=8pt,anchor=base](HTTPRequest_10_1){\contour{white}{$\enc{\an{\mi{nonce}, \mi{k}_\mi{AS}}}{\mi{k}_\mi{client\_id}}$}}  (pos-0-2);

        \draw[annex_http_request] (pos-0-3) to node [annex_arrow_text,above=2.6pt,anchor=base](HTTPRequest_11_0){\setcounter{protostep}{2}\protostep{mtls1:mtls-second-req} } node [annex_arrow_text,below=8pt,anchor=base](HTTPRequest_11_1){\contour{white}{request, $\mi{nonce}$}}  (pos-1-3);

        \draw[annex_http_request] (pos-1-4) to node [annex_arrow_text,above=2.6pt,anchor=base](HTTPRequest_12_0){\setcounter{protostep}{3}\protostep{mtls1:mtls-second-resp} } node [annex_arrow_text,below=8pt,anchor=base](HTTPRequest_12_1){\contour{white}{response}}  (pos-0-4);

\end{pgfonlayer}

\begin{pgfonlayer}{markers}%

\end{pgfonlayer}
        \end{tikzpicture}
        
  \caption{Overview of mTLS}
  \label{fig:mtls-overview1}
\end{figure}

All messages are sent by the generic HTTPS server model (Appendix~\ref{sec:generic-https-server-model}),
which means that each request is encrypted asymmetrically, and 
the responses are encrypted symmetrically with a key that was included 
in the request. 
For completeness, Figure~\ref{fig:mtls-details} shows the complete messages, i.e., with
the encryption used for transmitting the messages.

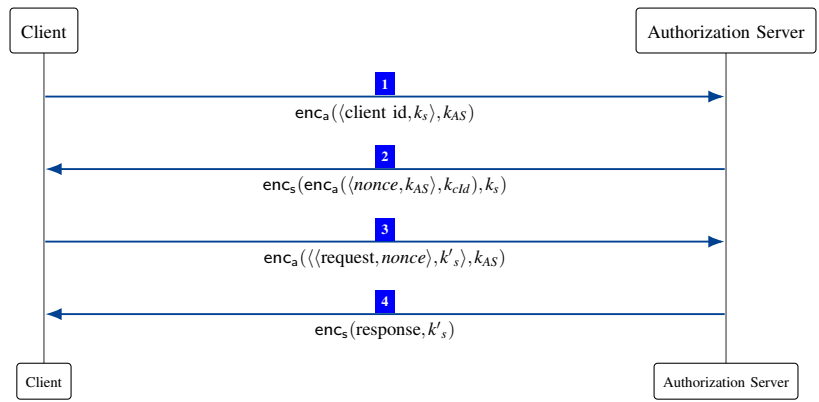
\begin{figure}[h!]
  \centering  
          \begin{tikzpicture}[]
        \pgfdeclarelayer{arrows}
        \pgfdeclarelayer{groups}
        \pgfdeclarelayer{markers}
        \pgfsetlayers{groups,arrows,main,markers}

        \matrix [column sep={0.5\textwidth,between origins}, row sep=0.1ex]
        {
        \node[annex_matrix_node,inner sep=0,outer sep=0](pos-0-0){}; &\node[annex_matrix_node,inner sep=0,outer sep=0](pos-1-0){};\node[annex_matrix_dummy_height,minimum height=5ex,anchor=center]{};\node[annex_matrix_dummy_height,minimum height=5ex,anchor=center]{};\\
\node[annex_matrix_node,inner sep=0,outer sep=0](pos-0-1){}; &\node[annex_matrix_node,inner sep=0,outer sep=0](pos-1-1){};\node[annex_matrix_dummy_height,minimum height=4ex+2ex,anchor=north,yshift=3ex]{};\\
\node[annex_matrix_node,inner sep=0,outer sep=0](pos-0-2){}; &\node[annex_matrix_node,inner sep=0,outer sep=0](pos-1-2){};\node[annex_matrix_dummy_height,minimum height=4ex+2ex,anchor=north,yshift=3ex]{};\\
\node[annex_matrix_node,inner sep=0,outer sep=0](pos-0-3){}; &\node[annex_matrix_node,inner sep=0,outer sep=0](pos-1-3){};\node[annex_matrix_dummy_height,minimum height=4ex+2ex,anchor=north,yshift=3ex]{};\\
\node[annex_matrix_node,inner sep=0,outer sep=0](pos-0-4){}; &\node[annex_matrix_node,inner sep=0,outer sep=0](pos-1-4){};\node[annex_matrix_dummy_height,minimum height=4ex+2ex,anchor=north,yshift=3ex]{};\\
\node[annex_matrix_node,inner sep=0,outer sep=0](pos-0-5){}; &\node[annex_matrix_node,inner sep=0,outer sep=0](pos-1-5){};\node[annex_matrix_dummy_height,minimum height=5ex,anchor=center]{};\node[annex_matrix_dummy_height,minimum height=5ex,anchor=center]{};\\
};

\node[name=StartParty_7_0,annex_start_party_box,] at (pos-0-0) {Client};

\node[name=StartParty_8_0,annex_start_party_box,] at (pos-1-0) {Authorization Server};

\node[name=EndParty_13_0,annex_end_party_box,] at (pos-0-5) {Client};

\node[name=EndParty_14_0,annex_end_party_box,] at (pos-1-5) {Authorization Server};

\begin{pgfonlayer}{arrows}%

\draw[annex_lifeline] (pos-0-0) -- (pos-0-5);

\draw[annex_lifeline] (pos-1-0) -- (pos-1-5);

        \draw[annex_http_request] (pos-0-1) to node [annex_arrow_text,above=2.6pt,anchor=base](HTTPRequest_9_0){\setcounter{protostep}{0}\protostep{mtls2:mtls-init-req} } node [annex_arrow_text,below=8pt,anchor=base](HTTPRequest_9_1){\contour{white}{$\enc{\an{\text{client id}, \mi{k}_s}}{\mi{k}_\mi{AS}}$}}  (pos-1-1);

        \draw[annex_http_request] (pos-1-2) to node [annex_arrow_text,above=2.6pt,anchor=base](HTTPRequest_10_0){\setcounter{protostep}{1}\protostep{mtls2:mtls-init-resp} } node [annex_arrow_text,below=8pt,anchor=base](HTTPRequest_10_1){\contour{white}{$\encs{\enc{\an{\mi{nonce}, \mi{k}_\mi{AS}}}{\mi{k}_\mi{cId}}}{\mi{k}_s}$}}  (pos-0-2);

        \draw[annex_http_request] (pos-0-3) to node [annex_arrow_text,above=2.6pt,anchor=base](HTTPRequest_11_0){\setcounter{protostep}{2}\protostep{mtls2:mtls-second-req} } node [annex_arrow_text,below=8pt,anchor=base](HTTPRequest_11_1){\contour{white}{$\enc{\an{\an{\text{request}, \mi{nonce}}, \mi{k'}_s}}{\mi{k}_\mi{AS}}$}}  (pos-1-3);

        \draw[annex_http_request] (pos-1-4) to node [annex_arrow_text,above=2.6pt,anchor=base](HTTPRequest_12_0){\setcounter{protostep}{3}\protostep{mtls2:mtls-second-resp} } node [annex_arrow_text,below=8pt,anchor=base](HTTPRequest_12_1){\contour{white}{$\encs{\text{response}}{\mi{k'}_s}$}}  (pos-0-4);

\end{pgfonlayer}

\begin{pgfonlayer}{markers}%

\end{pgfonlayer}
        \end{tikzpicture}
        
  \caption{Detailed view on mTLS}
  \label{fig:mtls-details}
\end{figure}

  \FloatBarrier\clearpage
  \section{More Details on OAuth 2.0 Token Binding} \label{appendix:OAUTB}

In the following, we describe more details of OAuth 2.0 Token Binding (see Section~\ref{fapi:OAUTB})
and the modeling within the Web Infrastructure Model.

\subsection{Exported Keying Material}

The proof of possession of a private key is done by signing
a so-called \emph{Exported Keying Material} (EKM). This value
is created with parameters of the TLS connection such that it is
unique to the connection. Therefore, it is not possible for an attacker
to reuse signed EKM values for an honest server. 
Before explaining how the EKM value is generated, we give a brief explanation
of the relevant mechanisms used in TLS \cite{rfc5246}:

\begin{itemize}
\item Client and Server Random:  The client random and server random
are values chosen by the client and server and transmitted in the TLS handshake.
\item Pseudorandom Function:  TLS specifies a pseudorandom function (PRF)
that is used within the TLS protocol \cite[Section 5]{rfc5246}.
\item Premaster Secret:  The premaster secret is a secret shared between
both participants of the TLS connection, for example, created by a Diffie-Hellman key exchange.
The length of the premaster secret varies depending on the method used for its creation.
\item Master Secret:  The master secret is created by applying the pseudorandom function
to the premaster secret to generate a secret of a fixed length. It essentially has the value
$\mathsf{PRF}(\mi{premaster\_secret}, 
\mi{client\_random}, \mi{server\_random})$ \cite[Section 8.1]{rfc5246}; we omitted constant
values).
\end{itemize}

Using these building blocks, the EKM value \cite{rfc5705} is essentially defined as

\[ \mathsf{PRF}(\mi{master\_secret}, \mi{client\_random}, \mi{server\_random}) \]

As noted in Section 7.5 of the Token Binding Protocol \cite{ietf-tokbind-protocol-19},
the use of the Extended Master Secret TLS extension \cite{BhargavanDelignat-LavaudPirontiLangleyRay-RFC-2015}
is mandatory. This extension redefines the master secret
as  $ \mathsf{PRF}(\mi{pre\_master\_secret}, \mi{session\_hash}) $,
where $\mi{session\_hash}$ is the hash of all TLS handshake messages
of the session
(again, omitting constant values).
Without this extension, the \emph{Triple Handshake Attack}
\cite{BhargavanDelignat-LavaudFournetPirontiStrub-SP-2014}
can be applied, which eventually leads to the creation of two
TLS sessions with the same master secret. %
The basic idea is that if an honest client establishes a TLS connection
to a malicious server
(for example, when the token endpoint is misconfigured),
the attacker can relay the random values chosen by the honest client and server, hence,
creating TLS connections to both the client and server with the same master secret
and therefore, with the same EKM value.
By including the hash over the TLS handshake messages, relevant information about
the session, like the certificate used by the server (which is exchanged in the handshake),
influence the value of the master secret.

 \subsection{Modeling Token Binding}
 
 The main difficulty of modeling OAuth 2.0 Token Binding is the high level
 of abstraction of TLS within the WIM, as already explained in Appendix~\ref{appendix-mtls}.
 
 An overview of how we modeled OAUTB is shown in Figure~\ref{fig:oautb-overview}.
 For compensating the absence of the TLS handshake, both participants choose
 nonces, as shown in Steps~\refprotostep{oautb1:oautb-init-req} 
 and \refprotostep{oautb1:oautb-init-resp}.
 When the client sends the actual request, 
 it creates and includes a Token Binding message, as shown in  
 Step~\refprotostep{oautb1:oautb-second-req}. As explained above, 
 the client includes not only the two nonces, but also the 
 public key of the authorization server in the EKM value
 for modeling the extended master secret.

 \begin{figure}[h!]
   \centering  
           \begin{tikzpicture}[]
        \pgfdeclarelayer{arrows}
        \pgfdeclarelayer{groups}
        \pgfdeclarelayer{markers}
        \pgfsetlayers{groups,arrows,main,markers}

        \matrix [column sep={0.5\textwidth,between origins}, row sep=0.1ex]
        {
        \node[annex_matrix_node,inner sep=0,outer sep=0](pos-0-0){}; &\node[annex_matrix_node,inner sep=0,outer sep=0](pos-1-0){};\node[annex_matrix_dummy_height,minimum height=5ex,anchor=center]{};\node[annex_matrix_dummy_height,minimum height=5ex,anchor=center]{};\\
\node[annex_matrix_node,inner sep=0,outer sep=0](pos-0-1){}; &\node[annex_matrix_node,inner sep=0,outer sep=0](pos-1-1){};\node[annex_matrix_dummy_height,minimum height=4ex+2ex,anchor=north,yshift=3ex]{};\\
\node[annex_matrix_node,inner sep=0,outer sep=0](pos-0-2){}; &\node[annex_matrix_node,inner sep=0,outer sep=0](pos-1-2){};\node[annex_matrix_dummy_height,minimum height=4ex+2ex,anchor=north,yshift=3ex]{};\\
\node[annex_matrix_node,inner sep=0,outer sep=0](pos-0-3){}; &\node[annex_matrix_node,inner sep=0,outer sep=0](pos-1-3){};\node[annex_matrix_dummy_height,minimum height=4ex+2ex,anchor=north,yshift=3ex]{};\\
\node[annex_matrix_node,inner sep=0,outer sep=0](pos-0-4){}; &\node[annex_matrix_node,inner sep=0,outer sep=0](pos-1-4){};\node[annex_matrix_dummy_height,minimum height=4ex+2ex,anchor=north,yshift=3ex]{};\\
\node[annex_matrix_node,inner sep=0,outer sep=0](pos-0-5){}; &\node[annex_matrix_node,inner sep=0,outer sep=0](pos-1-5){};\node[annex_matrix_dummy_height,minimum height=5ex,anchor=center]{};\node[annex_matrix_dummy_height,minimum height=5ex,anchor=center]{};\\
};

\node[name=StartParty_7_0,annex_start_party_box,] at (pos-0-0) {Client};

\node[name=StartParty_8_0,annex_start_party_box,] at (pos-1-0) {Authorization Server};

\node[name=EndParty_13_0,annex_end_party_box,] at (pos-0-5) {Client};

\node[name=EndParty_14_0,annex_end_party_box,] at (pos-1-5) {Authorization Server};

\begin{pgfonlayer}{arrows}%

\draw[annex_lifeline] (pos-0-0) -- (pos-0-5);

\draw[annex_lifeline] (pos-1-0) -- (pos-1-5);

        \draw[annex_http_request] (pos-0-1) to node [annex_arrow_text,above=2.6pt,anchor=base](HTTPRequest_9_0){\setcounter{protostep}{0}\protostep{oautb1:oautb-init-req} } node [annex_arrow_text,below=8pt,anchor=base](HTTPRequest_9_1){\contour{white}{$\mi{nonce}_1$}}  (pos-1-1);

        \draw[annex_http_request] (pos-1-2) to node [annex_arrow_text,above=2.6pt,anchor=base](HTTPRequest_10_0){\setcounter{protostep}{1}\protostep{oautb1:oautb-init-resp} } node [annex_arrow_text,below=8pt,anchor=base](HTTPRequest_10_1){\contour{white}{$\mi{nonce}_2$}}  (pos-0-2);

        \draw[annex_http_request] (pos-0-3) to node [annex_arrow_text,above=2.6pt,anchor=base](HTTPRequest_11_0){\setcounter{protostep}{2}\protostep{oautb1:oautb-second-req} } node [annex_arrow_text,below=8pt,anchor=base](HTTPRequest_11_1){\contour{white}{request, $\mi{TB\_ID}_{\text{C}, \text{AS}}$, $\mathsf{sig}(\mathsf{hash}(\an{ \mi{nonce}_1, \mi{nonce}_2, \mi{k}_\mi{AS} }), \mi{TB\_priv\_key}_{\text{C}, \text{AS}})$}}  (pos-1-3);

        \draw[annex_http_request] (pos-1-4) to node [annex_arrow_text,above=2.6pt,anchor=base](HTTPRequest_12_0){\setcounter{protostep}{3}\protostep{oautb1:oautb-second-resp} } node [annex_arrow_text,below=8pt,anchor=base](HTTPRequest_12_1){\contour{white}{response}}  (pos-0-4);

\end{pgfonlayer}

\begin{pgfonlayer}{markers}%

\end{pgfonlayer}
        \end{tikzpicture}
        
   \caption{Overview of OAUTB}
   \label{fig:oautb-overview}
 \end{figure}
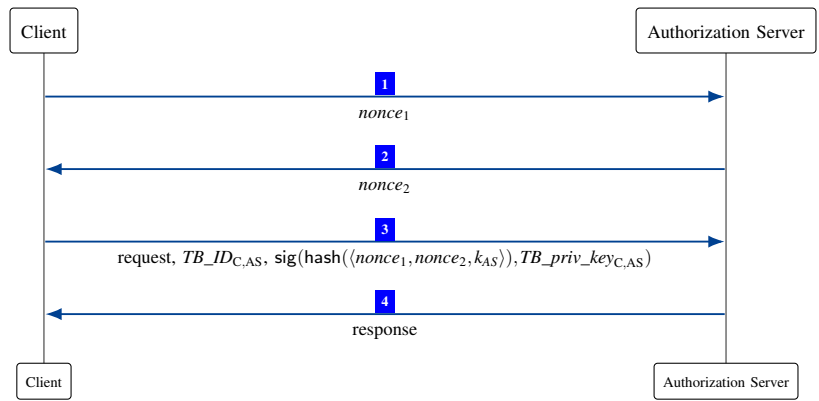
 
 For completeness, Figure~\ref{fig:oautb-details}
 shows the entire messages, i.e., with 
 the encryption done by the generic HTTPS server
 (Appendix~\ref{sec:generic-https-server-model}).
 
 \begin{figure}[h!]
   \centering  
           \begin{tikzpicture}[]
        \pgfdeclarelayer{arrows}
        \pgfdeclarelayer{groups}
        \pgfdeclarelayer{markers}
        \pgfsetlayers{groups,arrows,main,markers}

        \matrix [column sep={0.5\textwidth,between origins}, row sep=0.1ex]
        {
        \node[annex_matrix_node,inner sep=0,outer sep=0](pos-0-0){}; &\node[annex_matrix_node,inner sep=0,outer sep=0](pos-1-0){};\node[annex_matrix_dummy_height,minimum height=5ex,anchor=center]{};\node[annex_matrix_dummy_height,minimum height=5ex,anchor=center]{};\\
\node[annex_matrix_node,inner sep=0,outer sep=0](pos-0-1){}; &\node[annex_matrix_node,inner sep=0,outer sep=0](pos-1-1){};\node[annex_matrix_dummy_height,minimum height=4ex+2ex,anchor=north,yshift=3ex]{};\\
\node[annex_matrix_node,inner sep=0,outer sep=0](pos-0-2){}; &\node[annex_matrix_node,inner sep=0,outer sep=0](pos-1-2){};\node[annex_matrix_dummy_height,minimum height=4ex+2ex,anchor=north,yshift=3ex]{};\\
\node[annex_matrix_node,inner sep=0,outer sep=0](pos-0-3){}; &\node[annex_matrix_node,inner sep=0,outer sep=0](pos-1-3){};\node[annex_matrix_dummy_height,minimum height=4ex+2ex,anchor=north,yshift=3ex]{};\\
\node[annex_matrix_node,inner sep=0,outer sep=0](pos-0-4){}; &\node[annex_matrix_node,inner sep=0,outer sep=0](pos-1-4){};\node[annex_matrix_dummy_height,minimum height=4ex+2ex,anchor=north,yshift=3ex]{};\\
\node[annex_matrix_node,inner sep=0,outer sep=0](pos-0-5){}; &\node[annex_matrix_node,inner sep=0,outer sep=0](pos-1-5){};\node[annex_matrix_dummy_height,minimum height=5ex,anchor=center]{};\node[annex_matrix_dummy_height,minimum height=5ex,anchor=center]{};\\
};

\node[name=StartParty_7_0,annex_start_party_box,] at (pos-0-0) {Client};

\node[name=StartParty_8_0,annex_start_party_box,] at (pos-1-0) {Authorization Server};

\node[name=EndParty_13_0,annex_end_party_box,] at (pos-0-5) {Client};

\node[name=EndParty_14_0,annex_end_party_box,] at (pos-1-5) {Authorization Server};

\begin{pgfonlayer}{arrows}%

\draw[annex_lifeline] (pos-0-0) -- (pos-0-5);

\draw[annex_lifeline] (pos-1-0) -- (pos-1-5);

        \draw[annex_http_request] (pos-0-1) to node [annex_arrow_text,above=2.6pt,anchor=base](HTTPRequest_9_0){\setcounter{protostep}{0}\protostep{oautb2:oautb-init-req} } node [annex_arrow_text,below=8pt,anchor=base](HTTPRequest_9_1){\contour{white}{$\enc{\an{\mi{nonce}_1, \mi{k}_s}}{\mi{k}_\mi{AS}}$}}  (pos-1-1);

        \draw[annex_http_request] (pos-1-2) to node [annex_arrow_text,above=2.6pt,anchor=base](HTTPRequest_10_0){\setcounter{protostep}{1}\protostep{oautb2:oautb-init-resp} } node [annex_arrow_text,below=8pt,anchor=base](HTTPRequest_10_1){\contour{white}{$\encs{\mi{nonce}_2}{\mi{k}_s}$}}  (pos-0-2);

        \draw[annex_http_request] (pos-0-3) to node [annex_arrow_text,above=2.6pt,anchor=base](HTTPRequest_11_0){\setcounter{protostep}{2}\protostep{oautb2:oautb-second-req} } node [annex_arrow_text,below=8pt,anchor=base](HTTPRequest_11_1){\contour{white}{$\enc{\an{ \an{request, \mi{TB\_ID}_{\text{C}, \text{AS}}, \mathsf{sig}(\mathsf{hash}(\an{ \mi{nonce}_1, \mi{nonce}_2, \mi{k}_\mi{AS} }), \mi{TB\_priv\_key}_{\text{C}, \text{AS}})}, \mi{k'}_s}}{\mi{k}_\mi{AS}}$}}  (pos-1-3);

        \draw[annex_http_request] (pos-1-4) to node [annex_arrow_text,above=2.6pt,anchor=base](HTTPRequest_12_0){\setcounter{protostep}{3}\protostep{oautb2:oautb-second-resp} } node [annex_arrow_text,below=8pt,anchor=base](HTTPRequest_12_1){\contour{white}{$\encs{\text{response}}{\mi{k'}_s}$}}  (pos-0-4);

\end{pgfonlayer}

\begin{pgfonlayer}{markers}%

\end{pgfonlayer}
        \end{tikzpicture}
        
   \caption{Detailed view on OAUTB}
   \label{fig:oautb-details}
 \end{figure}
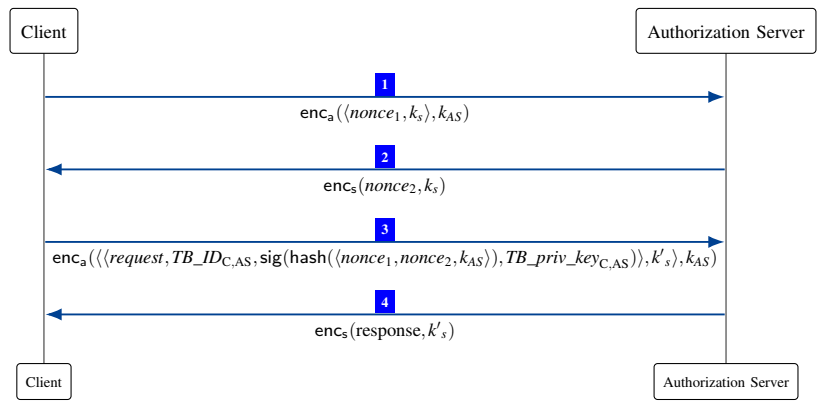

 \subsection{Access Token issued from Authorization Endpoint} \label{expl-single-at}
 In the Read-Write profile, an access token issued from the authorization endpoint is required to be token bound.
 When using OAUTB, this means that the token must be bound to the Token Binding
 ID which the client uses for the resource server. However,
 the authorization request is sent to the authorization 
 server by the browser, which cannot send Token Binding messages
 with an ID used by the client.
 Therefore, our model does not include access tokens being
 issued from the authorization endpoint, as this would
 mean that additional communication between the 
 client and the authorization server is needed,
 which is not fully specified yet.

 We note that this issue is also present in the case of confidential clients using mTLS.
 Here, the access token cannot be bound to the certificate of the client,
 as the authorization request is not sent directly by the client to the
 authorization server, but by the browser. More details can be found in 
 Section 4.5 of \cite{ietf-oauth-mtls-09}.

  \FloatBarrier\clearpage
  \clearpage
\section{Additions to the Web Infrastructure Model} \label{appendix-additions-to-the-wim}
Within the scope of this technical report, we adhere to the WIM as
defined in \cite{FettKuestersSchmitz-TR-OIDC-2017}, where it was used for modeling and analyzing
OpenID Connect. In the following, we describe the additions to the model for the analysis
of the FAPI. 

\subsection{Functions} 
In addition to the function symbols defined in Appendix B of \cite{FettKuestersSchmitz-TR-OIDC-2017},
we add the function symbol for hashing $\mathsf{hash}(.)$ 
to the signature $\Sigma$.
For computing and verifying message authentication codes, we
add
$\mathsf{mac}(.,.)$
and $\mathsf{checkmac}(.,.)$.
Regarding the equational theory, we additionally define
$\mathsf{mac}(x,y) = \mathsf{hash}(\an{x,y})$
and 
$\mathsf{checkmac}(\mathsf{mac}(x,y),y) = \True$.
Furthermore, we extend the definition of $\mathsf{extractmsg}$
such that it also extracts messages out of 
$\mathsf{mac}(x,y)$, i.e.,
$\mathsf{extractmsg}(\mathsf{mac}(x,y)) = x$.
As a short form of the mapping from identities to their governor,
we define $\mi{gov}(.) = \mi{governor}(.)$.

\subsection{Cookies} 
In Appendix C of \cite{FettKuestersSchmitz-TR-OIDC-2017}, a cookie is defined as a
term of the form $\an{\mi{name}, \mi{content}}$ with $\mi{name} \in \terms$. 
As the name is a term, it may also be a sequence consisting of two part. 
If the name it consists of two parts, we call the 
first part of the sequence (i.e., $\mi{name}.1$) the \emph{prefix} of the name.

In the following, define the \emph{Secure} prefix (see~\cite{rfc6265bis-httpwg}):
  When the $\mi{\_\_Secure}$ prefix is set, the browser accepts the cookie
  only if the $\mi{secure}$ attribute is set. As such cookies are only transferred over
  secure channels (i.e., with TLS), the cookie cannot be set by a network attacker.

For modeling this, we require that the $\mathsf{AddCookie}$
function should be called with (as an additional argument) 
the protocol with which the corresponding request was sent,
i.e., we modify the function
$\mathsf{PROCESSRESPONSE}$ (using the same number for the algorithm as in
\cite{FettKuestersSchmitz-TR-OIDC-2017}) :

\setcounter{algorithm}{7}

\begin{algorithm}[h]
\caption{\label{alg:processresponse-browser} Web Browser Model: Process an HTTP response.}
\begin{algorithmic}[1]
\Function{$\mathsf{PROCESSRESPONSE}$}{$\mi{response}$, $\mi{reference}$, $\mi{request}$, $\mi{requestUrl}$, $s'$}
  \If{$\mathtt{Set{\mhyphen}Cookie} \in
    \comp{\mi{response}}{headers}$}
    \For{\textbf{each} $c \inPairing \comp{\mi{response}}{headers}\left[\mathtt{Set{\mhyphen}Cookie}\right]$, $c \in \mathsf{Cookies}$}
      \Let{$\comp{s'}{cookies}\left[\mi{request}.\str{host}\right]$\breakalgohook{3}}{$\mathsf{AddCookie}(\comp{s'}{cookies}\left[\mi{request}.\str{host}\right], c, \mi{requestUrl}.\str{protocol})$} \label{line:set-cookie-browser}
    \EndFor
  \EndIf  
\dots
\EndFunction
\end{algorithmic} %
\end{algorithm}

\setcounter{algorithm}{1}

The modified function $\mathsf{AddCookie}$ looks as follows
(again using the same number as in
\cite{FettKuestersSchmitz-TR-OIDC-2017}):

\setcounter{definition}{42}

\begin{definition} \label{def:addcookie} For a sequence of cookies (with pairwise different
  names) $\mi{oldcookies}$ and a cookie $c$, the sequence
  $\mathsf{AddCookie}(\mi{oldcookies}, c, \mi{protocol})$ is defined by the
  following algorithm: 
  If ($(c.\str{name}.1 \equiv \str{\_\_Secure}) \Rightarrow (\mi{protocol} \equiv \mi{S})$), then: 
  Let $m := \mi{oldcookies}$. Remove
  any $c'$ from $m$ that has $\comp{c}{name} \equiv
  \comp{c'}{name}$. Append $c$ to $m$ and return $m$.
\end{definition}

\setcounter{definition}{2}

\subsection{Browser} \label{lab:appendix-fapi-browser}
We use the same browser model as defined in Appendix D of \cite{FettKuestersSchmitz-TR-OIDC-2017},
with small modifications. Therefore, we do not show the full model here
but limit the description to the algorithms that differ from the original browser model.

\subsubsection*{Main Differences}

\begin{itemize}
\item Algorithm~\ref{alg:send}: New input argument $\mi{refTB}$. This value is saved along with the other 
information needed for the current DNS request.
\item Algorithm~\ref{alg:cont-send}: Sends messages directly to a receiver, encrypted with a symmetric key. The receiver and key are
given as an input. This algorithm is used for sending the follow-up request when using OAUTB.
\item Algorithm~\ref{alg:processresponse}: If the response contains $\str{tb\_nonce}$, the original message is sent together
with the OAUTB message. Furthermore, the actions specified when the
$\str{Include\mhyphen{}Referred\mhyphen{}Token\mhyphen{}Binding\mhyphen{}ID}$
header is set are implemented here. 
\item Algorithm~\ref{alg:browsermain}: When receiving a DNS response, the browser checks if OAUTB needs to be applied. In this case,
the browser first sends a request to the OAUTB endpoint to obtain a new nonce. The original message is saved and send together with
the OAUTB message in a second request.
\end{itemize}

When sending the first OAUTB request, the nonce $\nu_{n1}$ is used as a reference (in Algorithm~\ref{alg:browsermain}).

\begin{algorithm}[H] %
\caption{\label{alg:send} Web Browser Model: Prepare headers, do DNS resolution, save message. }
\begin{algorithmic}[1]
  \Function{$\mathsf{HTTP\_SEND}$}{$\mi{reference}$, $\mi{message}$, $\mi{url}$, $\mi{origin}$, $\mi{referrer}$, $\mi{referrerPolicy}$, $s'$, $\mi{refTB}$}
    \If{$\comp{\mi{message}}{host} \inPairing \comp{s'}{sts}$}
      \Let{$\mi{url}.\str{protocol}$}{$\https$}
    \EndIf
    \Let{ $\mi{cookies}$}{$\langle\{\an{\comp{c}{name}, \comp{\comp{c}{content}}{value}} | c\inPairing \comp{s'}{cookies}\left[\comp{\mi{message}}{host}\right]$} \label{line:assemble-cookies-for-request} \breakalgohook{1} $\wedge \left(\comp{\comp{c}{content}}{secure} \implies \left(\mi{url}.\str{protocol} = \https\right)\right) \}\rangle$ \label{line:cookie-rules-http}
    \Let{$\comp{\mi{message}}{headers}[\str{Cookie}]$}{$\mi{cookies}$}
    \If{$\mi{origin} \not\equiv \bot$}
      \Let{$\comp{\mi{message}}{headers}[\str{Origin}]$}{$\mi{origin}$}
    \EndIf
    \If{$\mi{referrerPolicy} \equiv \str{noreferrer}$} 
      \Let{$\mi{referrer}$}{$\bot$}
    \EndIf
    \If{$\mi{referrer} \not\equiv \bot$}
      \If{$\mi{referrerPolicy} \equiv \str{origin}$} 
        \Let{$\mi{referrer}$}{$\an{\cUrl, \mi{referrer}.\str{protocol}, \mi{referrer}.\str{host}, \str{/}, \an{}, \bot}$} \Comment{Referrer stripped down to origin.}
      \EndIf
      \Let{$\mi{referrer}.\str{fragment}$}{$\bot$} \Comment{Browsers do not send fragment identifiers in the Referer header.}
      \Let{$\comp{\mi{message}}{headers}[\str{Referer}]$}{$\mi{referrer}$}
    \EndIf
    \Let{$\comp{s'}{pendingDNS}[\nu_8]$}{$\an{\mi{reference},
        \mi{message}, \mi{url}, \mi{refTB}}$} \label{line:add-to-pendingdns}
    \State \textbf{stop} $\an{\an{\comp{s'}{DNSaddress},a,
    \an{\cDNSresolve, \mi{message}.\str{host}, \nu_8}}}$, $s'$
  \EndFunction
\end{algorithmic} %
\end{algorithm}

\FloatBarrier

\begin{algorithm}[H] %
\caption{\label{alg:cont-send} Web Browser Model: Send message to IP encrypted with given sym.~key. }
\begin{algorithmic}[1]
  \Function{$\mathsf{OAUTB\_CONT\_SEND}$}{$\mi{reference}$, $\mi{message}$, $\mi{url}$, $\mi{key}$, $\mi{f}$, $s'$}
        \AppendBreak{2}{$\langle\mi{reference}$, $\mi{message}$, $\mi{url}$, $\mi{key}$, $\mi{f} \rangle$}{$\comp{s'}{pendingRequests}$} \label{line:cont-send-add-to-pendingrequests-https}
        \Let{$\mi{message}$}{$\encs{\mi{message}}{\mi{key}}$} \label{line:cont-send-select-enc-key}
      \State \textbf{stop} $\an{\an{f, a, \mi{message}}}$, $s'$
  \EndFunction
\end{algorithmic} %
\end{algorithm}

\FloatBarrier
\begin{algorithm}[H] %
\caption{\label{alg:processresponse} Web Browser Model: Process an HTTP response.}
\begin{algorithmic}[1]
\Function{$\mathsf{PROCESSRESPONSE}$}{$\mi{response}$, $\mi{reference}$, $\mi{request}$, $\mi{requestUrl}$, $s'$, $\mi{key}$,
	$\mi{f}$}
  \If{$\str{tb\_nonce} \in \comp{\mi{response}}{body}$}
			\Let{$\an{\mi{orig\_reference}, \mi{orig\_message}, \mi{orig\_url}, \mi{refTB}}$}{$s'.\str{tokenBindingRequests}[\mi{request}.\str{nonce}]$}

				\Let{$\mi{ekm}$}{$\mathsf{hash}(\an{\mi{request}.\str{nonce}, \mi{response}.\str{body}
						[\str{tb\_nonce}], \mi{keyMapping}[\mi{request}.\str{host}] })$} %

				\Let{$\mi{tb\_prov\_priv}$}{$s'.\mi{tokenBindings}[\mi{orig\_url}.\str{host}]$} 
        \Let{$\mi{TB\_Msg\_provided}$}{$[\str{id}{:} \mathsf{pub}(\mi{tb\_prov\_priv}),\str{sig}{:} \mathsf{sig}(\mi{ekm}, \mi{tb\_prov\_priv})]$} \label{browser-TB-prov-msg}

			\If{$\mi{refTB} \not \equiv \bot$}
				\Let{$\mi{tb\_ref\_priv}$}{$s'.\mi{tokenBindings}[\mi{refTB}]$} %
   			\Let{$\mi{TB\_Msg\_referred}$}{$[\str{id}{:} \mathsf{pub}(\mi{tb\_ref\_priv}), \str{sig}{:}\mathsf{sig}( \mi{ekm}, \mi{tb\_ref\_priv})]$}
			\Else
   			\Let{$\mi{TB\_Msg\_referred}$}{$\an{}$}
			\EndIf
			
			\Let{$\mi{orig\_message}.\mi{headers}[\str{Sec\mhyphen{}Token\mhyphen{}Binding}]$}{$[\str{prov}{:}\mi{TB\_Msg\_provided},\str{ref}{:}\mi{TB\_Msg\_referred}]$}

      \CallFun{TB\_CONT\_SEND}{$\mi{orig\_reference}$, $\mi{orig\_message}$, $\mi{orig\_url}$, $\mi{key}$, $\mi{f}$,  $s'$}
	\EndIf
  \If{$\mathtt{Set{\mhyphen}Cookie} \in
    \comp{\mi{response}}{headers}$}
    \For{\textbf{each} $c \inPairing \comp{\mi{response}}{headers}\left[\mathtt{Set{\mhyphen}Cookie}\right]$, $c \in \mathsf{Cookies}$}
      \Let{$\comp{s'}{cookies}\left[\mi{request}.\str{host}\right]$\breakalgohook{3}}{$\mathsf{AddCookie}(\comp{s'}{cookies}\left[\mi{request}.\str{host}\right], c, \mi{requestUrl}.\str{protocol})$} \label{line:set-cookie}
    \EndFor
  \EndIf  
  \If{$\mathtt{Strict{\mhyphen}Transport{\mhyphen}Security} \in \comp{\mi{response}}{headers}$ $\wedge$ $\mi{requestUrl}.\str{protocol} \equiv \https$}
    \Append{$\comp{\mi{request}}{host}$}{$\comp{s'}{sts}$}
  \EndIf
  \If{$\str{Referer} \in \comp{request}{headers}$} 
    \Let{$\mi{referrer}$}{$\comp{request}{headers}[\str{Referer}]$}
  \Else
    \Let{$\mi{referrer}$}{$\bot$}
  \EndIf
  \If{$\mathtt{Location} \in \comp{\mi{response}}{headers} \wedge \comp{\mi{response}}{status} \in \{303, 307\}$} \label{line:location-header} 
    \Let{$\mi{url}$}{$\comp{\mi{response}}{headers}\left[\mathtt{Location}\right]$}
    \If{$\mi{url}.\str{fragment} \equiv \bot$}
      \Let{$\mi{url}.\str{fragment}$}{$\mi{requestUrl}.\str{fragment}$}
    \EndIf
	  \If{$\mathtt{Include{\mhyphen}Referred{\mhyphen}Token{\mhyphen}Binding{\mhyphen}ID} \in \comp{\mi{response}}{headers}$}
	  	\Comment{Always use TB } %
	  	\If{$\comp{\mi{response}}{headers}[\mathtt{Include{\mhyphen}Referred{\mhyphen}Token{\mhyphen}Binding{\mhyphen}ID}] \equiv \True $}
	 			\Let{$s'.\str{useTB}[\mi{response}.\str{host}]$}{$\True$} 
	 			\Let{$s'.\str{useTB}[\mi{response}.\str{headers}[\str{Location}]]$}{$\True$} 
	  	\EndIf
	  \EndIf
    \Let{$\mi{method}'$}{$\comp{\mi{request}}{method}$} 
\algstore{proc-resp0}
\end{algorithmic}
\end{algorithm}

 \begin{algorithm}[h!]
 \begin{algorithmic}[1]
 \algrestore{proc-resp0}
    \Let{$\mi{body}'$}{$\comp{\mi{request}}{body}$} 
    \If{$\str{Origin} \in \comp{request}{headers}$}
      \Let{$\mi{origin}$}{$\an{\comp{request}{headers}[\str{Origin}], \an{\comp{request}{host}, \mi{url}.\str{protocol}}}$}
    \Else
      \Let{$\mi{origin}$}{$\bot$}
    \EndIf
    \If{$\comp{\mi{response}}{status} \equiv 303 \wedge \comp{\mi{request}}{method} \not\in \{\mGet, \mHead\}$}
      \Let {$\mi{method}'$}{$\mGet$}
      \Let{$\mi{body}'$}{$\an{}$} \label{browser-remove-body}
    \EndIf
    \If{$\exists\, \ptr{w} \in \mathsf{Subwindows}(s')$ \textbf{such that} $\comp{\compn{s'}{\ptr{w}}}{nonce} \equiv \mi{reference}$} \Comment{Do not redirect XHRs.}
      \Let{$\mi{req}$}{$\hreq{
            nonce=\nu_6,
            method=\mi{method'},
            host=\comp{\mi{url}}{host},
            path=\comp{\mi{url}}{path},
            headers=\an{},
            parameters=\comp{\mi{url}}{parameters},
            xbody=\mi{body}'
          }$}
      \Let{$\mi{referrerPolicy}$}{$\mi{response}.\str{headers}[\str{ReferrerPolicy}]$}
		\If{$\mathtt{Include{\mhyphen}Referred{\mhyphen}Token{\mhyphen}Binding{\mhyphen}ID} \in \comp{\mi{response}}{headers}$}
	  	\If{$\comp{\mi{response}}{headers}[\mathtt{Include{\mhyphen}Referred{\mhyphen}Token{\mhyphen}Binding{\mhyphen}ID}] \equiv \True $}
      \Let{$\mi{refTBID}$}{$\mi{response}.\str{host}$}
	  	\EndIf
		\Else
      \Let{$\mi{refTBID}$}{$\bot$}
		\EndIf
      \CallFun{HTTP\_SEND}{$\mi{reference}$, $\mi{req}$, $\mi{url}$, $\mi{origin}$, $\mi{referrer}$, $\mi{referrerPolicy}$, $s'$, $\mi{refTBID}$ }\label{line:send-redirect}
    \EndIf
  \EndIf
  \If{$\exists\, \ptr{w} \in \mathsf{Subwindows}(s')$ \textbf{such that} $\comp{\compn{s'}{\ptr{w}}}{nonce} \equiv \mi{reference}$} \Comment{normal response}
    \If{$\mi{response}.\str{body} \not\sim \an{*,*}$}
      \State \textbf{stop} $\{\}$, $s'$
    \EndIf
    \Let{$\mi{script}$}{$\proj{1}{\comp{\mi{response}}{body}}$}
    \Let{$\mi{scriptstate}$}{$\proj{2}{\comp{\mi{response}}{body}}$}
    \Let{$\mi{referrer}$}{$\mi{request}.\str{headers}[\str{Referer}]$}
    \Let{$d$}{$\an{\nu_7, \mi{requestUrl}, \mi{response}.\str{headers}, \mi{referrer}, \mi{script}, \mi{scriptstate}, \an{}, \an{}, \True}$} \label{line:take-script} \label{line:set-origin-of-document}
    \If{$\comp{\compn{s'}{\ptr{w}}}{documents} \equiv \an{}$}
      \Let{$\comp{\compn{s'}{\ptr{w}}}{documents}$}{$\an{d}$}
    \Else
      \LetND{$\ptr{i}$}{$\mathbb{N}$ \textbf{such that} $\comp{\compn{\comp{\compn{s'}{\ptr{w}}}{documents}}{\ptr{i}}}{active} \equiv \True$} %
      \Let{$\comp{\compn{\comp{\compn{s'}{\ptr{w}}}{documents}}{\ptr{i}}}{active}$}{$\bot$}
      \State \textbf{remove} $\compn{\comp{\compn{s'}{\ptr{w}}}{documents}}{(\ptr{i}+1)}$ and all following documents \breakalgohook{3} from $\comp{\compn{s'}{\ptr{w}}}{documents}$
      \Append{$d$}{$\comp{\compn{s'}{\ptr{w}}}{documents}$}
    \EndIf
    \State \textbf{stop} $\{\}$, $s'$
  \ElsIf{$\exists\, \ptr{w} \in \mathsf{Subwindows}(s')$, $\ptr{d}$ \textbf{such that} $\comp{\compn{s'}{\ptr{d}}}{nonce} \equiv \proj{1}{\mi{reference}} $ \breakalgohook{1}  $\wedge$  $\compn{s'}{\ptr{d}} = \comp{\compn{s'}{\ptr{w}}}{activedocument}$} \label{line:process-xhr-response} \Comment{process XHR response}
    \Let{$\mi{headers}$}{$\mi{response}.\str{headers} - \str{Set\mhyphen{}Cookie}$}
    \Append{\breakalgo{3}$\an{\tXMLHTTPRequest, \mi{headers}, \comp{\mi{response}}{body}, \proj{2}{\mi{reference}}}$}{$\comp{\compn{s'}{\ptr{d}}}{scriptinputs}$}
  \EndIf
\EndFunction
\end{algorithmic} %
\end{algorithm}

\FloatBarrier
\begin{algorithm}[H] %
\caption{\label{alg:browsermain} Web Browser Model: Main Algorithm}
\begin{algorithmic}[1]
\Statex[-1] \textbf{Input:} $\an{a,f,m},s$
  \Let{$s'$}{$s$}

  \If{$\comp{s}{isCorrupted} \not\equiv \bot$}
    \Let{$\comp{s'}{pendingRequests}$}{$\an{m, \comp{s}{pendingRequests}}$} \Comment{Collect incoming messages}
    \LetND{$m'$}{$d_{V}(s')$} %
    \LetND{$a'$}{$\addresses$} %
    \State \textbf{stop} $\an{\an{a',a,m'}}$, $s'$
  \EndIf
  \If{$m \equiv \trigger$} \Comment{A special trigger message. }
    \LetND{$\mi{switch}$}{$\{\str{script},\str{urlbar},\str{reload},\str{forward}, \str{back}\}$} \label{line:browser-switch}  %
    \LetNDST{$\ptr{w}$}{$\mathsf{Subwindows}(s')$}{$\comp{\compn{s'}{\ptr{w}}}{documents} \neq \an{}$\breakalgohook{2}}{\textbf{stop}}%
    \Comment{Pointer to some window.}
    \LetNDST{$\ptr{tlw}$}{$\mathbb{N}$}{$\comp{\compn{s'}{\ptr{tlw}}}{documents} \neq \an{}$\breakalgohook{2}}{\textbf{stop}}%
    \Comment{Pointer to some top-level window.}
    \If{$\mi{switch} \equiv \str{script}$} \Comment{Run some script.}
      \Let{$\ptr{d}$}{$\ptr{w} \plusPairing \str{activedocument}$} \label{line:browser-trigger-document}  
      \CallFun{RUNSCRIPT}{$\ptr{w}$, $\ptr{d}$, $s'$}
    \ElsIf{$\mi{switch} \equiv \str{urlbar}$} \Comment{Create some new request.}
      \LetND{$\mi{newwindow}$}{$\{\True, \bot \}$}
      \If{$\mi{newwindow} \equiv \True$} \Comment{Create a new window.}
        \Let{$\mi{windownonce}$}{$\nu_1$}
        \Let{$w'$}{$\an{\mi{windownonce}, \an{}, \bot}$}
        \Append{$w'$}{$\comp{s'}{windows}$}
      \Else \Comment{Use existing top-level window.}
        \Let{$\mi{windownonce}$}{$s'.\ptr{tlw}.nonce$}
      \EndIf
      \LetND{$\mi{protocol}$}{$\{\http, \https\}$} \label{line:browser-choose-url} %
\algstore{myalg-browsermain-3}
\end{algorithmic}
\end{algorithm}

\begin{algorithm}                     
\begin{algorithmic} [1]                   %
\algrestore{myalg-browsermain-3}
      \LetND{$\mi{host}$}{$\dns$} %
      \LetND{$\mi{path}$}{$\mathbb{S}$} %
      \LetND{$\mi{fragment}$}{$\mathbb{S}$} %
      \LetND{$\mi{parameters}$}{$\dict{\mathbb{S}}{\mathbb{S}}$} %
      \Let{$\mi{url}$}{$\an{\cUrl, \mi{protocol}, \mi{host}, \mi{path}, \mi{parameters}, \mi{fragment}}$}
      \Let{$\mi{req}$}{$\hreq{
          nonce=\nu_2,
          method=\mGet,
          host=\mi{host},
          path=\mi{path},
          headers=\an{},
          parameters=\mi{parameters},
          body=\an{}
        }$}
      \CallFun{HTTP\_SEND}{$\mi{windownonce}$, $\mi{req}$, $\mi{url}$, $\bot$, $\bot$, $\bot$, $s'$}\label{line:send-random}

    \ElsIf{$\mi{switch} \equiv \str{reload}$} \Comment{Reload some document.}
      \LetNDST{$\ptr{w}$}{$\mathsf{Subwindows}(s')$}{$\comp{\compn{s'}{\ptr{w}}}{documents} \neq \an{}$\breakalgohook{2}}{\textbf{stop}} \label{line:browser-reload-window}%
      \Let{$\mi{url}$}{$s'.\ptr{w}.\str{activedocument}.\str{location}$}
      \Let{$\mi{req}$}{$\hreq{ nonce=\nu_2, 
          method=\mGet, host=\comp{\mi{url}}{host},
          path=\comp{\mi{url}}{path},
          headers=\an{},
          parameters=\comp{\mi{url}}{parameters}, body=\an{}
        }$}
      \Let{$\mi{referrer}$}{$s'.\ptr{w}.\str{activedocument}.\str{referrer}$}
      \Let{$s'$}{$\mathsf{CANCELNAV}(\comp{\compn{s'}{\ptr{w}}}{nonce}, s')$}
      \CallFun{HTTP\_SEND}{$\comp{\compn{s'}{\ptr{w}}}{nonce}$, $\mi{req}$, $\mi{url}$, $\bot$, $\mi{referrer}$, $\bot$, $s'$}
    \ElsIf{$\mi{switch} \equiv \str{forward}$}
      \State $\mathsf{NAVFORWARD}$($\ptr{w}$, $s'$)
    \ElsIf{$\mi{switch} \equiv \str{back}$}
      \State $\mathsf{NAVBACK}$($\ptr{w}$, $s'$)
    \EndIf
  \ElsIf{$m \equiv \fullcorrupt$} \Comment{Request to corrupt browser}
    \Let{$\comp{s'}{isCorrupted}$}{$\fullcorrupt$}
    \State \textbf{stop} $\an{}$, $s'$
  \ElsIf{$m \equiv \closecorrupt$} \Comment{Close the browser}
    \Let{$\comp{s'}{secrets}$}{$\an{}$}  
    \Let{$\comp{s'}{windows}$}{$\an{}$}
    \Let{$\comp{s'}{pendingDNS}$}{$\an{}$}
    \Let{$\comp{s'}{pendingRequests}$}{$\an{}$}
    \Let{$\comp{s'}{sessionStorage}$}{$\an{}$}
    \State \textbf{let} $\comp{s'}{cookies} \subsetPairing \cookies$ \textbf{such that} \breakalgohook{1} $(c \inPairing \comp{s'}{cookies}) {\iff} (c \inPairing \comp{s}{cookies} \wedge \comp{\comp{c}{content}}{session} \equiv \bot$)
    \Let{$\comp{s'}{isCorrupted}$}{$\closecorrupt$}
    \State \textbf{stop} $\an{}$, $s'$

  \ElsIf{$\exists\, \an{\mi{reference}, \mi{request}, \mi{url}, \mi{key}, f}$
      $\inPairing \comp{s'}{pendingRequests}$ \breakalgohook{0}
      \textbf{such that} $\proj{1}{\decs{m}{\mi{key}}} \equiv \cHttpResp$ } %
    \Comment{Encrypted HTTP response}
    \Let{$m'$}{$\decs{m}{\mi{key}}$}
    \If{$\comp{m'}{nonce} \not\equiv \comp{\mi{request}}{nonce}$}
      \State \textbf{stop}
    \EndIf
    \State \textbf{remove} $\an{\mi{reference}, \mi{request}, \mi{url}, \mi{key}, f}$ \textbf{from} $\comp{s'}{pendingRequests}$
    \CallFun{PROCESSRESPONSE}{$m'$, $\mi{reference}$, $\mi{request}$, $\mi{url}$, $s'$}
  \ElsIf{$\proj{1}{m} \equiv \cHttpResp$ $\wedge$ $\exists\, \an{\mi{reference}, \mi{request}, \mi{url}, \bot, f}$ $\inPairing \comp{s'}{pendingRequests}$ \breakalgohook{0}\textbf{such that} $\comp{m'}{nonce} \equiv \comp{\mi{request}}{key}$ } %
    \State \textbf{remove} $\an{\mi{reference}, \mi{request}, \mi{url}, \bot, f}$ \textbf{from} $\comp{s'}{pendingRequests}$
    \CallFun{PROCESSRESPONSE}{$m$, $\mi{reference}$, $\mi{request}$, $\mi{url}$, $s'$}
  \ElsIf{$m \in \dnsresponses$} \Comment{Successful DNS response}
      \If{$\comp{m}{nonce} \not\in \comp{s}{pendingDNS} \vee \comp{m}{result} \not\in \addresses \vee \comp{m}{domain} \not\equiv \comp{\proj{2}{\comp{s}{pendingDNS}}}{host}$}
        \State \textbf{stop} \label{line:browser-dns-response-stop}
      \EndIf
      \Let{$\an{\mi{reference}, \mi{message}, \mi{url}, \mi{refTB}}$}{$\comp{s}{pendingDNS}[\comp{m}{nonce}]$}
      \If{$s'.\str{useTB}[\mi{url}.\str{host}] \equiv \True$}
							\Let{$\mi{TB\_req}$}{$\hreq{ nonce=\nu_{n1},
								method=\mGet, %
		            host=\comp{\mi{url}}{host},
		            path=\str{/OAUTB\mhyphen{}prepare}, %
								parameters=\mi{url}.\str{parameters}, %
		            headers=\an{},
		            xbody=\an{}
                } $} %
	 			\Let{$s'.\str{tokenBindingRequests}$}{$s'.\str{tokenBindingRequests} 
            \cup $\breakalgohook{7}$
						[\nu_{n1} {:} \an{\mi{reference}, \mi{message}, \mi{url}, \mi{refTB}}]$}
        \AppendBreak{2}{$\langle\mi{reference}$, $\mi{TB\_req}$, $\mi{url}$, $\nu_3$, $\comp{m}{result}\rangle$}{$\comp{s'}{pendingRequests}$} 
				\Comment{TB only with TLS}
        \Let{$\mi{message}$}{$\enc{\an{\mi{TB\_req},\nu_3}}{\comp{s'}{keyMapping}\left[\comp{\mi{message}}{host}\right]}$}

      \ElsIf{$\mi{url}.\str{protocol} \equiv \https$}
        \AppendBreak{2}{$\langle\mi{reference}$, $\mi{message}$, $\mi{url}$, $\nu_3$, $\comp{m}{result}\rangle$}{$\comp{s'}{pendingRequests}$} \label{line:add-to-pendingrequests-https}
        \Let{$\mi{message}$}{$\enc{\an{\mi{message},\nu_3}}{\comp{s'}{keyMapping}\left[\comp{\mi{message}}{host}\right]}$} \label{line:select-enc-key}
      \Else
        \AppendBreak{2}{$\langle\mi{reference}$, $\mi{message}$, $\mi{url}$, $\bot$, $\comp{m}{result}\rangle$}{$\comp{s'}{pendingRequests}$} \label{line:add-to-pendingrequests}
      \EndIf
      \Let{$\comp{s'}{pendingDNS}$}{$\comp{s'}{pendingDNS} - \comp{m}{nonce}$}
      \State \textbf{stop} $\an{\an{\comp{m}{result}, a, \mi{message}}}$, $s'$
  \Else \Comment{Some other message}
    \CallFun{PROCESS\_OTHER}{$m$, $a$, $f$, $s'$}
  \EndIf
  \State \textbf{stop}

\end{algorithmic} %
\end{algorithm}
\clearpage
\FloatBarrier %

\section{Generic HTTPS Server Model} 
\label{sec:generic-https-server-model}

The generic HTTPS server is used in the concrete instantiations
of clients, authorization servers and resource servers. The placeholder
algorithms defined in this section are replaced by the 
algorithms in the corresponding processes. Here, we use the same
model as defined in Appendix E of \cite{FettKuestersSchmitz-TR-OIDC-2017}.

In the following, we give the complete definition of the generic HTTPS
server, as it is used in the 
instantiations of the FAPI participants
and also referenced within the proof.

\begin{definition}[Base state for an HTTPS server.]\label{def:generic-https-state}
  The state of each HTTPS server that is an instantiation of this
  relation must contain at least the following subterms:
  $\mi{pendingDNS} \in \dict{\nonces}{\terms}$,
  $\mi{pendingRequests} \in \dict{\nonces}{\terms}$ (both containing
  arbitrary terms), $\mi{DNSaddress} \in \addresses$ (containing the
  IP address of a DNS server),
  $\mi{keyMapping} \in \dict{\dns}{\terms}$ (containing a mapping from
  domains to public keys), $\mi{tlskeys} \in \dict{\dns}{\nonces}$
  (containing a mapping from domains to private keys) and
  $\mi{corrupt}\in\terms$ (either $\bot$ if the server is not
  corrupted, or an arbitrary term otherwise). 
\end{definition}

Table~\ref{tab:gen-https-serv-placeholder-list} shows a list of placeholders
for nonces 
used in these algorithms.

\begin{table}[tbh]
  \centering
  \begin{tabular}{|@{\hspace{1ex}}l@{\hspace{1ex}}|@{\hspace{1ex}}l@{\hspace{1ex}}|}\hline 
    \hfill Placeholder\hfill  &\hfill  Usage\hfill  \\\hline\hline
    $\nu_0$ & new nonce for DNS requests  \\\hline
    $\nu_1$ & new symmetric key  \\\hline
  \end{tabular}
  
  \caption{List of placeholders used in the generic HTTPS server algorithm.}
  \label{tab:gen-https-serv-placeholder-list}
\end{table}

We now define the default functions of
the generic web server in
Algorithms~\ref{alg:simple-send}--\ref{alg:default-other-handler}, and
the main relation in Algorithm~\ref{alg:generic-server-main}.

\begin{algorithm}[thp]
\caption{\label{alg:simple-send} Generic HTTPS Server Model: Sending a DNS message (in preparation for sending an HTTPS message). }
\begin{algorithmic}[1]
  \Function{$\mathsf{HTTPS\_SIMPLE\_SEND}$}{$\mi{reference}$, $\mi{message}$, $s'$}
    \Let{$\comp{s'}{pendingDNS}[\nu_0]$}{$\an{\mi{reference}, 
        \mi{message}}$} \label{line:generic-add-pendingDNS}
    \State \textbf{stop} $\an{\an{\comp{s'}{DNSaddress},a,
    \an{\cDNSresolve, \mi{message}.\str{host}, \nu_0}}}$, $s'$
  \EndFunction
\end{algorithmic} %
\end{algorithm}

\begin{algorithm}[thp]
\caption{\label{alg:default-https-response-handler} Generic HTTPS Server Model: Default HTTPS response handler.}
\begin{algorithmic}[1]
  \Function{$\mathsf{PROCESS\_HTTPS\_RESPONSE}$}{$m$, $\mi{reference}$, $\mi{request}$, $\mi{key}$, $a$, $f$, $s'$}
    \Stop{\DefStop}
  \EndFunction
\end{algorithmic} %
\end{algorithm}

\begin{algorithm}[thp]
\caption{\label{alg:default-trigger} Generic HTTPS Server Model: Default trigger event handler.}
\begin{algorithmic}[1]
  \Function{$\mathsf{TRIGGER}$}{$s'$}
    \Stop{\DefStop}
  \EndFunction
\end{algorithmic} %
\end{algorithm}

\begin{algorithm}[thp]
\caption{\label{alg:default-https-request-handler} Generic HTTPS Server Model: Default HTTPS request handler.}
\begin{algorithmic}[1]
  \Function{$\mathsf{PROCESS\_HTTPS\_REQUEST}$}{$m$, $k$, $a$, $f$, $s'$}
    \Stop{\DefStop}
  \EndFunction
\end{algorithmic} %
\end{algorithm}

\begin{algorithm}[h!]
\caption{\label{alg:default-other-handler} Generic HTTPS Server Model: Default handler for other messages.}
\begin{algorithmic}[1]
  \Function{$\mathsf{PROCESS\_OTHER}$}{$m$, $a$, $f$, $s'$}
    \Stop{\DefStop}
  \EndFunction
\end{algorithmic} %
\end{algorithm}

\FloatBarrier

\begin{algorithm}[t!]
\caption{\label{alg:generic-server-main} Generic HTTPS Server Model: Main relation of a generic HTTPS server}
\begin{algorithmic}[1]
\Statex[-1] \textbf{Input:} $\an{a,f,m},s$
\Let{$s'$}{$s$} 
  \If{$s'.\str{corrupt} \not\equiv \bot \vee m \equiv \corrupt$}
    \Let{$s'.\str{corrupt}$}{$\an{\an{a, f, m}, s'.\str{corrupt}}$}
    \LetND{$m'$}{$d_{V}(s')$}
    \LetND{$a'$}{$\addresses$}
    \Stop{$\an{\an{a',a,m'}}$, $s'$}
  \EndIf

   \If{$\exists\, m_{\text{dec}}$, $k$, $k'$, $\mi{inDomain}$ \textbf{such that} $\an{m_{\text{dec}}, k} \equiv \dec{m}{k'}$ $\wedge$ $\an{inDomain,k'} \in s.\str{tlskeys}$} %
     \LetST{$n$, $\mi{method}$, $\mi{path}$, $\mi{parameters}$, $\mi{headers}$, $\mi{body}$}{\breakalgohook{0}$\an{\cHttpReq, n, \mi{method}, \mi{inDomain}, \mi{path}, \mi{parameters}, \mi{headers}, \mi{body}} \equiv m_{\text{dec}}$\breakalgohook{0}}{\textbf{stop} \DefStop} \label{line:gen-server-asym-check}

     \CallFun{PROCESS\_HTTPS\_REQUEST}{$m_\text{dec}$, $k$, $a$, $f$, $s'$} \label{line:gen-server-first-req}

  \ElsIf{$m \in \dnsresponses$} \Comment{Successful DNS response}
      \If{$\comp{m}{nonce} \not\in \comp{s}{pendingDNS} \vee \comp{m}{result} \not\in \addresses \vee \comp{m}{domain} \not\equiv s.\str{pendingDNS}[m.\str{nonce}].2.\str{host}$}
      \Stop{\DefStop}
      \EndIf
      \Let{$\an{\mi{reference}, \mi{request}}$}{$\comp{s}{pendingDNS}[\comp{m}{nonce}]$} \label{line:generic-set-request-from-pendingDNS}
      \AppendBreak{2}{$\langle\mi{reference}$, $\mi{request}$, $\nu_1$, $\comp{m}{result}\rangle$}{$\comp{s'}{pendingRequests}$} \label{line:generic-move-reference-to-pending-request} 
      \Let{$\mi{message}$}{$\enc{\an{\mi{request},\nu_1}}{\comp{s'}{keyMapping}\left[\comp{\mi{request}}{host}\right]}$} \label{line:generic-select-enc-key}
      \Let{$\comp{s'}{pendingDNS}$}{$\comp{s'}{pendingDNS} - \comp{m}{nonce}$}\label{line:generic-remove-pendingdns}
      \Stop{$\an{\an{\comp{m}{result}, a, \mi{message}}}$, $s'$} \label{line:generic-send-https-request}

  \ElsIf{$\exists\, \an{\mi{reference}, \mi{request}, \mi{key}, f}$
      $\inPairing \comp{s'}{pendingRequests}$ \breakalgohook{0}
      \textbf{such that} $\proj{1}{\decs{m}{\mi{key}}} \equiv \cHttpResp$ } \label{line:generic-https-response}
    \Comment{Encrypted HTTP response}

    \Let{$m'$}{$\decs{m}{\mi{key}}$}
    \If{$\comp{m'}{nonce} \not\equiv \comp{\mi{request}}{nonce}$}
      \Stop{\DefStop}
    \EndIf
    \State \textbf{remove} $\an{\mi{reference}, \mi{request}, \mi{key}, f}$ \textbf{from} $\comp{s'}{pendingRequests}$\label{line:generic-remove-pending-request}
    \CallFun{PROCESS\_HTTPS\_RESPONSE}{$m'$, $\mi{reference}$, $\mi{request}$, $\mi{key}$, $a$, $f$, $s'$} \label{https-server-calls-PHRESP}
    \Stop{\DefStop}

  \ElsIf{$m \equiv \str{TRIGGER}$} \Comment{Process was triggered}
    \CallFun{PROCESS\_TRIGGER}{$s'$}

  \EndIf
  \Stop{\DefStop}
  
\end{algorithmic} %
\end{algorithm}

\FloatBarrier %

\clearpage
\section{FAPI Model} \label{label:appendix-full-fapi-model}

In the following, we will give the formal application-specific model 
of the participants of the FAPI, i.e., the clients, authorization servers and resource servers.
Then, building upon these models, we will give the definition of a FAPI
web system with a network attacker.

\subsection{Clients} \label{app:clients-fapi}

Similar to Section H of Appendix F of \cite{FettKuestersSchmitz-TR-OIDC-2017}, a client $c \in \fAP{C}$ 
is a web server modeled as an atomic
DY process $(I^c, Z^c, R^c, s^c_0)$ with the addresses
$I^c := \mapAddresstoAP(c)$.

\begin{definition}\label{def:clients}
  A \emph{state $s\in Z^c$ of client $c$} is a term of the form
  $\langle\mi{DNSaddress}$, $\mi{pendingDNS}$, $\mi{pendingRequests}$,
  $\mi{corrupt}$, $\mi{keyMapping}$,
  $\mi{tlskeys}$, %
  $\mi{sessions}$, $\mi{issuerCache}$, $\mi{oidcConfigCache}$,
  $\mi{jwksCache}$, 
  $\mi{clientCredentialsCache}$, 
  $\mi{oautbEKM}$,
  $\mi{authReqSigKey}$,
  $\mi{tokenBindings} \rangle$ with 
  $\mi{DNSaddress} \in \addresses$,
  $\mi{pendingDNS} \in \dict{\nonces}{\terms}$,
  $\mi{pendingRequests} \in \dict{\nonces}{\terms}$,
  $\mi{corrupt}\in\terms$, $\mi{keyMapping} \in \dict{\dns}{\terms}$,
  $\mi{tlskeys} \in \dict{\dns}{K_\text{TLS}}$ (all former components
  as in Definition~\ref{def:generic-https-state}),
  $\mi{sessions} \in \dict{\nonces}{\terms}$,
  $\mi{issuerCache} \in \dict{\terms}{\terms}$,
  $\mi{oidcConfigCache} \in \dict{\terms}{\terms}$,
  $\mi{jwksCache} \in \dict{\terms}{\terms}$, 
  $\mi{clientCredentialsCache} \in \dict{\terms}{\terms}$, 
  $\mi{oautbEKM} \in \terms$, %
  $\mi{authReqSigKey} \in \nonces$ and %
  $\mi{tokenBindings} \in \dict{\dns}{\nonces} $.

  An \emph{initial state $s^c_0$ of $c$} is a state of $c$ with
  $s^c_0.\str{pendingDNS} \equiv \an{}$, $s^c_0.\str{pendingRequests} \equiv 
  \an{}$, $s^c_0.\str{corrupt} \equiv \bot$,  $s^c_0.\str{keyMapping}$ being the
  same as the keymapping for browsers, $s^c_0.\str{tlskeys} \equiv \mi{tlskeys}^c$,
  $s^c_0.\str{sessions} \equiv \an{}$ and
  $s^c_0.\str{oautbEKM} \equiv \an{}$.  
\end{definition}

We require the initial state to contain preconfigured values for 
$s^c_0.\str{issuerCache}$ (mapping from identities to domains of
the corresponding authorization server), $s^c_0.\str{oidcConfigCache}$ 
(authorization endpoint and token endpoint for each domain of an authorization server) and
$s^c_0.\str{jwksCache}$ (mapping from issuers to the public key used to validate their signatures).
More precisely, we require that for all authorization servers
$\mi{as}$ and all domains $\mi{dom_{\mi{as}}} \in \mathsf{dom}(\mi{as})$ it holds true that
\begin{align}
& s^c_0.\str{oidcConfigCache}[\mi{dom_{\mi{as}}}][\str{token\_ep}] \equiv 
  \an{\tUrl, \https, \mi{dom_{\mi{as}}'}, \str{/token}, \an{}, \an{}} \\
& s^c_0.\str{oidcConfigCache}[\mi{dom_{\mi{as}}}][\str{auth\_ep}] \equiv 
  \an{\tUrl, \https, \mi{dom_{\mi{as}}'}, \str{/auth}, \an{}, \an{}} \\
& s^c_0.\str{jwksCache}[\mi{dom_{\mi{as}}}] \equiv 
  \mathsf{pub}(s^\mi{as}_0.\str{jwk})\\
& s^c_0.\str{issuerCache}[\mi{id}] \in \mathsf{dom}(\mi{as}) \Leftrightarrow \mi{id} \in \IDs^\mi{as},
\end{align}

for $\mi{dom_{\mi{as}}'} \in \mathsf{dom}(\mi{as})$.
These properties hold true if OIDC Discovery \cite{openid-connect-discovery-1.0} and Dynamic 
Client Registration \cite{openid-connect-dynamic-client-registration-1.0} are used,
as already shown in Lemmas 1 to 4 of \cite{FettKuestersSchmitz-TR-OIDC-2017}.  

Furthermore, $s^c_0.\str{oidcConfigCache}[\mi{issuer}][\str{resource\_servers}] =
s^\mi{as}_0.\str{resource\_servers}$ shall contain the domains of all resource
servers that the authorization server $\mi{issuer}$ supports. 

For signing authorization requests, we require that $s^c_0.\str{authReqSigKey}$
contains a key that is only known to $c$. 

For each identity $\mi{id} \in \IDs^\mi{as}$ governed by authorization server $\mi{as}$,
the credentials of the client are stored in 
$s^c_0.\str{clientCredentialsCache}[s^c_0.\str{issuerCache}[\mi{id}]]$,
which shall be equal to $s^\mi{as}_0.\str{clients}[\mi{clientId}]$, where
$\mi{clientId}$ is the client id that was issued to $c$ by $\mi{as}$. 

For OAuth 2.0 Token Binding, we require that $s^c_0.\str{tokenBindings}[d]$ contains a
different nonce for each $d \in \dns$. 

The relation $R^c$ is based on the model of generic HTTPS servers (see
Section~\ref{sec:generic-https-server-model}). The algorithms that differ from
or do not exist in the generic server model are defined in
Algorithms~\ref{alg:client-fapi-http-request}--\ref{alg:client-check-id-token}.
Table~\ref{tab:client-placeholder-list} shows a list of all placeholders
used in these algorithms.

The scripts that are used by the clients are described in
Algorithms~\ref{alg:script-client-index}
and~\ref{alg:script-c-get-fragment}. 
As in \cite{FettKuestersSchmitz-TR-OIDC-2017}, the current URL of a document
is extracted by using the function
$\mathsf{GETURL}(\mi{tree},\mi{docnonce})$,
which searches for the document with 
the identifier $\mi{docnonce}$
in the (cleaned) tree $\mi{tree}$
of the browser's windows and documents. It then returns the URL $u$
of that document. If no document with nonce $\mi{docnonce}$
is found in the tree $\mi{tree}$, $\notdef$ is returned.

\begin{table}[tbh]
  \centering
  \begin{tabular}{|@{\hspace{1ex}}l@{\hspace{1ex}}|@{\hspace{1ex}}l@{\hspace{1ex}}|}\hline 
    \hfill Placeholder\hfill  &\hfill  Usage\hfill  \\\hline\hline
    $\nu_1$ & new login session id  \\\hline
    $\nu_2$ & new HTTP request nonce  \\\hline
    $\nu_3$ & new HTTP request nonce  \\\hline
    $\nu_4$ & new service session id  \\\hline
    $\nu_5$ & new HTTP request nonce  \\\hline
    $\nu_6$ & new state value  \\\hline
    $\nu_7$ & new \emph{nonce} value  \\\hline
    $\nu_8$ & new nonce for OAUTB  \\\hline
    $\nu_9$ & new PKCE verifier  \\\hline
    $\nu_{10}$ & new HTTP request nonce  \\\hline
    $\nu_{11}$ & new HTTP request nonce\\\hline
  \end{tabular}
  
  \caption{List of placeholders used in the client algorithm.}
  \label{tab:client-placeholder-list}
\end{table}

Furthermore, we require that $\mi{leak} \in \addresses$ is an arbitrary 
IP address. By sending messages to this address, we model
the leakage of tokens or messages, as this address 
can be the address of an attacker.

\subsection*{Description} 
In the following, we will describe the basic functionality of the
algorithms, focusing on the main differences to the algorithms
used in \cite{FettKuestersSchmitz-TR-OIDC-2017}.

\begin{itemize}
\item Algorithm~\ref{alg:client-fapi-http-request} handles incoming requests.
The browser can request a nonce for OAUTB in the $\str{OAUTB\mhyphen{}prepare}$ path.
The redirection endpoint requires and checks the corresponding OAUTB header if the client is a
web server client using OAUTB. 
In case of a read-write client, the authorization response
contains an id token, which the client checks in Algorithm~\ref{alg:client-check-first-id-token}
before continuing the flow.
\item Algorithm~\ref{alg:client-fapi-http-response} handles responses. If the client is a read-write
client, it checks the second id token contained in the token request before continuing the flow.
Furthermore, when using mTLS or OAUTB, the client processes the initial response needed for
the actual request that contains either the decrypted mTLS nonce or a signed EKM value.
When receiving a resource access nonce, the web server client sends the nonce to the process
that sent the request to the redirection endpoint.
\item Algorithm~\ref{alg:client-fapi-start-login-flow} starts a new login flow. 
The client prepares the authorization request, which is redirected by the browser.
The request differs depending on the client type and whether the client is 
an app client or a web server client. In all cases, it creates a signed request JWS
for preventing the attack described in Section~\ref{attack:pkce}.
Furthermore, the request leaks at this point, modeling leaking browser
logs.
The resource server used at the end of the flow is chosen from
the set of resource servers provided by the authorization server.
\item Algorithms~\ref{alg:client-prepare-token-request}
and \ref{alg:client-token-request} handle the token request. 
If the client uses mTLS or OAUTB, it sends an initial request 
with Algorithm~\ref{alg:client-prepare-token-request}.
For modeling a misconfigured token endpoint, the read-write client may choose between
the preconfigured token endpoint or an arbitrary URL. 
\item The request to the resource server is made similarly in 
Algorithms~\ref{alg:client-prepare-use-access-token} and \ref{alg:client-use-access-token}.
\item Algorithm~\ref{alg:client-check-first-id-token} checks the first id token
and continues the flow, whereas Algorithm~\ref{alg:client-check-id-token}
uses the id token for logging in the identity.
\end{itemize}

\vfill

\begin{algorithm}[H] %
\caption{\label{alg:client-fapi-http-request} Relation of a Client
  $R^c$ -- Processing HTTPS Requests}
\begin{algorithmic}[1]
\Function{$\mathsf{PROCESS\_HTTPS\_REQUEST}$}{$m$, $k$, $a$, $f$, $s'$}
\Comment{\textbf{Process an incoming HTTPS request.} Other message types are handled in 
separate functions. $m$ is the incoming message, $k$ is the encryption key for the 
response, $a$ is the receiver, $f$ the sender of the message. $s'$ is the current state 
of the atomic DY process $c$.}
     \If{$m.\str{path} \equiv \str{/}$} \label{line:client-serve-index} \Comment{Serve index page.}
      \Let{$\mi{headers}$}{$[\str{ReferrerPolicy}{:}\str{origin}]$}
       \ParboxComment{Set the Referrer Policy for the index page of the client.} %
       \Let{$m'$}{$\encs{\an{\cHttpResp, m.\str{nonce}, 200, \mi{headers}, \an{\str{script\_client\_index}, \an{}}}}{k}$}

       \Comment{Send $\mi{script\_client\_index}$ in HTTP response.}
       \Stop{$\an{\an{f,a,m'}}$, $s'$} \label{line:client-send-index}
     \State
     \ElsIf{$m.\str{path} \equiv \str{/startLogin} \wedge m.\str{method} \equiv \mPost$} \label{line:client-start-login-endpoint} \Comment{\textbf{Serve the request to start a new login.}}
       \If{$m.\str{headers}{[\str{Origin}]}  \not\equiv \an{m.\str{host}, \https}$} \label{c-startLogin-check-origin-header}
         \Stop{\DefStop} \Comment{Check the Origin header for CSRF protection.}
       \EndIf
       \Let{$\mi{id}$}{$m.\str{body}$}
       \Let{$\mi{sessionId}$}{$\nu_1$} \label{line:client-choose-lsid} \ParboxComment{Session id is a freshly chosen nonce.}
       \Let{$s'.\str{sessions}[\mi{sessionId}]$}{$[\str{startRequest}{:}[\str{message}{:}m,\str{key}{:}k,\str{receiver}{:}a,\str{sender}{:}f],
					$\breakalgohook{9}$ 
          \str{identity}:\mi{id}]$} \label{c-create-initial-session}
       \ParboxComment{Create new session record.}
       \CallFun{START\_LOGIN\_FLOW}{$\mi{sessionId}$, $s'$}
\ParboxComment{Call the function that starts a login flow.} 

  \State
  \ElsIf{$m.\str{path} \equiv \str{/OAUTB\mhyphen{}prepare}$}\label{line:client-oautb-ep} 
    \Comment{For OAUTB between the user-agent and the client}
    \Let{$\mi{headers}$}{$[\str{ReferrerPolicy}{:}\str{origin}]$}
		\Let{$\mi{tbNonce}$}{$\nu_8$}
		\Let{$s'.\str{oautbEKM}$} 
				{$s'.\str{oautbEKM} \plusPairing \mathsf{hash}(\an{m.\str{nonce}, \mi{tbNonce}, 
							\mi{keyMapping}[m.\str{host}] }) $} 
		\Let{$m'$}{$\encs{\an{\cHttpResp, m.\str{nonce}, 200, \mi{headers}, [\str{tbNonce}{:} \mi{tbNonce}]	}}{k}$}      
		\Stop{\StopWithMPrime} \label{client-send-oautb-ep}

 \algstore{https-req}
 \end{algorithmic}
 \end{algorithm}
 
 \begin{algorithm}
 \begin{algorithmic}[1]
 \algrestore{https-req}

     \ElsIf{$m.\str{path} \equiv \str{/redirect\_ep}$} \label{line:client-redir-endpoint} 
     \ParboxComment{\textbf{User is being redirected after authentication to the AS.}}
       \Let{$\mi{sessionId}$}{$m.\str{headers}[\str{Cookie}][\an{\str{\_\_Secure},\str{sessionId}}][\str{value}]$} \label{c-redir-ep-get-sid-from-cookie} \label{c-redir-ep-check-secure-prefix}
       \If{$\mi{sessionId} \not\in s'.\str{sessions}$}
         \Stop
       \EndIf
       \Let{$\mi{session}$}{$s'.\str{sessions}[\mi{sessionId}]$}
       \ParboxComment{Retrieve session data.}

			 \If{$\an{\https, \mi{m}.\str{host},  \mi{m}.\str{path}, \mi{m}.\str{parameters}, \bot} 
			 				\not\equiv \mi{session}[\str{redirectUri}]$}
			 		\Stop  \ParboxComment{Check if response was received at the right redirect uri.}
			 \EndIf
       \Let{$\mi{issuer}$}{$s'.\str{issuerCache}[\mi{session}[\str{identity}]]$} 
       \ParboxComment{Mappings from identites to issuers.} 
       \Let{$\mi{credentials}$}{$s'.\str{clientCredentialsCache}[\mi{issuer}]$} %
       \Let{$\mi{responseType}$}{$\mi{session}[\str{response\_type}]$}

       \ParboxComment{Determines the flow to use, e.g., \texttt{code id\_token} for an OIDC hybrid flow.}

       \If{$\mi{responseType} \in \{ \an{\str{code}}, \an{\str{JARM\_code}}\}$} \ParboxComment{Authorization code mode: Take data from URL parameters.} %
         \Let{$\mi{data}$}{$m.\str{parameters}$} \label{c-redirect-ep-data-from-param}
       \Else \ParboxComment{Hybrid mode: Send $\str{script\_client\_get\_fragment}$ to browser to retrieve data from URL fragment}
         \If{$m.\str{method} \equiv \mGet$}
           \Let{$\mi{headers}$}{$\an{\an{\str{ReferrerPolicy}, \str{origin}}}$}
           \Let{$m'$}{$\encs{\an{\cHttpResp, m.\str{nonce}, 200, \mi{headers}, \an{\str{script\_client\_get\_fragment}, \bot}}}{k}$} \label{line:c-send-script-get-fragment}
           \Stop{$\an{\an{f,a,m'}}$, $s'$}
         \Else
           \ParboxComment{ POST request: script $\str{script\_client\_get\_fragment}$ is sending the data from URL fragment.}
           \Let{$\mi{data}$}{$m.\str{body}$} \label{c-redirect-ep-data-from-body}
         \EndIf
       \EndIf 
    \If{$\mi{credentials}[\str{profile}] \equiv \str{rw} \wedge \mi{credentials}[\str{client\_type}] \equiv \str{conf\_OAUTB} \wedge \mi{credentials}[\str{is\_app}] \equiv \bot$}\\
    \hspace{1cm} \Comment{Check Provided TB-ID (for PKCE)} %
  		\If{$\str{Sec\mhyphen{}Token\mhyphen{}Binding} \not \in m.\str{headers}$}
    		\Stop{\DefStop}
			\EndIf
  		\LetND{$\mi{ekmInfo}$}{$s'.\str{oautbEKM}$} 
	     		\Let{$\mi{TB\_Msg\_provided}$}{$m.\str{headers}[\str{Sec\mhyphen{}Token\mhyphen{}Binding}][\str{prov}]$} 
	     		\Let{$\mi{TB\_provided\_pub}$}{$\mi{TB\_Msg\_provided}[\str{id}]$} 
	     		\Let{$\mi{TB\_provided\_sig}$}{$\mi{TB\_Msg\_provided}[\str{sig}]$} 

  			\If{$\mathsf{checksig}(\mi{TB\_provided\_sig}, \mi{TB\_provided\_pub}) \not \equiv \True 
          $\breakalgohook{5}$
  			  \vee \mathsf{extractmsg}(\mi{TB\_provided\_sig}) \not \equiv \mi{ekmInfo} $}
								\Stop{\DefStop}
				\EndIf

	      \Let{$s'.\str{session}[\str{browserTBID}]$}{$\mi{TB\_provided\_pub}$} \label{c-set-browserTBID}
  			\Let{$s'.\str{oautbEKM}$}{$s'.\str{oautbEKM} - \mi{ekmInfo}$}
		\EndIf

    \If{$\mi{data}[\str{state}] \not\equiv \mi{session}[\str{state}]$} \label{c-redirect-ep-check-state}
      \Stop \ParboxComment{Check $\mi{state}$ value.}
    \EndIf

    \If{$\mi{data}[\str{state}] \equiv \bot$} \label{c-check-if-state-equiv-bot} %
      \Stop \ParboxComment{$\mi{state}$ value is not valid.}
    \EndIf

    \Let{$s'.\str{sessions}[\mi{sessionId}][\str{state}]$}{$\bot$} \Comment{Invalidate state} \label{c-invalidates-state}

    \Let{$s'.\str{sessions}[\mi{sessionId}][\str{redirectEpRequest}]$}{\breakalgohook{1}$
        [\str{message}{:}m,\str{key}{:}k,\str{receiver}{:}a,\str{sender}{:}f,\str{data}{:}\mi{data}]$}
        \Comment{\parbox[t]{5cm}{Store incoming request for 
        sending a resource nonce or the service session id}}
        \label{line:client-set-redirect-ep-request-record} 

    \If{$\mi{credentials}[\str{profile}] \equiv \str{r}$} 
         \CallFun{PREPARE\_TOKEN\_REQUEST}{$\mi{sessionId}$, $\mi{data}[\str{code}]$, $s'$}  \label{c-redirection-ep-call-PTR}
    \ElsIf{$\mi{responseType} \equiv \an{\str{code}, \str{id\_token}}$} \Comment{Check id token}
         \CallFun{CHECK\_FIRST\_ID\_TOKEN}{$\mi{sessionId}$, $\mi{data}{[\str{id\_token}]}$, 
        $\mi{data}[\str{code}]$, $s'$} \label{line:c-call-check-id-token-immediately} 
    \Else \Comment{JARM: Check Response JWS}
         \CallFun{CHECK\_RESPONSE\_JWS}{$\mi{sessionId}$, $\mi{data}{[\str{responseJWS}]}$, 
           $\mi{data}[\str{code}]$, $s'$} \label{line:c-call-check-response-jws} 
    \EndIf
   \Stop
  \EndIf
 \EndFunction
 \end{algorithmic} %
 \end{algorithm}

\begin{algorithm} %
\caption{\label{alg:client-fapi-http-response} Relation of a Client $R^c$ -- Processing HTTPS Responses}
\begin{algorithmic}[1]
\Function{$\mathsf{PROCESS\_HTTPS\_RESPONSE}$}{$m$, $\mi{reference}$, $\mi{request}$, $\mi{key}$, $a$, $f$, $s'$}
  \Let{$\mi{session}$}{$s'.\str{sessions}[\mi{reference}[\str{session}]]$}
  \Let{$\mi{sessionId}$}{$\mi{reference}[\str{session}]$}
  \Let{$\mi{id}$}{$\mi{session}[\str{identity}]$} 
  \Let{$\mi{issuer}$}{$s'.\str{issuerCache}[\mi{id}]$}
  \Let{$\mi{profile}$}{$s'.\str{clientCredentialsCache}[\mi{issuer}][\str{profile}]$}
  \Let{$\mi{isApp}$}{$s'.\str{clientCredentialsCache}[\mi{issuer}][\str{is\_app}]$}
  \Let{$\mi{clientId}$}{$s'.\str{clientCredentialsCache}[\mi{issuer}][\str{client\_id}]$}
  \If{$\mi{reference}[\str{responseTo}] \equiv \str{TOKEN}$}
  	\If{$\mi{session}[\str{scope}] \equiv \an{} \wedge \mi{profile} \equiv \str{r}$}
    	\Let{$\mi{useAccessTokenNow}$}{$\True$}
    \ElsIf{$\mi{session}[\str{scope}] \equiv \an{\str{openid}} \wedge \mi{profile} \equiv \str{r}$}
    	\LetND{$\mi{useAccessTokenNow}$}{$\{\True, \bot\}$}
    \ElsIf{$\mi{profile} \equiv \str{rw} \wedge \mi{session}[\str{response\_type}] \equiv
          \an{\str{code}, \str{id\_token}}$}  \Comment{OIDC Hybrid Flow}
        \Let{$\mi{firstIdToken}$}{$\mi{session}[\str{redirectEpRequest}][\str{data}][\str{id\_token}]$}
        \If{$\mathsf{checksig}(m.\str{body}[\str{id\_token}], s'.\str{jwksCache}[\mi{issuer}]) \not\equiv \top$} \label{c-checksig-second-idt}
          \Stop \ParboxComment{Check the signature of the id token.}
        \EndIf
        \If{$\mathsf{extractmsg}(m.\str{body}[\str{id\_token}])[\str{sub}] \not\equiv \mathsf{extractmsg}(\mi{firstIdToken})[\str{sub}]$} \label{c-compare-sub-of-both-id-tokens}
          \Stop \ParboxComment{Check if $\str{sub}$ is the same as in the first id token ([OIDC], 3.3.3.6).}
        \EndIf
        \If{$\mathsf{extractmsg}(m.\str{body}[\str{id\_token}])[\str{iss}] \not\equiv \mathsf{extractmsg}(\mi{firstIdToken})[\str{iss}]$} \label{c-compare-iss-of-both-id-tokens} %
          \Stop \ParboxComment{Check if $\str{iss}$ is the same as in the first id token ([OIDC], 3.3.3.6).}
        \EndIf
        \If{$\mathsf{extractmsg}(m.\str{body}[\str{id\_token}])[\str{iss}] \not\equiv \mi{issuer}$} \label{c-token-resp-check-issuer}
			    \Stop \ParboxComment{Check the issuer.}
			  \EndIf
        \If{$\mathsf{extractmsg}(m.\str{body}[\str{id\_token}])[\str{at\_hash}] \not\equiv \mathsf{hash}(m.\str{body}[\str{access\_token}])$} \label{c-check-at-hash}
          \Stop \ParboxComment{Check $\str{at\_hash}$ of second id token (protection against reuse of phised access token).}
        \EndIf
        \If{$\mathsf{extractmsg}(m.\str{body}[\str{id\_token}])[\str{aud}] \not\equiv \mi{clientId}$} \label{c-hybrid-token-resp-check-aud}
          \Stop \ParboxComment{Check $\str{aud}$ of second id token.}
        \EndIf
				\Let{$s'.\str{sessions}[\mi{reference}[\str{session}]][\str{idt2\_iss}]$}{$\mathsf{extractmsg}(m.\str{body}[\str{id\_token}])[\str{iss}]$} \label{c-set-idt2-iss}
    	  \LetND{$\mi{useAccessTokenNow}$}{$\{\True, \bot\}$}
    \ElsIf{$\mi{profile} \equiv \str{rw} \wedge \mi{session}[\str{response\_type}] \equiv
          \an{\str{JARM\_code}}$}  \Comment{Code flow (JARM)}
        \Let{$\mi{requestJWS}$}{$\mi{session}[\str{redirectEpRequest}][\str{data}][\str{responseJWS}]$} \label{JARM-get-reqJWS-from-redireEpReq-1}
        \If{$\mathsf{extractmsg}(\mi{requestJWS})[\str{at\_hash}] \not\equiv \mathsf{hash}(m.\str{body}[\str{access\_token}])$} \label{JARM-c-check-at-hash}
          \Stop \ParboxComment{Check $\str{at\_hash}$ of requestJWS (protection against reuse of phised access token).}
        \EndIf
  	    \If{$\mi{session}[\str{scope}] \equiv \an{}$}
          \Let{$\mi{useAccessTokenNow}$}{$\True$}
        \Else
          \LetND{$\mi{useAccessTokenNow}$}{$\{\True, \bot\}$}
        \EndIf
		\EndIf
    \If{$\mi{useAccessTokenNow} \equiv \True$}
      \CallFun{PREPARE\_USE\_ACCESS\_TOKEN}{$\mi{reference}[\str{session}]$, $m.\str{body}[\str{access\_token}]$, $s'$} \label{line:client-call-use-access-token-2}
    \EndIf
    \CallFun{CHECK\_ID\_TOKEN}{$\mi{reference}[\str{session}]$, $m.\str{body}[\str{id\_token}]$, $s'$} \label{line:client-call-check-id-token-after-code}

  \ElsIf{$\mi{reference}[\str{responseTo}] \equiv \str{MTLS\_AS}$} \label{c-PHTTPSResp-ref-mtls-as}
  	\Let{$\mi{code}$}{$s'.\str{sessions}[\mi{sessionId}][\str{code}]$} %
   \If{$\exists\, m_{\text{dec}}$, $k'$, $\mi{dom}$ \textbf{such that} $m_{\text{dec}} \equiv \dec{m.\str{body}}{k'}$ $\wedge$ $\an{dom,k'} \in s'.\str{tlskeys}$} 
   \LetST{$\mi{mtlsNonce}, \mi{pubKey}$}{$\an{\mi{mtlsNonce}, \mi{pubKey}} \equiv  m_{\text{dec}}$}{\textbf{stop} \DefStop} \label{line:client-decrypt-mtls}
   \EndIf
    
    \If{$\mi{pubKey} \in \mi{keyMapping}(\mi{request}.\str{host})$} \ParboxComment{Send nonce only to the process that created it}
      \CallFun{SEND\_TOKEN\_REQUEST}{$\mi{sessionId}$, $\mi{code}$, 
          $[\str{type}:\str{MTLS},
          \str{mtls\_nonce}{:}\mi{mtlsNonce}]$,
          $s'$} \label{line:client-https-response-mtls-as-send-token-req}
      \Else
         \Stop{\DefStop} 
      \EndIf
  \ElsIf{$\mi{reference}[\str{responseTo}] \equiv \str{OAUTB\_AS}$}  \label{line:ref-oautb-as}
  	\Let{$\mi{code}$}{$s'.\str{sessions}[\mi{sessionId}][\str{code}]$} 
		\Let{$\mi{ekm}$}{$\mathsf{hash}(\an{ m.\str{nonce},  m.\str{body}[\str{tb\_nonce}], 
              \mi{keyMapping}[\mi{request}.\str{host}] })$}   \label{line:PHResp-STR-oautb-sign-ekm} \\ $\hspace{1cm}$\ParboxComment{Include public key to model Extended Master Secret}
    \CallFun{SEND\_TOKEN\_REQUEST}{$\mi{sessionId}$, $\mi{code}$, $[\str{type}:\str{OAUTB}, \str{ekm}{:} \mi{ekm}]$, $s'$} \label{line:PHResp-STR-oautb-call}
  \ElsIf{$\mi{reference}[\str{responseTo}] \equiv \str{MTLS\_RS}$} \label{branch-ref-mtls-rs}
  		\Let{$\mi{token}$}{$s'.\str{sessions}[\mi{sessionId}][\str{token}]$}  \label{PHResp-mtls-rs-retrieve-token}

      \If{$\exists\, m_{\text{dec}}$, $k'$, $\mi{dom}$ \textbf{such that} $m_{\text{dec}} \equiv \dec{m.\str{body}}{k'}$ $\wedge$ $\an{dom,k'} \in s'.\str{tlskeys}$} 
      \LetST{$\mi{mtlsNonce}, \mi{pubKey}$}{$\an{\mi{mtlsNonce}, \mi{pubKey}} \equiv  m_{\text{dec}}$}{\textbf{stop} \DefStop} \label{line:client-decrypt-mtls-rs}
   \EndIf

    \If{$\mi{pubKey} \in \mi{keyMapping}(\mi{request}.\str{host})$} \ParboxComment{Send nonce only to the process that created it}\label{c-PHResp-check-mtls-rs-pubkey}
      \CallFun{USE\_ACCESS\_TOKEN}{$\mi{sessionId}$, $\mi{token}$, \label{mtls-authn-rs-use-access-token}
        $[\str{type}:\str{MTLS},
        \str{mtls\_nonce}{:}\mi{mtlsNonce}]$,
        $s'$}
      \Else
         \Stop{\DefStop} 
      \EndIf

  \ElsIf{$\mi{reference}[\str{responseTo}] \equiv \str{OAUTB\_RS}$}
          \Let{$\mi{token}$}{$s'.\str{sessions}[\mi{sessionId}][\str{token}]$} \label{PHResp-oautb-rs-retrieve-token}
				  \Let{$\mi{ekm}$}{$\mathsf{hash}(\an{ m.\str{nonce},  m.\str{body}[\str{tb\_nonce}], 
                  \mi{keyMapping}[\mi{request}.\str{host}] })$}   \label{line:PHResp-UAT-oautb-sign-ekm}
         \CallFun{USE\_ACCESS\_TOKEN}{$\mi{sessionId}$, $\mi{token}$, 
                $[\str{type}:\str{OAUTB},
                  \str{ekm}{:} \mi{ekm}]$, $s'$} \label{line:PHResp-UAT-oautb-call}

 \algstore{https-req-PHRESP}
 \end{algorithmic}
 \end{algorithm}
 
 \begin{algorithm}
 \begin{algorithmic}[1]
 \algrestore{https-req-PHRESP}

  \ElsIf{$\mi{reference}[\str{responseTo}] \equiv \str{RESOURCE\_USAGE}$} \label{client-PHResp-case-resource-usage}
      \Let{$\mi{resource}$}{$m.\str{body}[\str{resource}]$} \label{client-receives-resource}
      \Let{$s'.\str{sessions}[\mi{sessionId}][\str{resource}]$}{$\mi{resource}$} \label{client-saves-resource} 

      \Let{$\mi{request}$}{$\mi{session}[\str{redirectEpRequest}]$}
      \Comment{\parbox[t]{8cm}{Retrieve stored meta data of the request from the browser 
      to the redir. endpoint stored in Algorithm~\ref{alg:client-fapi-http-request}.}}

    \If{$\mi{isApp} \equiv \bot$} \Comment{Send resource access nonce to browser} \label{ws-client-send-r-to-browser}
      \Let{$\mi{headers}$}{$[\str{ReferrerPolicy}{:}\str{origin}]$}
      \Let{$m'$}{$\encs{\an{\cHttpResp, \mi{request}[\str{message}].\str{nonce}, 200,   
          \mi{headers}, \mi{resource}}}{\mi{request}[\str{key}]}$}
      \Stop{$\an{\an{\mi{request}[\str{sender}],\mi{request}[\str{receiver}],m'}}$, $s'$} \label{line:client-send-resource}
    \Else \Comment{$\str{isApp} \equiv \True$}
      \Stop{$s'$} \label{line:app-c-do-not-send-resource} 
    \EndIf
  \EndIf

\EndFunction
\end{algorithmic} %
\end{algorithm}

\begin{algorithm}[h] %
  \caption{\label{alg:client-fapi-start-login-flow} Relation of a Client $R^c$ -- Starting the login flow.}
\begin{algorithmic}[1]
\Function{$\mathsf{START\_LOGIN\_FLOW}$}{$\mi{sessionId}$, $s'$}
  \Let{$\mi{session}$}{$s'.\str{sessions}[\mi{sessionId}]$} \label{c-SLF-start}
  \Let{$\mi{identity}$}{$\mi{session}[\str{identity}]$}
  \Let{$\mi{issuer}$}{$s'.\str{issuerCache}[\mi{identity}]$}
  \Let{$\mi{oidcConfig}$}{$s'.\str{oidcConfigCache}[\mi{issuer}]$}
  \Let{$\mi{authEndpoint}$}{$\mi{oidcConfig}[\str{auth\_ep}]$} \label{c-SLF-authep}
  \Let{$\mi{redirectUris}$}{$\mi{oidcConfig}[\str{redirect\_uris}]$} \label{line:client-choose-redirect-uris} 
  \ParboxComment{Set of redirect URIs for the AS.}
  \LetND{$\mi{resourceServer}$}{$\mi{oidcConfig}[\str{resource\_servers}]$} \Comment{Choose resource server} %
  \Let{$s'.\str{sessions}[\mi{sessionId}]$}{$s'.\str{sessions}[\mi{sessionId}] \cup [\str{RS}{:}\mi{resourceServer}]$} \label{line:c-sets-session-RS}

  \Let{$\mi{credentials}$}{$s'.\str{clientCredentialsCache}[\mi{issuer}]$} 

 \Let{$\mi{headers}$}{$[\str{ReferrerPolicy}{:}\str{origin}]$}
  \Let{$\mi{headers}[\cSetCookie]$}{$[\str{sessionId}{:}\an{\mi{sessionId},\top,\top,\top}]$} \label{client-setCookie-lsid}

 \Let{$\mi{profile}$}{$\mi{credentials}[\str{profile}]$} 
 \Comment{either $\str{r}$ or $\str{rw}$}

 \Let{$\mi{isApp}$}{$\mi{credentials}[\str{is\_app}]$}
 \Comment{either $\True$ or $\bot$}

 \Let{$\mi{clientType}$}{$\mi{credentials}[\str{client\_type}]$} 
  \Comment{$\str{pub}$, $\str{conf\_JWS}$, $\str{conf\_MTLS}$ or $\str{conf\_OAUTB}$}
  \If{$\mi{profile} \equiv \str{r}$}
  	\Let{$\mi{responseType}$}{$\an{\str{code}}$} 
	\Else 
    \LetND{$\mi{responseType}$}{$\{ \an{\str{code}, \str{id\_token}}, \an{\str{JARM\_code}} \}$}  %
	\EndIf

  \LetND{$\mi{redirectUri}$}{$\mi{redirectUris}$} 
  \Comment{Auth.~response must be received here.}
 	\Let{$s'.\str{sessions}[\mi{sessionId}]$}{$s'.\str{sessions}[\mi{sessionId}] \cup [\str{redirect\_uri}{:}\mi{redirectUri}]$}

 \LetND{$\mi{scope}$}{$\{\an{},  \an{\str{openid}} \}$}
  \If{$\mi{scope} \equiv \an{\str{openid}} \vee (\mi{profile} \equiv \str{rw} 
    \wedge \mi{responseType} \equiv \an{\str{code}, \str{id\_token}})$} \Comment{Nonce for obtaining an id token.} %
  			\Let{$\mi{nonce}$}{$\nu_7$} 
	\Else
  			\Let{$\mi{nonce}$}{$\an{}$}
	\EndIf
 	\If{$\mi{profile} \equiv \str{r}$}
  		\Let{$\mi{pkceVerifier}$}{$\nu_9$ }\label{r-client-chooses-pkce-cv}
  		\Let{$\mi{pkceChallenge}$}{$\mathsf{hash}(\mi{pkceVerifier})$} \label{r-client-creates-pkce-cc}
     	\Let{$s'.\str{sessions}[\mi{sessionId}]$}{$s'.\str{sessions}[\mi{sessionId}] \cup [\str{pkce\_verifier}{:}\mi{pkceVerifier}]$} \label{r-c-creates-pkce-cv}
 	\ElsIf{$(\mi{clientType} \equiv \str{pub}) \vee (\mi{clientType} \equiv \str{conf\_OAUTB}
        \wedge \mi{isApp} \equiv \True)$} 	\Comment{OAUTB (app client)} 
  	  \Let{$\mi{TB\mhyphen{}Id}$}{$\mathsf{pub}(s'.\str{tokenBindings}[\mi{authEndpoint}.\str{host}])$} 
   		\Let{$\mi{pkceChallenge}$}{$\mathsf{hash}(\mi{TB\mhyphen{}Id})$} \label{pub-rw-client-creates-pkce-cc}
  \ElsIf{$\mi{clientType} \equiv \str{conf\_OAUTB}$}\Comment{web server client} 
   		\Let{$\mi{pkceChallenge}$}{$\str{referred\_tb}$}
 			\Let{$\mi{headers}[\str{Include{\mhyphen}Referred{\mhyphen}Token{\mhyphen}Binding{\mhyphen}ID}]$}{$\True$}
	\Else \Comment{rw + conf\_MTLS}
   		\Let{ $\mi{pkceChallenge}$}{ $\an{}$}
	\EndIf 
	\Let{$\mi{data}$}{$[\str{response\_type}{:}\mi{responseType}, \str{redirect\_uri}{:}\mi{redirectUri},
    \str{client\_id}{:}\mi{credentials}[\str{client\_id}],
    $\breakalgohook{6}$
		\str{scope}{:}\mi{scope},
		\str{nonce}{:}\mi{nonce},
		\str{pkce\_challenge}{:}\mi{pkceChallenge},
    \str{state}{:}\nu_6 ]$} \label{c-slf-chooses-state-value-here}

		\Let{$\mi{requestJWT}$}{$\mi{data} \cup [\str{aud}{:}\mi{authEndpoint}.\str{host}]$} \label{client-slf-reqJWS-aud}
		\Let{$\mi{requestJWS}$}{$\mathsf{sig}(\mi{requestJWT}, s'.\str{authReqSigKey})$} \label{client-slf-requestJWS-sign-1}
		\Let{$\mi{data}[\str{request\_jws}]$}{$\mi{requestJWS}$}
  \Let{$s'.\str{sessions}[\mi{sessionId}]$}{$s'.\str{sessions}[\mi{sessionId}] \cup \mi{data}$}

  \Let{$\mi{authEndpoint}.\str{parameters}$}{$ \mi{data}$}
  \Let{$\mi{headers}[\str{Location}]$}{$\mi{authEndpoint}$}
  \Let{$\mi{request}$}{$s'.\str{sessions}[\mi{sessionId}][\str{startRequest}]$} \label{c-redir-ep-retrieve-startReq}
  \Let{$m'$}{$\encs{\an{\cHttpResp, \mi{request}[\str{message}].\str{nonce}, 303, 
      \mi{headers}, \bot}}{\mi{request}[\str{key}]}$} 

	\Let{$m_{leak}$}{$\an{\str{LEAK}, \mi{authEndpoint}}$}

  \Stop{$\an{
							\an{\mi{leak}, \mi{request}[\str{receiver}], m_{leak}},
  					  \an{\mi{request}[\str{sender}],\mi{request}[\str{receiver}],m'}
				    }$, $s'$} \label{line:client-send-authorization-redir}
\EndFunction
\end{algorithmic} %
\end{algorithm}

\begin{algorithm}
\caption{\label{alg:client-prepare-token-request} Relation of a Client $R^c$ -- Prepare request to token endpoint.}
\begin{algorithmic}[1]
\Function{$\mathsf{PREPARE\_TOKEN\_REQUEST}$}{$\mi{sessionId}$, $\mi{code}$, $s'$}

  \Let{$\mi{session}$}{$s'.\str{sessions}[\mi{sessionId}]$}
  \Let{$\mi{identity}$}{$\mi{session}[\str{identity}]$}
  \Let{$\mi{issuer}$}{$s'.\str{issuerCache}[\mi{identity}]$}

	\Let{$\mi{credentials}$}{$s'.\str{clientCredentialsCache}[\mi{issuer}]$}
  \Let{$\mi{clientId}$}{$\mi{credentials}[\str{client\_id}]$}
  \Let{$\mi{clientType}$}{$\mi{credentials}[\str{client\_type}]$} 
  \Let{$\mi{profile}$}{$\mi{credentials}[\str{profile}]$}

	\If{$\mi{profile} \equiv \str{rw}$} \label{c-chooses-misconfTEp}
  	\LetND{$\mi{misconfiguredTEp}$}{$\{\True, \bot \}$}
	\Else
  	\Let{$\mi{misconfiguredTEp}$}{$ \bot $} \label{ptr-r-client-set-misconf-tep-false}
	\EndIf
  \Let{$s'.\str{sessions}[\str{sessionId}][\str{misconfiguredTEp}]$}{$\mi{misconfiguredTEp}$} 
	\If{$\mi{misconfiguredTEp} \equiv \True$}
	  \ParboxComment{Choose wrong token endpoint.}
	  \LetND{$\mi{host}$}{$\dns$}
	  \LetND{$\mi{path}$}{$\mathbb{S}$}
    \LetND{$\mi{parameters}$}{$\dict{\mathbb{S}}{\mathbb{S}}$} 
	  \Let{$\mi{url}$}{$\an{\cUrl, \https , \mi{host}, \mi{path}, \mi{parameters}, \bot }$}
  	\Let{$s'.\str{sessions}[\str{sessionId}][\str{token\_ep}]$}{$\mi{url}$} 
	\Else
  	\Let{$\mi{url}$}{$s'.\str{oidcConfigCache}[\mi{issuer}][\str{token\_ep}]$}
	\EndIf
  \Let{$s'.\str{sessions}[\str{sessionId}][\str{code}]$}{$\mi{code}$}   \label{c-PTR-save-code}

  \If{$\mi{profile} \equiv \str{r}$} \Comment{Send token request}
 		\If{$\mi{clientType} \equiv \str{pub} \vee \mi{clientType} \equiv \str{conf\_JWS}$}
        \CallFun{SEND\_TOKEN\_REQUEST}{$\mi{sessionId}$, $\mi{code}$, $\an{}$, $s'$} \label{c-PTR-call-STR}
		\EndIf
	\EndIf

  \If{$ \mi{clientType} \equiv \str{conf\_MTLS} $} \Comment{both profiles}
	    \Let{$\mi{body}$}{$[\str{client\_id}{:}\mi{clientId}]$}
      \Let{$\mi{message}$}{$\hreq{nonce=\nu_{10},
	  	   method=\mGet, 
	  	   xhost=\mi{url}.\str{domain},
	  	   path=\str{/MTLS\mhyphen{}prepare}, 
	  	   parameters=\mi{url}.\str{parameters}, 
	  	   headers=\an{}, 
         xbody=\mi{body} }$} \label{line-client-https-simple-mtls-as-msg}
	  	\CallFun{HTTPS\_SIMPLE\_SEND}{$[\str{responseTo}{:}\str{MTLS\_AS}, \str{session}{:}\mi{sessionId}]$, $\mi{message}$, $s'$} \label{line:client-https-simple-mtls-as}
    \Else \Comment{OAUTB} %
			\Let{$\mi{message}$}{$\hreq{ nonce=\nu_{10},
				method=\mGet, 
				xhost=\mi{url}.\str{domain},
	   		path=\str{/OAUTB\mhyphen{}prepare}, 
				parameters=\mi{url}.\str{parameters}, 
				headers=\an{}, 
        xbody=\an{}}$}  \label{line-client-https-simple-oautb-at-msg} 
			\CallFun{HTTPS\_SIMPLE\_SEND}{$[\str{responseTo}{:}\str{OAUTB\_AS},
                         \str{session}{:}\mi{sessionId}]$, $\mi{message}$, $s'$}  \label{line:PTR-oautb-as-send}
		\EndIf
\EndFunction
\end{algorithmic} %
\end{algorithm}

\begin{algorithm}
\caption{\label{alg:client-token-request} Relation of a Client $R^c$ -- Request to token endpoint.}
\begin{algorithmic}[1]
\Function{$\mathsf{SEND\_TOKEN\_REQUEST}$}{$\mi{sessionId}$, $\mi{code}$, $\mi{responseValue}$, $s'$}
  \Let{$\mi{session}$}{$s'.\str{sessions}[\mi{sessionId}]$}
  \Let{$\mi{identity}$}{$\mi{session}[\str{identity}]$} \label{line:str-client-chooses-identity}
  \Let{$\mi{issuer}$}{$s'.\str{issuerCache}[\mi{identity}]$} \label{line:str-client-chooses-issuer}
  \If{$\mi{session}[\str{misconfiguredTEp}] \equiv \True$} \label{str-check-if-misconf-tep}
  	\Let{$\mi{url}$}{$\mi{session}[\str{token\_ep}] $} 
	\Else
    \Let{$\mi{url}$}{$s'.\str{oidcConfigCache}[\mi{issuer}][\str{token\_ep}]$} \label{line:r-client-chooses-right-tep}
	\EndIf

	\Let{$\mi{credentials}$}{$s'.\str{clientCredentialsCache}[\mi{issuer}]$}
  \Let{$\mi{clientId}$}{$\mi{credentials}[\str{client\_id}]$}

  \Let{$\mi{clientType}$}{$\mi{credentials}[\str{client\_type}]$} 
 	\Let{$\mi{profile}$}{$\mi{credentials}[\str{profile}]$}  
 	\Let{$\mi{isApp}$}{$\mi{credentials}[\str{is\_app}]$}  

	\Let{$\mi{body}$}{$[\str{grant\_type}{:} \str{authorization\_code}, \str{code}{:}\mi{code}, 
						$\breakalgohook{5} $
						\str{redirect\_uri}{:}\mi{session}[\str{redirect\_uri}],
						\str{client\_id}{:}\mi{clientId}	]$} \label{c-STR-main-body}

  \If{$\mi{profile} \equiv \str{r}$}
			\Let{$\mi{body}[\str{pkce\_verifier}]$}{$\mi{session}[\str{pkce\_verifier}] $} 
	\EndIf
  \If{$\mi{profile} \equiv \str{r} \wedge \mi{clientType} \equiv \str{pub} $}
		\Let{$\mi{message}$}{$\hreq{ nonce=\nu_2,
				method=\mPost,
				xhost=\mi{url}.\str{domain},
				path=\mi{url}.\str{path},
				parameters=\mi{url}.\str{parameters},
				headers=\bot,
				xbody=\mi{body}}$}  
    \CallFun{HTTPS\_SIMPLE\_SEND}{$[\str{responseTo}{:}\str{TOKEN},\str{session}{:}\mi{sessionId}]$, $\mi{message}$, $s'$} \label{line:r-client-sends-token-req-pub}
  	\ElsIf{$\mi{profile} \equiv \str{r} \wedge \mi{clientType} \equiv \str{conf\_JWS}$} \label{branch:r-client-creates-assertion}
			\Let{$\mi{clientSecret}$}{$\mi{credentials}[\str{client\_secret}]$} 
			\Let{$\mi{jwt}$}{$[ \str{iss}{:}\mi{clientId}, 
													\str{aud}{:}\mi{url}.\str{domain} 
												]$} 
			\Let{$\mi{body}[\str{assertion}]$}{$\mathsf{mac}(\mi{jwt}, \mi{clientSecret})$}
			\Let{$\mi{message}$}{$\hreq{ nonce=\nu_2,
					method=\mPost,
					xhost=\mi{url}.\str{domain},
					path=\mi{url}.\str{path},
					parameters=\mi{url}.\str{parameters},
					headers=\bot,
          xbody=\mi{body}}$}   \label{line:r-client-assertion-rec}
			\CallFun{HTTPS\_SIMPLE\_SEND}{$[\str{responseTo}{:}\str{TOKEN},\str{session}{:}\mi{sessionId}]$, $\mi{message}$, $s'$} \label{line:r-client-sends-assertion} \label{line:r-client-sends-token-req-jws} %
  \ElsIf{$\mi{clientType} \equiv \str{conf\_MTLS} $} \Comment{both profiles} 
    \If{$\mi{responseValue}[\str{type}] \not \equiv \str{MTLS}$} \label{c-str-mtls-check-resp-val-type}
         \Stop{\DefStop} 
    \EndIf
		\Let{$\mi{body}[\str{TLS\_AuthN}]$}{$\mi{responseValue}[\str{mtls\_nonce}]$} 
		\Let{$\mi{message}$}{$\hreq{ nonce=\nu_2,
			method=\mPost,
			xhost=\mi{url}.\str{domain},
			path=\mi{url}.\str{path},
			parameters=\mi{url}.\str{parameters},
			headers=\bot,
			xbody=\mi{body}}$}  
		\CallFun{HTTPS\_SIMPLE\_SEND}{$[\str{responseTo}{:}\str{TOKEN},\str{session}{:}\mi{sessionId}]$, $\mi{message}$, $s'$} \label{line:client-sends-token-req-mtls}
	\Else \Comment{rw with OAUTB} \label{client-str-branch-oautb-sig-2}
    \If{$\mi{responseValue}[\str{type}] \not \equiv \str{OAUTB}$} \label{client-str-resp-value-oautb}
         \Stop{\DefStop} 
    \EndIf
    \Let{$\mi{ekm}$}{$\mi{responseValue}[\str{ekm}]$} \label{line:str-oautb-respVal1}

		\Let{$\mi{TB\_AS}$}{$s'.\str{tokenBindings}[\mi{url}.\str{host}]$} \Comment{private key}
		\Let{$\mi{TB\_RS}$}{$s'.\str{tokenBindings}[\mi{session}[\str{RS}]]$} \Comment{private key} \label{c-str-retrieve-rs-tb}
			
	 	\Let{$\mi{TB\_Msg\_prov}$}{$[\str{id}{:} \mathsf{pub}(\mi{TB\_AS}), \str{sig}{:} \mathsf{sig}( \mi{ekm} , \mi{TB\_AS})]$} \label{client-str-sig-prov}
		\Let{$\mi{TB\_Msg\_ref}$}{$[\str{id}{:} \mathsf{pub}(\mi{TB\_RS}),  \str{sig}{:} \mathsf{sig}( \mi{ekm} , \mi{TB\_RS})]$} \label{c-str-compose-ref-tb-msg}
			
		\Let{$\mi{headers}$}{$[\str{Sec\mhyphen{}Token\mhyphen{}Binding}{:}[\str{prov}{:}\mi{TB\_Msg\_prov},\str{ref}{:}\mi{TB\_Msg\_ref}]]$} \label{c-str-sec-tb-header}

    \If{$\mi{clientType} \equiv \str{conf\_OAUTB}$} \Comment{client authentication}
				\Let{$\mi{clientSecret}$}{$\mi{credentials}[\str{client\_secret}]$} 
        \Let{$\mi{jwt}$}{$[ \str{iss}{:}\mi{clientId},  \label{line:jws-oautb-set-aud}
														\str{aud}{:}\mi{url}.\str{domain},  
													]$} 
			\Let{$\mi{body}[\str{assertion}]$}{$\mathsf{mac}(\mi{jwt}, \mi{clientSecret})$} \label{line:rw-client-creates-assertion} %
		\EndIf
    \If{$\mi{isApp} \equiv \bot$} \Comment{web server client: send TBID used by browser}
			\Let{$\mi{body}[\str{pkce\_verifier}]$}{$\mi{session}[\str{browserTBID}] $}  \label{line:rw-client-pkce-cv-browserTBID}
		\EndIf
    \Let{$\mi{message}$}{$\hreq{ nonce=\nu_2, 
  		method=\mPost,
  		xhost=\mi{url}.\str{domain},
  		path=\mi{url}.\str{path},
  		parameters=\mi{url}.\str{parameters},
  		headers=\mi{headers},
  		xbody=\mi{body}}$}  \label{line:jws-oautb-msg}
		\CallFun{HTTPS\_SIMPLE\_SEND}{$[\str{responseTo}{:}\str{TOKEN},\str{session}{:}\mi{sessionId}]$, $\mi{message}$, $s'$} \label{line:jws-oautb-cont-send}
	\EndIf
\EndFunction
\end{algorithmic} %
\end{algorithm}

\begin{algorithm}[tbp]
\caption{\label{alg:client-prepare-use-access-token} Relation of a Client $R^c$ -- Prepare for using the access token.}
\begin{algorithmic}[1]
\Function{$\mathsf{PREPARE\_USE\_ACCESS\_TOKEN}$}{$\mi{sessionId}$, $\mi{token}$, $s'$}
  \Let{$\mi{session}$}{$s'.\str{sessions}[\mi{sessionId}]$} \label{line:c-PUAT-session}
  \Let{$\mi{identity}$}{$\mi{session}[\str{identity}]$}
  \Let{$\mi{issuer}$}{$s'.\str{issuerCache}[\mi{identity}]$}
	\Let{$\mi{credentials}$}{$s'.\str{clientCredentialsCache}[\mi{issuer}]$}
  \Let{$\mi{clientType}$}{$\mi{credentials}[\str{client\_type}]$} 
  \Let{$\mi{profile}$}{$\mi{credentials}[\str{profile}]$}
  \Let{$s'.\str{sessions}[\str{sessionId}][\str{token}]$}{$\mi{token}$} \label{c-save-access-token-in-session}

  \Let{$\mi{rsHost}$}{$\mi{session}[\str{RS}]$} \label{line:c-PUAT-rsHost}

  \If{$\mi{profile} \equiv \str{r}$}
         \CallFun{USE\_ACCESS\_TOKEN}{$\mi{sessionId}$, $\mi{token}$, $\an{}$, $s'$} 
  \ElsIf{$\mi{clientType} \equiv \str{conf\_MTLS}$} \Comment{rw + mTLS: AT is bound to client}
	  		\Let{$\mi{message}$}{$\hreq{ nonce=\nu_{11},
	  	    method=\mGet, 
	  	    xhost=\mi{rsHost}, 
	  	    path=\str{/MTLS\mhyphen{}prepare}, 
	  	    parameters=\bot, 
	  	    headers=\bot,
          xbody=\mi{body} }$} \label{line:c-PUAT-message}
        \CallFun{HTTPS\_SIMPLE\_SEND}{$[\str{responseTo}{:}\str{MTLS\_RS},\str{session}{:}\mi{sessionId}]$, $\mi{message}$, $s'$} \label{line:c-sends-ref-mtls-rs}
  \Else \Comment{OAUTB}
	  		\Let{$\mi{message}$}{$\hreq{ nonce=\nu_{11},
	  	    method=\mGet, 
	  	    xhost=\mi{rsHost}, 
	  	    path=\str{/OAUTB\mhyphen{}prepare}, 
	  	    parameters=\bot, 
	  	    headers=\bot,
	  	    xbody=\mi{body} }$}
	  		\CallFun{HTTPS\_SIMPLE\_SEND}{$[\str{responseTo}{:}\str{OAUTB\_RS},
										\str{session}{:}\mi{sessionId}]$, $\mi{message}$, $s'$} 
	\EndIf
\EndFunction
\end{algorithmic} %
\end{algorithm}

\begin{algorithm}[tbp]
\caption{\label{alg:client-use-access-token} Relation of a Client $R^c$ -- Using the access token.}
\begin{algorithmic}[1]
\Function{$\mathsf{USE\_ACCESS\_TOKEN}$}{$\mi{sessionId}$, $\mi{token}$, $\mi{responseValue}$, $s'$}
  \Let{$\mi{session}$}{$s'.\str{sessions}[\mi{sessionId}]$} \label{c:UAT-retrieve-session}
  \Let{$\mi{identity}$}{$\mi{session}[\str{identity}]$}
  \Let{$\mi{issuer}$}{$s'.\str{issuerCache}[\mi{identity}]$}

	\Let{$\mi{credentials}$}{$s'.\str{clientCredentialsCache}[\mi{issuer}]$}
  \Let{$\mi{clientType}$}{$\mi{credentials}[\str{client\_type}]$} 
  \Let{$\mi{profile}$}{$\mi{credentials}[\str{profile}]$}

  \Let{$\mi{headers}$}{$[\str{Authorization}:\an{\str{Bearer},\mi{token}}]$} \label{c:UAT-set-authorization-header}

 	\Let{$\mi{body}$}{$[]$} 
         
  \Let{$\mi{rsHost}$}{$\mi{session}[\str{RS}]$} \label{c:UAT-retrieve-rsHost}
  \If{$\mi{profile} \equiv \str{r}$}
 		 \Let{$\mi{message}$}{$\hreq{ nonce=\nu_3,
 		     method=\mGet, %
 		     xhost=\mi{rsHost}, 
 		     path=\str{/resource\mhyphen{}r},
 		     parameters=\an{}, 
 		     headers=\mi{headers},
 		     xbody=\mi{body}}$}  
  	\CallFun{HTTPS\_SIMPLE\_SEND}{$[\str{responseTo}{:}\str{RESOURCE\_USAGE},\str{session}{:}\mi{sessionId}]$, $\mi{message}$, $s'$} \label{r-c-sends-req-to-rs}
  \Else
    \If{$ \mi{session}[\str{response\_type}] \equiv \an{\str{code}, \str{id\_token}}$}  \Comment{OIDC Hybrid Flow}
  		\Let{$\mi{body}[\str{at\_iss}]$}{$\mi{session}[\str{idt2\_iss}]$} \Comment{Send issuer of second id token} \label{c-sets-at-iss}
    \Else \Comment{JARM Code Flow}
      \Let{$\mi{body}[\str{at\_iss}]$}{$\mi{session}[\str{JARM\_iss}]$} \Comment{Send issuer request JWS} \label{c-sets-at-iss-JARM}
    \EndIf

      \If{$\mi{clientType} \equiv \str{conf\_MTLS}$} \label{client-uat-if-client-type-mtls}
        \If{$\mi{responseValue}[\str{type}] \not \equiv \str{MTLS}$}
            \Stop{\DefStop} 
        \EndIf
  			\Let{$\mi{body}[\str{MTLS\_AuthN}]$}{$\mi{responseValue}[\str{mtls\_nonce}]$}
  		\Else \Comment{OAUTB} 
        \If{$\mi{responseValue}[\str{type}] \not \equiv \str{OAUTB}$} \label{client-uat-resp-value-oautb}
            \Stop{\DefStop} 
        \EndIf
				\Let{$\mi{ekm}$}{$\mi{responseValue}[\str{ekm}]$}
				\Let{$\mi{TB\_RS}$}{$s'.\str{tokenBindings}[\mi{rsHost}]$} 
			 	\Let{$\mi{TB\_Msg\_prov}$}{$[\str{id}{:} \mathsf{pub}(\mi{TB\_RS}), \str{sig}{:} 
            \mathsf{sig}(\mi{ekm}, \mi{TB\_RS})]$} \label{client-uat-sig-3}
			
				\Let{$\mi{headers}[\str{Sec\mhyphen{}Token\mhyphen{}Binding}]$}{$[\str{prov}{:}\mi{TB\_Msg\_prov}]$}
			\EndIf
		  \Let{$\mi{message}$}{$\hreq{ nonce=\nu_3,
		      method=\mPost,
		      xhost=\mi{rsHost}, 
 		     	path=\str{/resource\mhyphen{}rw},
		      parameters=\an{}, 
		      headers=\mi{headers},
          xbody=\mi{body}}$}  \label{c:UAT-create-rw-message}
  		\CallFun{HTTPS\_SIMPLE\_SEND}{$[\str{responseTo}{:}\str{RESOURCE\_USAGE},\str{session}{:}\mi{sessionId}]$,
          $\mi{message}$, $s'$} \label{c-sends-msg-to-resource-rw}
  \EndIf
\EndFunction
\end{algorithmic} %
\end{algorithm}

\begin{algorithm}[tbp]
\caption{\label{alg:client-check-first-id-token} Relation of a Client $R^c$ -- Check first ID token (without Login).}
\begin{algorithmic}[1]
\Function{$\mathsf{CHECK\_FIRST\_ID\_TOKEN}$}{$\mi{sessionId}$, $\mi{id\_token}$, $\mi{code}$, $s'$}
  \Comment{\textbf{Check id token validity.}}
  \Let{$\mi{session}$}{$s'.\str{sessions}[\mi{sessionId}]$}
  \Let{$\mi{identity}$}{$\mi{session}[\str{identity}]$}
  \Let{$\mi{issuer}$}{$s'.\str{issuerCache}[\mi{identity}]$}
  \Let{$\mi{oidcConfig}$}{$s'.\str{oidcConfigCache}[\mi{issuer}]$}
  \Let{$\mi{credentials}$}{$s'.\str{clientCredentialsCache}[\mi{issuer}]$}
  \Let{$\mi{jwks}$}{$s'.\str{jwksCache}[\mi{issuer}]$}
  \Let{$\mi{data}$}{$\mathsf{extractmsg}(\mi{id\_token})$}
  \If{$\mi{data}[\str{s\_hash}] \not \equiv \mathsf{hash}(\mi{session}[\str{state}])$} 
    \Stop \ParboxComment{Check state hash.} 
  \EndIf
  \If{$\mi{data}[\str{c\_hash}] \not \equiv \mathsf{hash}(\mi{code})$}  \label{c-check-first-idt-check-c-hash}
    \Stop \ParboxComment{Check code hash.}
  \EndIf
  \label{line:client-start-first-id-token-checks}
  \If{$\mi{data}[\str{iss}] \not\equiv \mi{issuer} 
      \vee \mi{data}[\str{aud}] \not\equiv \mi{credentials}[\str{client\_id}]$} \label{c-check-first-idt-aud}
    \Stop \ParboxComment{Check the issuer and audience.}
  \EndIf
  \If{$\mathsf{checksig}(\mi{id\_token},\mi{jwks}) \not\equiv \top$} \label{c-check-first-idt-sig}
    \Stop \ParboxComment{Check the signature of the id token.}
  \EndIf
  \If{$\mi{data}[\str{nonce}] \not\equiv \mi{session}[\str{nonce}]$}
      \Stop    \ParboxComment{Check nonce.} 
  \EndIf

  \CallFun{PREPARE\_TOKEN\_REQUEST}{$\mi{sessionId}$, $\mi{code}$, $s'$} \label{c-second-PTR}
\EndFunction
\end{algorithmic} %
\end{algorithm}

\begin{algorithm}[tbp]
\caption{\label{alg:client-check-id-token} Relation of a Client $R^c$ -- Check ID token.}
\begin{algorithmic}[1]
\Function{$\mathsf{CHECK\_ID\_TOKEN}$}{$\mi{sessionId}$, $\mi{id\_token}$, $s'$}
  \Comment{\textbf{Check id token and create service session.}}
  \Let{$\mi{session}$}{$s'.\str{sessions}[\mi{sessionId}]$} \label{c-CID-retrieve-session}
  \Let{$\mi{identity}$}{$\mi{session}[\str{identity}]$} \label{c-cit-chooses-identity}
  \Let{$\mi{issuer}$}{$s'.\str{issuerCache}[\mi{identity}]$} \label{c-cit-chooses-issuer}
  \Let{$\mi{oidcConfig}$}{$s'.\str{oidcConfigCache}[\mi{issuer}]$}
  \Let{$\mi{credentials}$}{$s'.\str{clientCredentialsCache}[\mi{issuer}]$}
  \Let{$\mi{jwks}$}{$s'.\str{jwksCache}[\mi{issuer}]$} \label{c-check-id-token-choose-key}
  \Let{$\mi{data}$}{$\mathsf{extractmsg}(\mi{id\_token})$}
  \label{line:client-start-id-token-checks}
  \If{$\mi{data}[\str{iss}] \not\equiv \mi{issuer} 
       \vee \mi{data}[\str{aud}] \not\equiv \mi{credentials}[\str{client\_id}]$} \label{c-check-id-token-aud}
    \Stop \ParboxComment{Check the issuer and audience.}
  \EndIf
  \If{$\mathsf{checksig}(\mi{id\_token},\mi{jwks}) \not\equiv \top$} \label{c-CIT-sig}
    \Stop \ParboxComment{Check the signature of the id token.}
  \EndIf
  \If{$%
    \mi{data}[\str{nonce}] \not\equiv \mi{session}[\str{nonce}]$}
      \Stop    \ParboxComment{Check nonce.}  %
  \EndIf \\
  \ParboxComment{User is now logged in. Store user identity and issuer.}
  \Let{$s'.\str{sessions}[\mi{sessionId}][\str{loggedInAs}]$}{$\an{\mi{issuer}, \mi{data}[\str{sub}]}$}  \label{c-CIT-retrieve-sub}
  \If{$\mi{credentials}[\str{is\_app}] \equiv \bot$} \Comment{Send service session id to browser} \label{c-ssid-check-isApp}
      \Let{$s'.\str{sessions}[\mi{sessionId}][\str{serviceSessionId}]$}{$\nu_4$} \label{line:client-choose-service-session-id} %
      \Let{$\mi{request}$}{$\mi{session}[\str{redirectEpRequest}]$} \label{c-CIT-retrieves-request}
      \Comment{\parbox[t]{7cm}{Send to sender of request to the redirection endpoint. The request's meta data was stored in $\mathsf{PROCESS\_HTTPS\_REQUEST}$ (Algorithm~\ref{alg:client-fapi-http-request}). }}
      \Let{$\mi{headers}$}{$[\str{ReferrerPolicy}{:}\str{origin}]$}\\
      \ParboxComment{Create a cookie containing the service session id.}
      \Let{$\mi{headers}[\cSetCookie]$}{$[\an{\str{\_\_Secure}, \str{serviceSessionId}}{:}\an{\nu_4,\top,\top,\top}]$}\\ %
      \ParboxComment{Respond to browser's request to the redirection endpoint.}
      \Let{$m'$}{$\encs{\an{\cHttpResp, \mi{request}[\str{message}].\str{nonce}, 200, \mi{headers}, \str{ok}}}{\mi{request}[\str{key}]}$}
      \Stop{$\an{\an{\mi{request}[\str{sender}],\mi{request}[\str{receiver}],m'}}$, $s'$} \label{line:client-send-set-service-session}
  \Else \Comment{app client}
      \Stop{s'} \label{line:app-c-do-not-send-ssid}
   \EndIf

\EndFunction
\end{algorithmic} %
\end{algorithm}

\begin{algorithm}[tbp]
\caption{\label{alg:client-check-response-jws} Relation of a Client $R^c$ -- Check Response JWS.}
\begin{algorithmic}[1]
\Function{$\mathsf{CHECK\_RESPONSE\_JWS}$}{$\mi{sessionId}$, $\mi{responseJWS}$, $\mi{code}$, $s'$}
  \Comment{\textbf{Check validity of response JWS.}}
  \Let{$\mi{session}$}{$s'.\str{sessions}[\mi{sessionId}]$}
  \Let{$\mi{identity}$}{$\mi{session}[\str{identity}]$} \label{c-check-respJWS-select-identity-from-session}
  \Let{$\mi{issuer}$}{$s'.\str{issuerCache}[\mi{identity}]$} \label{c-check-respJWS-select-get-issuer-with-identity}
  \Let{$\mi{oidcConfig}$}{$s'.\str{oidcConfigCache}[\mi{issuer}]$}
  \Let{$\mi{credentials}$}{$s'.\str{clientCredentialsCache}[\mi{issuer}]$}
  \Let{$\mi{jwks}$}{$s'.\str{jwksCache}[\mi{issuer}]$} \label{c-check-respJWS-retrieve-jwks}
  \Let{$\mi{data}$}{$\mathsf{extractmsg}(\mi{responseJWS})$} \label{c-check-respJWS-retrieve-data-from-resp-jws}
  \If{$\mi{data}[\str{state}] \not \equiv \mathsf \mi{session}[\str{state}]$} 
    \Stop \ParboxComment{Check state hash.} 
  \EndIf
  \If{$\mi{data}[\str{code}] \not \equiv \mi{code}$} 
    \Stop \ParboxComment{Check code.}
  \EndIf
  \If{$\mi{data}[\str{iss}] \not\equiv \mi{issuer} 
      \vee \mi{data}[\str{aud}] \not\equiv \mi{credentials}[\str{client\_id}]$} \label{c-check-response-jws-aud}
    \Stop \ParboxComment{Check the issuer and audience.}
  \EndIf
  \If{$\mathsf{checksig}(\mi{responseJWS},\mi{jwks}) \not\equiv \top$} \label{c-check-response-jws-check-signature}
    \Stop \ParboxComment{Check the signature of the JWS.} 
  \EndIf

  \Let{$s'.\str{sessions}[\mi{sessionId}][\str{JARM\_iss}]$}{$\mi{data}[\str{iss}] $} \label{c-set-JARM-iss}

  \CallFun{PREPARE\_TOKEN\_REQUEST}{$\mi{sessionId}$, $\mi{code}$, $s'$} \label{c-second-PTR-JARM}
\EndFunction
\end{algorithmic} %
\end{algorithm}

\begin{algorithm}[h!]
\caption{\label{alg:script-client-index}Relation of $\mi{script\_client\_index}$ }
\begin{algorithmic}[1]
\Statex[-1] \textbf{Input:} $\langle\mi{tree}$, $\mi{docnonce}$, $\mi{scriptstate}$, $\mi{scriptinputs}$, $\mi{cookies}$, $\mi{localStorage}$, $\mi{sessionStorage}$, $\mi{ids}$, $\mi{secrets}\rangle$
\Comment{\textbf{Script that models the index page of a client.} Users can initiate the login flow or follow arbitrary links. The script receives various information about the current browser state, filtered according to the access rules (same origin policy and others) in the browser.} 

\LetND{$\mi{switch}$}{$\{\str{auth},\str{link}\}$}
\Comment{\parbox[t]{9cm}{Non-deterministically decide whether to start a login flow or to follow some link.}}

\If{$\mi{switch} \equiv \str{auth}$}
\ParboxComment{\textbf{Start login flow.}}

\Let{$\mi{url}$}{$\mathsf{GETURL}(\mi{tree},\mi{docnonce})$}
\ParboxComment{Retrieve own URL.}
\LetND{$\mi{id}$}{$\mi{ids}$}\label{line:script-client-index-select-id} \ParboxComment{Retrieve one of user's identities.}
\Let{$\mi{url'}$}{$\an{\tUrl, \https, \mi{url}.\str{host}, \str{/startLogin},
    \an{}, \an{}}$}
\ParboxComment{Assemble URL.}
\Let{$\mi{command}$}{$\an{\tForm, \mi{url'}, \mi{\mPost}, \mi{id}, \bot}$}

\ParboxComment{Post a form including the identity to the client.}

\State \textbf{stop} $\an{s,\mi{cookies},\mi{localStorage},\mi{sessionStorage},\mi{command}}$ \label{line:script-client-index-start-fapi-session} \Comment{\parbox[t]{4cm}{Finish script's run and instruct the browser to follow the command (follow post).}}

\Else  \ParboxComment{\textbf{Follow link.}}

  \LetND{$\mi{protocol}$}{$\{\http, \https\}$} 
  \ParboxComment{Non-deterministically select protocol (HTTP or HTTPS).}
  \LetND{$\mi{host}$}{$\dns$}
  \ParboxComment{Non-det. select host.}
  \LetND{$\mi{path}$}{$\mathbb{S}$}
  \ParboxComment{Non-det. select path.}
  \LetND{$\mi{fragment}$}{$\mathbb{S}$}
  \ParboxComment{Non-det. select fragment part.}
  \LetND{$\mi{parameters}$}{$\dict{\mathbb{S}}{\mathbb{S}}$} 
  \ParboxComment{Non-det. select parameters.}
  \Let{$\mi{url}$}{$\an{\cUrl, \mi{protocol}, \mi{host}, \mi{path}, \mi{parameters}, \mi{fragment}}$}
  \Comment{Assemble URL.}

  \Let{$\mi{command}$}{$\an{\tHref, \mi{url}, \bot, \bot}$}
  \ParboxComment{Follow link to the selected URL.}
  \State \textbf{stop} $\an{s,\mi{cookies},\mi{localStorage},\mi{sessionStorage},\mi{command}}$
      \Comment{\parbox[t]{4cm}{Finish script's run and instruct the browser to follow the command (follow link).}}
\EndIf

\end{algorithmic} %
\end{algorithm}

\begin{algorithm}[h!]
\caption{Relation of $\mi{script\_c\_get\_fragment}$ }\label{alg:script-c-get-fragment}
\begin{algorithmic}[1]
\Statex[-1] \textbf{Input:} $\langle\mi{tree}$, $\mi{docnonce}$, $\mi{scriptstate}$, $\mi{scriptinputs}$, $\mi{cookies}$, $\mi{localStorage}$, $\mi{sessionStorage}$, $\mi{ids}$, $\mi{secrets}\rangle$
\Let{$\mi{url}$}{$\mathsf{GETURL}(\mi{tree},\mi{docnonce})$}
\Let{$\mi{url'}$}{$\an{\tUrl, \https, \mi{url}.\str{host}, \str{/redirect\_ep},
    \an{},
    \an{}}$}
\Let{$\mi{command}$}{$\an{\tForm, \mi{url'}, \mi{\mPost}, \mi{url}.\str{fragment}, \bot}$}

\State \textbf{stop} $\an{s,\mi{cookies},\mi{localStorage},\mi{sessionStorage},\mi{command}}$

\end{algorithmic} %
\end{algorithm}

\clearpage
\FloatBarrier

\subsection{Authorization Servers}  \label{app:authorization-server}

Similar to Section I of Appendix F of \cite{FettKuestersSchmitz-TR-OIDC-2017}, an authorization server $\mi{as} \in \fAP{AS}$ is a web server modeled as
an atomic process $(I^\mi{as}, Z^\mi{as}, R^\mi{as}, s_0^\mi{as})$ with the addresses
$I^\mi{as} := \mapAddresstoAP(\mi{as})$. 

\begin{definition}\label{def:initial-state-as}
  A \emph{state $s\in Z^\mi{as}$ of an authorization server $\mi{as}$} is a term of the form
  $\langle\mi{DNSaddress}$, $\mi{pendingDNS}$, $\mi{pendingRequests}$,
  $\mi{corrupt}$, $\mi{keyMapping}$,
  $\mi{tlskeys}$, %
  $\mi{clients}$  (dict from nonces to terms),
  $\mi{records}$ (sequence of terms),
  $\mi{jwk}$ (signing key (only one)),
  $\mi{oautbEKM}$ (sequence of terms),
  $\mi{accessTokens}$ (sequence of issued access tokens)
  $\rangle$
  with
  $\mi{DNSaddress} \in \addresses$,
  $\mi{pendingDNS} \in \dict{\nonces}{\terms}$,
  $\mi{pendingRequests} \in \dict{\nonces}{\terms}$,
  $\mi{corrupt}\in\terms$, $\mi{keyMapping} \in \dict{\dns}{\terms}$ ,
  $\mi{tlskeys} \in \dict{\dns}{K_\text{TLS}}$ (all former components
  as in Definition~\ref{def:generic-https-state}),
  $\mi{clients}\in\dict{\nonces}{\terms}$, %
  $\mi{records}\in\terms$,
  $\mi{jwk}\in K_\text{sign}$,
  $\mi{oautbEKM} \in \terms$ and 
  $\mi{accessTokens}\in \terms$.

  An \emph{initial state $s^\mi{as}_0$ of $\mi{as}$} is a state of $\mi{as}$ with
  $s^\mi{as}_0.\str{pendingDNS} \equiv \an{}$,
  $s^\mi{as}_0.\str{pendingRequests} \equiv \an{}$,
  $s^\mi{as}_0.\str{corrupt} \equiv \bot$, $s^\mi{as}_0.\str{keyMapping}$ being
  the same as the keymapping for browsers,
  $s^\mi{as}_0.\str{tlskeys} \equiv \mi{tlskeys}^\mi{as}$,
  $s^\mi{as}_0.\str{records} \equiv \an{}$, 
  $s^\mi{as}_0.\str{jwk} \equiv \mathsf{signkey}(\mi{as})$,
  $s^c_0.\str{oautbEKM} \equiv \an{}$ and  
  $s^\mi{as}_0.\str{accessTokens} \equiv \an{}$.
\end{definition}

We require 
$s^\mi{as}_0.\str{clients}[\mi{clientId}]$ to be preconfigured and to contain information 
about the client with client id $\mi{clientId}$
regarding the profile ($\str{r}$ or $\str{rw}$, accessible via the key $\str{profile}$),
the type of the client ($\str{pub}$, $\str{conf\_JWS}$,$\str{conf\_MTLS}$ or $\str{conf\_OAUTB}$, 
accessible via $\str{client\_type}$), the client secret
(if the client type is $\str{conf\_JWS}$ or $\str{conf\_OAUTB}$,
accessible via $\str{client\_secret}$)
and
whether the client is an app client or a web server client
(either $\True$ or $\bot$, accessible via $\str{is\_app}$).  

For checking the signature of signed request JWTs,
we require that
$s^\mi{as}_0.\str{clients}[\mi{clientId}][\str{jws\_key}] \equiv \mathsf{pub}(s^c_0.\str{authReqSigKey})$. 

The resource servers that are supported by the authorization server
shall be contained in 
$s^\mi{as}_0.\str{resource\_servers} \subset \mi{dom}(\fAP{RS})$. %
For the corresponding resource servers $\mi{rs}$, we require 
that $s^\mi{rs}_0.\str{authServ} \in \mi{dom}(\mi{as})$.

The relation $R^\mi{as}$ is based on the model of generic HTTPS servers 
(see Section~\ref{sec:generic-https-server-model}). 
Algorithm~\ref{alg:as-fapi} specifies the algorithm responsible for processing
HTTPS requests. The script $\mathit{script\_as\_form}$ is defined
in Algorithm~\ref{alg:fapi-script-as-form} 

Furthermore, we require that $\mi{leak} \in \addresses$ is an arbitrary 
IP address. 

Table~\ref{tab:as-placeholder-list} shows a list of all placeholders
used in the algorithms of the authorization server.

 \begin{table}[tbh]
   \centering
   \begin{tabular}{|@{\hspace{1ex}}l@{\hspace{1ex}}|@{\hspace{1ex}}l@{\hspace{1ex}}|}\hline 
     \hfill Placeholder\hfill  &\hfill  Usage\hfill  \\\hline\hline
     $\nu_1$ & new authorization code  \\\hline
     $\nu_2$ & new access token \\\hline
     $\nu_3$ & nonce for mTLS \\\hline
     $\nu_4$ & nonce for OAUTB \\\hline
   \end{tabular}
   
   \caption{List of placeholders used in the authorization server algorithm.}
   \label{tab:as-placeholder-list}
 \end{table} ~

\vspace{1ex}
\noindent
\textbf{Differences to \cite{FettKuestersSchmitz-TR-OIDC-2017}:}
In the $\str{/auth2}$ path, the authorization server requires a
signed request JWS in all cases to prevent the attack described in Section~\ref{attack:pkce}.
The authorization server also checks and handles the OAUTB message send by the browser
if the client is a confidential client using OAUTB.
The authorization response always contains an authorization code. It also contains 
an id token if the client is a read-write client.
The leakage of the authorization response is modeled by sending the 
response to an arbitrary IP address in case of an app client.

The $\str{/token}$ path handles the token request. In case of a read-write client,
the access token is bound to the client either via mTLS or OAUTB.
To model the leakage of an access token due to phishing, the authorization server sends 
the access token to an arbitrary IP address in case of a read-write client.

The $\str{/MTLS\mhyphen{}prepare}$ and $\str{/OAUTB\mhyphen{}prepare}$ 
paths send the initial response needed for mTLS and OAUTB.

\vspace{1ex}
\noindent
\textbf{Remarks:} 
To model mTLS, the authorization server sends an encrypted nonce to the client. The client decrypts the message and
includes the nonce in the actual request.
This way, the client proves possession of the corresponding key.

As we do not model public key infrastructures, the client does not send a certificate to the authorization server
(as specified in Section 7.4.6 of \cite{rfc5246}). In general, the model uses the function $\mi{keyMapping}$,
which maps domains to the public key that would be contained in the certificate.

We require that the authorization server has a mapping from client ids to the corresponding public
keys in $s^\mi{as}_0.\str{clients}[\mi{clientId}][\str{mtls\_key}]$ (only if the client type is $\str{conf\_MTLS}$),
with the value $\mi{keyMapping}[\mi{dom_{c}}]$ with $\mi{dom_c} \in \mathsf{dom}(c)$.

As explained in Section~\ref{expl-single-at}, we do not model access tokens being issued from the authorization
endpoint. 

\begin{algorithm}
  \caption{\label{alg:as-fapi} Relation of AS $R^\mi{as}$ -- Processing HTTPS Requests}
\begin{algorithmic}[1]
\Function{$\mathsf{PROCESS\_HTTPS\_REQUEST}$}{$m$, $k$, $a$, $f$, $s'$}
  \If{$m.\str{path} \equiv \str{/auth}$}
    \If{$m.\str{method} \equiv \mGet$}
      \Let{$\mi{data}$}{$ m.\str{parameters}$}  %
    \EndIf
    \Let{$m'$}{$\encs{\an{\cHttpResp, m.\str{nonce}, 200, \an{\an{\str{ReferrerPolicy}, \str{origin}}}, \an{\str{script\_as\_form}, \mi{data}}}}{k}$}
    \Stop{\StopWithMPrime}\label{line:as-send-form}
  \State
  \ElsIf{$m.\str{path} \equiv \str{/auth2} \wedge m.\str{method} \equiv \mPost \wedge m.\str{headers}[\str{Origin}]  
  \equiv \an{m.\str{host}, \https}$}    \label{line:as-receive-auth2}
    \Let{$\mi{identity}$}{$m.\str{body}[\str{identity}]$} \label{line:as-auth2-id1}
    \Let{$\mi{password}$}{$m.\str{body}[\str{password}]$}
    \If{$\mi{identity}.\str{domain} \not\in \mathsf{dom}(\mi{as})$}
      \Stop{\DefStop}
    \EndIf
    \If{$\mi{password} \not\equiv \mapIDtoPLI(\mi{identity})$} \label{line:as-auth2-checks-secretofid}
      \Stop{\DefStop}
    \EndIf

    \Let{$\mi{clientId}$}{$m.\str{body}[\str{client\_id}]$} \label{as-auth2-client-id-from-body}
    \If{$\mi{clientId} \not\in s'.\str{clients}$}
      \Stop{\DefStop}
    \EndIf    

    \Let{$\mi{clientInfo}$}{$s'.\str{clients}[\mi{clientId}]$}
    \Let{$\mi{profile}$}{$\mi{clientInfo}[\str{profile}]$}
    \Let{$\mi{clientType}$}{$\mi{clientInfo}[\str{client\_type}]$}
    \Let{$\mi{isApp}$}{$\mi{clientInfo}[\str{is\_app}]$}

		\If{$ \str{request\_jws} \in m.\str{body}$} \label{as-auth2-check-jws-start}
		\Comment{Request must be as JWS}
	   	\Let{$\mi{requestJWS}$}{$m.\str{body}[\str{request\_jws}]$}
      \If{$\mathsf{checksig}(\mi{requestJWS}, \mi{clientInfo}[\str{jws\_key}] ) \not \equiv \True $ } \label{as-check-sig-of-jws}
	      \Stop{\DefStop}\Comment{wrong signature}
			\EndIf
	   	\Let{$\mi{requestData}$}{$\mathsf{extractmsg}(\mi{requestJWS})$} \label{as-retrieves-requestData-from-JWS}
	   	\If{$\mi{requestData}[\str{aud}] \not \equiv m.\str{host} $} %
	       \Stop{\DefStop}\Comment{wrong audience}
			\EndIf
	   	\If{$\mi{requestData}[\str{client\_id}] \not \equiv \mi{clientId}$} 
	       \Stop{\DefStop}\Comment{clientId not the same as in body} \label{as-auth2-check-jws-end}
			\EndIf
		\Else 
      \Stop{\DefStop}
		\EndIf
		\If{$\mi{profile} \equiv \str{rw} \wedge \mi{clientType} \equiv \str{conf\_OAUTB} \wedge \mi{isApp} \equiv \bot $} \Comment{Check both Token Binding messages}
        \If{$\str{Sec\mhyphen{}Token\mhyphen{}Binding} \not \in m.\str{headers}$} %
      		\Stop{\DefStop}
				\EndIf
  			\LetND{$\mi{ekmInfo}$}{$s'.\str{oautbEKM}$} 
		     		\Let{$\mi{TB\_Msg\_provided}$}{$m.\str{headers}[\str{Sec\mhyphen{}Token\mhyphen{}Binding}][\str{prov}]$} 
            \Let{$\mi{TB\_Msg\_referred}$}{$m.\str{headers}[\str{Sec\mhyphen{}Token\mhyphen{}Binding}][\str{ref}]$}  \label{as-auth2-get-ref-msg}

		     		\Let{$\mi{TB\_provided\_pub}$}{$\mi{TB\_Msg\_provided}[\str{id}]$} 
		     		\Let{$\mi{TB\_provided\_sig}$}{$\mi{TB\_Msg\_provided}[\str{sig}]$} 

            \Let{$\mi{TB\_referred\_pub}$}{$\mi{TB\_Msg\_referred}[\str{id}]$}  \label{as-auth2-get-ref-id}
		     		\Let{$\mi{TB\_referred\_sig}$}{$\mi{TB\_Msg\_referred}[\str{sig}]$} 

    			\If{$\mathsf{checksig}(\mi{TB\_provided\_sig}, \mi{TB\_provided\_pub}) \not \equiv \True $}
									\Stop{\DefStop}
					\EndIf
    			\If{$\mathsf{extractmsg}(\mi{TB\_provided\_sig}) \not \equiv \mi{ekmInfo}$} %
									\Stop{\DefStop}
					\EndIf

    			\If{$\mathsf{checksig}(\mi{TB\_referred\_sig}, \mi{TB\_referred\_pub}) \not \equiv \True $}
									\Stop{\DefStop}
					\EndIf

    			\If{$\mathsf{extractmsg}(\mi{TB\_referred\_sig}) \not \equiv \mi{ekmInfo} $} %
									\Stop{\DefStop}
					\EndIf

    			\Let{$s'.\str{oautbEKM}$}{$s'.\str{oautbEKM} - \mi{ekmInfo}$} 
		\EndIf

    \Let{$\mi{responseType}$}{$\mi{requestData}[\str{response\_type}]$}
    \Let{$\mi{redirectUri}$}{$\mi{requestData}[\str{redirect\_uri}]$}
    \Let{$\mi{state}$}{$\mi{requestData}[\str{state}]$}
    \Let{$\mi{nonce}$}{$\mi{requestData}[\str{nonce}]$}

    \If{$\mi{state} \equiv \an{}$}
      \Stop{\DefStop} \Comment{state must be included}
    \EndIf

    \If{$\mi{redirectUri} \not\inPairing \mi{clientInfo}[\str{redirect\_uris}]$}
      \Stop{\DefStop}
    \EndIf

    \Let{$\mi{record}$}{$[ \str{client\_id} : \mi{clientId} ] $} \Comment{Save data in record}
    \Let{$\mi{record}[\str{redirect\_uri}]$}{$\mi{redirectUri}$}
    \Let{$\mi{record}[\str{subject}]$}{$\mi{identity}$} \label{line:as-auth2-id2}
    \Let{$\mi{record}[\str{issuer}]$}{$m.\str{host}$} \label{as-rec-set-iss}

\algstore{myalg}
\end{algorithmic}
\end{algorithm}

\begin{algorithm}                     
\begin{algorithmic} [1]                   %
\algrestore{myalg}

    \Let{$\mi{record}[\str{nonce}]$}{$\mi{nonce}$}
		\Let{$\mi{record}[\str{scope}]$}{$\mi{requestData[\str{scope}]}$}	
    \Let{$\mi{record}[\str{response\_type}]$}{$\mi{responseType}$}	
    \Let{$\mi{record}[\str{code}]$}{$\nu_1$} \label{as-creates-code}
    \Let{$\mi{record}[\str{access\_token}]$}{$\nu_2$} \Comment{Access token for token request} \label{as-creates-at}

    \If{$\mi{profile} \equiv \str{r} \vee \mi{clientType} \equiv \str{pub} \vee 
            (\mi{clientType} \equiv \str{conf\_OAUTB} \wedge \mi{isApp} \equiv \True)$} 
            \Comment{\parbox[t]{4cm}{Save PKCE challenge (unless in rw profile with mTLS)}}
    		\Let{$\mi{record}[\str{pkce\_challenge}]$}{$\mi{requestData}[\str{pkce\_challenge}]$} \label{line:as-auth2-requestData-read-2}
		\ElsIf{$\mi{clientType} \equiv \str{conf\_OAUTB} \wedge \mi{isApp} \equiv \bot $} 
			\Let{$\mi{record}[\str{pkce\_challenge}]$}{$\mi{TB\_referred\_pub}$} \label{line:as-auth2-conf-OAUTB-pkce-cc}
		\EndIf

    \Append{$\mi{record}$}{$s'.\str{records}$} \label{as-adds-record}

    \Let{$\mi{responseData}$}{$[\str{code}{:} \nu_1]$} \Comment{Always send code}

    \If{$(\mi{profile} \equiv \str{rw} \wedge \mi{responseType} \not \in \{\an{\str{code}, \str{id\_token}},
        \an{\str{JARM\_code}} \}) 
          \vee (\mi{profile} \equiv \str{r} \wedge \mi{responseType} \not \equiv \an{code}) $} \label{auth-ep-check-response-type}
       \Stop{\DefStop}
 		\EndIf

    \If{$\mi{responseType} \equiv \an{\str{JARM\_code}}$} \label{auth-ep-JARM-branch}
      \Let{$\mi{responseJWT}$}{$[ \str{iss}: \mi{record}[\str{issuer}], \str{aud} : \mi{record}[\str{client\_id}], 
        \str{code}{:}\mi{record}[\str{code}], 
				$ \breakalgohook{8} $
        \str{at\_hash}{:}\mathsf{hash}(\mi{record}[\str{access\_token}]),
				\str{state}{:}\mi{state}  ]$} \label{line:as-create-response-JWT-auth2}
      \Let{$\mi{responseData}[\str{responseJWS}]$}{$\sig{\mi{responseJWT}}{s'.\str{jwk}}$} \label{as-creates-response-jws}
    \EndIf
    \If{$\str{id\_token} \inPairing \mi{responseType}$} \label{auth-ep-id-token-branch}
      \Let{$\mi{idTokenBody}$}{$[ \str{iss}: \mi{record}[\str{issuer}], \str{sub} :\mi{record}[\str{subject}],$ \breakalgohook{9} $\str{aud} : \mi{record}[\str{client\_id}], \str{nonce} : \mi{record}[\str{nonce}],
				$ \breakalgohook{9} $
        \str{c\_hash}{:}\mathsf{hash}(\mi{record}[\str{code}]), 
				\str{s\_hash}{:}\mathsf{hash}(\mi{state})  ]$} \label{line:as-create-id-token-auth2}
      \Let{$\mi{responseData}[\str{id\_token}]$}{$\sig{\mi{idTokenBody}}{s'.\str{jwk}}$} 

    \EndIf

    \Let{$\mi{responseData}[\str{state}]$}{$\mi{state}$}
    \If{$\mi{responseType} \in \{ \an{\str{code}}, \an{\str{JARM\_code}}  \} $} \Comment{Authorization code mode}
      \Let{$\mi{redirectUri}.\str{parameters}$}{$\mi{redirectUri}.\str{parameters} \cup \mi{responseData}$}
    \Else \Comment{Hybrid Mode}
      \Let{$\mi{redirectUri}.\str{fragment}$}{$\mi{redirectUri}.\str{fragment} \cup \mi{responseData}$}
    \EndIf
    \Let{$m'$}{$\encs{\an{\cHttpResp, m.\str{nonce}, 303, \an{\an{\str{Location}, \mi{redirectUri}}}, \an{}}}{k}$}

    \If{$\mi{clientInfo}[\str{is\_app}] \equiv \True$} \Comment{Leakage of authorization response}
        \Stop{$\an{
          \an{\mi{leak}, a, \an{\str{LEAK}, \mi{clientId}, \an{\str{Location}, \mi{redirectUri}}}},
          \an{f, a, m'}
        }$, $s'$}   \label{line:as-send-auth-resp-leak} 
    \Else \Comment{$\str{is\_app} \equiv \bot$}
      \Stop{\StopWithMPrime} \label{line:as-send-auth-resp}
    \EndIf

  \ElsIf{$m.\str{path} \equiv \str{/token} \wedge m.\str{method} \equiv \mPost$}\label{line:as-token-ep} %
   	\Let{$\mi{clientId}$}{$m.\str{body}[\str{client\_id}]$} \label{as-token-ep-cid-from-body}

    \Let{$\mi{code}$}{$m.\str{body}[\str{code}]$} \label{as-token-ep-get-code-from-body}
    \LetST{$\mi{record}$, $\mi{ptr}$}{$\mi{record} \equiv s'.\str{records}.\mi{ptr} \wedge \mi{record}[\str{code}] \equiv \mi{code}  
 							$\breakalgohook{5} $
              \wedge \mi{code} \not\equiv \bot$}{\textbf{stop}\ \DefStop} \label{as-token-ep-retrieves-record}
    \If{$\mi{record}[\str{client\_id}] \not\equiv \mi{clientId}$} \label{line:tokenep-checks-clientId}
      \Stop{\DefStop}
    \EndIf

    \Let{$\mi{clientInfo}$}{$s'.\str{clients}[\mi{clientId}]$}
    \Let{$\mi{profile}$}{$\mi{clientInfo}[\str{profile}]$}
   	\Let{$\mi{clientType}$}{$\mi{clientInfo}[\str{client\_type}]$} 

    \If{$\mi{profile} \equiv \str{rw} \wedge (\mi{clientType} \equiv \str{pub} \vee  \mi{clientType} \equiv \str{conf\_OAUTB})$}  \Comment{\parbox[t]{3cm}{Check both Token Binding messages}}
  				\LetND{$\mi{ekmInfo}$}{$s'.\str{oautbEKM}$} 

		     		\Let{$\mi{TB\_Msg\_provided}$}{$m.\str{headers}[\str{Sec\mhyphen{}Token\mhyphen{}Binding}][\str{prov}]$} 
		     		\Let{$\mi{TB\_provided\_pub}$}{$\mi{TB\_Msg\_provided}[\str{id}]$} \label{as-token-ep-get-prov-id}
		     		\Let{$\mi{TB\_provided\_sig}$}{$\mi{TB\_Msg\_provided}[\str{sig}]$} \label{as-token-ep-get-prov-sig}

    			\If{$\mathsf{checksig}(\mi{TB\_provided\_sig}, \mi{TB\_provided\_pub}) \not \equiv \True  
 					     $\breakalgohook{3} $
    			     \vee \mathsf{extractmsg}(\mi{TB\_provided\_sig}) \not \equiv \mi{ekmInfo} $} \label{as-token-ep-check-prov-tb-msg}
                  \Stop{\DefStop} \Comment{Wrong signature or ekm value}
					\EndIf

	     		\Let{$\mi{TB\_Msg\_referred}$}{$m.\str{headers}[\str{Sec\mhyphen{}Token\mhyphen{}Binding}][\str{ref}]$} \label{as-token-ep-ref-tb-key-1}
	     		\Let{$\mi{TB\_referred\_pub}$}{$\mi{TB\_Msg\_referred}[\str{id}]$} \label{as-token-ep-ref-tb-key-2}
	     		\Let{$\mi{TB\_referred\_sig}$}{$\mi{TB\_Msg\_referred}[\str{sig}]$}

    			\If{$\mathsf{checksig}(\mi{TB\_referred\_sig}, \mi{TB\_referred\_pub}) \not \equiv \True 
 					     $\breakalgohook{3} $
    			     \vee \mathsf{extractmsg}(\mi{TB\_referred\_sig}) \not \equiv \mi{ekmInfo} $}
                  \Stop{\DefStop} \Comment{Wrong signature or ekm value}
					\EndIf

					\Let{$s'.\str{oautbEKM}$}{$s'.\str{oautbEKM} - \mi{ekmInfo}$} 
		\EndIf

    \If{$\mi{clientType} \equiv \str{conf\_JWS} \vee \mi{clientType} \equiv \str{conf\_OAUTB}$} \Comment{Check JWS} \label{branch:token-ep-check-jws}
    	\Let{$\mi{clientSecret}$}{$\mi{clientInfo}[\str{client\_secret}]$}
      \If{$\mathsf{checkmac}(m.\str{body}[\str{assertion}], \mi{clientSecret}) \not \equiv \True$} \label{line:token-ep-check-jws-sig}
					\Stop{\DefStop} \Comment{Invalid MAC} \label{line:stop-tokenep-jws-mac}
			\EndIf

			\Let{$\mi{assertion}$}{$\mathsf{extractmsg}(m.\str{body}[\str{assertion}])$} 

			\If{$\mi{assertion}[\str{aud}]  \not \equiv m.\str{host}
          \vee \mi{assertion}[\str{iss}]  \not \equiv \mi{clientId}$} \label{line:token-ep-check-jws-values}
            \Stop{\DefStop} \Comment{Invalid audience or clientId} \label{line:stop-tokenep-jws-aud}
			\EndIf

\algstore{myalg2}
\end{algorithmic}
\end{algorithm}

\begin{algorithm}                     
\begin{algorithmic} [1]                   %
\algrestore{myalg2}

    \ElsIf{$\mi{clientType} \equiv \str{conf\_MTLS} $} 
      \LetND{$\mi{mtlsInfo}$}{$s'.\str{mtlsRequests}[\mi{clientId}]$}  \label{as-token-ep-retrieve-mtlsInfo}
      \If{$\mi{mtlsInfo}.1 \not \equiv m.\str{body}[\str{TLS\_AuthN}]$} \label{as-token-ep-check-mtls}
					\Stop{\DefStop}
			\EndIf
			\Let{$s'.\str{mtlsRequests}[\mi{clientId}]$}{$s'.\str{mtlsRequests}[\mi{clientId}] - \mi{mtlsInfo}$}
		\EndIf

   	\If{$\mi{profile} \equiv \str{r}$} %
      \If{$\mathsf{hash}(m.\str{body}[\str{pkce\_verifier}]) \not \equiv \mi{record}[\str{pkce\_challenge}]$} \label{line:as-checks-r-pkce}
  			\Stop{\DefStop}
 			\EndIf
  	\ElsIf{$\mi{profile} \equiv \str{rw} \wedge \mi{clientType} \equiv \str{conf\_OAUTB} \wedge \mi{isApp} \equiv \bot $}
			\If{$m.\str{body}[\str{pkce\_verifier}] \not \equiv 
 								\mi{record}[\str{pkce\_challenge}]$} \Comment{Sec. 5.2 of [OAUTB]} \label{line:as-checks-conf-OAUTB-pkce}
 			\Stop{\DefStop}
 		\EndIf

    	\ElsIf{$\mi{profile} \equiv \str{rw} \wedge (\mi{clientType} \equiv \str{pub} \vee 
        (\mi{clientType} \equiv \str{conf\_OAUTB} \wedge \mi{isApp} \equiv \True ))$}
							\If{$\mathsf{hash}(\mi{TB\_provided\_pub}) \not \equiv \mi{record}[\str{pkce\_challenge}]$} \Comment{Sec. 5.1 of [OAUTB]} \label{as-rw-pub-check-pkce}
									\Stop{\DefStop}
							\EndIf
    	\EndIf %

    \If{\textbf{not} $( \mi{record}[\str{redirect\_uri}] \equiv m.\str{body}[\str{redirect\_uri}] \vee 
   					$\breakalgohook{5}$
            (|\mi{clientInfo}[\str{redirect\_uris}]| = 1 \wedge \str{redirect\_uri} \not\in m.\str{body}))$} \label{as-check-redir-uri-at-token-endpoint}
      \Stop{\DefStop} \Comment{If only one redirect URI is registered, it can be omitted.}
    \EndIf

    \Let{$s'.\str{records}.\mi{ptr}[\str{code}]$}{$\bot$} \Comment{Invalidate code} \label{as-invalidates-code}

    \Let{$\mi{accessToken}$}{$\mi{record}[\str{access\_token}]$} \label{as-tep-get-at-from-record}

		\If{$\mi{profile} \equiv \str{rw}$} \label{as-adds-sequence-for-rw-client}
		\Comment{Create token binding}
     	\If{$\mi{clientType} \equiv \str{conf\_MTLS}$} 
  		     	\Let{$s'.\str{accessTokens}$}{$s'.\str{accessTokens} \plusPairing 
   					$\breakalgohook{5}$
            \an{\str{MTLS}, \mi{record}[\str{subject}], \mi{clientId}, \mi{accessToken}, \mi{mtlsInfo}.2 , \str{rw}}$} \label{as-creates-mtls-token-binding}
     	\Else \Comment{OAUTB}
  		     	\Let{$s'.\str{accessTokens}$}{$s'.\str{accessTokens} \plusPairing  
   					$\breakalgohook{5}$
            \an{\str{OAUTB}, \mi{record}[\str{subject}], \mi{clientId}, \mi{accessToken}, \mi{TB\_referred\_pub}, \str{rw}}$} \label{as-creates-oautb-token-binding}
  					
      \EndIf %

		\Else \label{as-adds-sequence-for-r-client}
            \Let{$s'.\str{accessTokens}$}{$s'.\str{accessTokens} \plusPairing \an{\mi{record}[\str{subject}], \mi{clientId}, \mi{accessToken}, \str{r}}$} \label{line:as-creates-r-at-sequence}
		\EndIf

		\If{$\str{openid} \inPairing \mi{record}[\str{scope}] \vee 
        (\mi{record}[\str{response\_type}] \equiv 
        \an{\str{code}, \str{id\_token}})$} \hspace{-0.3cm} \Comment{\parbox[t]{3cm}{return id token}}
		    \Let{$\mi{idTokenBody}$}{$[ \str{iss}: \mi{record}[\str{issuer}] ]$} \label{line:as-create-id-token-from-code}
		    \Let{$\mi{idTokenBody}[\str{sub}]$}{$\mi{record}[\str{subject}]$} \label{as-token-ep-add-sub-to-id-token}
		    \Let{$\mi{idTokenBody}[\str{aud}]$}{$\mi{record}[\str{client\_id}]$} \label{as-token-ep-add-client-id-to-id-token}
		    \Let{$\mi{idTokenBody}[\str{nonce}]$}{$\mi{record}[\str{nonce}]$}
		    \Let{$\mi{idTokenBody}[\str{at\_hash}]$}{$\mathsf{hash}(\mi{accessToken})$} \Comment{Mitigate reuse of phished AT}
		
		    \Let{$\mi{idToken}$}{$\sig{\mi{idTokenBody}}{s'.\str{jwk}}$} \label{as-token-ep-create-id-token}
    		\Let{$m'$}{$\encs{\an{\cHttpResp, m.\str{nonce}, 200, \an{}, [\str{access\_token}{:} \mi{accessToken}, %
						\str{id\_token}{:} \mi{idToken}]}}{k}$}      
		\Else
    		\Let{$m'$}{$\encs{\an{\cHttpResp, m.\str{nonce}, 200, \an{}, [\str{access\_token}{:} \mi{accessToken}] %
				}}{k}$}      
		\EndIf

		\If{$\mi{profile} \equiv \str{rw}$}
        \Stop{$\an{
          \an{\mi{leak}, a, \an{\str{LEAK}, \mi{clientId}, \mi{accessToken}}},
          \an{f, a, m'}
        }$, $s'$} \Comment{Leakage of access token} \label{as-send-token-ep-readwrite}
    \Else \Comment{$\mi{profile} \equiv \str{r}$}
      \Stop{\StopWithMPrime} \label{as-send-token-ep-read}
    \EndIf

	\State
  \ElsIf{$m.\str{path} \equiv \str{/MTLS\mhyphen{}prepare}$}\label{line:as-mtls-ep} 
    	\Let{$\mi{clientId}$}{$m.\str{body}[\str{client\_id}]$} 
      \Let{$\mi{mtlsNonce}$}{$\mi{\nu_3}$} \label{line:as-creates-mtls-nonce}
      \Let{$\mi{clientKey}$}{$s'.\str{clients}[\str{clientId}][\str{mtls\_key}]$} \label{line:as-choses-mtls-key}
			\Let{$s'.\str{mtlsRequests}[\mi{clientId}]$}
              {$s'.\str{mtlsRequests}[\mi{clientId}] \plusPairing \an{\mi{mtlsNonce}, \mi{clientKey} }$} \label{as-add-mtlsInfo}
      \Let{$m'$}{$\encs{\an{\cHttpResp, m.\str{nonce}, 200, \an{}, \mathsf{enc_a}(\an{\mi{mtlsNonce}, \mi{keyMapping}(m.\str{host})},
      \mi{clientKey})}}{k}$}      
			\Stop{\StopWithMPrime} \label{as-send-mtls-prepare-response}
	\State
  \ElsIf{$m.\str{path} \equiv \str{/OAUTB\mhyphen{}prepare}$}
            \Let{$\mi{tbNonce}$}{$\mi{\nu_4}$}
						\Let{$s'.\str{oautbEKM}$} 
								{$s'.\str{oautbEKM} \plusPairing \mathsf{hash}(\an{m.\str{nonce}, \mi{tbNonce},
											\mi{keyMapping}[m.\str{host}] }) $} \label{as-create-and-add-oautbEKM}
            \State \Comment{Own public key is needed for modeling Extended Master Secret}
						\Let{$m'$}{$\encs{\an{\cHttpResp, m.\str{nonce}, 200, \an{}, [\str{tb\_nonce}{:}\mi{tbNonce}]}}{k}$}      
						\Stop{\StopWithMPrime} \label{as-send-oautb-prepare-response}
  \EndIf
\EndFunction
\end{algorithmic} %
\end{algorithm}

\begin{algorithm}
\caption{\label{alg:fapi-script-as-form} Relation of $\mi{script\_as\_form}$ }
\begin{algorithmic}[1]
\Statex[-1] \textbf{Input:} $\langle\mi{tree}$, $\mi{docnonce}$, $\mi{scriptstate}$, $\mi{scriptinputs}$, $\mi{cookies}$, $\mi{localStorage}$, $\mi{sessionStorage}$, $\mi{ids}$, $\mi{secrets}\rangle$
\Let{$\mi{url}$}{$\mathsf{GETURL}(\mi{tree},\mi{docnonce})$} 
\Let{$\mi{url}'$}{$\an{\tUrl, \https, \mi{url}.\str{host}, \str{/auth2},
    \an{}, \an{}}$}
\Let{$\mi{formData}$}{$\mi{scriptstate}$}
\LetND{$\mi{identity}$}{$\mi{ids}$} \label{line:script-as-form-select-id}
\LetND{$\mi{secret}$}{$\mi{secrets}$}
\Let{$\mi{formData}[\str{identity}]$}{$\mi{identity}$}
\Let{$\mi{formData}[\str{password}]$}{$\mi{secret}$}
\Let{$\mi{command}$}{$\an{\tForm, \mi{url}', \mi{\mPost}, \mi{formData}, \bot}$}
\State \textbf{stop} $\an{s,\mi{cookies},\mi{localStorage},\mi{sessionStorage},\mi{command}}$

\end{algorithmic} %
\end{algorithm}

\FloatBarrier %

\clearpage
\FloatBarrier

\subsection{Resource Servers}  \label{app:rs}
 A resource server $\mi{rs} \in \fAP{RS}$ is a web server modeled as
 an atomic process $(I^\mi{rs}, Z^\mi{rs}, R^\mi{rs}, s_0^\mi{rs})$ with the addresses
 $I^\mi{rs} := \mapAddresstoAP(\mi{rs})$. The set of states $Z^\mi{rs}$ and the initial state
 $s^\mi{rs}_0$ of~$\mi{rs}$ are defined in the following.

\begin{definition}\label{def:initial-state-rs}
  A \emph{state $s\in Z^\mi{rs}$ of a resource server $\mi{rs}$} is a term of the form
  $\langle\mi{DNSaddress}$, $\mi{pendingDNS}$, $\mi{pendingRequests}$,
  $\mi{corrupt}$, $\mi{keyMapping}$,
  $\mi{tlskeys}$, %
  $\mi{mtlsRequests}$ (sequence of terms),
  $\mi{oautbEKM}$ (sequence of terms), 
  $\mi{rNonce}$ (dict from $\mi{\IDs}$ to sequence of nonces),
  $\mi{wNonce}$ (dict from $\mi{\IDs}$ to sequence of nonces),
	$\mi{ids}$ (sequence of ids),
	$\mi{authServ}$ (domain)$\rangle$ with
  $\mi{DNSaddress} \in \addresses$,
  $\mi{pendingDNS} \in \dict{\nonces}{\terms}$,
  $\mi{pendingRequests} \in \dict{\nonces}{\terms}$,
  $\mi{corrupt}\in\terms$, 
  $\mi{keyMapping} \in \dict{\dns}{\terms}$,
  $\mi{tlskeys} \in \dict{\dns}{K_\text{TLS}}$ (all former components
  as in Definition~\ref{def:generic-https-state}),
  $\mi{mtlsRequests}\in \terms$,
  $\mi{oautbEKM}\in \terms$, 
  $\mi{rNonce}\in \dict{\IDs}{N_\text{r}}$, 
  $\mi{wNonce}\in \dict{\IDs}{N_\text{w}}$,
  $\mi{ids} \subset \IDs$ and
  $\mi{authServ}\in \dns$.
 
   An \emph{initial state $s^\mi{rs}_0$ of $\mi{rs}$} is a state of $\mi{rs}$ with
   $s^\mi{rs}_0.\str{pendingDNS} \equiv \an{}$, 
   $s^\mi{rs}_0.\str{pendingRequests} \equiv \an{}$,
   $s^\mi{rs}_0.\str{corrupt} \equiv \bot$, 
   $s^\mi{rs}_0.\str{keyMapping}$ being
   the same as the keymapping for browsers,
   $s^\mi{rs}_0.\str{tlskeys} \equiv \mi{tlskeys}^\mi{rs}$,
   $s^\mi{rs}_0.\str{mtlsRequests} \equiv \an{}$,
   $s^\mi{rs}_0.\str{oautbEKM} \equiv \an{}$, 
   $s^\mi{rs}_0.\str{ids}$ being the sequence of identities for which the resource server
   manages resources,
   $s^\mi{rs}_0.\str{rNonce}$ and 
   $s^\mi{rs}_0.\str{wNonce}$ being the set of nonces representing read and 
   write access to resources, where each set contains an infinite sequence of nonces for each
   $\mi{id} \in s^\mi{rs}_0.\str{ids}$,
   $s^\mi{rs}_0.\str{authServ}$ being the domain of the authorization server that the resource server
   supports.
\end{definition}

The relation $R^\mi{rs}$ is again based on the generic HTTPS
server model (see Section~\ref{sec:generic-https-server-model}), for which the algorithm
used for processing HTTP requests is defined in 
Algorithm~\ref{alg:rs-oidc}.

Table~\ref{tab:rs-placeholder-list} shows a list of placeholders used in the resource server
algorithm.

\begin{table}[tbh] %
  \centering
  \begin{tabular}{|@{\hspace{1ex}}l@{\hspace{1ex}}|@{\hspace{1ex}}l@{\hspace{1ex}}|}\hline 
    \hfill Placeholder\hfill  &\hfill  Usage\hfill  \\\hline\hline
    $\nu_1$ & new nonce for mTLS  \\\hline
    $\nu_2$ & new nonce for OAUTB \\\hline
  \end{tabular}
  
  \caption{List of placeholders used in the resource server algorithm.}
  \label{tab:rs-placeholder-list}
\end{table}

\vspace{1ex}
\noindent
\textbf{Description and Remarks:}
 A resource server has two paths for requesting access to resources, depending on the
 profile used in the authorization process. To simplify the protected resource access,
 we use two disjunct set of nonces, where the set $\mi{rNonce}$ represents read access to 
 a resource, and the set $\mi{wNonce}$ represents write access.

 As before, there are paths for requesting nonces required for mTLS and OAUTB, 
 as access tokens are bound to read-write clients.

 For checking the token binding of an access token when using mTLS, the nonce chosen by the resource server
 is encrypted with a public key sent by the client. In contrast to an authorization server, a resource server is not required to
 check the validity of certificates (e.g., by checking the certificate chain), as this was already done by the 
 authorization server that created the bound access token (Section 4.2 of \cite{ietf-oauth-mtls-09}). 

 Furthermore, we require that 
 $\forall \mi{id}\in s_0^{\mi{rs}}.\str{ids}: \mathsf{governor}(\mi{id}) \equiv s_0^{\mi{rs}}.\str{authServ}$,
 i.e., the resource server contains only resources of identities that are governed by the authorization
 server $s_0^{\mi{rs}}.\str{authServ}$.

\FloatBarrier

\begin{algorithm}[h!]
\caption{\label{alg:rs-oidc} Relation of RS $R^\mi{rs}$ -- Processing HTTPS Requests}
\begin{algorithmic}[1]
\Function{$\mathsf{PROCESS\_HTTPS\_REQUEST}$}{$m$, $k$, $a$, $f$, $s'$}
  \If{$m.\str{path} \equiv \str{/MTLS\mhyphen{}prepare}$}\label{line:rs-mtls} 
						\Let{$\mi{mtlsNonce}$}{$\nu_1$} \label{rs-chooses-mtls-nonce}
            \Let{$\mi{clientKey}$}{$m.\str{body}[\str{pub\_key}]$} \label{rs-get-pub-key-mtls} \ParboxComment{Certificate is not required to be checked \cite[Section 4.2]{ietf-oauth-mtls-09}}
						\Let{$s'.\str{mtlsRequests}$}{$s'.\str{mtlsRequests} \plusPairing \an{\mi{mtlsNonce}, \mi{clientKey}} $} \label{rs-adds-mtlsreq}  %
						\Let{$m'$}{$\encs{\an{\cHttpResp, m.\str{nonce}, 200, \an{}, \mathsf{enc_a}(\an{\mi{mtlsNonce}, \mi{keyMapping}(m.\str{host})},\mi{clientKey})}}{k}$}  \label{rs-encrypts-mtlsnonce}
						\Stop{\StopWithMPrime} \label{rs-send-mtls-nonce}
     \State
  \ElsIf{$m.\str{path} \equiv \str{/OAUTB\mhyphen{}prepare}$}\label{line:rs-oautb} 
						\Let{$\mi{tbNonce}$}{$\nu_2$} 
						\Let{$s'.\str{oautbEKM}$} 
								{$s'.\str{oautbEKM} \plusPairing \mathsf{hash}(\an{m.\str{nonce}, \mi{tbNonce}, 
										\mi{keyMapping}[m.\str{host}] }) $}  \label{rs-creates-and-adds-oautbEKM-to-state}
						\Let{$m'$}{$\encs{\an{\cHttpResp, m.\str{nonce}, 200, \an{}, [\str{tb\_nonce}{:} \mi{tbNonce}]	}}{k}$}      
						\Stop{\StopWithMPrime} 
     \State
  \ElsIf{$m.\str{path} \equiv \str{/resource\mhyphen{}r}$}
    \LetST{$\mi{id} \in s'.\mi{ids}$}{$\mathsf{check\_read\_AT}(\mi{id}, m.\mi{header}[\str{Authorization}], s'.\str{authServ}) \equiv \True$
						\breakalgohook{2}   }
						{\textbf{stop}\ \DefStop}  \label{rs-resource-r-check-at}
    \LetND{$\mi{resource}$}{$s'.\str{rNonce}[\mi{id}]$}
    \Let{$m'$}{$\encs{\an{\cHttpResp, m.\str{nonce}, 200, \an{}, [\str{resource}{:}\mi{resource}]}}{k}$}      
    \Stop{\StopWithMPrime} \label{rs-send-rNonce}
     \State
  \ElsIf{$m.\str{path} \equiv \str{/resource\mhyphen{}rw}$}
    \If{$\str{at\_iss} \not \in m.\str{body}$} \label{rs:check-if-at-iss-in-body}
        \Stop{\DefStop} 
      \DefStop
    \EndIf
		\If{$m.\str{body}[\str{at\_iss}] \not \equiv s'.\str{authServ}$} \label{rs-check-at-iss}
        \Stop{\DefStop} 
    \EndIf
		\If{$\str{MTLS\_AuthN} \in m.\str{body}$}
    	\LetST{$\mi{mtlsInfo}$}{$\mi{mtlsInfo} \inPairing s'.\str{mtlsRequests}
 					$\breakalgohook{5} $
          \wedge \mi{mtlsInfo}.1 \equiv m.\str{body}[\str{MTLS\_AuthN}]$}{\textbf{stop}\ \DefStop} \label{rs-retrieve-mtlsReq}
			\Let{$s'.\str{mtlsRequests}$}{$s'.\str{mtlsRequests} - \mi{mtlsInfo}$}
    \LetST{$\mi{id} \in s'.\mi{ids}$}{$\mathsf{check\_mtls\_AT}(\mi{id}, m.\mi{header}[\str{Authorization}], $\breakalgohook{3}$ \mi{mtlsInfo}.2, s'.\str{authServ}) \equiv \True$}		{\textbf{stop}\ \DefStop} \label{rs-check-mtls-at}
		\Else
  		\LetND{$\mi{ekmInfo}$}{$s'.\str{oautbEKM}$} \label{rs-rw-retrieve-oautbEKM} 
   		\Let{$\mi{TB\_Msg\_provided}$}{$m.\str{headers}[\str{Sec\mhyphen{}Token\mhyphen{}Binding}][\str{prov}]$} 
   		\Let{$\mi{TB\_provided\_pub}$}{$\mi{TB\_Msg\_provided}[\str{id}]$} 
   		\Let{$\mi{TB\_provided\_sig}$}{$\mi{TB\_Msg\_provided}[\str{sig}]$}

     \If{$\mathsf{checksig}(\mi{TB\_provided\_sig}, \mi{TB\_provided\_pub}) \not \equiv \True $} \label{rs-rw-check-oautb-sig}
							\Stop{\DefStop}
			\EndIf

			\If{$\mathsf{extractmsg}(\mi{TB\_provided\_sig}) \not \equiv \mi{ekmInfo} $} \label{rs-rw-check-oautb-ekm}
							\Stop{\DefStop}
			\EndIf

    \LetST{$\mi{id} \in s'.\mi{ids}$}{$\mathsf{check\_oautb\_AT}(\mi{id}, m.\mi{header}[\str{Authorization}], $\breakalgohook{2}$ 
            \mi{TB\_provided\_pub},
            s'.\str{authServ}) 
               \equiv \True$}		{\textbf{stop}\ \DefStop} \label{rs-check-oautb-binding} %

			\Let{$s'.\str{oautbEKM}$}{$s'.\str{oautbEKM} - \mi{ekmInfo}$} %
		\EndIf

    \LetND{$\mi{read}$}{$\{\True, \bot\}$}
    \If{$\mi{read} \equiv \True$}
		  \LetND{$\mi{resource}$}{$s'.\mi{rNonce}[\mi{id}]$}
    \Else
		  \LetND{$\mi{resource}$}{$s'.\mi{wNonce}[\mi{id}]$}
		\EndIf
		\Let{$m'$}{$\encs{\an{\cHttpResp, m.\str{nonce}, 200, \an{}, [\str{resource}{:}\mi{resource}]}}{k}$}      
		\Stop{\StopWithMPrime} \label{rs-send-wNonce}
	\EndIf
\EndFunction
\end{algorithmic} %
\end{algorithm}

\FloatBarrier %

\clearpage
\subsection{OpenID FAPI with Network Attacker} \label{fapi-web-system}

The formal model of the FAPI is based on web system as defined in Definition 27 of \cite{FettKuestersSchmitz-TR-OIDC-2017}.
Unless otherwise specified, we adhere to the terms as defined in \cite{FettKuestersSchmitz-TR-OIDC-2017}.

A web system 
$\fapiwebsystem=(\bidsystem, \scriptset, \mathsf{script}, E^0)$
is called a \emph{FAPI web system with a network attacker}. The components of the web system are defined 
in the following.

\begin{itemize}
\item $\bidsystem=\mathsf{Hon} \cup \mathsf{Net}$
consists of a network attacker process (in $\mathsf{Net}$),
a finite set $\fAP{B}$
of web browsers, a finite set $\fAP{C}$
of web servers for the clients, a finite set $\fAP{AS}$
of web servers for the authorization servers and
a finite set $\fAP{RS}$
of web servers for the resource servers, 
with
$\mathsf{Hon} := \fAP{B} \cup \fAP{C} \cup \fAP{AS} \cup \fAP{RS}$.
DNS servers are subsumed by the network attacker and are therefore not modeled explicitly.
\item $\scriptset$ contains the scripts shown in Table~\ref{tab:scripts-in-fapi}, with string representations defined
by the mapping $\mathsf{script}$.
\item $E^0$ contains only the trigger events as specified in
Definition 27 of \cite{FettKuestersSchmitz-TR-OIDC-2017}.
\end{itemize}

\begin{table}[htb]
  \centering
  \begin{tabular}{|@{\hspace{1ex}}l@{\hspace{1ex}}|@{\hspace{1ex}}l@{\hspace{1ex}}|}\hline 
    \hfill $s \in \scriptset$\hfill  &\hfill  $\mathsf{script}(s)$\hfill  \\\hline\hline
    $\Rasp$ & $\str{att\_script}$  \\\hline
    $\mi{script\_c\_index}$ & $\str{script\_c\_index}$  \\\hline
    $\mi{script\_c\_get\_fragment}$ & $\str{script\_get\_fragment}$  \\\hline
    $\mi{script\_as\_form}$ &  $\str{script\_as\_form}$  \\\hline
  \end{tabular}
  
  \caption{List of scripts in $\scriptset$ and their respective string
    representations.}
  \label{tab:scripts-in-fapi}
\end{table}

In addition to the set of nonces defined in \cite{FettKuestersSchmitz-TR-OIDC-2017}, we 
specify an infinite sequence of nonces $N_\text{r}$ representing read access to some resources and
an infinite sequence of nonces $N_\text{w}$ representing write access to some resources.
We call these nonces \emph{resource access nonces}.

  \FloatBarrier\clearpage
  \section{Definitions} \label{fapi-def}

In the following, we will define terms used within the analysis.
All other terms are used as defined in
\cite{FettKuestersSchmitz-TR-OIDC-2017} unless stated otherwise.

\begin{definition}[Read Client]\label{def:readClient}
  A client $\mi{c}$ with client id $\mi{clientId}$ issued from authorization server
  $\mi{as}$ is called \emph{read client} (w.r.t.~$\mi{as}$)  if 
  $s^\mi{as}_0.\str{clients}[\mi{clientId}][\str{profile}] \equiv \str{r}$.
\end{definition}

\begin{definition}[Read-Write Client]\label{def:readwriteClient}
  A client $\mi{c}$ with client id $\mi{clientId}$ issued from authorization server
  $\mi{as}$ is called \emph{read-write client} (w.r.t.~$\mi{as}$)  if 
  $s^\mi{as}_0.\str{clients}[\mi{clientId}][\str{profile}] \equiv \str{rw}$.
\end{definition}

\begin{definition}[Web Server Client]\label{def:webServerClient}
  A client $\mi{c}$ with client id $\mi{clientId}$ issued from authorization server
  $\mi{as}$ is called \emph{web server client} (w.r.t.~$\mi{as}$)  if 
  $s^\mi{as}_0.\str{clients}[\mi{clientId}][\str{is\_app}]  \equiv \bot$.
\end{definition}

A client is an \emph{app client} if it is not a web server client.

\begin{definition}[Client Type]\label{def:clientType}
  A client $\mi{c}$ with client id $\mi{clientId}$ issued from authorization server
  $\mi{as}$ is of type $t$ (w.r.t.~$\mi{as}$) if 
  $s^\mi{as}_0.\str{clients}[\mi{clientId}][\str{client\_type}] \equiv t$,
  for $t \in \{ \str{pub}, \str{conf\_JWS}, \str{conf\_MTLS}, \str{conf\_OAUTB} \}$.
\end{definition}

\begin{definition}[Confidential Client]\label{def:confClient}
  A client $\mi{c}$ with client id $\mi{clientId}$ issued from authorization server
  $\mi{as}$ is called \emph{confidential} (w.r.t.~$\mi{as}$)  if 
  it is of type 
  $\str{conf\_JWS}$, $\str{conf\_MTLS}$ or $\str{conf\_OAUTB}$.
\end{definition}

A client is a \emph{public} client if it is not a confidential client.\\

\textbf{Remarks: } As stated in Section~\ref{fapi-informal-description-of-the-model}, we assume that
for all authorization servers, a client is either a read client or a read-write client.
This also holds true for the client type and for being a web server client.
We also note that these properties do not change, as an honest authorization
server $\mi{as}$ never changes the values of $s^\mi{as}_0.\str{clients}$. %

\begin{definition}[Token Endpoint of Authorization Server]\label{def:token-ep}
  A message is sent to the token endpoint of the authorization server $\mi{as}$
  if it is sent to the URL
  $\an{\tUrl, \https, \mi{dom_{\mi{as}}}, \str{/token}, \mi{param}, \mi{frag}}$ 
  for $\mi{dom_{\mi{as}}} \in \mi{dom}(\mi{as})$ and arbitrary $\mi{param}$ and $\mi{frag}$.
\end{definition}

\begin{definition}[Bound Access Token]\label{def:bound-token}
  An access token $t$ issued from authorization server $\mi{as}$ is bound to 
  the client with client id $\mi{clientId}$ 
  in the configuration $(S, E, N)$  of a run 
  of a FAPI web system $\fapiwebsystem$
  with a network attacker, 
  if either 
  $\an{\str{MTLS}, \mi{id}, \mi{clientId}, \mi{t}, \mi{mtlsKey}, \str{rw}} \in S(\mi{as}).\str{accessTokens}$ or 
  $\an{\str{OAUTB}, \mi{id}, \mi{clientId}, \mi{t}, \mi{tbKey}, \str{rw}} \in S(\mi{as}).\str{accessTokens}$, 
  for some values of $\mi{id}, \mi{mtlsKey}$ and $\mi{tbKey}$.
\end{definition}

The access token is bound via mTLS or via OAUTB, depending on the first entry of the sequence.

\begin{definition}[Access Token associated with Client, Authorization Server and Identity]\label{def:AT-associated-with-c}
  Let $c$ be a client with client id $\mi{clientId}$ issued
  to $c$ by the authorization server $\mi{as}$, and let $\mi{id} \in \mathsf{ID}^\mi{as}$.
  We say that an \emph{access token $t$ is associated with $c$, $\mi{as}$ and $\mi{id}$}
  in state $S$ of the configuration $(S, E,N)$ 
  of a run $\rho$
  of a FAPI web system,
  if there is a sequence $s \in S(\mi{as}).\str{accessTokens}$
  such that
  $s \equiv \an{\mi{id}, \mi{clientId}, t, \str{r}}$, 
  $s \equiv \an{\str{MTLS}, \mi{id}, \mi{clientId}, t, \mi{key}, \str{rw}}$ or
  $s \equiv \an{\str{OAUTB}, \mi{id}, \mi{clientId}, t, \mi{key'}, \str{rw}}$,
  for some $\mi{key}$ and $\mi{key'}$.
\end{definition}

\subsection*{Validity of Access Tokens}

For checking the validity of an access token $t$ in state $S$ of a configuration
$(S, E, N)$  of a run,
we define the following functions,
where $\mathsf{dom}_\mi{as} \in \mathsf{dom}(\mi{as})$:

\begin{itemize}
\item $\mathsf{check\_read\_AT}(\mi{id}, t, \mathsf{dom}_\mi{as}) \equiv \True 
  \Leftrightarrow
      \exists \mi{clientId} $ s.t. $\an{\mi{id}, \mi{clientId}, t , \str{r}} \in S(\mi{as}).\str{accessTokens}$.

\item $\mathsf{check\_mtls\_AT}(\mi{id}, t, \mi{key}, \mathsf{dom}_\mi{as}) \equiv \True 
  \Leftrightarrow
      \exists \mi{clientId} $ s.t. $\an{\str{MTLS}, \mi{id}, \mi{clientId}, t , \mi{key}, \str{rw}} \in S(\mi{as}).\str{accessTokens}$.

\item $\mathsf{check\_oautb\_AT}(\mi{id}, t, \mi{key}, \mathsf{dom}_\mi{as}) \equiv \True 
  \Leftrightarrow
      \exists \mi{clientId} $ s.t. $\an{\str{OAUTB}, \mi{id}, \mi{clientId}, t , \mi{key}, \str{rw}} \in S(\mi{as}).\str{accessTokens}$.
\end{itemize}

  \FloatBarrier\clearpage
  \section{Security Properties} \label{fapi-formal-properties}

As the profiles of the FAPI are essentially regular OIDC
flows secured by additional mechanisms, they follow the same goals.
Therefore, the following security definitions
are similar to the definitions given in Appendix H of \cite{FettKuestersSchmitz-TR-OIDC-2017}.
Notably, the conditions under which these properties hold true
are not the same as in the case of regular OIDC, as
detailed in Section~\ref{sec-assumptions}.

We show that these properties hold true in
Theorem~\ref{thm:theorem-1}.

\section*{Authorization} \label{sec:fprop-authorization}

Intuitively, authorization means that an attacker should not be able
to get read or write access to a resource of an honest identity.

To capture this property, we extend the definition given in \cite{FettKuestersSchmitz-TR-OIDC-2017},
where the authorization property states that no access token 
associated with an honest identity may leak. As we assume that in the Read-Write
flow, access tokens may leak to an attacker, our definition states 
that honest resource servers may not provide access to resources of honest identities to 
the attacker.

More precisely, we require that if an honest resource server provides access to a resource
belonging to an honest user whose identity is governed by an honest authorization server,
then this access is not provided to the attacker. This includes the
case that the resource is not directly accessed by the attacker, but
also that no honest client provides the attacker access to such a resource.

\begin{definition}[Authorization Property]\label{def:property-authz-a} 
  We say that the FAPI web system
	with a network attacker \emph{$\fapiwebsystem$
  is secure w.r.t.~authorization} iff for every run $\rho$
  of $\fapiwebsystem$,
  every configuration $(S, E, N)$
  in $\rho$,
  every authorization server $\mi{as} \in \fAP{AS}$
  that is honest in $S$
  with $s^\mi{as}_0.\str{resource\_servers}$ being
  domains of honest resource servers,
  every identity $\mi{id} \in \mathsf{ID}^\mi{as}$ 
  with $b = \mathsf{ownerOfID}(\mi{id})$
  being an honest browser in $S$,
  every client $c \in \fAP{C}$ that is honest in $S$
  with client id $\mi{clientId}$ issued to $c$ by $\mi{as}$,
  every resource server $\mi{rs} \in \fAP{RS}$
  that is honest in $S$ such that
  $\mi{id} \in s^\mi{rs}_0.\str{ids}$,
  $s^\mi{rs}_0.\str{authServ} \in \mathsf{dom}(\mi{as})$ and
  with $\mi{dom}_\mi{rs} \in s_0^\mi{as}.\str{resource\_servers}$
  (with $\mi{dom}_\mi{rs} \in \mathsf{dom}(\mi{rs})$),
  every access token $t$ associated with $c$, $\mi{as}$ and $\mi{id}$
  and every resource access nonce $r \in s^\mi{rs}_0.\str{rNonce}[\mi{id}] \cup s^\mi{rs}_0.\str{wNonce}[\mi{id}]$
  it holds true that: 

  If $r$ is contained in a response to a request $m$ sent to $\mi{rs}$
  with $t \equiv m.\mi{header}[\str{Authorization}]$,
  then 
  $r$ is not derivable from the attackers knowledge in $S$
  (i.e., $r \not\in d_{\emptyset}(S(\fAP{attacker}))$).
\end{definition}

We require that the preconfigured domains of the resource servers
of an authorization server are domains of honest resource servers,
as otherwise, an access token of a read client can trivially leak to
the attacker.

\section*{Authentication}

Intuitively, an attacker should not be able to log in 
at an honest client under the identity of an honest user, where the identity
is governed by an honest authorization server. 
All relevant participants are required to be honest, as otherwise,
the attacker can trivially log in at a client, for example, if the attacker controls
the authorization server that governs the identity.

\begin{definition}[Service Sessions]\label{def:service-sessions}
  We say that there is a \emph{service session identified
  by a nonce $n$
  for an identity $\mi{id}$
  at some client $c$} in a configuration $(S, E, N)$
  of a run $\rho$
  of a FAPI web system
  iff there exists some session id $x$
  and a domain $d \in \mathsf{dom}(\mathsf{governor}(\mi{id}))$
  such that
  $S(r).\str{sessions}[x][\str{loggedInAs}] \equiv \an{d, \mi{id}}$
  and $S(r).\str{sessions}[x][\str{serviceSessionId}] \equiv n$.
\end{definition}

\begin{definition}[Authentication Property]\label{def:property-authn-a}
  We say that the FAPI web system
	with a network attacker \emph{$\fapiwebsystem$
  is secure w.r.t.~authentication} iff for every run $\rho$
  of $\fapiwebsystem$,
  every configuration $(S, E, N)$
  in $\rho$,
  every $c\in \fAP{C}$
  that is honest in $S$,
  every identity $\mi{id} \in \mathsf{ID}$
  with $\mi{as} = \mathsf{governor}(\mi{id})$ %
  being an honest AS
  and
  with $b = \mathsf{ownerOfID}(\mi{id})$
  being an honest browser in $S$,
  every service session identified by some nonce
  $n$
  for $\mi{id}$
  at $c$,
  $n$
  is not derivable from the attackers knowledge in $S$
  (i.e., $n \not\in d_{\emptyset}(S(\fAP{attacker}))$).
\end{definition}

\section*{Session Integrity for Authentication and Authorization}

There are two session integrity properties that capture that an honest
user should not be logged in under the identity of the attacker and
should not use resources of the attacker.

We first define notations for the processing steps that
represent important events during a flow of a FAPI web system, similar to
the definitions given in \cite{FettKuestersSchmitz-TR-OIDC-2017}.

\begin{definition}[User is logged in]\label{def:user-logged-in}
  For a run $\rho$
  of a FAPI web system with a network attacker $\fapiwebsystem$
  we say that a browser $b$
  was authenticated to a client $c$
  using an authorization server $\mi{as}$
  and an identity $u$
  in a login session identified by a nonce $\mi{lsid}$
  in processing step $Q$ in $\rho$ with
  $$Q = (S, E, N) \xrightarrow[r \rightarrow  
  E_{\text{out}}]{} (S', E', N')$$  (for some $S$, $S'$, $E$, $E'$, $N$, $N'$)
  and some event $\an{y,y',m} \in E_{\text{out}}$
  such that $m$
  is an HTTPS response matching an HTTPS request sent by $b$
  to $c$
  and we have that in the headers of $m$
  there is a header of the form
  $\an{\str{Set\mhyphen{}Cookie},
    [\an{\str{\_\_Secure}, \str{serviceSessionId}}{:}\an{\mi{ssid},\top,\top,\top}]}$ for
  some nonce $\mi{ssid}$
  such that 
  $S(c).\str{sessions}[\mi{lsid}][\str{serviceSessionId}] \equiv \mi{ssid}$
  and       $S(c).\str{sessions}[\mi{lsid}][\str{loggedInAs}] \equiv \an{d, u}$
  with $d \in \mathsf{dom}(\mi{as})$.
  We then write $\mathsf{loggedIn}_\rho^Q(b, c, u, \mi{as}, \mi{lsid})$.
\end{definition}

\begin{definition}[User started a login flow]
  For a run $\rho$
  of a FAPI web system with a network attacker $\fapiwebsystem$
  we say that the user of the browser $b$
  started a login session identified by a nonce $\mi{lsid}$
  at the client $c$
  in a processing step $Q$
  in $\rho$
  if (1) in that processing step, the browser $b$
  was triggered, selected a document loaded from an origin of $c$,
  executed the script $\mi{script\_client\_index}$
  in that document, and in that script, executed the
  Line~\ref{line:script-client-index-start-fapi-session} of
  Algorithm~\ref{alg:script-client-index}, and (2) $c$
  sends an HTTPS response corresponding to the HTTPS request sent by
  $b$
  in $Q$
  and in that response, there is a header of the form
  $\an{\str{Set\mhyphen{}Cookie},
    [\an{\str{\_\_Secure},\str{sessionId}}{:}\an{\mi{lsid},\top,\top,\top}]}$. We then write
  $\mathsf{started}_\rho^Q(b, c, \mi{lsid})$.
\end{definition}

\begin{definition}[User authenticated at an AS]
  For a run $\rho$
  of a FAPI web system with a network attacker $\fapiwebsystem$
  we say that the user of the browser $b$
  authenticated to an authorization server $\mi{as}$
  using an identity $u$
  for a login session identified by a nonce $\mi{lsid}$
  at the client $c$
  if there is a processing step $Q$
  in $\rho$   with
  $$Q = (S, E, N) \xrightarrow[]{}
  (S', E', N')$$ (for some $S$, $S'$, $E$, $E'$, $N$, $N'$) in which the browser $b$
  was triggered, selected a document loaded from an origin of $\mi{as}$,
  executed the script $\mi{script\_as\_form}$
  in that document, and in that script, (1) in
  Line~\ref{line:script-as-form-select-id} of
  Algorithm~\ref{alg:fapi-script-as-form}, selected the identity $u$,
  and (2) we have that the $\mi{scriptstate}$
  of that document, when triggered, contains a nonce $s$
  such that $\mi{scriptstate}[\str{state}] \equiv s$
  and $S(r).\str{sessions}[\mi{lsid}][\str{state}] \equiv s$.
  We then write
  $\mathsf{authenticated}_\rho^Q(b, c, u, \mi{as}, \mi{lsid})$.
\end{definition}

\begin{definition}[Resource Access]\label{def:browser-accesses-resource}
	   For a run $\rho$
	   of a FAPI web system with a network attacker $\fapiwebsystem$
	   we say that a browser $b$
	   accesses a resource of identity $u$
	   stored at resource server $\mi{rs}$
	   through the session of client $c$
     identified by the nonce $\mi{lsid}$
	   in processing step $Q$ in $\rho$ with
	   $$Q = (S, E, N) \xrightarrow{} (S', E', N')$$  (for some $S$, $S'$, $E$, $E'$, $N$, $N'$)
     with (1) 
    $\an{\an{\str{\_\_Secure}, \str{sessionid}}, \an{\mi{lsid}, y, z, z'}} \inPairing S(b).\str{cookies}[d]$
     for $d \in \mathsf{dom}(c)$,
     $y, z, z' \in \terms$,
     (2) $S(c).\str{sessions}[\mi{lsid}][\str{resource}] \equiv r$
	   with $r \in s^\mi{rs}_0.\str{rNonce}[\mi{u}] \cup s^\mi{rs}_0.\str{wNonce}[\mi{u}]$
     and (3) $S(c).\str{sessions}[\mi{lsid}][\str{resource\_server}] 
     \in \mathsf{dom}(\mi{rs})$.
     We then write $\mathsf{accessesResource}_\rho^Q(b, r, u, c, \mi{rs}, \mi{lsid})$.
\end{definition}

\subsection*{Session Integrity Property for Authentication for Web Server Clients with OAUTB}
This security property captures that, if the client is a web server
client with OAUTB, then (a) a user should only be logged
in when the user actually expressed the wish to start a FAPI flow
before, and (b) if a user expressed the wish to start a FAPI flow
using some honest authorization server and a specific identity, then user
is not logged in under a different identity.

\begin{definition} [Session Integrity for
  Authentication for Web Server Clients with OAUTB]\label{def:property-si-authn}
  Let $\fapiwebsystem$
  be an FAPI web system with a network attacker. We say that
  \emph{$\fapiwebsystem$
    is secure w.r.t.~session integrity for authentication} iff for
  every run $\rho$
  of $\fapiwebsystem$, every processing step $Q$ in $\rho$ with
  $$Q = (S, E, N) \xrightarrow[]{}
  (S', E', N')$$ (for some $S$,
  $S'$,
  $E$,
  $E'$,
  $N$,
  $N'$),
  every browser $b$
  that is honest in $S$,
  every $\mi{as} \in \fAP{AS}$,
  every identity $u$,
  every web server client $c\in \fAP{C}$
  of type $\str{conf\_OAUTB}$
  that is honest in $S$,
  every nonce $\mi{lsid}$,
  and $\mathsf{loggedIn}_\rho^Q(b, c, u, \mi{as}, \mi{lsid})$
  we have that (1) there exists a processing step $Q'$
  in $\rho$
  (before $Q$)
  such that $\mathsf{started}_\rho^{Q'}(b, c, \mi{lsid})$,
  and (2) if $\mi{as}$
  is honest in $S$,
  then there exists a processing step $Q''$
  in $\rho$
  (before $Q$)
  such that
  $\mathsf{authenticated}_\rho^{Q''}(b, c, u, \mi{as}, \mi{lsid})$.
\end{definition}

\subsection*{Session Integrity Property for Authorization for Web Server Clients with OAUTB}
This security property captures that, if the client is a web server
client with OAUTB, then (a) a user should only access resources
when the user actually expressed the wish to start a FAPI flow
before, and (b) if a user expressed the wish to start a FAPI flow
using some honest authorization server and a specific identity, then user
is not using resources of a different identity.
We note that for this, we require that the resource server which the
client uses is honest, as otherwise, the attacker can trivially return any resource
and receive any resource (for write access).

\begin{definition} [Session Integrity for
  Authorization for Web Server Clients with OAUTB]\label{def:property-si-authz}
  Let $\fapiwebsystem$
  be a FAPI web system with a network attackers. We say that
  \emph{$\fapiwebsystem$
    is secure w.r.t.~session integrity for authorization} iff for
  every run $\rho$
  of $\fapiwebsystem$, every processing step $Q$ in $\rho$ with
  $$Q = (S, E, N) \xrightarrow[]{}
  (S', E', N')$$  (for some $S$, $S'$, $E$, $E'$, $N$, $N'$),
  every browser $b$
  that is honest in $S$,
  every $\mi{as} \in \fAP{AS}$,
  every identity $u$, %
  every web server client $c\in \fAP{C}$
  of type $\str{conf\_OAUTB}$
  that is honest in $S$,
  every $\mi{rs} \in \fAP{RS}$ that is honest in $S$,
  every nonce $r$, 
  every nonce $\mi{lsid}$,
  we have that if
  $\mathsf{accessesResource_\rho^Q(b, r, u, c, \mi{rs}, \mi{lsid})}$
  and $s_0^\mi{rs}.\str{authServ} \in \mathsf{dom}(\mi{as})$, then
  (1)
  there exists a processing step $Q'$
  in $\rho$
  (before $Q$)
  such that $\mathsf{started}_\rho^{Q'}(b, c, \mi{lsid})$,
  and (2) if $\mi{as}$
  is honest in $S$,
  then there exists a processing step $Q''$
  in $\rho$
  (before $Q$)
  such that
  $\mathsf{authenticated}_\rho^{Q''}(b, c, u, \mi{as}, \mi{lsid})$.
\end{definition}

  \FloatBarrier\clearpage
  \newpage
\section{Proofs} \label{fapi-formal-proofs}

\subsection{General Properties}
\begin{lemma}[Host of HTTP Request]\label{lemma:https-server-correct-hosts}
 For any run $\rho$
 of a FAPI web system $\fapiwebsystem$
 with a network attacker,
 every configuration $(S, E, N)$  in $\rho$ and
 every process $p \in \fAP{C} \cup \fAP{AS} \cup \fAP{RS}$ that is honest in $S$
 it holds true that if the generic HTTPS server calls $\mathsf{PROCESS\_HTTPS\_REQUEST}(m_{dec}, k, a, f, s)$
  in Algorithm~\ref{alg:generic-server-main}, then $m_{dec}.\str{host} \in \mathsf{dom}(p)$, for all values of $k$, $a$, $f$ and $s$. 
\end{lemma}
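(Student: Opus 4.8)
The plan is to read the host field directly off the single program point at which $\mathsf{PROCESS\_HTTPS\_REQUEST}$ is invoked inside the generic server, and then combine this with an invariant about the $\str{tlskeys}$ subterm of honest servers. Inspecting Algorithm~\ref{alg:generic-server-main}, the only call to $\mathsf{PROCESS\_HTTPS\_REQUEST}$ occurs in Line~\ref{line:gen-server-first-req}, guarded by the condition on Line~\ref{line:gen-server-asym-check}. That condition requires the existence of $m_{\text{dec}}, k, k', \mi{inDomain}$ with $\an{m_{\text{dec}}, k} \equiv \dec{m}{k'}$ and $\an{\mi{inDomain}, k'} \in s.\str{tlskeys}$, and then successfully pattern-matches $m_{\text{dec}}$ against $\an{\cHttpReq, n, \mi{method}, \mi{inDomain}, \mi{path}, \mi{parameters}, \mi{headers}, \mi{body}}$ (otherwise the process stops without calling the handler). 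Hence whenever the handler is called, the fourth component of $m_{\text{dec}}$, namely $m_{\text{dec}}.\str{host}$, is syntactically equal to the value $\mi{inDomain}$ for which $\an{\mi{inDomain}, k'}$ appears in $s.\str{tlskeys}$. It therefore suffices to show that every domain occurring in the $\str{tlskeys}$ of an honest process $p$ lies in $\mathsf{dom}(p)$.

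The key lemma to establish is the invariant that, for every configuration $(S,E,N)$ in $\rho$ and every honest $p \in \fAP{C} \cup \fAP{AS} \cup \fAP{RS}$, we have $S(p).\str{tlskeys} \equiv s^p_0.\str{tlskeys}$, i.e., the TLS-key mapping is never altered during the run. I would prove this by a straightforward induction over the length of $\rho$: the base case holds by definition of the initial state, and for the induction step I would inspect every algorithm an honest client, authorization server, or resource server may execute (the client relation $R^c$, the AS relation $R^\mi{as}$, the RS relation $R^\mi{rs}$, and the default handlers of the generic HTTPS server in Section~\ref{sec:generic-https-server-model}). None of these ever writes to the $\str{tlskeys}$ subterm; the only state updates touch components such as $\str{sessions}$, $\str{records}$, $\str{pendingRequests}$, $\str{oautbEKM}$, and $\str{accessTokens}$. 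A corrupted process is excluded by the honesty assumption (the corruption branch at the top of the generic server relation is not taken while $p$ is honest), so the invariant is preserved.

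Finally, by the definitions of the initial states (Definition~\ref{def:generic-https-state} together with Definitions~\ref{def:clients}, \ref{def:initial-state-as}, and \ref{def:initial-state-rs}), each honest process $p$ is initialized with $s^p_0.\str{tlskeys}$ being the restriction of the global TLS-key assignment to the domains owned by $p$; that is, for all $\an{d, k} \in s^p_0.\str{tlskeys}$ we have $d \in \mathsf{dom}(p)$, by the WIM convention that TLS private keys are distributed to processes according to domain ownership. Combining the three observations yields the claim: from the guard we obtain $m_{\text{dec}}.\str{host} \equiv \mi{inDomain}$ with $\an{\mi{inDomain}, k'} \in S(p).\str{tlskeys}$; the invariant gives $S(p).\str{tlskeys} \equiv s^p_0.\str{tlskeys}$; and the initial-state property then forces $\mi{inDomain} \in \mathsf{dom}(p)$, whence $m_{\text{dec}}.\str{host} \in \mathsf{dom}(p)$. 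The main obstacle is the invariant of the second paragraph: it is conceptually trivial but requires a careful, exhaustive syntactic pass over all the server algorithms to confirm that $\str{tlskeys}$ is genuinely read-only; once that is in hand, the remainder follows immediately from the structure of Algorithm~\ref{alg:generic-server-main}.
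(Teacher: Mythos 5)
Your proposal is correct and follows essentially the same route as the paper's proof: read $m_{\text{dec}}.\str{host} \equiv \mi{inDomain}$ off the guard in Line~\ref{line:gen-server-asym-check}, observe that $S(p).\str{tlskeys} \equiv s^p_0.\str{tlskeys} \equiv \str{tlskeys}^p$ because no honest process ever modifies this subterm, and conclude from the definition of $\str{tlskeys}^p$ as $\{\an{d,\mi{tlskey}(d)} \mid d \in \mathsf{dom}(p)\}$. The only difference is that you spell out the read-only invariant as an explicit induction over the run, whereas the paper simply asserts it; both are fine.
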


\begin{proof}

  $\mathsf{PROCESS\_HTTPS\_REQUEST}$ is called only in Line~\ref{line:gen-server-first-req}
  of Algorithm~\ref{alg:generic-server-main}. 
  The input message $m$ is encrypted asymmetrically. Intuitively,
  such a message is only decrypted if the process knows the private TLS key, where this 
  private key is chosen (non-deterministically) according to the host of the decrypted message.

  More formally, when $\mathsf{PROCESS\_HTTPS\_REQUEST}$ is called, 
  the $\str{stop}$ in Line~\ref{line:gen-server-asym-check} is not called. Therefore, it holds true that
  \begin{align*}
        & \exists \; \mi{inDomain}, k': \an{\mi{inDomain},k'} \in S(p).\str{tlskeys} \wedge m_{dec}.\str{host} \equiv \mi{inDomain} \\
        & \Rightarrow \exists \; \mi{inDomain}, k': \an{\mi{inDomain},k'} \in \str{tlskeys}^p 
          \wedge m_{dec}.\str{host} \equiv \mi{inDomain} \\
        & \overset{\text{Def.}}{\Rightarrow} \exists \; \mi{inDomain}, k': \an{\mi{inDomain},k'} \in \{ \an{d, \mi{tlskey}(d)} | d \in \mathsf{dom}(p) \}  
          \wedge m_{dec}.\str{host} \equiv \mi{inDomain} 
      \end{align*}

      From this, it follows directly that $ m_{dec}.\str{host} \in \mathsf{dom}(p)$.

  The first step holds true due to
  $S(p).\str{tlskeys} \equiv s^p_0.\str{tlskeys} \equiv \str{tlskeys}^{p}$,
  as this sequence is never changed by any honest process $p$. \QED

\end{proof}

\begin{lemma}[Honest Read Client sends Token Request only to $\mathsf{dom}(\mi{as})$]\label{lemma:token-req-to-id-domain}
  For any run $\rho$
  of a FAPI web system $\fapiwebsystem$
  with a network attacker,
  every configuration $(S, E, N)$  in $\rho$,
  every authorization server $\mi{as}$ that is honest in $S$,
  every identity $\mi{id} \in \mathsf{ID}^\mi{as}$,
  every read client $c$ that is honest in $S$
  and every $\mi{sid}$,
  it holds true that if
Algorithm~\ref{alg:client-token-request} 
($\mathsf{SEND\_TOKEN\_REQUEST}$) is called with
$\mi{sessionId} \equiv \mi{sid}$ and
$S(c).\str{sessions}[\mi{sid}][\str{identity}] \equiv \mi{id} $,
then the messages in 
$\mathsf{SEND\_TOKEN\_REQUEST}$ are sent only to $d \in \mathsf{dom}(\mi{as})$. %
\end{lemma}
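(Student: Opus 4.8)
The plan is to trace the control flow of $\mathsf{SEND\_TOKEN\_REQUEST}$ (Algorithm~\ref{alg:client-token-request}) and reduce the claim to the single fact that the value $\mi{url}.\str{domain}$ computed near the top of the function lies in $\mathsf{dom}(\mi{as})$. Every branch of the function builds its HTTP request with host field $\mi{xhost} = \mi{url}.\str{domain}$ and dispatches it through $\mathsf{HTTPS\_SIMPLE\_SEND}$, which encrypts under $\mi{keyMapping}[\mi{url}.\str{domain}]$; hence ``sent to $d$'' means precisely that the request host equals $d$. So once $\mi{url}.\str{domain} \in \mathsf{dom}(\mi{as})$ is established, inspection of the finitely many branches of Algorithm~\ref{alg:client-token-request} immediately gives that all emitted messages are addressed to a domain of $\mi{as}$. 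The work therefore reduces to computing $\mi{url}$.

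First I would record two invariants for every honest client $c$ and every reachable configuration. The caches $\str{issuerCache}$, $\str{oidcConfigCache}$, and $\str{clientCredentialsCache}$ are never written by $R^c$ (only read), so $S(c).X \equiv s^c_0.X$ for each of these; this is a routine inspection of all client algorithms. I would also note that a session's $\str{identity}$ field is assigned once, at session creation in $\mathsf{PROCESS\_HTTPS\_REQUEST}$ on path $\str{/startLogin}$ (Algorithm~\ref{alg:client-fapi-http-request}, Line~\ref{c-create-initial-session}), and is never modified thereafter. Consequently the hypothesis $S(c).\str{sessions}[\mi{sid}][\str{identity}] \equiv \mi{id}$ carries over to the intermediate state $s'$ on which $\mathsf{SEND\_TOKEN\_REQUEST}$ operates, so inside the function $\mi{identity} \equiv \mi{id}$ and $\mi{issuer} \equiv s^c_0.\str{issuerCache}[\mi{id}]$.

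Next I would invoke the initial-state requirements on clients stated with Definition~\ref{def:clients}. Since $\mi{id} \in \mathsf{ID}^\mi{as}$, the requirement $s^c_0.\str{issuerCache}[\mi{id}] \in \mathsf{dom}(\mi{as}) \Leftrightarrow \mi{id} \in \mathsf{ID}^\mi{as}$ yields $\mi{issuer} \in \mathsf{dom}(\mi{as})$. Combined with $s^c_0.\str{clientCredentialsCache}[\mi{issuer}] \equiv s^\mi{as}_0.\str{clients}[\mi{clientId}]$ and Definition~\ref{def:readClient} (read client means this entry has $\str{profile} \equiv \str{r}$), the client computes $\mi{profile} \equiv \str{r}$ for this session. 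Now the decisive step is to exclude the misconfigured-token-endpoint branch at Line~\ref{str-check-if-misconf-tep}: the field $\str{misconfiguredTEp}$ is written only in $\mathsf{PREPARE\_TOKEN\_REQUEST}$ (Algorithm~\ref{alg:client-prepare-token-request}, Line~\ref{c-chooses-misconfTEp}), and for a profile other than $\str{rw}$ it is set to $\bot$ (Line~\ref{ptr-r-client-set-misconf-tep-false}); no other algorithm assigns it. Hence for this profile-$\str{r}$ session $\mi{session}[\str{misconfiguredTEp}]$ is $\bot$ (or unset, reading as $\bot$), so the else-branch is taken and $\mi{url} \equiv s'.\str{oidcConfigCache}[\mi{issuer}][\str{token\_ep}]$ (Line~\ref{line:r-client-chooses-right-tep}). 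Applying cache-invariance together with the initial-state requirement $s^c_0.\str{oidcConfigCache}[\mi{dom_{\mi{as}}}][\str{token\_ep}] \equiv \an{\tUrl, \https, \mi{dom_{\mi{as}}'}, \str{/token}, \an{}, \an{}}$ with $\mi{dom_{\mi{as}}'} \in \mathsf{dom}(\mi{as})$, instantiated at $\mi{dom_{\mi{as}}} = \mi{issuer}$, gives $\mi{url}.\str{domain} \equiv \mi{dom_{\mi{as}}'} \in \mathsf{dom}(\mi{as})$, which finishes the reduction.

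I expect the main obstacle to be the bookkeeping around session-state evolution rather than any deep argument: I must make precise that, throughout the single processing step in which $\mathsf{SEND\_TOKEN\_REQUEST}$ runs, neither $\str{identity}$ nor $\str{misconfiguredTEp}$ can carry a value inconsistent with the read-client assumption — in particular that the unique writer of $\str{misconfiguredTEp}$ for a profile-$\str{r}$ session stores $\bot$. This is a careful but mechanical code-inspection; everything else follows from the static initial-state requirements and cache-invariance.
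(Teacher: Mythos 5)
Your proposal is correct and follows essentially the same route as the paper's proof: reduce to showing $\mi{url}.\str{domain}\in\mathsf{dom}(\mi{as})$, exclude the misconfigured-token-endpoint branch via Line~\ref{ptr-r-client-set-misconf-tep-false} of Algorithm~\ref{alg:client-prepare-token-request}, and then chain through the unmodified $\str{issuerCache}$ and $\str{oidcConfigCache}$ using the initial-state requirements. The only cosmetic difference is that the paper separately notes the OAUTB branch is unreachable for read clients, whereas you observe (equally validly) that every branch addresses $\mi{url}.\str{domain}$ anyway.
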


\begin{proof}
In Algorithm~\ref{alg:client-token-request}, the client sends messages either in
Line~\ref{line:r-client-sends-token-req-pub}, Line~\ref{line:r-client-sends-token-req-jws} or
Line~\ref{line:client-sends-token-req-mtls}. The $\mathsf{HTTPS\_SIMPLE\_SEND}$ in Line~\ref{line:jws-oautb-cont-send}
can only be reached by a read-write client.

In all three cases, the message is sent to 
$\mi{url}.\str{domain}$, 
which is equal to \\
$S(c).\str{oidcConfigCache}[\mi{issuer}][\str{token\_ep}].\str{domain}$ (Line~\ref{line:r-client-chooses-right-tep}),
due to $\mi{session}[\str{misconfiguredTEp}] \equiv \bot$ in Line~\ref{str-check-if-misconf-tep}
 (for read clients, this is always set to $\bot$ in Line~\ref{ptr-r-client-set-misconf-tep-false} 
of Algorithm~\ref{alg:client-prepare-token-request}).

Let $\overline{dom}$ and  $\overline{dom'}$ be from $\mathsf{dom}(\mi{as})$.

With this, it holds true that:
\begin{align}
 \notag & \mi{url}.\str{domain}  \\
  \notag & = S(c).\str{oidcConfigCache}[\mi{issuer}][\str{token\_ep}].\str{domain}  \\
  \shortintertext{(Line~\ref{line:str-client-chooses-issuer} of Alg.~\ref{alg:client-token-request})}
  \notag & = S(c).\str{oidcConfigCache}[S(c).\str{issuerCache}[\mi{identity]}][\str{token\_ep}].\str{domain} \\
  \shortintertext{(Line~\ref{line:str-client-chooses-identity} of Alg.~\ref{alg:client-token-request})}
  \notag & = S(c).\str{oidcConfigCache}[S(c).\str{issuerCache}[S(c).\str{sessions}[\str{sid}][\str{identity}]]][\str{token\_ep}].\str{domain} \\
  \shortintertext{(per assumption)}
  \notag  & = S(c).\str{oidcConfigCache}[S(c).\str{issuerCache}[\mi{id}]][\str{token\_ep}].\str{domain} \\
  \shortintertext{($\str{issuerCache}$ is never modified)}
  \notag & = S(c).\str{oidcConfigCache}[s^c_0.\str{issuerCache}[\mi{id}]][\str{token\_ep}].\str{domain} \\
  \shortintertext{(per definition)}
  \notag & = S(c).\str{oidcConfigCache}[\overline{dom}]][\str{token\_ep}].\str{domain} \\
  \shortintertext{(oidcConfigCache is never modified)}
  \notag & = s^c_0.\str{oidcConfigCache}[\overline{dom}]][\str{token\_ep}].\str{domain} \\
  \shortintertext{(per definition)}
  \notag & = \mi{\overline{dom'}} 
\end{align}

Therefore, the token request is always sent to a domain of $\mi{as}$.
\QED

\end{proof}

\begin{lemma}[Code used in Token Request was received at Redirection Endpoint]\label{lemma:code-token-req-from-redirect-ep}
  For any run $\rho$
  of a FAPI web system $\fapiwebsystem$
  with a network attacker,
  every configuration $(S, E, N)$  in $\rho$,
  every client $c$ that is honest in $S$
  it holds true that 
  if Algorithm~\ref{alg:client-fapi-http-response}
  ($\mathsf{PROCESS\_HTTPS\_RESPONSE}$) is called with 
  $\mi{reference}[\str{responseTo}] \equiv \str{TOKEN}$, then \\
  $\mi{request}.\str{body}[\str{code}] \equiv
   S(c).\str{sessions}[\mi{reference}[\str{session}]][\str{redirectEpRequest}][\str{data}][\str{code}]$,
  with $\mi{request}$ being an input parameter of 
  $\mathsf{PROCESS\_HTTPS\_RESPONSE}$.
\end{lemma}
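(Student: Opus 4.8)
The plan is to trace the $\str{code}$ field of the token request backwards to the point where it was extracted from the authorization response at the redirection endpoint, and to show that the very same term is also what is recorded under $\str{redirectEpRequest}[\str{data}]$ in the session. First I would invoke the structure of the generic HTTPS server: by Algorithm~\ref{alg:generic-server-main} (line~\ref{https-server-calls-PHRESP}) the function $\mathsf{PROCESS\_HTTPS\_RESPONSE}$ is only ever called with a pair $(\mi{reference}, \mi{request})$ that was earlier recorded by $\mathsf{HTTPS\_SIMPLE\_SEND}$ (Algorithm~\ref{alg:simple-send}) when the corresponding request was sent. Since the only place an honest client emits a message carrying the reference $[\str{responseTo}{:}\str{TOKEN}, \str{session}{:}\mi{sid}]$ is inside $\mathsf{SEND\_TOKEN\_REQUEST}$ (Algorithm~\ref{alg:client-token-request}), the term $\mi{request}$ appearing in the hypothesis is exactly the token request built there, and hence $\mi{request}.\str{body}[\str{code}]$ equals the $\mi{code}$ argument that was assigned to $\mi{body}[\str{code}]$ in line~\ref{c-STR-main-body}, with $\mi{sid} \equiv \mi{reference}[\str{session}]$.

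Next I would reduce the claim to two equalities: (i) the $\mi{code}$ argument handed to $\mathsf{SEND\_TOKEN\_REQUEST}$ equals $S(c).\str{sessions}[\mi{sid}][\str{code}]$, and (ii) $S(c).\str{sessions}[\mi{sid}][\str{code}] \equiv S(c).\str{sessions}[\mi{sid}][\str{redirectEpRequest}][\str{data}][\str{code}]$. For (i) I would do a small case analysis over the call sites of $\mathsf{SEND\_TOKEN\_REQUEST}$: in the public/JWS read branch it is called directly from $\mathsf{PREPARE\_TOKEN\_REQUEST}$ (line~\ref{c-PTR-call-STR}) with the same $\mi{code}$ that was just stored into $\str{sessions}[\mi{sid}][\str{code}]$ in line~\ref{c-PTR-save-code}; in the mTLS and OAUTB branches it is instead called from $\mathsf{PROCESS\_HTTPS\_RESPONSE}$ (lines~\ref{line:client-https-response-mtls-as-send-token-req} and~\ref{line:PHResp-STR-oautb-call}) with $\mi{code} \equiv S(c).\str{sessions}[\mi{sid}][\str{code}]$ read directly from the session. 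For (ii) I would observe that $\str{sessions}[\mi{sid}][\str{code}]$ is written only in $\mathsf{PREPARE\_TOKEN\_REQUEST}$ (line~\ref{c-PTR-save-code}), always to the value $\mi{data}[\str{code}]$, and that this function is reached from the redirection endpoint either directly (read profile, line~\ref{c-redirection-ep-call-PTR}, argument $\mi{data}[\str{code}]$) or through $\mathsf{CHECK\_FIRST\_ID\_TOKEN}$ (line~\ref{line:c-call-check-id-token-immediately}, then line~\ref{c-second-PTR}) or $\mathsf{CHECK\_RESPONSE\_JWS}$ (line~\ref{line:c-call-check-response-jws}, then line~\ref{c-second-PTR-JARM}), in each case forwarding the same $\mi{data}[\str{code}]$. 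Since the redirection endpoint sets $\str{redirectEpRequest} := [\dots,\str{data}{:}\mi{data}]$ in line~\ref{line:client-set-redirect-ep-request-record} of Algorithm~\ref{alg:client-fapi-http-request} with the identical local term $\mi{data}$, both subterms are literally $\mi{data}[\str{code}]$, giving (ii).

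Finally I would close the gap between "the values agree when they are written" and "the values agree in the state $S$ of the hypothesis" by an immutability argument: $\str{redirectEpRequest}$ is assigned only in line~\ref{line:client-set-redirect-ep-request-record}, and because the state guard invalidates $\str{state}$ (line~\ref{c-invalidates-state}) and the endpoint rejects requests once $\str{state}$ is $\bot$ (lines~\ref{c-redirect-ep-check-state} and~\ref{c-check-if-state-equiv-bot}), the redirection endpoint succeeds at most once per session, so $\str{redirectEpRequest}[\str{data}]$ is written once and never changed; likewise $\str{sessions}[\mi{sid}][\str{code}]$ is written only in line~\ref{c-PTR-save-code}. Hence both quantities are constant from the moment they are set through the configuration $(S,E,N)$ at which the response is processed. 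I expect the main obstacle to be the book-keeping of the mTLS/OAUTB branches, where $\mathsf{SEND\_TOKEN\_REQUEST}$ fires in a \emph{later} processing step than $\mathsf{PREPARE\_TOKEN\_REQUEST}$ (across the $\str{/MTLS\mhyphen{}prepare}$ / $\str{/OAUTB\mhyphen{}prepare}$ round trip): here I must rely precisely on the immutability of $\str{sessions}[\mi{sid}][\str{code}]$ and on the session identifier being threaded unchanged through the intermediate reference, rather than on the two calls occurring in a single step.
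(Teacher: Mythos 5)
Your proposal is correct and follows essentially the same route as the paper's own proof: identify $\mathsf{SEND\_TOKEN\_REQUEST}$ as the unique emitter of requests with reference $\str{TOKEN}$, trace its $\mi{code}$ argument through the three call sites back to the input of $\mathsf{PREPARE\_TOKEN\_REQUEST}$, and note that the latter is only ever reached (directly or via $\mathsf{CHECK\_FIRST\_ID\_TOKEN}$ / $\mathsf{CHECK\_RESPONSE\_JWS}$) with the code taken from the request stored at the redirection endpoint in Line~\ref{line:client-set-redirect-ep-request-record}. Your additional explicit immutability argument (single write to $\str{redirectEpRequest}$ guaranteed by the state invalidation in Lines~\ref{c-invalidates-state} and~\ref{c-check-if-state-equiv-bot}, and the single write site for $\str{sessions}[\mi{sid}][\str{code}]$) is left implicit in the paper but is a sound and welcome tightening, not a different approach.
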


\begin{proof}
  
  Let $\mi{sid} := \mi{reference}[\str{session}]$ be the session id 
  with which $\mathsf{PROCESS\_HTTPS\_RESPONSE}$ is called.

  Due to $\mi{reference}[\str{responseTo}] \equiv \str{TOKEN}$, the corresponding request was sent
  in Algorithm~\ref{alg:client-token-request} ($\mathsf{SEND\_TOKEN\_REQUEST}$), as this is the only algorithm
  that uses this reference when sending a message. The code included in the request is always the input parameter
  of $\mathsf{SEND\_TOKEN\_REQUEST}$ (due to Line~\ref{c-STR-main-body}).

  $\mathsf{SEND\_TOKEN\_REQUEST}$ is called in one of the following lines:
  (1) in Line~\ref{line:client-https-response-mtls-as-send-token-req} of 
  Algorithm~\ref{alg:client-fapi-http-response} ($\mathsf{PROCESS\_HTTPS\_RESPONSE}$),
  (2) in Line~\ref{line:PHResp-STR-oautb-call} of Algorithm~\ref{alg:client-fapi-http-response}
      ($\mathsf{PROCESS\_HTTPS\_RESPONSE}$) or
  (3) in Line~\ref{c-PTR-call-STR} of Algorithm~\ref{alg:client-prepare-token-request}
      ($\mathsf{PREPARE\_TOKEN\_REQUEST}$).

      \emph{Case 1: Algorithm~\ref{alg:client-fapi-http-response}:} 
The authorization code used for calling $\mathsf{SEND\_TOKEN\_REQUEST}$ is taken from 
$S(c).\str{sessions}[\mi{sid}][\str{code}]$. 
$S(c).\str{sessions}[\mi{sid}][\str{code}]$ is only set in Line~\ref{c-PTR-save-code} %
of Algorithm~\ref{alg:client-prepare-token-request} 
($\mathsf{PREPARE\_TOKEN\_REQUEST}$),
where this value is taken from the input of 
$\mathsf{PREPARE\_TOKEN\_REQUEST}$.

      \emph{Case 2: Algorithm~\ref{alg:client-prepare-token-request}:} 
In this case, the authorization code used for calling $\mathsf{SEND\_TOKEN\_REQUEST}$ 
is directly taken from the input of  $\mathsf{PREPARE\_TOKEN\_REQUEST}$.
Algorithm~\ref{alg:client-prepare-token-request} ($\mathsf{PREPARE\_TOKEN\_REQUEST}$)
is either directly called at the redirection endpoint (Line~\ref{c-redirection-ep-call-PTR}
of Algorithm~\ref{alg:client-fapi-http-request}),
called in Line~\ref{c-second-PTR} of Algorithm~\ref{alg:client-check-first-id-token}
($\mathsf{CHECK\_FIRST\_ID\_TOKEN}$) or
in Line~\ref{c-second-PTR-JARM} of Algorithm~\ref{alg:client-check-response-jws}
($\mathsf{CHECK\_RESPONSE\_JWS}$).

In both functions (Algorithm~\ref{alg:client-check-first-id-token} and
Algorithm~\ref{alg:client-check-response-jws}),
$\mi{code}$ is an input argument.

As both functions are called only at the
redirection endpoint (Line~\ref{line:c-call-check-id-token-immediately}
or Line~\ref{line:c-call-check-response-jws}
of Algorithm~\ref{alg:client-fapi-http-request}), it follows that 
the authorization code used as an input argument %
of
$\mathsf{PREPARE\_TOKEN\_REQUEST}$ 
is in all cases
the authorization code 
originally contained in the request that was received at 
$\str{/redirect\_ep}$.
In
Line~\ref{line:client-set-redirect-ep-request-record} of 
Algorithm~\ref{alg:client-fapi-http-request}, 
the data contained in the request is stored in the session
under the key $\str{redirectEpRequest}$.
\QED

\end{proof} 

\begin{lemma}[Receiver of Token Request and Authorization Request for Read Clients] \label{lemma:req-jws-aud-token-req-as}
For any run $\rho$
of a FAPI web system $\fapiwebsystem$
with a network attacker,
every configuration $(S, E, N)$  in $\rho$,
every authorization server $\mi{as}$ that is honest in $S$,
every read client $c$
that is honest in $S$
with client id $\mi{clientId}$
that has been issued to $c$
by $\mi{as}$,
every $\mi{sid}$ being a session identifier for
sessions in $S(c).\str{sessions}$
it holds true that if
a request JWS $\mi{reqJWS}$ 
is created in Line~\ref{client-slf-requestJWS-sign-1} of Algorithm~\ref{alg:client-fapi-start-login-flow}
(called with input argument $\mi{sessionId} \equiv \mi{sid}$)
with
$\mathsf{extractmsg}(\mi{reqJWS}[\str{aud}]) \in \mathsf{dom}(\mi{as})$,
then the token request of the session send in Algorithm~\ref{alg:client-token-request} (when called with
$\mi{sessionId} \equiv \mi{sid}$)
is sent only to a domain in $\mathsf{dom}(\mi{as})$.
\end{lemma}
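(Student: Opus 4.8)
The plan is to trace the audience value of the request JWS back to the client's preconfigured caches, show that it pins the relevant authorization server down to be $\mi{as}$, and then invoke Lemma~\ref{lemma:token-req-to-id-domain}.

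First I would unfold the construction of the request JWS in Algorithm~\ref{alg:client-fapi-start-login-flow}. When $\mathsf{START\_LOGIN\_FLOW}$ is called with $\mi{sessionId} \equiv \mi{sid}$, the client reads $\mi{identity} = S(c).\str{sessions}[\mi{sid}][\str{identity}]$, then $\mi{issuer} = S(c).\str{issuerCache}[\mi{identity}]$, and then $\mi{authEndpoint} = S(c).\str{oidcConfigCache}[\mi{issuer}][\str{auth\_ep}]$ (Line~\ref{c-SLF-authep}). The signed request is $\mi{reqJWS} = \mathsf{sig}(\mi{requestJWT}, S(c).\str{authReqSigKey})$ with $\mathsf{extractmsg}(\mi{reqJWS})[\str{aud}] = \mi{authEndpoint}.\str{host}$ (Lines~\ref{client-slf-reqJWS-aud} and~\ref{client-slf-requestJWS-sign-1}). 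Since an honest client never writes to $\str{issuerCache}$ or $\str{oidcConfigCache}$, both equal their initial-state values; in particular $\mi{issuer} = s^c_0.\str{issuerCache}[\mi{identity}]$ and $\mi{authEndpoint} = s^c_0.\str{oidcConfigCache}[\mi{issuer}][\str{auth\_ep}]$. Moreover, the session identity is fixed when the session record is created (Line~\ref{c-create-initial-session} of Algorithm~\ref{alg:client-fapi-http-request}) and is never changed, so it is the same value $\mi{identity}$ at the later point when the token request for $\mi{sid}$ is sent.

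Next I would use the hypothesis $\mi{authEndpoint}.\str{host} \in \mathsf{dom}(\mi{as})$ to show $\mi{identity} \in \mathsf{ID}^\mi{as}$. The initial-state conditions on $s^c_0.\str{oidcConfigCache}$ define entries only for domains of authorization servers, and for any $\mi{dom}_{\mi{as}'} \in \mathsf{dom}(\mi{as}')$ the host of $s^c_0.\str{oidcConfigCache}[\mi{dom}_{\mi{as}'}][\str{auth\_ep}]$ again lies in $\mathsf{dom}(\mi{as}')$. By case distinction on $\mi{issuer}$: if $\mi{issuer}$ were not a domain of any authorization server, the corresponding cache entry would be undefined and $\mi{authEndpoint}.\str{host}$ could not be a server domain; and if $\mi{issuer} \in \mathsf{dom}(\mi{as}')$ for some $\mi{as}' \neq \mi{as}$, then $\mi{authEndpoint}.\str{host} \in \mathsf{dom}(\mi{as}')$, contradicting $\mi{authEndpoint}.\str{host} \in \mathsf{dom}(\mi{as})$ because the domain sets of distinct servers are disjoint. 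Hence $\mi{issuer} = s^c_0.\str{issuerCache}[\mi{identity}] \in \mathsf{dom}(\mi{as})$, and by the initial-state equivalence $s^c_0.\str{issuerCache}[\mi{id}] \in \mathsf{dom}(\mi{as}') \Leftrightarrow \mi{id} \in \IDs^{\mi{as}'}$ it follows that $\mi{identity} \in \IDs^\mi{as} = \mathsf{ID}^\mi{as}$.

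Finally I would apply Lemma~\ref{lemma:token-req-to-id-domain}. Since $c$ is a read client honest in $S$, $\mi{as}$ is honest in $S$, $\mi{identity} \in \mathsf{ID}^\mi{as}$, and $S(c).\str{sessions}[\mi{sid}][\str{identity}] \equiv \mi{identity}$, that lemma yields that every message sent by $\mathsf{SEND\_TOKEN\_REQUEST}$ for session $\mi{sid}$ is sent only to a domain in $\mathsf{dom}(\mi{as})$, which is exactly the claim. I expect the middle step to be the main obstacle: it is the only place where genuine reasoning rather than unfolding of definitions is required, and it rests on two properties of the WIM that must be stated carefully — the immutability of the preconfigured $\str{issuerCache}$ and $\str{oidcConfigCache}$ of an honest client, and the disjointness of the domain sets of distinct honest servers — combined with the case analysis that rules out $\mi{issuer}$ belonging to a foreign or non-existent authorization server.
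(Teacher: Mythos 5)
Your proposal is correct and follows essentially the same route as the paper's own proof: unfold the audience value through the immutable $\str{issuerCache}$ and $\str{oidcConfigCache}$, conclude from the hypothesis that the session's identity lies in $\mathsf{ID}^\mi{as}$, and then invoke Lemma~\ref{lemma:token-req-to-id-domain}. Your middle step merely makes explicit (via the case analysis and domain disjointness) what the paper compresses into ``per definition of the $\str{auth\_ep}$ key'' and ``per definition, $\str{issuerCache}$ is a mapping from identities to a domain of their governor,'' and your remark that the session identity is fixed at creation is a small but sound addition.
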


\begin{proof}
Let $\mathsf{extractmsg}(\mi{reqJWS}[\str{aud}])
\equiv
\mi{authEndpoint}.\str{host} \in \mathsf{dom}(\mi{as})$ 
(Line~\ref{client-slf-reqJWS-aud} of Algorithm~\ref{alg:client-fapi-start-login-flow}),
which means that  \\
$s^c_0.\str{oidcConfigCache}[s^c_0.\str{issuerCache}[S(c).\str{sessions}[\mi{sid}][\str{identity}]]][\str{auth\_ep}].\str{host} 
\in \mathsf{dom}(\mi{as})$ \\
(Line~\ref{c-SLF-start} up to \ref{c-SLF-authep}; $\str{oidcConfigCache}$ and $\str{issuerCache}$ are never changed by the client).
Per definition of the $\str{auth\_ep}$ key of $\str{oidcConfigCache}$,
it follows that 
$s^c_0.\str{issuerCache}[S(c).\str{sessions}[\mi{sid}][\str{identity}]] 
\in \mathsf{dom}(\mi{as})$. 

Per definition, $\str{issuerCache}$ is a mapping from identities
to a domain of their governor. This means that
$S(c).\str{sessions}[\mi{sid}][\str{identity}] \in \mathsf{ID}^\mi{as}$. 

As all conditions of Lemma~\ref{lemma:token-req-to-id-domain} are fulfilled,
the token request sent in Algorithm~\ref{alg:client-token-request} (when called with the
input parameter
$\mi{sid}$) 
is sent only to 
a domain of 
$\mi{as}$.
\QED

\end{proof}

\begin{lemma}[EKM signed by Client does not leak] \label{lemma:ekm-signed-by-c-does-not-leak}
For any run $\rho$
of a FAPI web system $\fapiwebsystem$
with a network attacker,
every configuration $(S, E, N)$  in $\rho$,
every process $p \in \fAP{AS} \cup \fAP{RS}$ that is honest in $S$,
every domain $\mi{dom}_p \in \mathsf{dom}(p)$,
every key $k \in \mi{keyMapping}[\mi{dom}_p]$,
every client $c$ that is honest in $S$,
every domain $d \in \dns$,
every key $\mi{TB\_key} \in s^c_0.\str{tokenBindings}[d]$,
every terms $n_1$, $n_2$ and
every process $p'$ with $ p \ne p' \ne c$
it holds true that 
$\mathsf{sig}(\mathsf{hash}(\an{n_1, n_2, k}), \mi{TB\_key}) \not\in d_{\emptyset}(S(p'))$.
\end{lemma}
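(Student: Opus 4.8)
The plan is to prove the statement by induction over the processing steps of $\rho$, carrying a joint non-leakage invariant for the fixed term $\sigma := \mathsf{sig}(\mathsf{hash}(\an{n_1, n_2, k}), \mi{TB\_key})$ together with two auxiliary secrets whose confidentiality is what ultimately protects it. Writing $\mi{tlsk}_p$ for the private TLS key of $p$ associated with $\mi{dom}_p$ (so that $k = \mathsf{pub}(\mi{tlsk}_p)$), the invariant I would maintain in every configuration $(S,E,N)$ is the conjunction of: (i) $\mi{TB\_key} \notin d_{\emptyset}(S(q))$ for all $q \neq c$; (ii) $\mi{tlsk}_p \notin d_{\emptyset}(S(q))$ for all $q \neq p$; and (iii) $\sigma \notin d_{\emptyset}(S(q))$ for all $q \notin \{c,p\}$, and moreover every event in $E$ containing $\sigma$ carries it inside a subterm encrypted under $k$. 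Parts (i) and (ii) are instances of the standard WIM fact that an honest party never releases its private keys: $c$ only ever outputs $\mathsf{pub}(\cdot)$ of its token bindings and signatures made with them, and a server never outputs its TLS key; both would be discharged as short standalone invariants.

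The core of the work is preserving (iii). In the Dolev--Yao theory, $\sigma$ can be \emph{derived} by a process only if either it already occurs as an extractable subterm of that process's state or of the received message, or both $\mathsf{hash}(\an{n_1,n_2,k})$ and $\mi{TB\_key}$ are separately derivable so that a fresh $\mathsf{sig}$-application is possible. For any $q \notin \{c,p\}$ the latter route is closed by (i). The former route is closed by (iii) and (ii) jointly: every copy of $\sigma$ that $q$ could reach sits under an encryption with $k$, and since $q$ lacks $\mi{tlsk}_p$ it cannot decrypt to extract $\sigma$ but can at most relay the ciphertext unchanged, preserving (iii). The subcase $q = p$ is settled by inspecting the server relations: on receiving a token binding message (carried in the $\str{Sec\mhyphen{}Token\mhyphen{}Binding}$ header) an honest AS or RS only runs $\mathsf{checksig}$ and stores the \emph{public} Token Binding ID, never re-emitting the signature itself, so $p$ adds no event carrying $\sigma$ in the clear.

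The remaining and most delicate subcase is $q = c$, where $\sigma$ may genuinely be freshly created. Inspecting the client, signatures of this shape arise only in $\mathsf{SEND\_TOKEN\_REQUEST}$ (the provided and referred messages, Lines~\ref{client-str-sig-prov} and~\ref{c-str-compose-ref-tb-msg}) and in $\mathsf{USE\_ACCESS\_TOKEN}$ (Line~\ref{client-uat-sig-3}). In each case the signed payload is the $\mi{ekm}$ passed in from $\mathsf{PROCESS\_HTTPS\_RESPONSE}$, computed there as $\mathsf{hash}(\an{\cdot,\cdot,\mi{keyMapping}[\mi{request}.\str{host}]})$ (Lines~\ref{line:PHResp-STR-oautb-sign-ekm} and~\ref{line:PHResp-UAT-oautb-sign-ekm}); hence for the result to equal $\sigma$ we must have $\mi{keyMapping}[\mi{request}.\str{host}] = k$, and the message carrying the signature is then handed to $\mathsf{HTTPS\_SIMPLE\_SEND}$ directed to that same host, i.e.\ encrypted under $\mi{keyMapping}[\mi{request}.\str{host}] = k$. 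Thus the only new event containing $\sigma$ encrypts it under $k$, exactly as (iii) demands, and the induction closes; taking $q = p'$ in (iii) then yields the lemma. I expect the main obstacle to be the \emph{referred} token binding message, where the signing key $\mi{TB\_RS}$ belongs to the resource server while the embedded EKM carries the \emph{authorization server's} key: one must confirm that this cross-bound signature is nonetheless placed only into the request travelling to the AS (whose key appears in the EKM), and symmetrically that the provided message signed under $\mi{TB\_RS}$ in $\mathsf{USE\_ACCESS\_TOKEN}$ travels only to the RS. Establishing this alignment between which key is hashed into the EKM and which TLS channel the signature rides on is the crux; once it is checked across both send routines, the encryption argument of (iii) goes through uniformly.
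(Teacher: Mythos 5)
Your proposal is correct and follows essentially the same argument as the paper's proof: the signature can only originate from $c$ (since the token-binding private keys never leave $c$), the key $k$ embedded in the EKM hash pins the destination of the request carrying the signature to a domain of $p$ (because the token request in $\mathsf{SEND\_TOKEN\_REQUEST}$ resp.\ the resource request in $\mathsf{USE\_ACCESS\_TOKEN}$ is sent to the same domain as the preceding $\mathsf{OAUTB\mhyphen{}prepare}$ request from which the EKM was derived), and an honest AS/RS never re-emits header values of received requests. The explicit induction with the three invariants is only a more systematic packaging of the paper's direct case analysis, and the ``alignment'' issue you flag as the crux is exactly the step the paper discharges in its Cases~2 and~3.
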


\begin{proof}

Let $\mathsf{hash}(\an{n_1, n_2, k})$ be signed by $c$.

The honest client signs only in one of the following places:
(1) in Line~\ref{client-slf-requestJWS-sign-1} of 
  Algorithm~\ref{alg:client-fapi-start-login-flow} ($\mathsf{START\_LOGIN\_FLOW}$),
(2) in the branch at Line~\ref{client-str-branch-oautb-sig-2} of 
  Algorithm~\ref{alg:client-token-request} ($\mathsf{SEND\_TOKEN\_REQUEST}$) or
(3) in Line~\ref{client-uat-sig-3} of Algorithm~\ref{alg:client-use-access-token} ($\mathsf{USE\_ACCESS\_TOKEN}$).

  \begin{description} 
    \item[Case 1]\strut

In Line~\ref{client-slf-requestJWS-sign-1} of Algorithm~\ref{alg:client-fapi-start-login-flow},
the client signs the request JWS, which has a different structure than the EKM value.

    \item[Case 2]\strut

If the client signs a value in the branch at Line~\ref{client-str-branch-oautb-sig-2} of Algorithm~\ref{alg:client-token-request},
it follows that $\mathsf{SEND\_TOKEN\_REQUEST}$ was called in Line~\ref{line:PHResp-STR-oautb-call} 
of Algorithm~\ref{alg:client-fapi-http-response} \\
($\mathsf{PROCESS\_HTTPS\_RESPONSE}$)
(due to $\mi{responseValue}[\str{type}] \equiv \str{OAUTB}$, L.~\ref{client-str-resp-value-oautb} of
Alg.~\ref{alg:client-token-request}).

The client signs $\mi{responseValue}[\str{ekm}]$ (Line~\ref{line:str-oautb-respVal1} of Algorithm~\ref{alg:client-token-request}),
which is an input argument. This value is created in Line~\ref{line:PHResp-STR-oautb-sign-ekm} 
of Algorithm~\ref{alg:client-fapi-http-response}, where the key of the sequence is
chosen by the client ($k \equiv \mi{keyMapping}[\mi{request}.\str{host}]$).

From this, it follows that $\mi{request}.\str{host} \in \mathsf{dom}(p)$.
Due to the reference value $\str{OAUTB\_AS}$ (Line~\ref{line:ref-oautb-as})
it follows that $\mi{request}$ was sent in Line~\ref{line:PTR-oautb-as-send} of
Algorithm~\ref{alg:client-prepare-token-request}  \\ 
($\mathsf{PREPARE\_TOKEN\_REQUEST}$),
as this is the only place where this reference value is used.

As $\mi{request}.\str{host} \in \mathsf{dom}(p)$, it follows that the message
in $\mathsf{PREPARE\_TOKEN\_REQUEST}$ is sent to $p$ via 
$\str{HTTPS\_SIMPLE\_SEND}$.
Therefore, the signed EKM value that is sent in
$\mathsf{SEND\_TOKEN\_REQUEST}$ is also send to 
$p$ via 
$\str{HTTPS\_SIMPLE\_SEND}$,
as the messages are sent to the same 
domain in both algorithms 
(the value of
$\mi{misconfiguredTEp}$
set in Line~\ref{c-chooses-misconfTEp} ff.~of
Algorithm~\ref{alg:client-prepare-token-request} is
the same in both algorithms, and therefore,
the domains to which the messages are sent are
the same).

    \item[Case 3]\strut

For an EKM value signed in Line~\ref{client-uat-sig-3} of 
Algorithm~\ref{alg:client-use-access-token} ($\mathsf{USE\_ACCESS\_TOKEN}$),
the same reasoning as in Case 2 holds true.

\end{description}

The signed EKM value is sent only in the header
of a request. 
As neither an honest authorization server
nor an honest resource server sends
any message containing a header value
of a received request,
it follows that 
$\mathsf{sig}(\mathsf{hash}(\an{n_1, n_2, k}), \mi{TB\_key})$ does
not leak to any other process. 
\QED

\end{proof}

\subsection{Client Authentication}

In the next lemmas, we prove that only the legitimate client 
can authenticate itself to the token endpoint.
More precisely, we show that if all checks pass at the token
endpoint of an authorization server, then 
the legitimate confidential client send the corresponding token request.

In case of a confidential read client using a JWS for authentication,
the message authentication code of the JWS is created with a key that only the client
knows (besides the authorization server). We show that the JWS never leaks
to an attacker, which implies that if a message is received at the token endpoint
containing the JWS, then the message was sent by the honest client (as otherwise,
the JWS must have leaked).

In case of a read-write client, we assume that the token endpoint might be
misconfigured. In this case, it seems to be possible that the attacker
uses the message received from the client to authenticate at the 
token endpoint of the honest authorization server. 

We show that this is not possible for both OAUTB and mTLS clients. %
In case of a confidential read-write client using OAUTB,
the audience value of the assertion
has the same value as 
the host of the
token request,
which means that the attacker cannot use a received assertion for
any other audience.

Similarly, the message that is decrypted by the client
for mTLS
(to prove possession of the private key) contains the domain
of the process that created the message, and therefore,
the decrypted nonce is sent to the same process that encrypted the nonce.

\begin{lemma}[JWS Client Assertion created by Client does not leak to Third Party]\label{lemma:jws-assertion-does-not-leak}
  For any run $\rho$
  of a FAPI web system $\fapiwebsystem$
  with a network attacker,
  every configuration $(S, E, N)$  in $\rho$,
  every authorization server $\mi{as}$ that is honest in $S$,
  every domain $d \in \mathsf{dom}(\mi{as})$,
  every client $c$ that is honest in $S$
  with client id $\mi{clientId}$ and client secret $\mi{clientSecret}$
  that has been issued to $c$
  by $\mi{as}$,
  every client assertion
  $t \equiv \mathsf{mac}([\str{iss}{:}\mi{clientId}, \str{aud}{:}d] , \mi{clientSecret})$ with $\mi{clientSecret} \equiv 
    S(\mi{as}).\str{clients}[\mi{clientId}][\str{client\_secret}]$
  and every process $p$ with $c \ne p \ne \mi{as}$ %
  it holds true that $t \not\in d_{\emptyset}(S(p))$.
\end{lemma}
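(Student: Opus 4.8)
The plan is to prove this as a standard Dolev--Yao secrecy statement by induction over the processing steps of $\rho$. I would strengthen it to an invariant holding in every configuration: neither the shared MAC key $\mi{clientSecret}$ nor the assertion $t$ is derivable by any process other than $c$ and $\mi{as}$, and moreover $t$ occurs in an event in transit only inside an HTTPS request encrypted under the TLS key of a domain of $\mi{as}$. The lemma is then the restriction of this invariant to $t$. The argument has two ingredients: secrecy of the MAC key, and a complete accounting of every emission of $t$.

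First I would argue secrecy of the key. Initially $\mi{clientSecret}$ occurs only in $s^c_0.\str{clientCredentialsCache}$ and in $s^\mi{as}_0.\str{clients}[\mi{clientId}][\str{client\_secret}]$. By inspection of the client relation, $c$ uses $\mi{clientSecret}$ only as the key argument of $\mathsf{mac}$ in $\mathsf{SEND\_TOKEN\_REQUEST}$; by inspection of the AS relation, $\mi{as}$ uses it only inside $\mathsf{checkmac}$ in the $\str{/token}$ path. In neither case does an honest party place the key into an output event in a derivable position, so by the inductive hypothesis $\mi{clientSecret} \notin d_{\emptyset}(S(p))$ for every $p$ with $c \ne p \ne \mi{as}$. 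Since, by the equational theory, $t \equiv \mathsf{hash}(\an{[\str{iss}{:}\mi{clientId}, \str{aud}{:}d], \mi{clientSecret}})$, a third party $p$ could derive $t$ only if $t$ itself already appears as an accessible subterm of $S(p)$: recomputing the hash from scratch would require the preimage, hence $\mi{clientSecret}$, which $p$ does not possess.

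Next I would track every emission of a term of the shape of $t$. By inspection of $\mathsf{SEND\_TOKEN\_REQUEST}$, an assertion $\mathsf{mac}([\str{iss}{:}\mi{clientId}, \str{aud}{:}\mi{url}.\str{domain}], \mi{clientSecret})$ is produced only in the $\str{conf\_JWS}$ and $\str{conf\_OAUTB}$ branches, and it is always placed in the body of an HTTPS request whose host is exactly $\mi{url}.\str{domain}$; this is the audience-equals-receiver property that motivates embedding the audience in the assertion. Because our $t$ satisfies $\str{aud} \equiv d \in \mathsf{dom}(\mi{as})$, we get $\mi{url}.\str{domain} \equiv d$, so the request carrying $t$ is handed to $\mathsf{HTTPS\_SIMPLE\_SEND}$ with host $d$ and is therefore, in the generic HTTPS server relation, encrypted asymmetrically under $\mi{keyMapping}[d]$. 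Under the working-PKI assumption this key is $\mathsf{pub}(\mi{tlskey}(d))$, whose private part lies only in $s^\mi{as}_0.\str{tlskeys}$; hence only $\mi{as}$ can decrypt the request and recover $t$. Finally, the honest AS, on receiving the token request, reads the assertion only through $\mathsf{extractmsg}$ and $\mathsf{checkmac}$ and never re-emits it, and one checks that none of the explicit leakage events (the $\an{\str{LEAK}, \dots}$ messages sent to $\mi{leak}$ by the client and by the AS) carry the assertion: those transport authorization requests, authorization responses, and access tokens only. Thus the invariant is preserved in every step.

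The main obstacle will be the bookkeeping in the secrecy step: I must verify exhaustively, across every branch of the client and AS relations and of the generic HTTPS server they are built upon, that no honest computation step ever exports $\mi{clientSecret}$ or $t$ in a derivable position, including the prepare/response handlers and the leak channels. This is routine but must be discharged branch by branch, and it is where the induction over processing steps does its real work. The cryptographic core --- that a hash keyed by a secret known only to $c$ and $\mi{as}$ cannot be forged by anyone else, and that a message encrypted under $\mi{as}$'s TLS key cannot be read by any other process --- is then immediate once the invariant has been set up.
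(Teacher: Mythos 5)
Your proposal is correct and follows essentially the same route as the paper's proof: both rest on the observation that the assertion's audience equals the host of the request carrying it (so an assertion with $\str{aud}\equiv d\in\mathsf{dom}(\mi{as})$ is only ever sent, TLS-encrypted, to $\mi{as}$), that the honest AS never re-emits a received assertion, and that the client secret itself never leaves $c$ and $\mi{as}$, so no third party can forge the MAC. Your version merely makes the induction and the key-secrecy step more explicit than the paper does.
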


\begin{proof} ~ 

  Let $t \equiv \mathsf{mac}([\str{iss}{:}\mi{clientId},  \str{aud}{:} d] , \mi{clientSecret})$
  be the assertion that is created by the client.

  \begin{description} 
    \item[Honest read client: sends assertion only to $d$.]\strut

    In case of read clients, the JWS client assertion is only created within
      the branch at Line~\ref{branch:r-client-creates-assertion} of Algorithm~\ref{alg:client-token-request}
      and send in Line~\ref{line:r-client-sends-assertion}.

      As the host value of the message in Line~\ref{line:r-client-assertion-rec}
      has the same value as the audience value of the assertion,
      it follows that the message is sent to $d$.

    \item[Honest read-write client: sends assertion only to $d$.]\strut

      A read-write client using OAUTB creates a client assertion only in
      Line~\ref{line:rw-client-creates-assertion} of Algorithm~\ref{alg:client-token-request}.
      As the client sets 
      $\mi{jwt}[\str{aud}] := d $ 
      in Line \ref{line:jws-oautb-set-aud}, it follows that %
      $\mi{url}.\str{domain} \equiv d$. 
      As above, the assertion is sent to $d$ with $\mathsf{HTTPS\_SIMPLE\_SEND}$.

    \item[Honest authorization server: never sends an assertion.]\strut

      The authorization server receiving the assertion never sends it out, and as it never
      creates any assertions, we conclude that the assertion never leaks to $p$ %
      due to the
      authorization server.

  \end{description}

  Combining both points, we conclude that $p$ %
  never receives
  an assertion which is created by an honest client and send to an honest authorization server.

  As the client secret is unique for each client id, and also neither send out by an honest %
  client nor by the honest authorization server, we conclude that $p$ %
  is never in possession of a client assertion
  with a valid message authentication code.
  \QED
\end{proof}

\begin{lemma}[mTLS Nonce created by AS does not leak to Third Party]\label{lemma:as-mtls-nonce-does-not-leak}
For any run $\rho$ of a FAPI web system $\fapiwebsystem$
with a network attacker,
every configuration $(S, E, N)$
in $\rho$,
every authorization server $\mi{as}$ that is honest in $S$,
every client $c$ that is honest in $S$ %
with client id $\mi{clientId}$ issued by $\mi{as}$,
every $\mi{mtlsNonce}$ 
created in Line~\ref{line:as-creates-mtls-nonce} of Algorithm~\ref{alg:as-fapi}
in consequence of a request 
$m$ received at the $\str{/MTLS\mhyphen{}prepare}$ path of
the authorization server (L.~\ref{line:as-mtls-ep}
of Alg.~\ref{alg:as-fapi})
with $m.\str{body}[\str{client\_id}] \equiv \mi{clientId}$
and every process $p$ with $\mi{as} \ne p \ne c$ it holds true that
$\mi{mtlsNonce} \not \in d_{\emptyset}(S(p))$.
\end{lemma}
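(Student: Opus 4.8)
The plan is to mirror the template of Lemma~\ref{lemma:jws-assertion-does-not-leak}: I will show that the honest AS only ever outputs $\mi{mtlsNonce}$ inside a ciphertext that can be opened solely by the honest client $c$, and that $c$, upon opening it, returns the nonce only to the honest AS (again under HTTPS). Consequently the nonce never becomes available in plaintext to any third party.

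First I would examine the point of creation in Algorithm~\ref{alg:as-fapi}. Since the request $m$ is processed at the $\str{/MTLS\mhyphen{}prepare}$ path (Line~\ref{line:as-mtls-ep}), Lemma~\ref{lemma:https-server-correct-hosts} gives $m.\str{host} \in \mathsf{dom}(\mi{as})$, so the second component $\mi{keyMapping}(m.\str{host})$ of the encrypted pair is exactly $\mi{as}$'s own public TLS key, which I write $k_{\mi{as}}$. The first component is the fresh nonce $\mi{mtlsNonce} = \nu_3$, and the whole pair is encrypted under $\mi{clientKey} = S(\mi{as}).\str{clients}[\mi{clientId}][\str{mtls\_key}]$. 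By the hypothesis that $\mi{clientId}$ was issued to the honest client $c$, together with the initial-state requirement on $\str{mtls\_key}$ (the remark following Definition~\ref{def:initial-state-as}), this key equals $\mi{keyMapping}[\mi{dom}_c] = k_c$, whose private counterpart lies in $s^c_0.\str{tlskeys}$ and, by the standard WIM argument that honest parties never reveal their TLS private keys, never leaves $c$. Inspecting the remaining uses of the nonce at the AS, the only other occurrence is inside the internal record written to $s'.\str{mtlsRequests}[\mi{clientId}]$ (Line~\ref{as-add-mtlsInfo}); this record is only read at the token endpoint to compare against an incoming value and is then discarded, so the honest AS emits $\mi{mtlsNonce}$ in no other form. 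Hence the only way a third party $p$ could hold the nonce in plaintext is by decrypting $\mathsf{enc_a}(\an{\mi{mtlsNonce}, k_{\mi{as}}}, k_c)$, which requires the private key of $c$; thus $c$ itself must have recovered the nonce and retransmitted it.

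Next I would analyse every point at which $c$ can decrypt this ciphertext and forward the nonce. As the ciphertext matches an HTTPS response, $c$ processes it only through $\mathsf{PROCESS\_HTTPS\_RESPONSE}$ (Algorithm~\ref{alg:client-fapi-http-response}) against a pending request whose reference is either $\str{MTLS\_AS}$ (Line~\ref{c-PHTTPSResp-ref-mtls-as}) or $\str{MTLS\_RS}$ (Line~\ref{branch-ref-mtls-rs}); in both branches the client extracts $\an{\mi{mtlsNonce}, \mi{pubKey}}$ and proceeds only if $\mi{pubKey} \in \mi{keyMapping}(\mi{request}.\str{host})$, where $\mi{request}$ is the original $\str{/MTLS\mhyphen{}prepare}$ request. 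Since the decrypted key is $k_{\mi{as}}$ and distinct honest parties hold distinct keys (so $\mi{keyMapping}$ pins $k_{\mi{as}}$ to the AS's domains), this check forces $\mi{request}.\str{host} \in \mathsf{dom}(\mi{as})$. In the $\str{MTLS\_AS}$ branch the nonce is then handed to $\mathsf{SEND\_TOKEN\_REQUEST}$, which sends it (as $\str{TLS\_AuthN}$) to the domain computed from the session value $\mi{misconfiguredTEp}$ (Line~\ref{str-check-if-misconf-tep} of Algorithm~\ref{alg:client-token-request}); because this is precisely the value that fixed the destination of the prepare request in $\mathsf{PREPARE\_TOKEN\_REQUEST}$ (Algorithm~\ref{alg:client-prepare-token-request}, Line~\ref{line-client-https-simple-mtls-as-msg}), the token request is sent to the same host, hence into $\mathsf{dom}(\mi{as})$, and is therefore HTTPS-encrypted under $k_{\mi{as}}$. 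In the $\str{MTLS\_RS}$ branch the prepare request had been sent to $\mi{rsHost} = \mi{session}[\str{RS}]$, so the same check $k_{\mi{as}} \in \mi{keyMapping}(\mi{rsHost})$ likewise forces the return destination into $\mathsf{dom}(\mi{as})$. Either way the recovered nonce is re-sent only to the honest AS, encrypted under its public key.

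The main obstacle I expect is the bookkeeping in this last step: I must verify that the host to which $c$ re-transmits the nonce genuinely coincides with $\mi{request}.\str{host}$ (the host determined by $\mi{misconfiguredTEp}$ in the session), so that the public-key check pins the destination to $\mathsf{dom}(\mi{as})$ rather than merely to some honest server. Combining the two observations — the nonce leaves the AS only under $k_c$, and $c$ returns it only under $k_{\mi{as}}$ — and invoking the standard facts that honest parties' TLS private keys and internal states do not leak, I would conclude that $\mi{mtlsNonce} \not\in d_{\emptyset}(S(p))$ for every $p$ with $\mi{as} \ne p \ne c$, as required.
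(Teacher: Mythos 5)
Your proposal is correct and follows essentially the same route as the paper's proof: the nonce leaves the AS only encrypted under the client's preconfigured mTLS key, the client decrypts it only in the two MTLS response branches of Algorithm~\ref{alg:client-fapi-http-response}, and the public-key check against $\mi{keyMapping}(\mi{request}.\str{host})$ ensures the decrypted nonce is re-sent (if at all) only to $\mi{as}$, which never re-emits a nonce received at its token endpoint. The only cosmetic difference is in the $\str{MTLS\_RS}$ branch, where the paper argues the check simply fails (since $\mi{request}.\str{host}$ is a resource-server domain, not one of $\mi{as}$) while you argue contrapositively that a passing check would pin the destination to $\mathsf{dom}(\mi{as})$ --- both yield the same conclusion.
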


\begin{proof}
  The authorization server sends a $\mi{mtlsNonce}$ created in
  Line~\ref{line:as-creates-mtls-nonce} of Algorithm~\ref{alg:as-fapi}
  only 
  in Line~\ref{as-send-mtls-prepare-response}, where it is asymmetrically encrypted with the public key
  \begin{align*}
    & \mi{clientKey}  & \\
    \equiv & \; S(\mi{as}).\str{clients}[\mi{clientId}][\str{mtls\_key}] & \text{ (Line~\ref{line:as-choses-mtls-key})} \\
    \equiv & \; s^\mi{as}_0.\str{clients}[\mi{clientId}][\str{mtls\_key}]  &\text{ (value is never changed)} \\
    \equiv & \; \mi{keyMapping}[\mi{dom_{c}}] & \text{ (def.)}
  \end{align*}
  with $\mi{dom_{c}} \in \mathsf{dom}(c)$.
  The corresponding private key is $\mi{tlskeys}[\mi{dom_{c}}] \in \mi{tlskeys}^c$,
  which is only known to $c$.
  (The $\mi{mtlsNonce}$ saved in $\str{mtlsRequests}$ is not sent in any other place).

  This implies that the encrypted nonce can only be decrypted by $c$.
  Such a message is decrypted either in Line~\ref{line:client-decrypt-mtls}
  or Line~\ref{line:client-decrypt-mtls-rs} of Algorithm~\ref{alg:client-fapi-http-response}.
  (The only other places where a message is decrypted asymmetrically by $c$ is
  in the generic HTTPS server (Line~\ref{line:gen-server-asym-check} of Algorithm~\ref{alg:generic-server-main}), 
  but this message is not decrypted there due to the requirement
  that the decrypted message must begin with $\str{HTTPReq}$).

  We also note that the encrypted message created by the authorization server
  containing the nonce also contains a public TLS key of $\mi{as}$.
  (This holds true due to Lemma~\ref{lemma:https-server-correct-hosts}).

  \begin{description}
    \item[Case 1: Line~\ref{line:client-decrypt-mtls-rs}]\strut

      In this case, it follows that 
      $\mi{reference}[\str{responseTo}] \equiv
      \str{MTLS\_RS}$ (Line~\ref{branch-ref-mtls-rs}).
      The only place where this value is used as a reference is in Line~\ref{line:c-sends-ref-mtls-rs}
      of Algorithm~\ref{alg:client-prepare-use-access-token} ($\mathsf{PREPARE\_USE\_ACCESS\_TOKEN}$).
      The corresponding request is sent to
  \begin{align*}
    & \mi{message}.\str{host}  & \text{ (L.~\ref{line:c-PUAT-message} of Alg.~\ref{alg:client-prepare-use-access-token})}\\
    \equiv & \; \mi{rsHost} & \text{ (L.~\ref{line:c-PUAT-message})} \\  
    \equiv & \; S'(c).\str{sessions}[\mi{sid}][\str{RS}] & \text{ (L.~\ref{line:c-PUAT-rsHost} and L.~\ref{line:c-PUAT-session}, for some $\mi{sid}$)}   
  \end{align*}

  for a previous state $S'$ within the run.

  This value is only set in Line~\ref{line:c-sets-session-RS} 
  of Algorithm~\ref{alg:client-fapi-start-login-flow} ($\mathsf{START\_LOGIN\_FLOW}$),
  where it is chosen from the set of preconfigured resource servers \\
  $s^c_0.\str{oidcConfigCache}[s^c_0.\str{issuerCache}[\mi{id}]][\str{resource\_server}]$, 
  for some $\mi{id} \in \IDs$. (The values of $\str{oidcConfigCache}$ and $\str{isserCache}$ 
  are not changed by the client and are therefore the same as in the initial state).

  These values are required to be from $\mathsf{dom}(\mi{rs})$ by definition, %
  for $\mi{rs} \in \fAP{RS}$, which means that the initial request with reference being
  $\str{MTLS\_RS}$ is sent to $\mi{rsHost} \in \mathsf{dom}(\mi{rs})$, $\mi{rs} \in \fAP{RS}$.

  Therefore, the value of $\mi{request}.\str{host}$ in Line~\ref{c-PHResp-check-mtls-rs-pubkey}
  of Algorithm~\ref{alg:client-fapi-http-response} is from $\mathsf{dom}(\mi{rs})$, and the check in
  this line fails, which means that the corresponding $\mi{stop}$ is executed and 
  the decrypted nonce is not sent.

    \item[Case 2: Line~\ref{line:client-decrypt-mtls}]\strut

    The reference $\str{MTLS\_AS}$ is only used in Line~\ref{line:client-https-simple-mtls-as}
    of Algorithm~\ref{alg:client-prepare-token-request} \\
    ($\mathsf{PREPARE\_TOKEN\_REQUEST}$).
    The corresponding request is sent to
    $\mi{url}.\str{domain}$ 
    (Line~\ref{line-client-https-simple-mtls-as-msg})
    via $\str{HTTPS\_SIMPLE\_SEND}$.
    If $\mi{url}.\str{domain} \not \in \mathsf{dom}(\mi{as})$ for $\mi{as} \in \fAP{AS}$,
    then the check of the public key fails as in Case 1. Otherwise, the initial request
    is sent to the honest authorization server, which means that the 
    decrypted nonce is sent to the authorization server in $\mathsf{SEND\_TOKEN\_REQUEST}$
    (as in both algorithms, the messages are sent to the same domain).

  \end{description}
    Summing up, the client sends the nonce encrypted by the authorization server
    only back to the authorization server.
    As an honest authorization server never sends out such a nonce received
    in a token request, we conclude that the nonce never leaks to any other process.
    \QED

\end{proof}

\begin{lemma}[Client Authentication]\label{lemma:client-authentication}
  For any run $\rho$
  of a FAPI web system $\fapiwebsystem$
  with a network attacker,
  every configuration $(S, E, N)$  in $\rho$,
  every authorization server $\mi{as}$ that is honest in $S$,
  every domain $d \in \mathsf{dom}(\mi{as})$,
  every confidential client $c$ that is honest in $S$
  with client id $\mi{clientId}$ issued by $\mi{as}$,
   it holds true that:

  If a response is sent in Line~\ref{as-send-token-ep-readwrite} or in Line~\ref{as-send-token-ep-read}
  of Algorithm~\ref{alg:as-fapi} due to a request $m$ received at the token endpoint of $\mi{as}$
  with $m.\str{body}[\str{client\_id}] \equiv \mi{clientId}$,
  then $m$ was sent by $c$.
\end{lemma}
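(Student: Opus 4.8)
The plan is to read off, from the two token-response lines of Algorithm~\ref{alg:as-fapi}, which client-authentication check must have succeeded, and then to use the two ``does not leak'' lemmas to argue that the authenticating credential is shared only between $c$ and $\mi{as}$. First I would record two structural facts. Since $m$ is a request processed at the token endpoint of $\mi{as}$, Lemma~\ref{lemma:https-server-correct-hosts} gives $m.\str{host} \in \mathsf{dom}(\mi{as})$. Moreover, the authorization server selects the authentication branch according to $\mi{clientInfo}[\str{client\_type}]$ for $\mi{clientInfo} \equiv s'.\str{clients}[\mi{clientId}]$; because $\mi{clientId}$ was issued to $c$ by $\mi{as}$ and an honest AS never modifies $\str{clients}$, this stored type coincides with the (confidential) type of $c$. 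Hence the proof splits into the two confidential authentication branches.

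For a client of type $\str{conf\_JWS}$ or $\str{conf\_OAUTB}$ the branch at Line~\ref{branch:token-ep-check-jws} is entered, and a response is emitted only if the checks at Line~\ref{line:token-ep-check-jws-sig} and Line~\ref{line:token-ep-check-jws-values} pass. By the equational theory for $\mathsf{checkmac}$ and $\mathsf{extractmsg}$, this forces $m.\str{body}[\str{assertion}] \equiv \mathsf{mac}(x,\mi{clientSecret})$ with $x[\str{iss}] \equiv \mi{clientId}$ and $x[\str{aud}] \equiv m.\str{host} \in \mathsf{dom}(\mi{as})$, i.e.\ a valid client assertion of exactly the form named in Lemma~\ref{lemma:jws-assertion-does-not-leak}. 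That lemma then yields that this term is derivable only to $c$ and $\mi{as}$. For a client of type $\str{conf\_MTLS}$ the other branch is taken, and a response is emitted only if $\mi{mtlsInfo}.1 \equiv m.\str{body}[\str{TLS\_AuthN}]$ (Line~\ref{as-token-ep-check-mtls}), where $\mi{mtlsInfo}$ is drawn from $s'.\str{mtlsRequests}[\mi{clientId}]$ and is therefore an mTLS nonce that $\mi{as}$ generated for $\mi{clientId}$ in its $\str{/MTLS\mhyphen{}prepare}$ path. Lemma~\ref{lemma:as-mtls-nonce-does-not-leak} then gives that this nonce does not leak to any process other than $c$ and $\mi{as}$.

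In both cases the authenticating term (the assertion, respectively the decrypted mTLS nonce) occurs as a subterm of the request $m$ that reaches one of Line~\ref{as-send-token-ep-readwrite} or Line~\ref{as-send-token-ep-read}. I would then argue that whoever constructed $m$ must be able to derive that subterm, so by the two leak lemmas the constructor is $c$ or $\mi{as}$; and since, by inspection of the relation $R^\mi{as}$, an honest authorization server never emits a request \emph{to} a token endpoint, the constructor is $c$. This establishes that $m$ was sent by $c$.

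The delicate step, and the one I expect to be the main obstacle, is precisely this last conversion of ``the credential does not leak to third parties'' into ``the message was sent by $c$'': it must be carried out at the level of the WIM's derivability semantics (a sender of $m$ must derive every subterm of $m$), it must explicitly rule out $\mi{as}$ as the originator of a token request, and in the JWS branch it must cover attacker-forged assertions of \emph{arbitrary} shape. For the latter I would lean not on the single term $t$ literally named in Lemma~\ref{lemma:jws-assertion-does-not-leak} but on the secrecy of $\mi{clientSecret}$ established within its proof, so that no valid MAC under $\mi{clientSecret}$ — regardless of its internal structure — can be produced by any party but $c$ and $\mi{as}$.
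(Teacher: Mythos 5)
Your proposal is correct and follows essentially the same route as the paper: a case split on the confidential client type, with the JWS/OAUTB branches discharged by Lemma~\ref{lemma:jws-assertion-does-not-leak} and the mTLS branch by Lemma~\ref{lemma:as-mtls-nonce-does-not-leak}, plus the observation that an honest AS never sends requests to its own token endpoint (the paper phrases this as a proof by contradiction, assuming a sender $p \ne c$, but the content is identical). Your worry about assertions of arbitrary shape is already resolved inside the paper's Lemma~\ref{lemma:jws-assertion-does-not-leak}, whose final step concludes from the secrecy of $\mi{clientSecret}$ that no third party possesses \emph{any} client assertion with a valid MAC — exactly the strengthening you propose to lean on.
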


\begin{proof}
    We assume that $m$ was sent by $p \ne c$ and that the authorization server sends a response,
    which implies that all (applicable) checks are passed.
    We also note that the authorization server never sends messages to itself.
    We distinguish the following cases:

  \begin{description} 
    \item[Case 1:] $\;\; S(\mi{as}).\str{clients}[\mi{clientId}][\str{client\_type}] \equiv \str{conf\_JWS}$ \strut \\
    In this case, the client id belongs to a read client. As the response in Line~\ref{as-send-token-ep-read}
    is sent,
    all relevant checks passed successfully, in particular, the checks 
    in Line~\ref{line:token-ep-check-jws-sig}
    and Line~\ref{line:token-ep-check-jws-values} of Algorithm~\ref{alg:as-fapi}, which means that that $p$ possesses a term
    $t \equiv \mathsf{mac}([\str{iss}{:}\mi{clientId}, \str{aud}{:}d] , \mi{clientSecret})$ with  \\
    $\mi{clientSecret}= 
    S(\mi{as}).\str{clients}[\mi{clientId}][\str{client\_secret}]$,
    contradicting Lemma~\ref{lemma:jws-assertion-does-not-leak}. 
    (As shown in Lemma~\ref{lemma:https-server-correct-hosts}, the host of the message is a domain 
    of the authorization server). %

    \item[Case 2:] $\;\; S(\mi{as}).\str{clients}[\mi{clientId}][\str{client\_type}] \equiv \str{conf\_OAUTB}$ \strut \\
    As in case 1, this is a contradiction to 
    Lemma~\ref{lemma:jws-assertion-does-not-leak}.

   \item[Case 3:] $\;\; S(\mi{as}).\str{clients}[\mi{clientId}][\str{client\_type}] \equiv \str{conf\_MTLS}$ \strut \\
   Here, 
   the check in Line~\ref{as-token-ep-check-mtls} passes,
   which means that $p$ knows $\mi{mtlsInfo}.1$, which is taken from 
   $S(\mi{as}).\str{mtlsRequests}[\mi{clientId}]$ (Line~\ref{as-token-ep-retrieve-mtlsInfo}).
   This sequence was added to $\str{mtlsRequests}$ in Line~\ref{as-add-mtlsInfo}, 
   as this is the only place where a term is added to $\str{mtlsRequests}$ (in the initial state,
   $\str{mtlsRequests}$ is empty).
   This implies that the nonce created in Line~\ref{line:as-creates-mtls-nonce}
   due to a request $m'$ to $\str{/MTLS\mhyphen{}prepare}$
   with $m'.\str{body}[\str{client\_id}] \equiv \mi{clientId}$
   is known to $p$, which is a contradiction to 
   Lemma~\ref{lemma:as-mtls-nonce-does-not-leak}.  \QED

  \end{description} 
\end{proof}

\subsection{PKCE Challenge} %

\begin{lemma}[PKCE Challenge of Public Client]\label{lemma:pub-rw-pkce-cc-created-by-c}
  For any run $\rho$
  of a FAPI web system $\fapiwebsystem$
  with a network attacker,
  every configuration $(S, E, N)$  in $\rho$,
  every authorization server $\mi{as}$ that is honest in $S$,
  every client $c$ 
  of type  $\str{pub}$
  that is honest in $S$
  with client id $\mi{clientId}$ issued to $c$ 
  by $\mi{as}$
  it holds true that:
  
  If $\mi{record} \in S(\mi{as}).\str{records}$ with 
  $\mi{record}[\str{client\_id}] \equiv \mi{clientId}$,
  then $\mi{record}[\str{pkce\_challenge}]$ was created by $c$.

\end{lemma}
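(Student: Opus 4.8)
The plan is to trace back the creation of the record inside the honest authorization server and then exploit the central effect of the fix from Section~\ref{attack:pkce}: after the fix, the authorization request must arrive as a \emph{request JWS} signed with a key that only $c$ possesses. First I would observe that a record $\mi{record}$ with $\mi{record}[\str{client\_id}] \equiv \mi{clientId}$ can only be appended to $S(\mi{as}).\str{records}$ in Line~\ref{as-adds-record} of Algorithm~\ref{alg:as-fapi}, i.e., while $\mi{as}$ processes a request $m$ at the $\str{/auth2}$ path. Since $c$ is of type $\str{pub}$ and $\mi{as}$ is honest, $S(\mi{as}).\str{clients}[\mi{clientId}][\str{client\_type}] \equiv \str{pub}$ throughout the run (Definition~\ref{def:clientType} together with the fact that an honest AS never modifies $\str{clients}$), so the branch at Line~\ref{line:as-auth2-requestData-read-2} is taken and $\mi{record}[\str{pkce\_challenge}] \equiv \mi{requestData}[\str{pkce\_challenge}]$, where $\mi{requestData} \equiv \mathsf{extractmsg}(\mi{requestJWS})$ for the request JWS $\mi{requestJWS}$ obtained from $m.\str{body}$ (Line~\ref{as-retrieves-requestData-from-JWS}). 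Moreover, reaching Line~\ref{as-adds-record} requires that the signature check in Line~\ref{as-check-sig-of-jws} succeeded, i.e., $\mathsf{checksig}(\mi{requestJWS}, \mi{clientInfo}[\str{jws\_key}]) \equiv \True$ with $\mi{clientInfo}[\str{jws\_key}] \equiv s^\mi{as}_0.\str{clients}[\mi{clientId}][\str{jws\_key}] \equiv \mathsf{pub}(s^c_0.\str{authReqSigKey})$ by the assumption on the initial AS state.

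The crux is therefore that a term carrying a valid signature under $\mathsf{pub}(s^c_0.\str{authReqSigKey})$ can only have been produced by a process in possession of the private key $s^c_0.\str{authReqSigKey}$. By the Dolev-Yao equational theory, the successful $\mathsf{checksig}$ forces $\mi{requestJWS} \equiv \mathsf{sig}(x, s^c_0.\str{authReqSigKey})$ for some $x$ with $x \equiv \mi{requestData}$, so it suffices to establish (a) that $s^c_0.\str{authReqSigKey}$ is never derivable by any process other than $c$, and (b) that the only way for $c$ to emit such a signature is the intended signing step. For this I would prove a secrecy lemma for $s^c_0.\str{authReqSigKey}$: by assumption it is initially known only to $c$, and the only occurrence of this key in the client algorithms is Line~\ref{client-slf-requestJWS-sign-1} of Algorithm~\ref{alg:client-fapi-start-login-flow}, where it appears solely as the second argument of a $\mathsf{sig}(\cdot,\cdot)$ application. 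An induction over the run then shows that $c$ never outputs the key in extractable position, so it stays secret (a signature $\mathsf{sig}(\cdot,k)$ reveals neither $k$ nor lets any other party forge a fresh signature under $k$).

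Finally, combining the two facts, the term $\mathsf{sig}(\mi{requestData}, s^c_0.\str{authReqSigKey})$ that underlies $\mi{record}$ must have been created by $c$ executing Line~\ref{client-slf-requestJWS-sign-1}; since $\mi{requestData}$, and in particular its entry $\mi{requestData}[\str{pkce\_challenge}]$, is exactly the signed payload assembled by $c$, the PKCE challenge stored in $\mi{record}$ was chosen and created by $c$, as claimed. I expect the main obstacle to be step (a), the key-secrecy argument for $s^c_0.\str{authReqSigKey}$: although conceptually standard, it requires inspecting every message the client can emit across Algorithms~\ref{alg:client-fapi-http-request}--\ref{alg:client-check-response-jws} and confirming that the signing key never appears outside a $\mathsf{sig}$ application, so that no other process (including the network attacker) can ever derive it.
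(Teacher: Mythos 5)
Your proposal is correct and follows essentially the same route as the paper's proof: trace the record to its unique creation point at the \str{/auth2} endpoint, use the signature check on the request JWS against the preconfigured key $\mathsf{pub}(s^c_0.\str{authReqSigKey})$ to conclude the JWS was created by $c$, and observe that the stored PKCE challenge is taken from that JWS. The only difference is that you spell out the key-secrecy argument in more detail than the paper, which simply asserts that the signing key is preconfigured and never sent to any other process.
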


\begin{proof}

Intuitively, this holds true because the authorization request
is authenticated, %
which means that the request
and therefore the PKCE challenge were created by the client.

More formally, we note that  
$S(\mi{as}).\str{records}$ is empty in the initial state.
New records are only added in $\str{/auth2}$, or more precisely,
in Line~\ref{as-adds-record} of Algorithm~\ref{alg:as-fapi}.

Let $m$ be the corresponding request to $\str{/auth2}$.
Due to $\mi{record}[\str{client\_id}] \equiv \mi{clientId}$ and Line~\ref{as-auth2-client-id-from-body},
it follows that $m.\str{body}[\str{client\_id}] \equiv \mi{clientId}$. As the record is added to the state,
all (applicable) checks in $\str{/auth2}$ have passed. 

This means that 
$\mathsf{checksig}(m.\str{body}[\str{request\_jws}], S(\mi{as}).\str{clients}[\mi{clientId}][\str{jws\_key}] ) 
\equiv \True $ (Line~\ref{as-check-sig-of-jws}), which implies that the request JWS was created by $c$ 
(as the key is preconfigured and not send to any other process).

Due to $\mi{record}[\str{pkce\_challenge}]$ being set from the request JWS 
(Line~\ref{as-retrieves-requestData-from-JWS} and Line~\ref{line:as-auth2-requestData-read-2}),
it follows that the PKCE code challenge was created by $c$.  \QED

\end{proof}

\subsection{Authorization Response}

\begin{lemma}[ID Token contained in Authorization Response for Web Server Client does not leak]\label{lemma:auth-resp-does-not-leak-isApp}
  For any run $\rho$
  of a FAPI web system $\fapiwebsystem$
  with a network attacker,
  every configuration $(S, E, N)$  in $\rho$,
  every authorization server $\mi{as}$ that is honest in $S$,
  every domain $d \in \mathsf{dom}(\mi{as})$,
  every identity $\mi{id} \in \mathsf{ID}^\mi{as}$
  with $b = \mathsf{ownerOfID}(\mi{id})$
  being an honest browser in $S$,
  every web server client $c$ that is honest in $S$
  with client id $\mi{clientId}$ 
  that has been issued to $c$
  by $\mi{as}$,
  every term $n$,
  every term $h$,
  every id token $idt \equiv \sig{[\str{iss}{:} d, \str{sub}{:}\mi{id}, \str{aud}{:}\mi{clientId},
    \str{nonce}{:}n, \str{s\_hash}{:} h]}{k}$
  with $k \equiv S(\mi{as}).\str{jwk}$ %
  and every attacker process $a$ %
  it holds true that:
  
  If a request $m$ is sent to the path $\str{/auth2}$ of $\mi{as}$
  with $m.\str{body}[\str{client\_id}] \equiv \mi{clientId}$ and
  $m.\str{body}[\str{password}] \equiv \mathsf{secretOfID}(\mi{id})$,
  then for the corresponding response $r$ it holds true that
  $\mi{idt} \equiv r.\str{body}[\str{id\_token}]$ 
  does not leak to $a$,
  i.e.~, $\mi{idt} \not \in  d_{\emptyset}(S(a))$.
\end{lemma}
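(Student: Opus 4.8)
The plan is to argue that $\mi{idt}$, being signed under the honest AS's key $k \equiv S(\mi{as}).\str{jwk}$, can never be forged by the attacker, and therefore can only reach $a$ if some honest party forwards it along a path that terminates at $a$; I would then rule out every such path. First I would record the two standard non-leakage facts: the signing key of an honest authorization server never becomes derivable by the attacker (so $a$ cannot fabricate $\mi{idt}$ itself), and the secret $\mathsf{secretOfID}(\mi{id})$ of the honest identity $\mi{id}$, whose owner browser $b$ is honest, is known only to $b$ and $\mi{as}$. Both are of the same flavour as Lemmas~\ref{lemma:jws-assertion-does-not-leak} and~\ref{lemma:as-mtls-nonce-does-not-leak} and would be cited from the technical report.

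Next I would pin down the unique creation point of $\mi{idt}$ inside $\mi{as}$. An id token carrying the field $\str{s\_hash}$ is produced only in Line~\ref{line:as-create-id-token-auth2} of Algorithm~\ref{alg:as-fapi} (the token-endpoint id token built in Line~\ref{line:as-create-id-token-from-code} carries $\str{at\_hash}$ and no $\str{s\_hash}$), so $\mi{idt}$ is emitted by $\mi{as}$ only while serving the $\str{/auth2}$ request $m$. Because $c$ is a web server client, i.e.\ $S(\mi{as}).\str{clients}[\mi{clientId}][\str{is\_app}] \equiv \bot$, the server takes the branch at Line~\ref{line:as-send-auth-resp} and sends the $303$ redirect only to the sender of $m$, encrypted under the symmetric key $k$ contained in $m$; crucially it does \emph{not} additionally ship the response to the $\mi{leak}$ address, as it would for an app client in Line~\ref{line:as-send-auth-resp-leak}.

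I would then establish confidentiality of this response and track its onward travel. Since $m.\str{body}[\str{password}] \equiv \mathsf{secretOfID}(\mi{id})$ and that secret is known only to $b$, the plaintext of $m$ --- and hence the response key $k$ --- was produced by $b$, so only $b$ (besides $\mi{as}$) can decrypt the redirect, and a relaying network attacker gains nothing. The id token sits in the fragment of the redirect URI (hybrid mode), and $\mi{as}$ has checked that this URI is one of $c$'s preregistered redirect URIs, so its host lies in $\mathsf{dom}(c)$. Following the redirect, $b$ keeps the fragment locally: by the browser invariants of the WIM a fragment is placed neither into any HTTP request nor into the Referer header, so it can be read only by a script of $c$'s own origin, namely the get-fragment script served by the honest client in Line~\ref{line:c-send-script-get-fragment}, which posts it back to $c$ over HTTPS. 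Finally, $c$ hands the received id token to $\mathsf{CHECK\_FIRST\_ID\_TOKEN}$ (Algorithm~\ref{alg:client-check-first-id-token}, called in Line~\ref{line:c-call-check-id-token-immediately}), which only inspects it and never emits it. Thus $\mi{idt}$ stays confined to the honest processes $\{b, \mi{as}, c\}$ and is never derivable by $a$.

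The main obstacle I anticipate is the browser step: making rigorous that the fragment carrying $\mi{idt}$ cannot escape $b$. This requires invoking the WIM's navigation and same-origin rules to argue simultaneously that (i) the fragment is excluded from every outgoing request and header, and (ii) no attacker-controlled script can become same-origin with the document loaded from $c$'s redirect URI and so read the fragment --- which in turn leans on the AS's redirect-URI check confining the landing document to $\mathsf{dom}(c)$ and on $c$ serving only the benign get-fragment script. Establishing (ii) cleanly is exactly the browser-level reasoning that these WIM proofs isolate into separate invariants, and I would reuse such an invariant rather than reprove it inline.
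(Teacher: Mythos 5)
Your proposal is correct and follows essentially the same route as the paper: the paper likewise argues that only the owner browser (knowing the password) receives the authorization response, that the preregistered redirect URI confines the response to the honest client, and that the client never re-emits an id token, deferring the browser-level fragment/same-origin reasoning to Lemma~6 of the OIDC technical report rather than reproving it. Your extra observations (the unique creation point of the token via the $\str{s\_hash}$ field, and that the web server client avoids the explicit leak branch at Line~\ref{line:as-send-auth-resp-leak}) are consistent with, and merely more explicit than, the paper's terse argument.
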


\begin{proof}

We first highlight that $k \equiv s^\mi{as}_0.\str{jwk}$, as this value 
is never changed by the authorization server. Only $\mi{as}$ knows this value,
as it is preconfigured and never transmitted.

Furthermore, we assume that an id token is sent in the response, otherwise, it trivially cannot leak.

  \begin{description} 

    \item[Authorization Response does not leak.]\strut

Intuitively, the authorization response does not leak 
because only the honest browser knows the password of the
identity, which means that the response is sent back to the honest browser.
As the redirection URIs are preregistered, the browser redirects
the response directly to the honest client. 

This does not hold true in the case of app clients, as 
we assume that the response can be sent to a wrong app by the
operating system.

The formal proof is analogous to the proof of Lemma 6 in \cite{FettKuestersSchmitz-TR-OIDC-2017}.
We note that the changes made to the browser algorithms do 
not change the results of this proof.

    \item[ID Token does not leak.]\strut

The client never sends an id token. Therefore, we conclude that an id token created
at the authorization endpoint does not leak to an attacker.

\end{description} 

\QED

\end{proof} 

\begin{lemma}[Code of Read Web Server Clients does not leak]\label{lemma:r-web-server-client-code}
  For any run $\rho$
  of a FAPI web system $\fapiwebsystem$
  with a network attacker,
  every configuration $(S, E, N)$  in $\rho$,
  every authorization server $\mi{as}$ that is honest in $S$,
  every domain $d \in \mathsf{dom}(\mi{as})$,
  every identity $\mi{id} \in \mathsf{ID}^\mi{as}$
  with $b = \mathsf{ownerOfID}(\mi{id})$
  being an honest browser in $S$,
  every read web server client $c$ that is honest in $S$
  with client id $\mi{clientId}$ 
  that has been issued to $c$
  by $\mi{as}$,
  every code $\mi{code}$ 
  for which there is a record
  $\mi{rec} \inPairing S(\mi{as}).\str{records}$ with
  $\mi{rec}[\str{code}] \equiv \mi{code}$,
  $\mi{rec}[\str{client\_id}] \equiv \mi{clientId}$,
  $\mi{rec}[\str{subject}] \equiv \mi{id}$,
  and every attacker process $a$ %
  it holds true that 
  $\mi{code}$ 
  does not leak to $a$,
  i.e., $\mi{code} \not \in  d_{\emptyset}(S(a))$.
\end{lemma}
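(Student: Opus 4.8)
The plan is to prove that $\mi{code}$ never becomes derivable by the attacker via an inductive argument over $\rho$: I would identify every honest party that ever holds or transmits $\mi{code}$ and show each such transmission stays among honest parties, so that $\mi{code}$ can never enter $d_{\emptyset}(S(a))$. First I would observe that $\mi{code}$ is a fresh nonce minted by $\mi{as}$ in the $\str{/auth2}$ branch of Algorithm~\ref{alg:as-fapi} (records are added nowhere else, and the code entry of each record is a freshly chosen nonce). From the record hypotheses $\mi{rec}[\str{subject}] \equiv \mi{id}$ and $\mi{rec}[\str{client\_id}] \equiv \mi{clientId}$ together with the password check performed at $\str{/auth2}$, the request that produced $\mi{code}$ must have carried $\mathsf{secretOfID}(\mi{id})$; since $b = \mathsf{ownerOfID}(\mi{id})$ is honest, only $b$ (besides $\mi{as}$) knows this secret, so that request was sent by $b$.

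The first transmission to analyse is the authorization response. Since $c$ is a web server client ($\str{is\_app} \equiv \bot$), the response is sent normally rather than to $\mi{leak}$, and its redirection URI is checked to lie in $\mi{clientInfo}[\str{redirect\_uris}]$, hence is a domain of the honest client $c$. Because the honest browser $b$ issued the $\str{/auth2}$ request, it receives the $303$ response and follows the redirect to $c$. I would make the non-leakage of this response precise using the same reasoning that underlies the first part of Lemma~\ref{lemma:auth-resp-does-not-leak-isApp} (the argument analogous to Lemma~6 of \cite{FettKuestersSchmitz-TR-OIDC-2017}): the response, and in particular the $\mi{code}$ carried in its URL parameters, reaches only $b$ and then $c$, and neither an honest browser script nor the honest web server client $c$ exposes it to any other party. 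Note that although the authorization \emph{request} leaks by assumption, $\mi{code}$ does not occur in it (it is created only afterwards by $\mi{as}$), so the request-leak assumption is irrelevant here.

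The second transmission is the token request sent by $c$. By Lemma~\ref{lemma:code-token-req-from-redirect-ep} the code carried in any token request of $c$ is the one stored in the corresponding session after it was received at the redirection endpoint; and at that endpoint (Algorithm~\ref{alg:client-fapi-http-request}) $c$ admits a code into a session only if the accompanying $\str{state}$ equals that session's stored state. I would show that $\mi{code}$ always travels paired with the state value $S_{\mi{code}}$ that $\mi{as}$ copied from the signed request JWS: the only emitters of $\mi{code}$ are $\mi{as}$ (always together with $S_{\mi{code}}$ in the response) and $c$'s own token request, so $\mi{code}$ can only be admitted into the session whose stored state is $S_{\mi{code}}$. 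Since state values are freshly chosen per call of Algorithm~\ref{alg:client-fapi-start-login-flow} and are therefore unique across sessions, this pins down a single session $\mi{sid}'$. As $\mi{as}$ minted $\mi{code}$ only after verifying the signature and audience of the request JWS created by $c$ for $\mi{sid}'$, that audience equals a domain of $\mi{as}$, which forces the issuer of $S(c).\str{sessions}[\mi{sid}'][\str{identity}]$ to be $\mi{as}$, i.e.\ that identity lies in $\mathsf{ID}^\mi{as}$. Since read clients never misconfigure the token endpoint, Lemma~\ref{lemma:token-req-to-id-domain} then yields that the token request carrying $\mi{code}$ is sent only to a domain of $\mi{as}$, and $\mi{as}$ never forwards a received code. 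Combining both transmission paths, $\mi{code}$ stays among $\{b, c, \mi{as}\}$ throughout $\rho$, and hence $\mi{code} \not\in d_{\emptyset}(S(a))$.

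The main obstacle I anticipate is the step that confines $\mi{code}$ to a session of $c$ whose authorization server is the honest $\mi{as}$: I must rule out a CSRF-style injection of $\mi{code}$ into a session of $c$ that points at a dishonest AS, since such an injection would cause $c$ to forward $\mi{code}$ to the attacker's token endpoint. This is exactly where the uniqueness of the freshly chosen state values, and the binding of $\mi{code}$'s state to an authorization request signed by $c$ and audience-restricted to $\mi{as}$, are essential, and where the web-server-client assumption (guaranteeing the authorization response does not leak) is indispensable.
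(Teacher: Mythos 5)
Your proposal is correct and follows essentially the same route as the paper's proof: confine the authorization response (and hence $\mi{code}$) to $b$ and $c$ by the argument of Lemma~\ref{lemma:auth-resp-does-not-leak-isApp}, then show the token request carrying $\mi{code}$ goes only to $\mi{as}$ because the record's request JWS was created by $c$ with audience in $\mathsf{dom}(\mi{as})$, and finally note that $\mi{as}$ never forwards received codes. The only difference is bookkeeping: where the paper invokes Lemma~\ref{lemma:req-jws-aud-token-req-as} to tie the request JWS to ``the corresponding'' token request, you inline that lemma's content (via Lemmas~\ref{lemma:code-token-req-from-redirect-ep} and~\ref{lemma:token-req-to-id-domain}) and add an explicit state-uniqueness argument to pin $\mi{code}$ to a single session of $c$ --- a slightly more careful treatment of a step the paper leaves implicit.
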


\begin{proof}

The code contained in the authorization response
is sent only to the client (analogous to Lemma~\ref{lemma:auth-resp-does-not-leak-isApp}).

Let $m$ be the request received at $\str{/auth2}$ (Line~\ref{line:as-receive-auth2}
of Algorithm~\ref{alg:as-fapi}) which led to the creation of
$\mi{rec}$. 

As the record was added to the state, it follows that the checks in 
the branch at Line~\ref{as-auth2-check-jws-start} are fulfilled. %
From this, it follows that the $\str{aud}$ value of the request JWS is a
domain in $\mathsf{dom}(\mi{as})$ (due to Lemma~\ref{lemma:https-server-correct-hosts}).

As shown in Lemma~\ref{lemma:req-jws-aud-token-req-as},
the corresponding token request send by the client is sent only to 
$\mi{as}$.

(This does not hold true for read-write clients, as we assume that the token endpoint might
be misconfigured, which means that the code might be sent to the attacker.)

As the authorization server does not send the code received at the token endpoint
(and only sends newly generated codes at the authorization endpoint), we
conclude that $\mi{code}$ does not leak to $a$.
\QED

\end{proof} 

\begin{lemma}[Response JWS created for Web Server Client does not leak]\label{lemma:req-jws-does-not-leak}
  For any run $\rho$
  of a FAPI web system $\fapiwebsystem$
  with a network attacker,
  every configuration $(S, E, N)$  in $\rho$,
  every authorization server $\mi{as}$ that is honest in $S$,
  every domain $d \in \mathsf{dom}(\mi{as})$,
  every identity $\mi{id} \in \mathsf{ID}^\mi{as}$
  with $b = \mathsf{ownerOfID}(\mi{id})$
  being an honest browser in $S$,
  every web server client $c$ that is honest in $S$
  with client id $\mi{clientId}$ 
  that has been issued to $c$
  by $\mi{as}$,
  every term $n$ for which a record $\mi{rec} \in S(\mi{as}).\str{records}$
  exists with $\mi{rec}[\str{code}] \equiv n$ and 
  $\mi{rec}[\str{subject}] \equiv \mi{id}$,
  every term $s$,
  every request JWS $t \equiv \sig{[\str{iss}{:} d,  \str{aud}{:}\mi{clientId},
    \str{code}{:}\mi{n}, 
    \str{state}{:} s]}{k}$
  with $k \equiv S(\mi{as}).\str{jwk}$ 
  and every attacker process $a$ 
  it holds true that
  $\mi{t} \not \in  d_{\emptyset}(S(a))$.
\end{lemma}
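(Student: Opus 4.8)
The plan is to mirror the structure of the proof of Lemma~\ref{lemma:auth-resp-does-not-leak-isApp}, since the JARM response JWS $t$ plays exactly the role that the first id token plays in the OIDC Hybrid flow: it is created by the honest authorization server, signed with the AS's private signing key, and delivered to the client inside the authorization response. First I would observe that $k \equiv S(\mi{as}).\str{jwk} \equiv s^\mi{as}_0.\str{jwk}$, since an honest AS never modifies this subterm, and that this key is preconfigured and never transmitted by $\mi{as}$. Consequently, any term of the form $\sig{\cdot}{k}$ can only be produced by $\mi{as}$, and by inspection of Algorithm~\ref{alg:as-fapi} the AS emits such a signed term with the $\str{responseJWS}$ shape only in the JARM branch (Line~\ref{as-creates-response-jws}), where it is appended to the parameters of the redirection URI and returned in the authorization response.

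Next I would connect the hypotheses to the authentication step. Because the lemma assumes a record $\mi{rec} \in S(\mi{as}).\str{records}$ with $\mi{rec}[\str{subject}] \equiv \mi{id}$, and records are created only in the $\str{/auth2}$ branch after the password check in Line~\ref{line:as-auth2-checks-secretofid} succeeds (with $\mi{rec}[\str{subject}]$ set from the submitted identity in Line~\ref{line:as-auth2-id2}), the corresponding request $m$ to $\str{/auth2}$ must satisfy $m.\str{body}[\str{password}] \equiv \mathsf{secretOfID}(\mi{id})$. Since $b = \mathsf{ownerOfID}(\mi{id})$ is honest, this secret is known only to $b$, placing us in exactly the situation of Lemma~\ref{lemma:auth-resp-does-not-leak-isApp}: the authorization response (which carries $t$) is returned to the honest browser, and because the redirection URIs for $c$ are preregistered at $\mi{as}$ and $c$ is a \emph{web server} client (so the app-client leakage assumption of Section~\ref{sec:FAPIleakauthorizationresponseassumption} does not apply), the browser forwards the response only to $c$. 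I would reuse the browser-based argument of Lemma 6 of \cite{FettKuestersSchmitz-TR-OIDC-2017}, noting as in Lemma~\ref{lemma:auth-resp-does-not-leak-isApp} that our OAUTB and \emph{\_\_Secure-} extensions to the browser model do not affect its conclusion.

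Finally I would argue that $t$ does not leak from $c$ either: the client receives the response JWS at its redirection endpoint and passes it only to $\mathsf{CHECK\_RESPONSE\_JWS}$ (Algorithm~\ref{alg:client-check-response-jws}), which verifies and then discards it, storing at most its issuer in the session (Line~\ref{c-set-JARM-iss}); at no point does an honest client place a received response JWS into an outgoing message. Combining the three observations --- only $\mi{as}$ can create $t$, the authorization response carrying $t$ reaches only the honest browser and then only $c$, and $c$ never re-emits $t$ --- yields $t \notin d_{\emptyset}(S(a))$.

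The main obstacle I anticipate is the non-leakage of the authorization response, which is where all the real work lives. It is not a local syntactic argument but a global one about the browser's redirect handling, and it depends delicately on (i)~uniqueness of the identity secret to the honest browser, (ii)~the preregistration of redirect URIs ruling out delivery of the response to an attacker-controlled endpoint, and (iii)~the client being a web server client so that the modeled leak of app-client responses does not intervene. Fortunately this is precisely the content already discharged for the id token in Lemma~\ref{lemma:auth-resp-does-not-leak-isApp}, so the task reduces to checking that the JARM response JWS travels along the same delivery path and is subject to the same client-side handling, rather than re-proving the browser reasoning from scratch.
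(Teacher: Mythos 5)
Your proposal matches the paper's proof: the paper likewise observes that such a term is only created in the JARM branch of the authorization endpoint, that the corresponding request must have come from the honest browser $b$ (only $b$ knows $\mathsf{secretOfID}(\mi{id})$), and then concludes that the response JWS follows the same delivery path as the first id token, so the argument is analogous to Lemma~\ref{lemma:auth-resp-does-not-leak-isApp}. Your additional check that the honest client never re-emits a received response JWS is the natural counterpart of the ``ID Token does not leak'' step in that lemma and is consistent with the paper's reasoning.
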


\begin{proof}

  Such a term is only created at the authorization endpoint of $\mi{as}$
  (in Line~\ref{as-creates-response-jws} of Algorithm~\ref{alg:as-fapi}), where the corresponding authorization
  request sent to $\mi{as}$ by $b$ (as only $b$ knows the secret of 
  $\mi{id}$). Therefore, the request JWS follows the same
  path  as an id token created at the authorization endpoint. %
  Due to this, the proof of this lemma is analogous to the proof of 
  Lemma~\ref{lemma:auth-resp-does-not-leak-isApp}.
\QED
\end{proof}

\subsection{ID Token}

\begin{lemma}[ID Token for Confidential Client created at the Token Endpoint]\label{lemma:id-token-conf-from-tokenep}
  For any run $\rho$
  of a FAPI web system $\fapiwebsystem$
  with a network attacker,
  every configuration $(S, E, N)$  in $\rho$,
  every authorization server $\mi{as}$ that is honest in $S$,
  every domain $d \in \mathsf{dom}(\mi{as})$,
  every identity $\mi{id} \in \mathsf{ID}^\mi{as}$,
  every confidential client $c$ that is honest in $S$ %
  with client id $\mi{clientId}$ 
  that has been issued to $c$
  by $\mi{as}$,
  every term $n$,
  every term $h$,
  every id token $t \equiv \sig{[\str{iss}{:} d, \str{sub}{:}\mi{id}, \str{aud}{:}\mi{clientId},
    \str{nonce}{:}n, \str{at\_hash}{:}h]}{k}$ with
  $k \equiv s^\mi{as}_0.\str{jwk}$  %
  and every process $p$ with $c \ne p \ne \mi{as}$ %
  it holds true that 
  $t \not \in  d_{\emptyset}(S(p))$.
\end{lemma}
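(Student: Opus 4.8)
The plan is to track the signed ID token $t$ back to its unique point of creation, then follow the single message that carries it and show that this message can only reach the honest client $c$. Since $t$ is signed under $k \equiv s^\mi{as}_0.\str{jwk}$, and this key is preconfigured and never placed into any outgoing message by $\mi{as}$, in the Dolev--Yao model only $\mi{as}$ can produce a term of the form $\sig{\cdot}{k}$. The authorization server signs terms in only three places in Algorithm~\ref{alg:as-fapi}: the JARM response JWS (Line~\ref{as-creates-response-jws}), the ID token issued at the authorization endpoint (Line~\ref{line:as-create-id-token-auth2}), and the ID token issued at the token endpoint (Line~\ref{as-token-ep-create-id-token}). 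The first has no $\str{sub}$ field, and the second carries $\str{c\_hash}$ and $\str{s\_hash}$ rather than $\str{at\_hash}$; hence a term of the exact shape in the statement (with both a $\str{sub}$ and an $\str{at\_hash}$ entry) is produced only at the token endpoint. First I would make this structural distinction precise, noting that these dictionaries are distinguishable (non-unifiable) terms.

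Next I would show that whenever the token endpoint emits such a token, the enclosing response $m'$ travels only to the sender of the token request. In both exit points of the token endpoint (Line~\ref{as-send-token-ep-readwrite} for the Read-Write profile and Line~\ref{as-send-token-ep-read} for Read-Only), $m'$ is returned to the address $f$ from which the request arrived; in the Read-Write case the additional message to $\mi{leak}$ contains only the access token and the client id, not $m'$, so it does not expose $t$. Because $t$'s $\str{aud}$ field equals $\mi{record}[\str{client\_id}]$ (set in Line~\ref{as-token-ep-add-client-id-to-id-token}) and equals $\mi{clientId}$, the client-id check in Line~\ref{line:tokenep-checks-clientId} forces $m.\str{body}[\str{client\_id}] \equiv \mi{clientId}$ for the triggering request $m$. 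By Lemma~\ref{lemma:https-server-correct-hosts}, $m$ was indeed received at a domain of $\mi{as}$.

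The crucial step is to invoke client authentication: since $c$ is a confidential client with client id $\mi{clientId}$ issued by the honest $\mi{as}$, Lemma~\ref{lemma:client-authentication} yields that the request $m$ (which caused a response to be sent at the token endpoint with $m.\str{body}[\str{client\_id}] \equiv \mi{clientId}$) was sent by $c$. Hence $f$ is an address of $c$, and $m'$ is encrypted under the symmetric key that $c$ chose in $m$, so only $c$ can decrypt it and recover $t$. It then remains to confirm that the honest client never re-emits an ID token: inspecting the client relation (notably Algorithm~\ref{alg:client-fapi-http-response} and Algorithm~\ref{alg:client-check-id-token}), the client only validates ID tokens and uses them internally (e.g.\ to populate $\str{loggedInAs}$), never placing one in an outgoing message. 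Combining these observations, $t$ reaches neither an attacker directly nor via $c$, so $t \not\in d_{\emptyset}(S(p))$ for every $p$ with $c \ne p \ne \mi{as}$.

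I expect the main obstacle to be the first step, namely rigorously establishing that the token-endpoint ID token is the \emph{only} $\mi{as}$-signed term matching the stated structure. This requires an exhaustive case analysis over every signature the honest AS produces, together with an argument that the field sets (in particular the joint presence of $\str{sub}$ and $\str{at\_hash}$ and the absence of $\str{c\_hash}$ and $\str{s\_hash}$) make these terms non-unifiable, so that the attacker cannot forge $t$ or repurpose a differently-shaped signed term into it. The message-routing and client-authentication parts, by contrast, reduce cleanly to Lemmas~\ref{lemma:https-server-correct-hosts} and~\ref{lemma:client-authentication}.
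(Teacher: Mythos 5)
Your proposal is correct and follows essentially the same route as the paper's proof: identify the token endpoint as the only place where an $\mi{as}$-signed term with both $\str{sub}$ and $\str{at\_hash}$ is created, invoke Lemma~\ref{lemma:client-authentication} to conclude the triggering token request came from the confidential client $c$, and observe that an honest client never re-emits an id token. Your treatment is somewhat more careful than the paper's on two points it glosses over --- explicitly ruling out the JARM response JWS as a third signing site and checking that the leak message in Line~\ref{as-send-token-ep-readwrite} does not carry the id token --- but these are refinements of the same argument, not a different one.
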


\begin{proof}

Such an id token $t$ is only created in Line~\ref{as-token-ep-create-id-token} of Algorithm~\ref{alg:as-fapi},
which is the token endpoint of the authorization server (the only other place where an id token is
created by the authorization server is in $\str{/auth2}$ (Line~\ref{line:as-create-id-token-auth2}),
but the id token created there has different attributes, like $\str{s\_hash}$).

The id token created at the token endpoint is sent back to the sender of the 
request in Line~\ref{as-send-token-ep-readwrite} or Line~\ref{as-send-token-ep-read}. 
As shown in Lemma~\ref{lemma:client-authentication},
the request was sent by $c$ (as the client is confidential; more formally, the 
conditions of the lemma are fulfilled as the request contained $m.\str{body}[\str{client\_id}] \equiv \mi{clientId}$ 
due to Lines~\ref{as-token-ep-cid-from-body}, \ref{line:tokenep-checks-clientId} 
and \ref{as-token-ep-add-client-id-to-id-token}).

As an honest client never sends out an id token, it follows that $t$ does not leak to $p$. 
\QED

\end{proof} 

\begin{lemma}[ID Token created for Web Server Clients]\label{lemma:id-token-web-server-client}
  For any run $\rho$
  of a FAPI web system $\fapiwebsystem$
  with a network attacker, 
  every configuration $(S, E, N)$  in $\rho$,
  every authorization server $\mi{as}$ that is honest in $S$,
  every identity $\mi{id} \in \mathsf{ID}^\mi{as}$
  with $b = \mathsf{ownerOfID}(\mi{id})$
  being an honest browser in $S$,
  every web server client $c$ that is honest in $S$ 
  with client id $\mi{clientId}$ 
  that has been issued to $c$
  by $\mi{as}$, %
  every id token $t$ with
  $\mathsf{checksig}(t, \mi{pub}(s^\mi{as}_0.\mi{jwk})) \equiv \True$,
  $\mathsf{extractmsg}(t)[\str{aud}] \equiv \mi{clientId}$  
  and
  $\mathsf{extractmsg}(t)[\str{sub}] \equiv \mi{id}$ 
  and every attacker process $a$ %
  it holds true that 
  $t \not \in  d_{\emptyset}(S(a))$.
\end{lemma}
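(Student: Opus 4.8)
The plan is to first argue that any id token $t$ with $\mathsf{checksig}(t, \mathsf{pub}(s^\mi{as}_0.\str{jwk})) \equiv \True$ must have been produced by the honest authorization server $\mi{as}$ itself. Since $s^\mi{as}_0.\str{jwk}$ is preconfigured, never modified (so $S(\mi{as}).\str{jwk} \equiv s^\mi{as}_0.\str{jwk}$ throughout the run), and never emitted in any message by the honest $\mi{as}$, no other party can derive a term carrying a valid signature under this key. Hence, by the standard Dolev--Yao argument, wherever $t$ occurs it was constructed by $\mi{as}$ in some processing step. Inspecting Algorithm~\ref{alg:as-fapi}, the only two places where $\mi{as}$ signs an id token are the authorization endpoint (Line~\ref{line:as-create-id-token-auth2}, in the hybrid-flow branch) and the token endpoint (Line~\ref{line:as-create-id-token-from-code}). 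I would therefore split on these two creation sites, using the hypotheses $\mathsf{extractmsg}(t)[\str{aud}] \equiv \mi{clientId}$ and $\mathsf{extractmsg}(t)[\str{sub}] \equiv \mi{id}$ to pin down the client and identity for which the token was minted.

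For the token-endpoint case I would invoke that a web server client is confidential (Definitions~\ref{def:webServerClient} and~\ref{def:confClient}, together with the FAPI requirement that web server clients authenticate at the token endpoint), so that the token-endpoint id token---which always carries an $\str{at\_hash}$ field and has exactly the shape $\sig{[\str{iss}{:}d,\str{sub}{:}\mi{id},\str{aud}{:}\mi{clientId},\str{nonce}{:}n,\str{at\_hash}{:}h]}{k}$---matches the form required by Lemma~\ref{lemma:id-token-conf-from-tokenep}. Applying that lemma with $p = a$ (its side condition $c \ne p \ne \mi{as}$ holds because $c$ and $\mi{as}$ are honest while $a$ is an attacker process) gives $t \notin d_\emptyset(S(a))$. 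For the authorization-endpoint case the id token travels inside the authorization response. I would observe that the subject $\mi{id}$ of $t$ equals $\mi{record}[\str{subject}]$, which $\mi{as}$ copies from $m.\str{body}[\str{identity}]$ of the originating $\str{/auth2}$ request (Lines~\ref{line:as-auth2-id1} and~\ref{line:as-auth2-id2}); since $\mi{as}$ only proceeds after checking the password against $\mapIDtoPLI(\mi{id})$ (Line~\ref{line:as-auth2-checks-secretofid}), and $m.\str{host} \in \mathsf{dom}(\mi{as})$ by Lemma~\ref{lemma:https-server-correct-hosts}, the precondition of Lemma~\ref{lemma:auth-resp-does-not-leak-isApp} (that the originating request carried the honest identity's secret) is met. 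That lemma then yields non-leakage of the id token for web server clients directly.

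The main obstacle I anticipate is the authorization-endpoint reduction: I must check that the id token form in the statement, which constrains only $\str{aud}$ and $\str{sub}$, is compatible with the concrete token minted at Line~\ref{line:as-create-id-token-auth2} (which additionally contains $\str{c\_hash}$ and $\str{s\_hash}$), and that Lemma~\ref{lemma:auth-resp-does-not-leak-isApp} still applies despite this difference in field set---its proof depends only on the flow of the response through the honest browser and client, not on the precise payload. A secondary point to discharge is the degenerate possibility of a \emph{public} web server client: if the model admits such a configuration in the Read-Only profile, the token endpoint performs no client authentication and Lemma~\ref{lemma:id-token-conf-from-tokenep} no longer applies, so that sub-case would instead be closed by noting that the code is bound to a PKCE challenge the honest client created (Lemma~\ref{lemma:pub-rw-pkce-cc-created-by-c}) whose verifier never leaves the client, so the attacker can neither redeem the code nor obtain the token-endpoint id token in the first place.
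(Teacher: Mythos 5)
Your proposal is correct and follows essentially the same route as the paper's proof: signature validity under the preconfigured, never-transmitted key $s^\mi{as}_0.\str{jwk}$ forces creation by $\mi{as}$, then a case split on the two minting sites, closing the token-endpoint case via Lemma~\ref{lemma:id-token-conf-from-tokenep} (web server clients being confidential) and the authorization-endpoint case via Lemma~\ref{lemma:auth-resp-does-not-leak-isApp}. Your worry about public web server clients is moot --- the paper's model explicitly assumes web server clients are confidential, which is exactly how its proof dismisses that sub-case.
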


\begin{proof}

The private signing key used by the authorization server
is preconfigured and never send to other processes. As the signature of the id token is
valid, we conclude that it was created by $\mi{as}$ and that its issuer
parameter has the value $\mi{dom}_\mi{as}$, for $\mi{dom}_\mi{as} \in \mathsf{dom}(\mi{as})$.
More precisely, this value is equal to $\mi{rec}[\str{issuer}]$ (L.~\ref{line:as-create-id-token-auth2}
or L.~\ref{line:as-create-id-token-from-code} of Alg.~\ref{alg:as-fapi}, where $\mi{rec}$ is a record
from $\str{records}$.) This value is only set in Line~\ref{as-rec-set-iss} and is a domain
of $\mi{as}$, as shown in Lemma~\ref{lemma:https-server-correct-hosts}.

If the id token is created at the authorization endpoint (in Line~\ref{line:as-create-id-token-auth2}
of Algorithm~\ref{alg:as-fapi}), then $t \not \in  d_{\emptyset}(S(a))$,
as shown in Lemma~\ref{lemma:auth-resp-does-not-leak-isApp}.
(The conditions of the lemma are fulfilled as the audience value
is $\mi{clientId}$ and taken from the request that was
sent to the authorization endpoint and as the request contained
the password of $\mi{id}$). %

Otherwise, the id token was created at the token endpoint (in Line~\ref{as-token-ep-create-id-token}).
As noted in Section~\ref{fapi-informal-description-of-the-model}, we assume that a web server client is a confidential client.
As shown in Lemma~\ref{lemma:id-token-conf-from-tokenep}, 
it holds true that $t \not \in  d_{\emptyset}(S(a))$.
\QED

\end{proof}

\subsection{Access Token}

\begin{lemma}[Read Access Token does not leak to Attacker]\label{lemma:r-at-does-not-leak}
For any run $\rho$
of a FAPI web system $\fapiwebsystem$
with a network attacker,
every configuration $(S, E, N)$  in $\rho$,
every authorization server $\mi{as}$ that is honest in $S$,
every identity $\mi{id} \in \mathsf{ID}^\mi{as}$
with $b = \mathsf{ownerOfID}(\mi{id})$
being an honest browser in $S$,
every read client $c$
that is honest in $S$
with client id $\mi{clientId}$ %
that has been issued to $c$
by $\mi{as}$,
every access token $t$
with
$\an{\mi{id}, \mi{clientId}, t, \str{r}} \in S(\mi{as}).\str{accessTokens}$,
every resource server $\mi{rs}$ that is honest in $S$ with $\mi{dom}_\mi{rs} \in s^\mi{as}_0.\str{resource\_servers}$
(with $\mi{dom}_\mi{rs} \in \mathsf{dom}(\mi{rs})$)
and
every attacker process $a$
it holds true that if
the request to the resource server is sent to $\mi{dom}_\mi{rs}$ 
(in Line~\ref{r-c-sends-req-to-rs}
of Algorithm~\ref{alg:client-use-access-token}), 
then
$t \not\in d_{\emptyset}(S(a))$.
\end{lemma}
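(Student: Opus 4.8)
The plan is to track the single access token $t$ with $\an{\mi{id}, \mi{clientId}, t, \str{r}} \in S(\mi{as}).\str{accessTokens}$ through the run and show that it can only ever reach the honest client $c$ and the honest resource server $\mi{rs}$, neither of which re-emits it. First I would observe that $t$ is a fresh nonce $\nu_2$ created by $\mi{as}$ at its authorization endpoint and stored in a record $\mi{rec}$ with $\mi{rec}[\str{subject}] \equiv \mi{id}$ and $\mi{rec}[\str{client\_id}] \equiv \mi{clientId}$ (Lines~\ref{as-creates-at}, \ref{as-token-ep-retrieves-record} and~\ref{line:tokenep-checks-clientId} of Algorithm~\ref{alg:as-fapi}). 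For the read profile, $\mi{as}$ never places the access token into an authorization response (the response data contains only $\str{code}$ and $\str{state}$), and at the token endpoint it emits $t$ only at Line~\ref{as-send-token-ep-read} --- crucially \emph{not} via the leakage branch at Line~\ref{as-send-token-ep-readwrite}, which is reached only for the read-write profile. Any id token returned alongside $t$ carries only $\mathsf{hash}(t)$, not $t$ itself. Hence the only point at which $t$ enters the network is the token response, which the generic HTTPS server encrypts with the symmetric key chosen by the sender of the token request, so only that sender can read $t$; and since the code in $\mi{rec}$ is invalidated on first redemption (Line~\ref{as-invalidates-code}), at most one such response is produced.

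Next I would identify the sender of the corresponding token request as $c$, splitting on the client type (all read clients use PKCE, and confidential ones additionally authenticate). If $c$ is confidential, Lemma~\ref{lemma:client-authentication} directly yields that the token request carrying $m.\str{body}[\str{client\_id}] \equiv \mi{clientId}$ that produced the response was sent by $c$. If $c$ is public, there is no client authentication, so I would instead rely on PKCE: the token endpoint accepts the request only if $\mathsf{hash}(m.\str{body}[\str{pkce\_verifier}]) \equiv \mi{rec}[\str{pkce\_challenge}]$ (Line~\ref{line:as-checks-r-pkce}), and by Lemma~\ref{lemma:pub-rw-pkce-cc-created-by-c} the challenge in $\mi{rec}$ was created by $c$, namely as $\mathsf{hash}(\nu_9)$ for a fresh verifier nonce $\nu_9$. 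The audience check on the signed request JWS at $\mi{as}$ (together with Lemma~\ref{lemma:https-server-correct-hosts}) forces the session in which $c$ built this challenge to have $\mi{as}$ as its issuer, so Lemma~\ref{lemma:token-req-to-id-domain} (in the same spirit as Lemma~\ref{lemma:req-jws-aud-token-req-as}) shows the verifier is transmitted only to $\mathsf{dom}(\mi{as})$ and never leaks; as $\mathsf{hash}$ is one-way, no other party can derive it, so no token request other than $c$'s can be accepted. In both cases $t$ reaches only $c$.

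Finally I would argue that neither $c$ nor $\mi{rs}$ re-emits $t$. On receiving the token response the honest read client stores $t$ in its session and, in $\mathsf{USE\_ACCESS\_TOKEN}$ for the read profile, sends it exactly once inside the $\str{Authorization}$ header to $\mi{rsHost} = \mi{session}[\str{RS}]$ via $\mathsf{HTTPS\_SIMPLE\_SEND}$ (Line~\ref{r-c-sends-req-to-rs}). By the hypothesis of the lemma this request is sent to $\mi{dom}_\mi{rs}$, i.e.\ to the honest $\mi{rs}$, and $\mi{rs}$ only ever answers with a resource access nonce and never forwards a received access token. Combining these observations, $t$ is known only to the honest processes $\mi{as}$, $c$ and $\mi{rs}$, so $t \not\in d_{\emptyset}(S(a))$. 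The main obstacle is the public-client case of the second step: making rigorous the chain that links the challenge in $\mi{as}$'s record (created by $c$) to the unique verifier and confines that verifier to the honest AS requires tying together the request-JWS audience check, the session bookkeeping of the verifier, and Lemma~\ref{lemma:token-req-to-id-domain}, while ruling out that a prematurely failing attacker request could disturb either the code or the verifier binding.
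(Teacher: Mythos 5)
Your proposal is correct and follows essentially the same route as the paper's proof: the same case split on confidential clients (via Lemma~\ref{lemma:client-authentication}) versus public clients (via the PKCE check, Lemma~\ref{lemma:pub-rw-pkce-cc-created-by-c}, and the request-JWS audience argument feeding into Lemma~\ref{lemma:req-jws-aud-token-req-as}/Lemma~\ref{lemma:token-req-to-id-domain}), followed by the observation that neither the honest client nor the honest resource server re-emits the token. Your added remarks (the read profile never emits the token in the authorization response or the leakage branch, the id token only carries $\mathsf{hash}(t)$, and code invalidation) are sound refinements the paper leaves implicit.
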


\begin{proof}
  An honest resource server never sends out an access token.
  The client sends the access token only in 
  Line~\ref{r-c-sends-req-to-rs}
  of Algorithm~\ref{alg:client-use-access-token}, where it is sent to an honest 
  resource server (per assumption).

  In the following, we show that the access token associated with the
  read client is never sent to the attacker by the authorization server.

  \begin{description}

    \item[Case 1: Confidential Client (Web Server Client or App)]\strut

Let $t$ be sent in 
Line~\ref{as-send-token-ep-read} of Algorithm~\ref{alg:as-fapi}
($\mathsf{PROCESS\_HTTPS\_REQUEST}$) and let
$m$ be the corresponding request made to the token endpoint of $\mi{as}$.
It holds true that $m$ contains the client id $\mi{clientId}$
(due to Line~\ref{as-token-ep-cid-from-body} and $\mi{clientId}$ being
included in 
$\an{\mi{id}, \mi{clientId}, t, \str{r}}$).

From Lemma~\ref{lemma:client-authentication}, it follows
that $m$ was sent by $c$. This means that the token response containing
the access token is sent back directly to $c$.

    \item[Case 2: Public Client (App)]\strut

Let $m$ be the message received at the token endpoint of $\mi{as}$
with $m.\str{body}[\str{client\_id}] \equiv \mi{clientId}$
such that the access token $t$ is sent back in Line~\ref{as-send-token-ep-read}
of Algorithm~\ref{alg:as-fapi}.

This implies that the sender of $m$ knows a verifier
$\mi{pkceCV}$ such that
$\mathsf{hash}(\mi{pkceCV}) \equiv \mi{record}[\str{pkce\_challenge}]$
(due to the check done in Line~\ref{line:as-checks-r-pkce}),
with $\mi{record}[\str{client\_id}] \equiv \mi{clientId}$.

As shown in Lemma~\ref{lemma:pub-rw-pkce-cc-created-by-c}, %
the corresponding PKCE challenge ($\mi{record}[\str{pkce\_challenge}]$)
was created by the honest client $c$ (as the challenge was signed by $c$ with
preconfigured keys).
The challenge was created in Line~\ref{r-client-creates-pkce-cc}
of Algorithm~\ref{alg:client-fapi-start-login-flow} ($\mathsf{START\_LOGIN\_FLOW}$)
as this is the only place where a read client creates a PKCE challenge.
Let $\mi{reqJWS}$ be the corresponding request JWS that was created in
Line~\ref{client-slf-requestJWS-sign-1}. 

The authorization server checks the request JWS at the authorization endpoint for
each request (Line~\ref{as-auth2-check-jws-start} to Line~\ref{as-auth2-check-jws-end}).
This check also includes the $\mi{aud}$ value of the request JWS,
which is required to be from a domain in $\mathsf{dom}(\mi{as})$ 
(due to Lemma~\ref{lemma:https-server-correct-hosts}).

This implies that $\mathsf{extractmsg}(\mi{reqJWS}[\str{aud}])
\equiv
\mi{authEndpoint}.\str{host} \in \mathsf{dom}(\mi{as})$ 
(Line~\ref{client-slf-reqJWS-aud} of Algorithm~\ref{alg:client-fapi-start-login-flow}).

As shown in Lemma~\ref{lemma:req-jws-aud-token-req-as},
the corresponding token request of the session is sent only
to a domain in $\mathsf{dom}(\mi{as})$ (in Algorithm~\ref{alg:client-token-request}, $\mathsf{SEND\_TOKEN\_REQUEST}$).

The PKCE verifier is a nonce chosen by the client (in Line~\ref{r-client-chooses-pkce-cv} 
of Algorithm~\ref{alg:client-fapi-start-login-flow}),
and send only in the token request. 
Therefore, only $c$ and $\mi{as}$ know the PKCE verifier. As the authorization server
never sends messages to itself, it follows that $m$ was sent by $c$. Thus, the response
(containing the access token) is sent back to $c$.
\QED

  \end{description}
\end{proof}

\begin{lemma}[Access Token bound via mTLS can only be used by Honest Client]\label{lemma:mtls-bound-at}
  For any run $\rho$
  of a FAPI web system $\fapiwebsystem$
  with a network attacker, 
  every configuration $(S, E, N)$  in $\rho$,
  every authorization server $\mi{as}$ that is honest in $S$,
  every read-write client $c$ 
  of type  $\str{conf\_MTLS}$
  that is honest in $S$
  with client id $\mi{clientId}$ issued to $c$ 
  by $\mi{as}$,
  every access token $t$ bound to $c$ (via mTLS, as defined in Appendix~\ref{fapi-def}),
  every resource server $\mi{rs}$ that is honest in $S$
  with $\mi{dom}_\mi{rs} \in s_0^\mi{as}.\str{resource\_servers}$
  (with $\mi{dom}_\mi{rs} \in \mathsf{dom}(\mi{rs})$)
  and every message $m$ received at an URL 
  $\an{\tUrl, \https, \mi{dom_{\mi{rs}}}, \str{/resource\mhyphen{}rw},
  \mi{param}, \mi{frag}}$ with arbitrary $\mi{param}$ and $\mi{frag}$
  and with $\str{MTLS\_AuthN} \in m.\str{body}$ and access token $t \equiv m.\str{header}[\str{Authorization}]$
  it holds true that:

  If a response to $m$ is sent in Line~\ref{rs-send-wNonce} of Algorithm~\ref{alg:rs-oidc},
  then the receiver is $c$.
\end{lemma}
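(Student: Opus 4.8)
The plan is to trace the execution that produces the response in Line~\ref{rs-send-wNonce} back to its preconditions and then show that the only process able to satisfy all of them is $c$. \textbf{Reaching the response.} Since $\str{MTLS\_AuthN} \in m.\str{body}$, the corresponding branch of Algorithm~\ref{alg:rs-oidc} is taken, so there must be an entry $\mi{mtlsInfo} \inPairing S(\mi{rs}).\str{mtlsRequests}$ with $\mi{mtlsInfo}.1 \equiv m.\str{body}[\str{MTLS\_AuthN}]$ (Line~\ref{rs-retrieve-mtlsReq}) and the check in Line~\ref{rs-check-mtls-at} must pass, i.e.\ $\mathsf{check\_mtls\_AT}(\mi{id}, t, \mi{mtlsInfo}.2, s^\mi{rs}_0.\str{authServ}) \equiv \True$. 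As $s^\mi{rs}_0.\str{authServ} \in \mathsf{dom}(\mi{as})$, this means $\an{\str{MTLS}, \mi{id}, \mi{clientId}', t, \mi{mtlsInfo}.2, \str{rw}} \in S(\mi{as}).\str{accessTokens}$ for some $\mi{clientId}'$.

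\textbf{Identifying the bound key.} Because $t$ is bound to $c$ via mTLS, we also have $\an{\str{MTLS}, \mi{id}_t, \mi{clientId}_c, t, \mi{mtlsKey}_c, \str{rw}} \in S(\mi{as}).\str{accessTokens}$, where $\mi{clientId}_c$ is $c$'s client id. Each access token is a fresh nonce ($\nu_2$, Line~\ref{as-creates-at}) and is appended to $\str{accessTokens}$ at most once, since the corresponding code is invalidated at the token endpoint (Line~\ref{as-invalidates-code}); hence the binding for $t$ is unique and $\mi{mtlsInfo}.2 \equiv \mi{mtlsKey}_c$. Following how the AS creates this binding (Line~\ref{as-creates-mtls-token-binding}, using the key $S(\mi{as}).\str{clients}[\mi{clientId}_c][\str{mtls\_key}]$ stored in Line~\ref{as-add-mtlsInfo}) together with the requirement $s^\mi{as}_0.\str{clients}[\mi{clientId}_c][\str{mtls\_key}] \equiv \mathsf{keyMapping}[\mi{dom}_c]$ for $\mi{dom}_c \in \mathsf{dom}(c)$, we conclude $\mi{mtlsInfo}.2 \equiv \mathsf{keyMapping}[\mi{dom}_c]$, the public TLS key of $c$.

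\textbf{Non-leakage of the resource-server nonce.} The entry $\mi{mtlsInfo}$ at $\mi{rs}$ was appended in Line~\ref{rs-adds-mtlsreq} with $\mi{mtlsInfo}.2 \equiv m'.\str{body}[\str{pub\_key}]$ for the $\str{/MTLS\mhyphen{}prepare}$ request $m'$ that created it, and the nonce $\mi{mtlsInfo}.1 \equiv \nu_1$ is returned only encrypted under that key (Line~\ref{rs-encrypts-mtlsnonce}). By the previous step this key is $\mathsf{keyMapping}[\mi{dom}_c]$, so the plaintext nonce can be obtained only by the holder of the matching private TLS key of $c$, which only $c$ knows. This is the analogue of Lemma~\ref{lemma:as-mtls-nonce-does-not-leak} for resource servers, and the argument mirrors it: $c$ decrypts such a message only at Line~\ref{line:client-decrypt-mtls-rs} of Algorithm~\ref{alg:client-fapi-http-response} (processed only as a response to a pending request $c$ itself issued with reference $\str{MTLS\_RS}$), and the subsequent check $\mi{pubKey} \in \mathsf{keyMapping}(\mi{request}.\str{host})$ in Line~\ref{c-PHResp-check-mtls-rs-pubkey} --- where $\mi{pubKey}$ is the server TLS key included in the encrypted message by Lemma~\ref{lemma:https-server-correct-hosts} --- forces $c$ to forward the decrypted nonce (inside $\mathsf{USE\_ACCESS\_TOKEN}$) only back to $\mi{rs}$. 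Since $\mi{rs}$ itself never emits this nonce, $\mi{mtlsInfo}.1$ is derivable only by $c$ and $\mi{rs}$.

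\textbf{Conclusion.} The passing check $\mi{mtlsInfo}.1 \equiv m.\str{body}[\str{MTLS\_AuthN}]$ shows the sender of $m$ knows $\mi{mtlsInfo}.1$; as this nonce is available only to $c$ and $\mi{rs}$, and $\mi{rs}$ does not send requests to its own resource endpoint, $m$ was sent by $c$. The response in Line~\ref{rs-send-wNonce} is returned to the sender of $m$ and encrypted under the symmetric key $c$ chose for that request, so its receiver is $c$. The step I expect to be the main obstacle is the non-leakage argument, since it requires re-establishing for the resource server the chain of reasoning of Lemma~\ref{lemma:as-mtls-nonce-does-not-leak} and carefully excluding the case where the attacker seeds the $\str{mtlsRequests}$ entry with $c$'s public key yet still cannot decrypt the returned nonce.
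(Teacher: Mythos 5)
Your proposal is correct and follows essentially the same route as the paper's proof: trace the passing checks at the resource server to the unique $\str{accessTokens}$ entry for $t$, identify the stored key as $c$'s preconfigured public TLS key via Lines~\ref{as-token-ep-retrieve-mtlsInfo}, \ref{as-add-mtlsInfo} and \ref{line:as-choses-mtls-key}, and then argue (analogously to Lemma~\ref{lemma:as-mtls-nonce-does-not-leak}) that the nonce the resource server encrypts under that key in Line~\ref{rs-encrypts-mtlsnonce} is derivable only by $c$ and $\mi{rs}$, so the sender of $m$ must be $c$. The only cosmetic differences are that you make the uniqueness of the token's binding entry explicit (via freshness of $\nu_2$ and code invalidation) where the paper leaves it implicit, and you bypass the paper's appeal to Lemma~\ref{lemma:client-authentication} by tracing the key directly from the client id stored in the binding.
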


\begin{proof}
  Let $m$ be a message with
  $\str{MTLS\_AuthN} \in m.\str{body}$ containing the access token $t \equiv m.\str{header}[\str{Authorization}]$
  such that a response is sent in Line~\ref{rs-send-wNonce} of Algorithm~\ref{alg:rs-oidc}.

  This implies that the $\str{stop}$ in Line~\ref{rs-retrieve-mtlsReq} 
  and Line~\ref{rs-check-mtls-at} are not executed, and therefore, it holds true that
  \begin{align}
    \notag & m.\str{body}[\str{MTLS\_AuthN}] \equiv S(\mi{rs}).\str{mtlsRequests}.\mi{ptr}.1 \\
    \notag \wedge \; &  \mathsf{check\_mtls\_AT}(\mi{id}, t, S(\mi{rs}).\str{mtlsRequests}.\mi{ptr}.2,
    s^\mi{rs}_0.\str{authServ}) \equiv \True 
  \end{align}
  for some $\mi{ptr}$ and $\mi{id}$, and as $s^\mi{rs}_0.\str{authServ}$ is never
  changed. This is equivalent to 
  \begin{align}
    & m.\str{body}[\str{MTLS\_AuthN}] \equiv S(\mi{rs}).\str{mtlsRequests}.\mi{ptr}.1 \\
    \wedge \;  &  \an{\str{MTLS}, \mi{id}, \mi{clientId}, t, S(\mi{rs}).\str{mtlsRequests}.\mi{ptr}.2, \str{rw}} \in 
    S(\mathsf{dom}^{-1}(s^\mi{rs}_0.\str{authServ})).\str{accessTokens} \label{mtls-bound-at-proof-1}
  \end{align}
  for the client id $\mi{clientId}$ of $c$ (as the access token is bound to $c$) and per definition
  of $\mathsf{check\_mtls\_AT}$ in Appendix~\ref{fapi-def}.

  An entry as in \ref{mtls-bound-at-proof-1} is only created in Line~\ref{as-creates-mtls-token-binding} 
  of Algorithm~\ref{alg:as-fapi} ($\mathsf{PROCESS\_HTTPS\_REQUEST}$).

  As shown in Lemma~\ref{lemma:client-authentication}, the request $m'$ that led to the creation of this sequence
  was sent from the honest client $c$ with client id $\mi{clientId}$ (intuitively, the client of type
  $\str{conf\_MTLS}$ has authenticated itself at the token endpoint; more formally, the preconditions of the lemma
  are fulfilled as there is always a response sent
  by the authorization server after adding an entry to $\mi{accessTokens}$ and the corresponding request contained
  $m'.\str{body}[\str{client\_id}] \equiv \mi{clientId}$). 

  The value $\mi{mtlsInfo}.2$ in Line~\ref{as-creates-mtls-token-binding} of Algorithm~\ref{alg:as-fapi} is retrieved
  from $S'(\mi{as}).\str{mtlsRequests}[\mi{clientId}]$ (in Line~\ref{as-token-ep-retrieve-mtlsInfo}, for
  a state $S'$ prior to $S$ within the run). 
  The entries of $\str{mtlsRequests}$ are only created in Line~\ref{as-add-mtlsInfo} of Algorithm~\ref{alg:as-fapi},
  where the entry corresponding to the key $\mi{clientKey}$ is taken from $s^{\mi{as}}_0.\str{clients}[\mi{clientId}][\str{mtls\_key}]$ 
  (due to Line~\ref{line:as-choses-mtls-key} and the entries of $\mi{clients}$ not being changed by the authorization server).
  Per definition, this key has the value $\mi{keyMapping}[\mi{dom_{c}}]$ with $\mi{dom_c} \in \mathsf{dom}(c)$.

  This key is equal to $S(\mi{rs}).\str{mtlsRequests}.\mi{ptr}.2$ (due to \ref{mtls-bound-at-proof-1}). This sequence
  was added to
  $S(\mi{rs}).\str{mtlsRequests}$ in Line~\ref{rs-adds-mtlsreq} of 
  Algorithm~\ref{alg:rs-oidc}, where the key was taken from the corresponding request
  (Line~\ref{rs-get-pub-key-mtls}). The nonce chosen in Line~\ref{rs-chooses-mtls-nonce} is sent as \\
  $\mathsf{enc_a}(\an{\mi{mtlsNonce}, \mi{keyMapping}(m.\str{host})},\mi{clientKey})$ (Line~\ref{rs-encrypts-mtlsnonce}),
  where the asymmetric key is the public key of $c$ (as shown above; the key is the same as 
  the key that the authorization server used for encrypting a nonce for $c$).
  As shown in Lemma~\ref{lemma:https-server-correct-hosts}, $m.\str{host}$ is a domain of the resource server.

  Analogous to Lemma~\ref{lemma:as-mtls-nonce-does-not-leak}, this nonce is only known to $c$ 
  (and $\mi{rs}$) and
  does not leak to any other process.
  It follows that the request $m$ was sent by $c$,
  and therefore, the response is sent back to $c$.
  \QED

\end{proof}

\begin{lemma}[Private OAUTB Key does not leak]\label{lemma:oautb-priv-key-does-not-leak}
  For any run $\rho$
  of a FAPI web system $\fapiwebsystem$
  with a network attacker, 
  every configuration $(S, E, N)$  in $\rho$,
  every authorization server $\mi{as}$ that is honest in $S$,
  every read-write client $c$ 
  of type  $\str{pub}$ or $\str{conf\_OAUTB}$
  that is honest in $S$
  with client id $\mi{clientId}$ issued to $c$ 
  by $\mi{as}$,
  every $\mi{id} \in \mathsf{ID}^\mi{as}$, every access token $t$ %
  and every process $p$
  it holds true that:
  
  If 
  $\an{\str{OAUTB}, \mi{id},  \mi{clientId}, t, %
    \mathsf{pub}(\mi{TB\_ref\_key}), \str{rw}} 
  \in S(\mi{as}).\str{accessTokens}$ and
  $\mi{TB\_ref\_key} \in d_{\emptyset}(S(p))$, then  $p = c$.

\end{lemma}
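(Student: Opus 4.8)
The plan is to reduce the statement to two independent claims: (A) the referred public key $\mathsf{pub}(\mi{TB\_ref\_key})$ that the AS stored together with the bound access token is in fact the public part of one of the honest client $c$'s own token binding keys, i.e.\ $\mi{TB\_ref\_key} \equiv s^c_0.\str{tokenBindings}[d]$ for some domain $d$; and (B) none of $c$'s private token binding keys is derivable by any process other than $c$. Claim (A) pins down \emph{which} nonce $\mi{TB\_ref\_key}$ is, and claim (B) then forces $p = c$. Since $\mathsf{pub}(\cdot)$ is injective in the equational theory, for (A) it suffices to show that the value $\mi{TB\_referred\_pub}$ recorded in the entry equals $\mathsf{pub}(s^c_0.\str{tokenBindings}[\cdot])$.

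For claim (A), I would first note that an $\an{\str{OAUTB}, \ldots}$ entry is appended to $S(\mi{as}).\str{accessTokens}$ only in Line~\ref{as-creates-oautb-token-binding} of Algorithm~\ref{alg:as-fapi}, while $\mi{as}$ processes some token request $m$ at its (honest) token endpoint, and there $\mi{TB\_referred\_pub}$ is taken verbatim from the referred Token Binding message of $m$. The core step is therefore to prove that $m$ was sent by $c$: once this holds, $m$ was assembled in $\mathsf{SEND\_TOKEN\_REQUEST}$ (Algorithm~\ref{alg:client-token-request}, OAUTB branch), where the referred id is set to $\mathsf{pub}(s'.\str{tokenBindings}[\mi{session}[\str{RS}]])$, so $\mi{TB\_ref\_key}$ is one of $c$'s preconfigured token binding nonces, which gives (A). I would split according to the type of $c$. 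If $c$ is of type $\str{conf\_OAUTB}$, then $c$ is confidential, and since the AS emits a response in Line~\ref{as-send-token-ep-readwrite} for a request carrying $m.\str{body}[\str{client\_id}] \equiv \mi{clientId}$, Lemma~\ref{lemma:client-authentication} immediately yields that $m$ originates from $c$.

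The public-client subcase is where the work lies and is the main obstacle, since there is no client authentication to invoke. Here the plan is to chain the PKCE binding with EKM non-leakage: the token endpoint accepts $m$ only if $\mathsf{hash}(\mi{TB\_provided\_pub}) \equiv \mi{record}[\str{pkce\_challenge}]$, and by Lemma~\ref{lemma:pub-rw-pkce-cc-created-by-c} this challenge was created by $c$, namely as $\mathsf{hash}(\mathsf{pub}(s^c_0.\str{tokenBindings}[\mi{dom}_\mi{as}]))$ in Line~\ref{pub-rw-client-creates-pkce-cc}; collision-freeness of $\mathsf{hash}$ forces $\mi{TB\_provided\_pub}$ to be $c$'s AS token binding public key (the auth and token endpoints share the same host $\mi{dom}_\mi{as}$ by the initial-state requirements, so this is the same key $c$ uses for the provided message). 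The endpoint also checks $\mathsf{extractmsg}(\mi{TB\_provided\_sig}) \equiv \mi{ekmInfo}$ for some $\mi{ekmInfo} \in S(\mi{as}).\str{oautbEKM}$, and by Lemma~\ref{lemma:https-server-correct-hosts} every such value has the shape $\mathsf{hash}(\an{n_1, n_2, k})$ with $k$ a public key of the honest $\mi{as}$. Hence $\mi{TB\_provided\_sig} \equiv \mathsf{sig}(\mi{ekmInfo}, s^c_0.\str{tokenBindings}[\mi{dom}_\mi{as}])$ is exactly a term covered by Lemma~\ref{lemma:ekm-signed-by-c-does-not-leak}, which says it is derivable only by $c$ and $\mi{as}$. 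As $m$ is received at $\mi{as}$ and an honest server never sends to itself, the sender of $m$ must be $c$. This argument is robust against the misconfigured-token-endpoint assumption for read-write clients: even if $c$ ran an OAUTB-prepare handshake elsewhere, the fact that the \emph{honest} $\mi{as}$ recorded the entry forces $\mi{ekmInfo}$ to carry $\mi{as}$'s key, so the relevant signed EKM is one of the non-leaking ones.

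For claim (B), I would run the standard Dolev--Yao secrecy argument by induction over the run. The keys $s^c_0.\str{tokenBindings}[d]$ are distinct nonces occurring only in $c$'s initial state; $c$ is honest (hence uncorrupted) throughout the run up to $S$; and in every algorithm that touches $\str{tokenBindings}$ (namely $\mathsf{START\_LOGIN\_FLOW}$, $\mathsf{SEND\_TOKEN\_REQUEST}$, and $\mathsf{USE\_ACCESS\_TOKEN}$) such a key appears in an output message only inside $\mathsf{pub}(\cdot)$ or $\mathsf{sig}(\cdot,\cdot)$, never on its own. Since the equational theory provides no way to recover a private key from these terms, no process $p' \neq c$ ever extends its derivable knowledge by such a key. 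Combining (A) and (B): $\mi{TB\_ref\_key}$ is one of $c$'s private token binding keys, so $\mi{TB\_ref\_key} \in d_{\emptyset}(S(p))$ entails $p = c$, as claimed.
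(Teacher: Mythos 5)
Your proposal is correct and follows essentially the same route as the paper's proof: the confidential case is discharged via Lemma~\ref{lemma:client-authentication}, the public case via Lemma~\ref{lemma:pub-rw-pkce-cc-created-by-c} combined with Lemma~\ref{lemma:ekm-signed-by-c-does-not-leak}, and in both cases the referred key is traced back to $s^c_0.\str{tokenBindings}$, which is preconfigured and never transmitted. Your explicit (A)/(B) decomposition and the spelled-out Dolev--Yao secrecy induction for (B) are merely a cleaner packaging of what the paper states in its closing paragraph.
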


\begin{proof} 

Let $m$ be the token request that led to the creation of the sequence 
in $S(\mi{as}).\str{accessTokens}$. It holds true that 
$m.\str{body}[\str{client\_id}] \equiv \mi{clientId}$, as this value is included in the
sequence 
$\an{\str{OAUTB}, \mi{id},  \mi{clientId}, t,  \mathsf{pub}(\mi{TB\_ref\_key}), \str{rw}}$
(due to Line~\ref{as-token-ep-cid-from-body} of Algorithm~\ref{alg:as-fapi}).

  \begin{description} 
    \item[Case 1: Client of Type $\str{conf\_OAUTB}$]\strut

After the sequence is added to 
$S(\mi{as}).\str{accessTokens}$ (in Line~\ref{as-creates-oautb-token-binding} of Algorithm~\ref{alg:as-fapi}),
there are no further checks that lead to a $\str{stop}$, which means that 
a response is sent in Line~\ref{as-send-token-ep-readwrite}.
As shown in Lemma~\ref{lemma:client-authentication},
$m$ was sent by $c$.

The only
place where an honest client of type $\str{conf\_OAUTB}$
sends a message to the token endpoint of an authorization server 
is in Line~\ref{line:jws-oautb-cont-send} of Algorithm~\ref{alg:client-token-request}.
This means that $m$ was created by $c$ in Line~\ref{line:jws-oautb-msg}
of Algorithm~\ref{alg:client-token-request}.

Let $\mi{TB\_ref\_pub\_key}$ be the key stored in the sequence shown above (in $S(\mi{as}).\str{accessTokens}$).

It holds true that
\begin{align*}
  & \mi{TB\_ref\_pub\_key} & \\
  \equiv \; & m.\str{headers}[\str{Sec\mhyphen{}Token\mhyphen{}Binding}][\str{ref}][\str{id}] 
  & \text{(Line~\ref{as-token-ep-ref-tb-key-1}, Line~\ref{as-token-ep-ref-tb-key-2} of Alg.~\ref{alg:as-fapi})} \\
  \equiv \; & \mathsf{pub}(s^c_0.\str{tokenBindings}[\mi{t}]) & \text{(Line~\ref{c-str-retrieve-rs-tb}, Line~\ref{c-str-compose-ref-tb-msg}, Line~\ref{c-str-sec-tb-header} of Alg.~\ref{alg:client-token-request})} 
\end{align*}

for some term $t$ (which is the domain of a resource server, but not relevant at this point).
We also note that the value of $\str{tokenBindings}$ of the client state is the same as in the
initial state as it is not changed by the client.

    \item[Case 2: Client of Type $\str{pub}$]\strut

Let $\mi{record}$ be the record chosen in Line~\ref{as-token-ep-retrieves-record}
of Algorithm~\ref{alg:as-fapi} when $m$ is received at the token endpoint and the 
sequence is added to $S'(\mi{as}).\mi{accessTokens}$ (for a state $S'$
prior to $S$ within the run).
Due to Lines~\ref{as-token-ep-cid-from-body}, \ref{line:tokenep-checks-clientId} and \ref{as-creates-oautb-token-binding},
it holds true that $\mi{record}[\str{client\_id}] \equiv \mi{clientId}$.

As shown in Lemma~\ref{lemma:pub-rw-pkce-cc-created-by-c}, the PKCE challenge contained in
$\mi{record}[\str{pkce\_challenge}]$ was created by $c$. This happens only in
Line~\ref{pub-rw-client-creates-pkce-cc} of Algorithm~\ref{alg:client-fapi-start-login-flow}
($\mathsf{START\_LOGIN\_FLOW}$).

This implies that $\mi{record}[\str{pkce\_challenge}]$ has the value
$\mathsf{hash}(\mathsf{pub}(s^c_0.\str{tokenBindings}[t']))$,
for some term $t'$.
As the private keys in $s^c_0.\str{tokenBindings}$ are preconfigured and never send to any process,
they are only known to $c$.

As $m.\str{body}[\str{client\_id}]$ is the client id of a read-write client of type
$\str{pub}$, 
$m$ is required to contain an OAUTB provided message 
$\mi{TB\_Msg\_provided} \equiv m.\str{headers}[\str{Sec\mhyphen{}Token\mhyphen{}Binding}][\str{prov}]$ 
such that \\
$\mathsf{checksig}(\mi{TB\_Msg\_provided}[\str{sig}], \mi{TB\_Msg\_provided}[\str{id}]) \equiv \True$
(due to Lines~\ref{as-token-ep-get-prov-id}, \ref{as-token-ep-get-prov-sig} and \ref{as-token-ep-check-prov-tb-msg}).

Due to Line~\ref{as-rw-pub-check-pkce}, it holds true that
$\mathsf{hash}(\mi{TB\_Msg\_provided}[\str{id}]) \equiv \mi{record}[\str{pkce\_challenge}]$,
which is equal to $\mathsf{hash}(\mathsf{pub}(s^c_0.\str{tokenBindings}[t']))$.
Therefore, 
$\mi{TB\_Msg\_provided}[\str{id}] \equiv \mathsf{pub}(s^c_0.\str{tokenBindings}[t'])$.
This means that 
$\mi{TB\_Msg\_provided}[\str{sig}]$ was signed by $c$ (as only $c$ knows the corresponding 
private key). 

Due to Line~\ref{as-token-ep-check-prov-tb-msg}, it follows that 
the provided Token Binding message is equal to $\mi{ekmInfo}$, which is
taken from $S'(\mi{as}).\str{oautbEKM}$. These values are only
created in Line~\ref{as-create-and-add-oautbEKM} of Algorithm~\ref{alg:as-fapi}.
Therefore, $\mi{ekmInfo}$ is equal to $\mathsf{hash}(\an{n_1, n_2, \mathsf{keyMapping}(\mi{dom}_\mi{as})})$,
for some values $n_1$, $n_2$.

As shown in Lemma~\ref{lemma:ekm-signed-by-c-does-not-leak}, 
the signed EKM value does not leak, and therefore, $m$ was sent by $c$.

To conclude this case, we note that $m$ was created in Algorithm~\ref{alg:client-token-request},
where the key of the referred Token Binding is exactly the same as in Case 1. 

  \end{description} 
In both cases, the private key $\mi{TB\_ref\_key}$ is taken from 
$s^c_0.\str{tokenBindings}$, which is preconfigured and never sent
to any other process.
Consequently, only $c$ knows this value. 
\QED

\end{proof}

\begin{lemma}[Access Token bound via OAUTB can only be used by Honest Client]\label{lemma:oautb-bound-at}
  For any run $\rho$
  of a FAPI web system $\fapiwebsystem$
  with a network attacker, 
  every configuration $(S, E, N)$  in $\rho$,
  every authorization server $\mi{as}$ that is honest in $S$,
  every read-write client $c$ 
  of type  $\str{conf\_OAUTB}$ or $\str{pub}$
  that is honest in $S$
  with client id $\mi{clientId}$ issued to $c$ 
  by $\mi{as}$,
  every access token $t$ bound to $c$ (via OAUTB, as defined in Appendix~\ref{fapi-def}),
  every resource server $\mi{rs}$ that is honest in $S$
  with $\mi{dom}_\mi{rs} \in s_0^\mi{as}.\str{resource\_servers}$
  (with $\mi{dom}_\mi{rs} \in \mathsf{dom}(\mi{rs})$)
  and every message $m$ received at an URL 
  $\an{\tUrl, \https, \mi{dom_{\mi{rs}}}, \str{/resource\mhyphen{}rw},
  \mi{param}, \mi{frag}}$ with arbitrary $\mi{param}$ and $\mi{frag}$
  with $\str{MTLS\_AuthN} \not \in m.\str{body}$ and access token $t \equiv m.\str{header}[\str{Authorization}]$
  it holds true that:
  
  If a response to $m$ is sent, then the receiver is $c$.

\end{lemma}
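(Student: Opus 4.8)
The plan is to mirror the structure of the proof of Lemma~\ref{lemma:mtls-bound-at}, replacing the mTLS proof-of-possession argument by the OAUTB signature argument. First I would observe that, since a response to $m$ is emitted in Line~\ref{rs-send-wNonce} of Algorithm~\ref{alg:rs-oidc} and $\str{MTLS\_AuthN} \not\in m.\str{body}$, the resource server must have taken the OAUTB branch, so none of the $\str{stop}$ statements from Line~\ref{rs-rw-retrieve-oautbEKM} onward were executed. In particular, the checks in Lines~\ref{rs-rw-check-oautb-sig}, \ref{rs-rw-check-oautb-ekm}, and \ref{rs-check-oautb-binding} all passed. From the last of these, unfolding the definition of $\mathsf{check\_oautb\_AT}$ in Appendix~\ref{fapi-def} and using that $s^\mi{rs}_0.\str{authServ} \in \mathsf{dom}(\mi{as})$ (which follows from $\mi{dom}_\mi{rs} \in s_0^\mi{as}.\str{resource\_servers}$ and the requirement on resource servers supported by $\mi{as}$), I obtain a sequence $\an{\str{OAUTB}, \mi{id}, \mi{clientId}', t, \mi{TB\_provided\_pub}, \str{rw}} \in S(\mi{as}).\str{accessTokens}$, where $\mi{TB\_provided\_pub}$ is the provided Token Binding identifier carried in $m$.

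Next I would use that access tokens are fresh nonces chosen by $\mi{as}$, so there is at most one entry in $S(\mi{as}).\str{accessTokens}$ for a given $t$; combined with the hypothesis that $t$ is bound to $c$ (Definition~\ref{def:bound-token}), this entry is exactly the one witnessing the binding, whence $\mi{clientId}' = \mi{clientId}$ and $\mi{TB\_provided\_pub} = \mathsf{pub}(\mi{TB\_ref\_key})$ for the referred key $\mi{TB\_ref\_key}$ of that binding. Lemma~\ref{lemma:oautb-priv-key-does-not-leak} then tells me that $\mi{TB\_ref\_key}$ is known to no process other than $c$, and (as in its proof) that $\mi{TB\_ref\_key} = s^c_0.\str{tokenBindings}[\mi{dom}_\mi{rs}]$.

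The crux is to show that $m$ was actually emitted by $c$ rather than replayed by the attacker. Here I would invoke the freshness of the exported keying material: the check in Line~\ref{rs-rw-check-oautb-ekm} forces the provided signature in $m$ to verify against an $\mi{ekmInfo} \in S(\mi{rs}).\str{oautbEKM}$, and such values are created only at the $\str{/OAUTB\mhyphen{}prepare}$ path (Line~\ref{rs-creates-and-adds-oautbEKM-to-state}), where they have the form $\mathsf{hash}(\an{n_1, \nu_2, \mi{keyMapping}[\mi{dom}_\mi{rs}]})$ and thus embed the resource server's own public key. Since the signing key is $s^c_0.\str{tokenBindings}[\mi{dom}_\mi{rs}]$ and $\mi{keyMapping}[\mi{dom}_\mi{rs}]$ is a public TLS key of the honest $\mi{rs}$ (Lemma~\ref{lemma:https-server-correct-hosts} fixes $m.\str{host} \in \mathsf{dom}(\mi{rs})$), Lemma~\ref{lemma:ekm-signed-by-c-does-not-leak} applies and shows that this signed EKM does not leak to any process distinct from $\mi{rs}$ and $c$. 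Consequently the attacker cannot have assembled $m$, and since $\mi{rs}$ does not send such requests to itself, $m$ was sent by $c$; the response in Line~\ref{rs-send-wNonce} therefore goes to $c$.

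I expect the main obstacle to be this last step --- more precisely, rigorously tying the signature carried in $m$ to exactly the instance covered by Lemma~\ref{lemma:ekm-signed-by-c-does-not-leak}. This requires pinning down that the $\mi{ekmInfo}$ matched at the resource server is a fresh value generated by $\mi{rs}$ itself (so that the signature cannot be one $c$ precomputed for a different connection or endpoint), and that the public key embedded in the EKM is indeed $\mi{rs}$'s, which in turn relies on Lemma~\ref{lemma:https-server-correct-hosts}. The uniqueness-of-$t$ argument is routine but must be stated explicitly, since without it one cannot conclude that the provided identifier checked by the resource server equals the referred key stored with the access-token binding, and hence cannot connect the passing checks to Lemma~\ref{lemma:oautb-priv-key-does-not-leak}.
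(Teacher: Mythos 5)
Your proposal is correct and follows essentially the same route as the paper's proof: the passing checks at \str{/resource\mhyphen{}rw} yield the OAUTB entry in $S(\mi{as}).\str{accessTokens}$, Lemma~\ref{lemma:oautb-priv-key-does-not-leak} pins the provided Token Binding key to $c$, and Lemma~\ref{lemma:ekm-signed-by-c-does-not-leak} (with Lemma~\ref{lemma:https-server-correct-hosts} fixing the embedded public key to $\mi{rs}$) rules out replay, so $m$ came from $c$. Your explicit uniqueness argument for the access-token entry is a detail the paper leaves implicit, but it is the right justification for identifying $\mi{TB\_provided\_pub}$ with the key stored in the binding.
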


\begin{proof}
  Let $m$ be a message 
  with $\str{MTLS\_AuthN} \not \in m.\str{body}$ and access token $t \equiv m.\str{header}[\str{Authorization}]$
  such that a response to this message is sent in Line~\ref{rs-send-wNonce} of Algorithm~\ref{alg:rs-oidc}.

  This implies that all applicable checks until Line~\ref{rs-send-wNonce} have passed successfully.
  Therefore, it holds true that 
  \begin{align}
     & \mathsf{checksig}(\mi{TB\_prov\_sig}, \mi{TB\_prov\_pub}) \equiv \True
         & \text{(L.~\ref{rs-rw-check-oautb-sig})} \label{lemma-oautb-binding-check-1}\\ %
     & \mi{TB\_prov\_msg} \equiv S(\mi{rs}).\str{oautbEKM}.\mi{ptr}
         & \text{(L.~\ref{rs-rw-check-oautb-ekm} and \ref{rs-rw-retrieve-oautbEKM})} \label{lemma-oautb-binding-check-2}\\
     & \mathsf{check\_oautb\_AT}(\mi{id}, t, \mi{TB\_prov\_pub}, s^\mi{rs}_0.\str{authServ}) 
               \equiv \True	
          & \text{(L.~\ref{rs-check-oautb-binding})} \label{lemma-oautb-binding-check-3} 
  \end{align}

  for some $\mi{ptr}$, $\mi{id}$ and 
  with $\mi{TB\_prov\_sig} := m.\str{headers}[\str{Sec\mhyphen{}Token\mhyphen{}Binding}][\str{prov}][\str{sig}]$, \\
  $\mi{TB\_prov\_pub} := m.\str{headers}[\str{Sec\mhyphen{}Token\mhyphen{}Binding}][\str{prov}][\str{id}]$ and \\
  $\mi{TB\_prov\_msg} := \mathsf{extractmsg}(\mi{TB\_prov\_sig})$. 
  (In this case, $\mi{id}$ can be an arbitrary identity. Even if this is the identity of
  an attacker, the token is still bound to the client). We also note that 
  $s^\mi{rs}_0.\str{authServ}$ is never changed by the resource server.

  Due to \ref{lemma-oautb-binding-check-3}, it follows that \\
  $\an{\str{OAUTB}, \mi{id},  \mi{clientId}, t, \mi{TB\_prov\_pub}, \str{rw}}  \in S(\mathsf{dom}^{-1}(s^\mi{rs}_0.\str{authServ})).\str{accessTokens}$
  (per definition of $\mathsf{check\_oautb\_AT}$), for the client id $\mi{clientId}$ of $c$ (as the access token is bound to $c$).

  Combining this with \ref{lemma-oautb-binding-check-1} %
  and Lemma~\ref{lemma:oautb-priv-key-does-not-leak}, it follows that
  the provided Token Binding message was created by $c$ (as the signature is valid and the corresponding private key is only
  known to $c$).

  We conclude the proof by showing that such a Token Binding message is directly sent from the client to the
  resource server, and therefore, cannot leak to and be used by another process. 

  Due to \ref{lemma-oautb-binding-check-2}, $c$ has signed 
	$\mathsf{hash}(\an{n_1, n_2, \mi{keyMapping}(dom_\mi{rs})})$ 
  for some values $n_1, n_2$ and with $dom_\mi{rs} \in \mathsf{dom}(\mi{rs})$
  (the only place where these values are created and added to the state of the
  resource server is in Line~\ref{rs-creates-and-adds-oautbEKM-to-state}. As shown
  in Lemma~\ref{lemma:https-server-correct-hosts}, the host of messages returned by
  the HTTPS generic server is a domain of the resource server).

  As shown in Lemma~\ref{lemma:ekm-signed-by-c-does-not-leak}, the signed EKM value
  does not leak to another process. Therefore, $m$ was sent by $c$, which means that the 
  response containing the resource access nonce is sent back to $c$.
  \QED

\end{proof}

\subsection{Authorization}
\begin{lemma}[Authorization]\label{theorem:authorization}
  For every run $\rho$
  of a FAPI web system $\fapiwebsystem$
  with a network attacker, 
  every configuration $(S, E, N)$  in $\rho$,
  every authorization server $\mi{as} \in \fAP{AS}$
  that is honest in $S$ with $s^\mi{as}_0.\str{resource\_servers}$ being
  domains of honest resource servers,
  every identity $\mi{id} \in \mathsf{ID}^\mi{as}$ 
  with $b = \mathsf{ownerOfID}(\mi{id})$
  being an honest browser in $S$,
  every client $c \in \fAP{C}$ that is honest in $S$
  with client id $\mi{clientId}$ issued to $c$ by $\mi{as}$,
  every resource server $\mi{rs} \in \fAP{RS}$
  that is honest in $S$ such that
  $\mi{id} \in s^\mi{rs}_0.\str{ids}$,
  $s^\mi{rs}_0.\str{authServ} \in \mathsf{dom}(\mi{as})$ and
  with $\mi{dom}_\mi{rs} \in s_0^\mi{as}.\str{resource\_servers}$
  (with $\mi{dom}_\mi{rs} \in \mathsf{dom}(\mi{rs})$),
  every access token $t$ associated with $c$, $\mi{as}$ and $\mi{id}$
  and every resource access nonce $r \in s^\mi{rs}_0.\str{rNonce}[\mi{id}] \cup s^\mi{rs}_0.\str{wNonce}[\mi{id}]$
  it holds true that:

  If $r$ is contained in a response to a request $m$ sent to $\mi{rs}$
  with $t \equiv m.\mi{header}[\str{Authorization}]$,
  then 
  $r$ is not derivable from the attackers knowledge in $S$
  (i.e., $r \not\in d_{\emptyset}(S(\fAP{attacker}))$).

\end{lemma}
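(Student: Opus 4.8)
The plan is to follow the case structure of Definition~\ref{def:AT-associated-with-c}: the witnessing sequence $s \in S(\mi{as}).\str{accessTokens}$ has exactly one of three shapes, corresponding to a read-only token ($\str{r}$), an mTLS-bound read-write token, or an OAUTB-bound read-write token. First I would show, in each of these cases, that the honest resource server $\mi{rs}$ can only ever hand the resource access nonce $r$ to the honest client $c$, never to the attacker directly. For the read-only shape this is Lemma~\ref{lemma:r-at-does-not-leak}: since $s_0^\mi{as}.\str{resource\_servers}$ consists of honest servers and a read client always directs its resource request to one of them, the token $t$ is non-derivable by the attacker; as $\mi{rs}$ emits $r$ only in reply to a request carrying $t$ in its Authorization header, and neither $\mi{as}$ nor $\mi{rs}$ issues such requests, the request $m$ must come from $c$, so the response returns to $c$. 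For the two read-write shapes the token may have leaked, but Lemmas~\ref{lemma:mtls-bound-at} and~\ref{lemma:oautb-bound-at} state precisely that whenever $\mi{rs}$ returns a response at Line~\ref{rs-send-wNonce}, its receiver is $c$. Hence in all three cases $r$ reaches only $c$.

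The second half establishes that $c$ itself never forwards $r$ to the attacker, splitting on app versus web server client (Definition~\ref{def:webServerClient}). For app clients, the resource-usage branch of Algorithm~\ref{alg:client-fapi-http-response} stores $r$ and then executes a plain stop (Line~\ref{line:app-c-do-not-send-resource}) without placing $r$ on the network; since an honest client transmits $r$ nowhere else, $r$ remains non-derivable. For web server clients $c$ returns $r$ as the HTTP response to whichever process sent the request to its redirection endpoint (Line~\ref{line:client-send-resource}), so it suffices to prove that this sender is not the attacker.

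To identify the redirection-endpoint sender I would split once more on the profile. In the read-only case, Lemma~\ref{lemma:code-token-req-from-redirect-ep} shows that the code $c$ exchanged for $t$ is exactly the code delivered in that redirection request; because $t$ is associated with $(c,\mi{as},\mi{id})$, the AS issued it against a record whose subject is $\mi{id}$, so this code has subject $\mi{id}$ and client $c$; Lemma~\ref{lemma:r-web-server-client-code} then says such a code is non-derivable by the attacker, so the attacker could not have built the request, and $r$ goes to an honest party. The read-write case is the main obstacle, since both the access-token-leak and the misconfigured-token-endpoint assumptions are in force and the code may be in the attacker's hands. Here I would instead rely on the mitigations modelled into the flow: the second id token (Hybrid) or the response JWS (JARM) must carry $\str{at\_hash} \equiv \mathsf{hash}(t)$ and be signed by $\mi{as}$, and $c$ relays its issuer to $\mi{rs}$ as $\str{at\_iss}$, which $\mi{rs}$ checks against its fixed $\str{authServ}$. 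Using Lemmas~\ref{lemma:id-token-conf-from-tokenep}, \ref{lemma:id-token-web-server-client} and~\ref{lemma:req-jws-does-not-leak} (none of these tokens leak), I would argue the attacker can neither supply the required id token or JWS through a misconfigured endpoint, nor, by steering $c$ toward a dishonest AS, survive the $\str{at\_iss}$ check at $\mi{rs}$ (whose $\str{authServ}$ is a domain of the honest $\mi{as}$, so $r$ would not have been returned at all in that sub-scenario); the last possibility, that the attacker merely injects an $\mi{id}$-bound code into its own session at $c$, is closed by these same non-leakage facts, which force $c$ to abort when validating the second id token or response JWS. The genuine difficulty is the bookkeeping across these interleaved read-write sub-cases—checking that \emph{every} control path by which an honest web server $c$ could reach Line~\ref{line:client-send-resource} on behalf of the attacker is blocked.
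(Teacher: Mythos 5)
Your proposal is correct and follows essentially the same route as the paper's proof: part one shows the honest resource server only returns $r$ to $c$ (via Lemmas~\ref{lemma:r-at-does-not-leak}, \ref{lemma:mtls-bound-at}, \ref{lemma:oautb-bound-at}), and part two shows the honest client never forwards $r$ to the attacker, splitting app versus web server client and then read versus read-write profile with exactly the code-non-leakage and id-token/response-JWS-non-leakage lemmas the paper uses. The only cosmetic difference is that you case-split on the shape of the witnessing sequence in $\str{accessTokens}$ where the paper splits on the resource path, which is equivalent.
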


\begin{proof}

Let $c$, $\mi{as}$, $\mi{rs}$, $t$, $r$ and $m$ be given as in the description of the lemma.

\begin{description} 
 \item[Resource Server never sends Resource Access Nonce $r$ to Attacker.]\strut

We assume that the resource server sends $r$ to the attacker.
Consequently, the attacker sent the message $m$ with the access token $t$ to either
the
$\str{/resource\mhyphen{}r}$ or the 
$\str{/resource\mhyphen{}rw}$ path of the resource server.

\textbf{Case 1:  $\str{resource\mhyphen{}r}$}

As the attacker receives $r$, we conclude that the check done in Line~\ref{rs-resource-r-check-at}
of Algorithm~\ref{alg:rs-oidc} ($\mathsf{PROCESS\_HTTPS\_REQUEST}$) passes successfully
and the identity chosen is this line is $\mi{id}$ (as a resource nonce associated with $\mi{id}$ is returned).

Therefore, it holds true that
$\mathsf{check\_read\_AT}(\mi{id}, t, s^\mi{rs}_0.\str{authServ}) \equiv \True$
($S(\mi{rs}).\str{authServ}$ is the same as in the initial state, as it is not modified by the resource server),
and it follows that
$\an{\mi{id},  \mi{clientId}, t,  \str{r}}  \in %
S(\mi{as}).\str{accessTokens}$
(per definition of $\mathsf{check\_read\_AT}$ and as $t$ is associated with $c$). 

This sequence is only added to $\str{accessTokens}$ by the authorization server
if $c$ is a read client (L.~\ref{as-adds-sequence-for-r-client} of Alg.~\ref{alg:as-fapi}).

This means that the attacker is in possession of a valid access token for the read client
$c$, which is a contradiction to Lemma~\ref{lemma:r-at-does-not-leak}.

\textbf{Case 2:  $\str{resource\mhyphen{}rw}$}

If $\str{MTLS\_AuthN} \in m.\str{body}$, then 
the check done in Line~\ref{rs-check-mtls-at} passes successfully,
which means that 
$\an{\str{MTLS}, \mi{id}, \mi{clientId}, t , \mi{key}, \str{rw}} \in S(\mi{as}).\str{accessTokens}$,
for some $\mi{key}$. This means that
the access token $t$ is bound to $c$
via mTLS and that the client $c$ is a read-write client 
(due to L.~\ref{as-adds-sequence-for-rw-client} of Alg.~\ref{alg:as-fapi}). 
Therefore, Lemma~\ref{lemma:mtls-bound-at} holds true,
which is a contradiction to the assumption that the response %
is sent to the attacker. 

Otherwise, $\str{MTLS\_AuthN} \not \in m.\str{body}$, and the access token $t$ is bound to $c$
via OAUTB (with the same reasoning as in the case of mTLS shown above).
Furthermore, $c$ is a read-write client of
type $\str{conf\_OAUTB}$ or $\str{pub}$.
Now, Lemma~\ref{lemma:oautb-bound-at} holds true, which is again a contradiction to 
the assumption that the response containing $r$
is sent to the attacker. 

\end{description} 

We highlight that if $c$ is a read-write client, the resource server sends $r$ 
only to $c$, which follows directly from 
Lemma~\ref{lemma:mtls-bound-at} and
Lemma~\ref{lemma:oautb-bound-at}.

\begin{description} 

 \item[Client never Sends Resource Access Nonce $r$ to Attacker.]\strut

In the following, we will show that the resource nonce $r$ received in 
Line~\ref{client-receives-resource}
of Algorithm~\ref{alg:client-fapi-http-response}
($\mathsf{PROCESS\_HTTPS\_RESPONSE}$)
is not sent to the attacker.

\textbf{Case 1:} $c$ is an app client %

In this case, the resource nonce is not sent at all in Algorithm~\ref{alg:client-fapi-http-response},
due to the check in Line~\ref{ws-client-send-r-to-browser}.
(Intuitively, the resource is used directly by the app).

\textbf{Case 2:} $c$ is a web server client %

We assume that the resource nonce received in Line~\ref{client-receives-resource}
of Algorithm~\ref{alg:client-fapi-http-response} is sent to the attacker.

The only place where the resource nonce is sent by the client is in Line~\ref{line:client-send-resource}
of Algorithm~\ref{alg:client-fapi-http-response}, as this is the only
place where the client uses the resource nonce. The nonce saved in the session
in Line~\ref{client-saves-resource} is not used by the client at any other place.
The resource nonce is sent to $\mi{request}[\str{sender}]$, where
$\mi{request}$ is retrieved from
$S(c).\str{sessions}[\mi{sid}][\str{redirectEpRequest}]$,
for some $\mi{sid}$.

The only place where 
$S(c).\str{sessions}[\mi{sid}][\str{redirectEpRequest}]$
is set by the client is in Line~\ref{line:client-set-redirect-ep-request-record} of 
Algorithm~\ref{alg:client-fapi-http-request} (at the redirection endpoint).
Intuitively, this means that the resource access nonce is sent back to the sender of the request to the redirection endpoint.

Let $m'$ be the corresponding request that was received at 
$\str{/redirect\_ep}$ (L.~\ref{line:client-redir-endpoint}
of Alg.~\ref{alg:client-fapi-http-request}). As we
assume that the resource nonce is sent to the attacker, it follows
that $m'$ was sent by the attacker.
The values of $\str{redirectEpRequest}$ are set to the corresponding values 
of $m'$.

\textbf{Subcase 2.1: read client}\\
If the client is a read client, then the token endpoint is 
chosen correctly. More precisely, 
the access token $t$ that is used by the client is associated with 
$c$, $\mi{as}$ and $\mi{id}$. Per definition, it follows
that
$\an{\mi{id}, \mi{clientId}, t, \str{r}} \in S(\mi{as}).\str{accessTokens}$.
As shown in Lemma~\ref{lemma:r-at-does-not-leak}, it holds true that
the access token $t$ does not leak to the attacker, and therefore, the client
sent the token request to $\mi{as}$ in order to get the access token.

The code included in the token request
can only be provided 
at the redirection endpoint (as shown in 
Lemma~\ref{lemma:code-token-req-from-redirect-ep}), which means that $m'$
contained the code that will be used by the client.
This means that there is a record $\mi{rec}$ at the authorization server
that associates the code with the identity and the client id.
More precisely, it holds true that 
$\mi{rec} \inPairing S'(\mi{as}).\str{records}$ with
$\mi{rec}[\str{code}] \equiv \mi{code}$,
$\mi{rec}[\str{client\_id}] \equiv \mi{clientId}$ and
$\mi{rec}[\str{subject}] \equiv \mi{id}$ (as we assume that
the resource is owned by $\mi{id}$ and due to 
L.~\ref{as-token-ep-retrieves-record}, L.~\ref{line:tokenep-checks-clientId} and L.~\ref{line:as-creates-r-at-sequence}
of Alg.~\ref{alg:as-fapi}),
for a state $S'$ prior to $S$.
As the read client is a web server client, 
all conditions of Lemma~\ref{lemma:r-web-server-client-code}
are fulfilled, which means that the attacker cannot know such a code.

\textbf{Subcase 2.2: read-write client}\\ 
As the resource nonce $r$ was sent to $c$ by $\mi{rs}$, it follows
that $m$ was sent by $c$.

\textbf{Subcase 2.2.1: OpenID Hybrid Flow: }
As Line~\ref{rs-send-wNonce} of Algorithm~\ref{alg:rs-oidc} is executed,
it follows that the check in Line~\ref{rs-check-at-iss} passes successfully,
and therefore, 
$m.\str{body}[\str{at\_iss}] \equiv s^\mi{rs}_0.\str{authServ} \in \mathsf{dom}(\mi{as})$
(the value of $\str{authServ}$ is not changed by an honest resource server and stays the same 
as in the initial state).

The client sends messages to the $\str{resource\mhyphen{}rw}$ path of a resource server
only in Line~\ref{c-sends-msg-to-resource-rw} of Algorithm~\ref{alg:client-use-access-token}
($\mathsf{USE\_ACCESS\_TOKEN}$), hence, 
$m.\str{body}[\str{at\_iss}] \equiv S(c).\str{sessions}[\mi{sid'}][\str{idt2\_iss}]$,
for some $\mi{sid'}$ (Line~\ref{c-sets-at-iss} of 
Algorithm~\ref{alg:client-use-access-token}). %

The value of 
$\str{idt2\_iss}$ is only set in Line~\ref{c-set-idt2-iss}
of Algorithm~\ref{alg:client-fapi-http-response} ($\mathsf{PROCESS\_HTTPS\_RESPONSE}$),
and therefore, the id token received in the token response 
has an issuer value from $\mathsf{dom}(\mi{as})$.

Due to Line~\ref{c-token-resp-check-issuer} of Algorithm~\ref{alg:client-fapi-http-response},
it follows that $\mi{issuer} \in \mathsf{dom}(\mi{as})$, and therefore,
$s^c_0.\str{jwksCache}[\mi{issuer}] \equiv \mathsf{pub}(s^\mi{as}_0.\str{jwk})$
(per definition of $\str{jwksCache}$).

This means that the second id token contained in the token response
was signed by $\mi{as}$ (Line~\ref{c-checksig-second-idt}), as the private key is
only known to $\mi{as}$. Therefore, this id token was created by $\mi{as}$.

As the id token has the correct hash value for the access token $t$ (Line~\ref{c-check-at-hash})
and as the id token was created by $\mi{as}$ (in Line~\ref{as-token-ep-create-id-token} of 
Algorithm~\ref{alg:as-fapi}), it follows that either
$\an{\str{MTLS}, \mi{id'}, \mi{clientId}, t, \mi{key}_1 , \str{rw}} \in S(\mi{as}).\str{accessTokens}$ or
$\an{\str{OAUTB}, \mi{id'}, \mi{clientId}, t, \mi{key}_2 , \str{rw}} \in S(\mi{as}).\str{accessTokens}$,
for some values $\mi{key}_1$, $\mi{key}_2$.

As the access token $t$ is associated with $c$, $\mi{as}$ and $\mi{id}$, it follows
that $\mi{id'} \equiv \mi{id}$, and therefore, the subject attribute of the second id token
is equal to $\mi{id}$ (due to Line~\ref{as-token-ep-add-sub-to-id-token}).

As shown in Lemma~\ref{lemma:id-token-web-server-client}, such an id token does
not leak to an attacker, therefore, we conclude that the token response was sent by $\mi{as}$.

Let $\mi{id\_token\_authep}$ be the id token contained in $m'$
(the message that is received at the redirection endpoint). 

Due to Line~\ref{c-compare-sub-of-both-id-tokens}
of Algorithm~\ref{alg:client-fapi-http-response},
it follows that 
$\mi{id\_token\_authep}[\str{sub}] \equiv \mi{id}$.
Furthermore, 
$\mi{id\_token\_authep}[\str{iss}]$ has the same value
as in the second id token (due to Line~\ref{c-compare-iss-of-both-id-tokens}),
and therefore, 
$\mi{id\_token\_authep}$ was signed by $\mi{as}$ (with the same reasoning as above).

As the client always checks that the audience value of
id tokens are equal to its own client id,
again %
Lemma~\ref{lemma:id-token-web-server-client}
holds true, which is a contradiction to the assumption that 
the client sends the resource access nonce to the attacker, as otherwise,
an id token with a valid signature from $\mi{as}$ with $\mi{id}$
being its subject value and the client id of $c$ 
being its audience value
would have leaked to the attacker.

\textbf{Subcase 2.2.2: Code Flow with JARM:}
As in the case of the Hybrid Flow, the check done in Line~\ref{rs-check-at-iss}
of Algorithm~\ref{alg:rs-oidc} passes successfully and it holds true that
$m.\str{body}[\str{at\_iss}] \equiv s^\mi{rs}_0.\str{authServ} \in \mathsf{dom}(\mi{as})$.

The client sets this value only in Line~\ref{c-sets-at-iss-JARM} of 
Algorithm~\ref{alg:client-use-access-token}, where it is set to
$S(c).\str{sessions}[\mi{sid'}][\str{JARM\_iss}]$ (with $\mi{sid'}$ being the 
session identifier of the corresponding session).

This value is only set in Line~\ref{c-set-JARM-iss} of 
Algorithm~\ref{alg:client-check-response-jws}, where it is set to the issuer
of the response JWS $\mi{respJWS}$.  %
As Algorithm~\ref{alg:client-check-response-jws}
is only called at the redirection endpoint of the client (in Line~\ref{line:c-call-check-response-jws}
of Algorithm~\ref{alg:client-fapi-http-request}), it follows that this response 
JWS was sent by the attacker.

Therefore, we conclude that 
$\mathsf{extractmsg}(\mi{respJWS})[\str{iss}] \in \mathsf{dom}(\mi{as})$.

Due to the checks done in Lines~\ref{c-check-respJWS-retrieve-jwks} 
and \ref{c-check-response-jws-aud} of Algorithm~\ref{alg:client-check-response-jws},
it follows that 
$\mi{jwks} \equiv s_0^c.\str{jwksCache}[\mi{dom\_as}] \equiv \mathsf{pub}(s_0^\mi{as}.\str{jwk})$
(with $\mi{dom\_as} \in \mathsf{dom}(\mi{as})$), which means that the response JWS
is signed by $\mi{as}$. \\
The request to the resource server was sent by $c$, which means that 
the checks done by $c$ prior to sending the request passed successfully,
i.e., the hash of the access token received in the token response was contained in
the response JWS.
More precisely, it holds true 
\begin{align}
  \notag & \mathsf{extractmsg}(\mi{respJWS})[\str{at\_hash}]  \\
  \shortintertext{(Line~\ref{line:client-set-redirect-ep-request-record} of Alg.~\ref{alg:client-fapi-http-request})}
  \notag & \equiv \mathsf{extractmsg}(S''(c).\str{sessions}[\mi{sid'}][\str{redirectEpRequest}][\str{data}][\str{responseJWS}])[\str{at\_hash}]  \\
  \shortintertext{(Line~\ref{JARM-get-reqJWS-from-redireEpReq-1} and \ref{JARM-c-check-at-hash} of Alg.~\ref{alg:client-fapi-http-response})}
  \notag & \equiv \mathsf{extractmsg}(m''.\str{body}[\str{access\_token}])
\end{align}
for a state $S'$ prior to $S$ and $m''$ being the token response.
The token sent to $\mi{rs}$ in Line~\ref{c-sends-msg-to-resource-rw}
of Algorithm~\ref{alg:client-use-access-token}
is the input argument of the algorithm. 
In case of the Read-Write profile, Algorithm~\ref{alg:client-use-access-token} is only called in 
Line~\ref{mtls-authn-rs-use-access-token} or Line~\ref{line:PHResp-UAT-oautb-call} of
Algorithm~\ref{alg:client-fapi-http-response}. In both cases, the token is taken from
$S'''(c).\str{sessions}[\mi{sid'}][\str{token}]$ %
which is only set in Line~\ref{c-save-access-token-in-session} of 
Algorithm~\ref{alg:client-prepare-use-access-token}.
Here, the token is the input argument of the algorithm, 
which is only called in Line~\ref{line:client-call-use-access-token-2}
of Algorithm~\ref{alg:client-fapi-http-response}.
This is the access token received in the token response %
(i.e., equal to $m''.\str{body}[\str{access\_token}]$).
Therefore, we conclude that the response JWS contains the hash of the
access token that the client sent to the resource server.

As the resource was sent by $\mi{rs}$ due to an access token $t$ associated with
$c$, $\mi{as}$ and $\mi{id}$, it follows per definition that either
$\an{\str{MTLS}, \mi{id}, \mi{clientId}, t, \mi{key}, \str{rw}} \in S(\mi{as}).\str{accessTokens}$ or
$\an{\str{OAUTB}, \mi{id}, \mi{clientId}, t, \mi{key'}, \str{rw}} \in S(\mi{as}).\str{accessTokens}$,
for some values $\mi{key}$ and $\mi{key'}$.

The values of the identity and client identifier are both taken from a record %
$\mi{rec} \in \bar{S}(\mi{as}).\str{records}$ %
with $\mi{rec}[\str{access\_token}] \equiv t$
(Lines~\ref{line:tokenep-checks-clientId},
\ref{as-tep-get-at-from-record},
\ref{as-creates-mtls-token-binding} and
\ref{as-creates-oautb-token-binding}
of Algorithm~\ref{alg:as-fapi}).

This record is created at the authorization endpoint of the 
authorization server, directly before the response JWS is created
in Line~\ref{as-creates-response-jws} of Algorithm~\ref{alg:as-fapi}.

For creating access tokens, the authorization server chooses fresh nonces for each authorization
request (Line~\ref{as-creates-at}). Therefore, the hash of the access token 
included in the JWS is unique for each authorization response %
Consequently, the response JWS $\mi{respJWS}$ (from above), contains the code $\mi{rec}[\str{code}]$
(the code that contained in the same record as the access token $t$). 
However, this record also contains the identity $\mi{id}$.
This is a contradiction to Lemma~\ref{lemma:req-jws-does-not-leak}, as the attacker cannot
know a response JWS that contains a code associated with $\mi{id}$, signed by $\mi{as}$
and with the client id of $c$. %

\QED

\end{description}

\end{proof}

\subsection{Authentication}
\begin{lemma}[Authentication]\label{theorem:authentication}
  For every run $\rho$
  of a FAPI web system $\fapiwebsystem$
  with a network attacker, %
  every configuration $(S, E, N)$  in $\rho$,
  every client $c \in \fAP{C}$ that is honest in $S$,
  every identity $\mi{id} \in \mathsf{ID}$
  with $\mi{as} = \mathsf{governor}(\mi{id})$ %
  being an honest authorization server
  and
  with $b = \mathsf{ownerOfID}(\mi{id})$
  being an honest browser in $S$,
  every service session identified by some nonce
  $n$
  for $\mi{id}$
  at $c$,
  $n$
  is not derivable from the attackers knowledge in $S$
  (i.e., $n \not\in d_{\emptyset}(S(\fAP{attacker}))$).
\end{lemma}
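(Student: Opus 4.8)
The plan is to reduce authentication to the already-established non-leakage lemmas for id tokens, authorization codes, and response JWSs, mirroring the structure of the authorization argument (Lemma~\ref{theorem:authorization}). First I would observe that a service session for $\mi{id}$ at $c$ can exist only if $S(c).\str{sessions}[x][\str{serviceSessionId}] \equiv n$ for some session $x$, and this field is written only in $\mathsf{CHECK\_ID\_TOKEN}$ (Algorithm~\ref{alg:client-check-id-token}), and there only in the branch guarded by $\mi{credentials}[\str{is\_app}] \equiv \bot$ (Line~\ref{c-ssid-check-isApp}). Hence $c$ is necessarily a \emph{web server client}, which by the modelling assumption in Section~\ref{fapi-informal-description-of-the-model} is confidential. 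In the same function $\str{loggedInAs}$ is set to $\an{\mi{issuer}, \mi{data}[\str{sub}]}$ with $\mi{data} = \mathsf{extractmsg}(\mi{id\_token})$, where $\mi{id\_token}$ is the id token carried in the token response and passed in from $\mathsf{PROCESS\_HTTPS\_RESPONSE}$ (Line~\ref{line:client-call-check-id-token-after-code}). For the session to be for $\mi{id}$ we must have $\mi{data}[\str{sub}] \equiv \mi{id}$ and $\mi{issuer} \in \mathsf{dom}(\mi{as})$; the issuer/audience and signature checks (Lines~\ref{c-check-id-token-aud},~\ref{c-CIT-sig}), together with the preconfiguration of $\str{jwksCache}$ and the secrecy of $s^\mi{as}_0.\str{jwk}$, then show that this login id token $t_\mi{id}$ is signed by $\mi{as}$, has audience equal to the client id of $c$, and subject $\mi{id}$. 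By Lemma~\ref{lemma:id-token-web-server-client}, $t_\mi{id}$ does not leak to the attacker.

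Next I would reduce leakage of $n$ to the identity of the sender of the request received at the redirection endpoint. The nonce $n = \nu_4$ is freshly chosen, stored in the client's session, and transmitted only in Line~\ref{line:client-send-set-service-session}, inside a $\str{\_\_Secure}$-prefixed, \texttt{httpOnly} cookie, in the response directed to $\mi{request}[\str{sender}]$ where $\mi{request} \equiv \mi{session}[\str{redirectEpRequest}]$. Since an honest client never outputs $n$ anywhere else, and an honest browser protects such a cookie (it is sent back only over TLS to $c$ and is unreadable by scripts), it suffices to show that $\mi{request}[\str{sender}]$ is not an attacker address; then $n$ is delivered to an honest party and stays secret.

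I would then establish that $\mi{request}[\str{sender}]$ is honest by contradiction, assuming the attacker sent the redirection request, with a case split on the profile and flow of $c$. In every case the attacker must have included in that request a \emph{witness} associated with $\mi{id}$ and $c$ that provably does not leak: (i) in a Read-Only flow the request carries the authorization code, which by Lemma~\ref{lemma:code-token-req-from-redirect-ep} is exactly the code the client redeems, so Lemma~\ref{lemma:r-web-server-client-code} applies; (ii) in the OIDC Hybrid flow the request carries a first id token whose subject, by the equality check of the two id tokens in $\mathsf{PROCESS\_HTTPS\_RESPONSE}$ (Line~\ref{c-compare-sub-of-both-id-tokens}), equals $\mi{id}$, with audience the client id of $c$ and a valid $\mi{as}$ signature, so Lemma~\ref{lemma:auth-resp-does-not-leak-isApp} (equivalently Lemma~\ref{lemma:id-token-web-server-client}) applies; (iii) in the JARM code flow the request carries a response JWS signed by $\mi{as}$ containing that code and the audience of $c$, so Lemma~\ref{lemma:req-jws-does-not-leak} applies. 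In each sub-case the witness is not derivable by the attacker, contradicting the assumption. For the JARM case, and wherever the argument needs the AS-side record with subject $\mi{id}$, I would first use the non-leakage of $t_\mi{id}$ to rule out an attacker-controlled (misconfigured) token endpoint, which forces the honest $\mi{as}$ to be the token responder and hence guarantees the record with subject $\mi{id}$, the client id of $c$, and the redeemed code (the code itself being tied to the redirection request via Lemma~\ref{lemma:code-token-req-from-redirect-ep}); for Read-Only the correct token endpoint instead follows directly from Lemma~\ref{lemma:token-req-to-id-domain}, while the Hybrid flow needs no token-endpoint reasoning because the first id token's subject already equals $\mi{id}$.

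The main obstacle I expect is precisely this bookkeeping: threading, in each profile/flow, the identity $\mi{id}$ from the login id token's subject back to an authorization record at $\mi{as}$ and then to the concrete witness present in the redirection request, while handling the Read-Write subtlety that the token endpoint may be attacker-controlled. The non-leakage lemmas do the heavy lifting, but matching their exact hypotheses—that the id token, code, or response JWS really is associated with $\mi{id}$ and $c$, and that $t_\mi{id}$ was produced at the token endpoint, witnessed by the $\str{at\_hash}$ claim checked in Line~\ref{c-check-at-hash}—is the delicate part.
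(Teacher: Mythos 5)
Your proposal is correct and follows essentially the same route as the paper's proof: reduce to web server clients via the \texttt{is\_app} guard, observe that the service session id is only ever sent back to the sender of the redirection request, and then derive a contradiction from that sender being the attacker by a case split on Read-Only (code non-leakage via Lemmas~\ref{lemma:code-token-req-from-redirect-ep}, \ref{lemma:token-req-to-id-domain} and \ref{lemma:r-web-server-client-code}), Hybrid (first id token non-leakage via the subject/issuer equality checks and Lemma~\ref{lemma:id-token-web-server-client}), and JARM (ruling out a misconfigured token endpoint via the second id token, then Lemma~\ref{lemma:req-jws-does-not-leak}). The only cosmetic difference is that you make explicit the browser-side protection of the \emph{\_\_Secure} service-session cookie, which the paper leaves implicit.
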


\begin{proof}
  Let $\mi{clientId}$ be the client id that has been issued to $c$
  by $\mi{as}$.

  If the client is an app client, %
  then no service session id is sent 
  due to the check done in Line~\ref{c-ssid-check-isApp} of Algorithm~\ref{alg:client-check-id-token}
  ($\mathsf{CHECK\_ID\_TOKEN}$).
  In the following, we look at the case of a web server client. %

  We assume that the service session id is sent to the attacker by
  the client. This happens only in Line~\ref{line:client-send-set-service-session}
  of Algorithm~\ref{alg:client-check-id-token}. 
  
  This message is sent to $\mi{request}[\str{sender}]$, where 
  $\mi{request} \equiv
  S(c).\str{sessions}[\mi{sessionId}][\str{redirectEpRequest}]$,
  for some $\mi{sessionId}$ (Line~\ref{c-CIT-retrieves-request}).
  This value is only set at the redirection endpoint (in Line~\ref{line:client-set-redirect-ep-request-record}
  of Algorithm~\ref{alg:client-fapi-http-request}, $\mathsf{PROCESS\_HTTPS\_REQUEST}$) and is the
  sender of the message $m$ received at the redirection endpoint of the client. In other words, the service session
  id is sent back to the sender of the redirection message, and it follows that $m$ was sent by the attacker.

  Per Definition~\ref{def:service-sessions} (Service Sessions),
  it holds true that
  $S(c).\str{sessions}[\mi{sessionId}][\str{loggedInAs}] \equiv \an{d, \mi{id}}$, 
  where 
  $d \in \mathsf{dom}(\mathsf{governor}(\mi{id}))$.
  As the identity is governed by $\mi{as}$, it follows that $d$ is a domain
  of $\mi{as}$, therefore, the value for $\mi{issuer}$ used in Algorithm~\ref{alg:client-check-id-token} is a domain of $\mi{as}$
  (Line~\ref{c-CIT-retrieve-sub}).

  $\mathsf{CHECK\_ID\_TOKEN}$ is only called in Line~\ref{line:client-call-check-id-token-after-code}
          of Algorithm~\ref{alg:client-fapi-http-response}

  As shown in Lemma~\ref{lemma:code-token-req-from-redirect-ep}, the authorization code
  used for the corresponding token request was included in the request to the redirection endpoint,
  i.e., the attacker knows this code.

  If the client is a read client, then the token request is sent to the authorization server $\mi{as}$
  (as the token endpoint might only be misconfigured in the read-write flow).
  More precisely, it holds true that $\mi{issuer} \in \mathsf{dom}(\mi{as})$, as shown above.
  Therefore, $s^c_0.\str{issuerCache}[\mi{identity}] \in \mathsf{dom}(\mi{as})$ (Line~\ref{c-cit-chooses-issuer}
  of Algorithm~\ref{alg:client-check-id-token}).
  Per definition of $\str{issuerCache}$, it follows that the identity
  $\mi{identity}$ chosen in Line~\ref{c-cit-chooses-identity}
  is from $\mathsf{ID}^\mi{as}$. In other words, it holds true that
  $S(c).\str{sessions}[\mi{sessionId}][\str{identity}] \in \mathsf{ID}^\mi{as}$.
  As shown in Lemma~\ref{lemma:token-req-to-id-domain}, the corresponding token request is
  sent to $d' \in  \mathsf{dom}(\mi{as})$.
  
  This means that the attacker knows a code 
  such that there is a record
  $\mi{rec} \inPairing S(\mi{as}).\str{records}$ with
  $\mi{rec}[\str{code}] \equiv \mi{code}$,
  $\mi{rec}[\str{client\_id}] \equiv \mi{clientId}$ and
  $\mi{rec}[\str{subject}] \equiv \mi{id}$. This contradicts 
  Lemma~\ref{lemma:r-web-server-client-code}.

  If the client is a read-write client, then it is possible 
  that the code leaks due to a wrongly configured token endpoint. 
  Here, we distinguish between the following cases:

  \textbf{Case 1: OpenID Connect Hybrid Flow:}
  We first look at the case that the client uses the OIDC Hybrid flow
  (for the current flow in which we assume that the client sent the attacker
  a service session id).
  Let $\mi{id\_token\_tep}$ 
  be the id token the client receives
  in response to the token request.
  Let $\mi{id\_token\_auth}$ be the id token received at the redirection endpoint.
  This implies that the attacker knows $\mi{id\_token\_auth}$.
  Due to the check done in Line~\ref{c-compare-sub-of-both-id-tokens} of 
  Algorithm~\ref{alg:client-fapi-http-response} 
  ($\mathsf{PROCESS\_HTTPS\_RESPONSE}$), it holds true
  that 
  $\mathsf{extractmsg}(\mi{id\_token\_auth})[\str{sub}] \equiv \mi{id}$.
  When the token request is sent in the read-write flow, the first id token is always checked
  in Algorithm~\ref{alg:client-check-first-id-token} ($\mathsf{CHECK\_FIRST\_ID\_TOKEN}$).
  Therefore, the audience value of $\mi{id\_token\_auth}$ is $\mi{clientId}$ 
  (L.~\ref{c-check-first-idt-aud} of Alg.~\ref{alg:client-check-first-id-token}).
  The signature of the id token is checked with the same key as in Algorithm~\ref{alg:client-check-id-token},
  which means that $\mi{id\_token\_auth}$ is signed with the key 
  $s^\mi{as}_0.\str{jwk}$. This contradicts 
  Lemma~\ref{lemma:id-token-web-server-client}, as the attacker cannot be in possession of such
  an id token.

  \textbf{Case 2: Code Flow with JARM:}
  As noted above, the value of $\mi{issuer}$ in Line~\ref{c-CIT-retrieve-sub} of 
  Algorithm~\ref{alg:client-check-id-token} is a domain of $\mi{as}$ and 
  the id token received in the token response was signed with the key
  $s^\mi{as}_0.\str{jwk}$.
  As the checks done in Algorithm~\ref{alg:client-check-id-token} pass successfully,
  it follows that the $\str{iss}$ value of the id token is a domain of $\mi{as}$
  and the $\mi{aud}$ value is the client id of $c$ %
  (Line~\ref{c-check-id-token-aud}).
  Furthermore, the $\mi{sub}$ value is $\mi{id}$ (as the SSID for this identity is
  sent to the attacker).
  As shown in Lemma~\ref{lemma:id-token-web-server-client}, such an id token does not leak
  to the attacker, and therefore, we conclude that the token response was sent by $\mi{as}$.
  This means that the token request sent by $c$ contains a code $\mi{code}$ such that the 
  authorization server $\mi{as}$ creates the id token with the values depicted above.
  The values for the audience and subject attributes of the id token are taken from a 
  record $\mi{rec} \in S'(\mi{as}).\str{records}$ (for a state $S'$ prior to $S$)
  with $\mi{rec}[\str{code}]$ being the code received in the token request
  (due to Lines~\ref{as-token-ep-get-code-from-body},  \ref{as-token-ep-retrieves-record} 
  \ref{as-token-ep-add-sub-to-id-token} and  \ref{as-token-ep-add-client-id-to-id-token} 
  of Algorithm~\ref{alg:as-fapi}).
  As the issuer value chosen in Algorithm~\ref{alg:client-check-id-token} is a domain of $\mi{as}$,
  it follows that $s_0^\mi{c}.\str{issuerCache}[\mi{session}[\str{identity}]] \in \mathsf{dom}(\mi{as}$
  (Lines~\ref{c-cit-chooses-identity} and \ref{c-cit-chooses-issuer} of 
  Algorithm~\ref{alg:client-check-id-token}). Therefore, the issuer value of the request JWS received
  in the authorization response is also a domain of $\mi{as}$ 
  (Lines~\ref{c-check-respJWS-select-identity-from-session}, ~\ref{c-check-respJWS-select-get-issuer-with-identity}
  of Algorithm~\ref{alg:client-check-response-jws}) and the JWS was created and signed by $\mi{as}$
  (with the same reasoning as above). We also note that the value of the issuer stored in the session
  does not change, as the value of the identity is only set in Line~\ref{c-create-initial-session}
  of Algorithm~\ref{alg:client-fapi-http-request}.
  However, this contradict Lemma~\ref{lemma:req-jws-does-not-leak}, as 
  the attacker sent the message to the redirection endpoint of the client containing
  a request JWS created by $\mi{as}$ and with a code that is associated with an honest
  identity.
  \QED
  
\end{proof}

\subsection{Session Integrity}
In the following, we show that the Read-Write profile of the FAPI,
when used with web server clients and OAUTB, provides session integrity for
both authentication and authorization. We highlight that this holds true
under the assumption that the state value (which is used for preventing
CSRF attacks; Section 10.12 of \cite{rfc6749-oauth2}) leaks to the attacker.

\begin{lemma} [Session Integrity Property for Authentication for Web Server Clients with OAUTB]\label{theorem:si-authn}
  For every run $\rho$
  of a FAPI web system $\fapiwebsystem$
  with a network attacker, 
  every processing step $Q$ in $\rho$ with
  $$Q = (S, E, N) \xrightarrow[]{}
  (S', E', N')$$ (for some $S$,
  $S'$,
  $E$,
  $E'$,
  $N$,
  $N'$),
  every browser $b$
  that is honest in $S$,
  every $\mi{as} \in \fAP{AS}$,
  every identity $u$,
  every web server client $c\in \fAP{C}$
  of type $\str{conf\_OAUTB}$
  that is honest in $S$,
  every nonce $\mi{lsid}$,
  and $\mathsf{loggedIn}_\rho^Q(b, c, u, \mi{as}, \mi{lsid})$
  we have that (1) there exists a processing step $Q'$
  in $\rho$
  (before $Q$)
  such that $\mathsf{started}_\rho^{Q'}(b, c, \mi{lsid})$,
  and (2) if $\mi{as}$
  is honest in $S$,
  then there exists a processing step $Q''$
  in $\rho$
  (before $Q$)
  such that
  $\mathsf{authenticated}_\rho^{Q''}(b, c, u, \mi{as}, \mi{lsid})$.
\end{lemma}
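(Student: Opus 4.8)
The plan is to trace both conclusions backwards from the hypothesis $\mathsf{loggedIn}_\rho^Q(b,c,u,\mi{as},\mi{lsid})$, exploiting two independent protections in the model: the \emph{\_\_Secure-} cookie prefix together with the Origin-header check for point~(1), and OAuth~2.0 Token Binding for point~(2). Only point~(2) actually requires the OAUTB web-server configuration; this matches the observation that the configurations excluded here fall to the authorization-request leak attack precisely at the authentication step.

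For point~(1), I would unfold the definition of $\mathsf{loggedIn}$: there is an HTTPS response to a request of $b$ carrying a $\an{\str{\_\_Secure},\str{serviceSessionId}}$ cookie whose value equals $S(c).\str{sessions}[\mi{lsid}][\str{serviceSessionId}]$. Since service session ids are freshly chosen nonces and are emitted only in Line~\ref{line:client-send-set-service-session} of $\mathsf{CHECK\_ID\_TOKEN}$, in reply to the request stored under $\str{redirectEpRequest}$ of that very session, $b$ must be the sender of the redirection-endpoint request for session $\mi{lsid}$. That request is dispatched to session $\mi{lsid}$ only because it carries the $\an{\str{\_\_Secure},\str{sessionId}}$ cookie with value $\mi{lsid}$ (Line~\ref{c-redir-ep-get-sid-from-cookie}); by the \emph{\_\_Secure-} prefix this cookie cannot be planted by the network attacker and must have been set by an HTTPS response of $c$, namely the response of $\mathsf{START\_LOGIN\_FLOW}$ (Line~\ref{client-setCookie-lsid}) answering a POST to $\str{/startLogin}$ from $b$. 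The Origin check there (Line~\ref{c-startLogin-check-origin-header}) forces that POST to originate from a same-origin document of $c$ inside $b$, i.e.\ from $\mi{script\_client\_index}$ executing Line~\ref{line:script-client-index-start-fapi-session}, which is exactly $\mathsf{started}_\rho^{Q'}(b,c,\mi{lsid})$.

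For point~(2), assume $\mi{as}$ honest. From $\mathsf{loggedIn}$ we get $S(c).\str{sessions}[\mi{lsid}][\str{loggedInAs}] \equiv \an{d,u}$ with $d \in \mathsf{dom}(\mi{as})$, set in $\mathsf{CHECK\_ID\_TOKEN}$ (Line~\ref{c-CIT-retrieve-sub}) from an id token with valid $\mi{as}$-signature, audience $\mi{clientId}$ and subject $u$; by Lemma~\ref{lemma:id-token-web-server-client} this token was genuinely issued by $\mi{as}$, so $\mi{as}$ holds a record with subject $u$ and client id $\mi{clientId}$. The crucial link is Token Binding: the redirection request of $b$ (from point~(1)) had to carry a \emph{provided} token-binding message whose EKM signature the client verifies before storing the browser token-binding id $k_{B,C}$ in $\str{browserTBID}$ (Line~\ref{c-set-browserTBID}), so $b$ possesses the private key of $k_{B,C}$. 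The client later submits $k_{B,C}$ as PKCE verifier in the token request (Line~\ref{line:rw-client-pkce-cv-browserTBID}), using the code from that same redirection request (Lemma~\ref{lemma:code-token-req-from-redirect-ep}), and $\mi{as}$ accepts only if $k_{B,C}$ equals the \emph{referred} token-binding id it associated with the code at the authorization endpoint (Lines~\ref{line:as-auth2-conf-OAUTB-pkce-cc} and~\ref{line:as-checks-conf-OAUTB-pkce}). Hence the $\str{/auth2}$ request that created this code contained a referred token-binding message for $k_{B,C}$ with a valid EKM signature; by an argument analogous to Lemma~\ref{lemma:ekm-signed-by-c-does-not-leak} applied to the browser's token-binding keys (nonces never emitted by an honest browser, and signatures over one connection's EKM not replayable on another), this request was sent by $b$ itself. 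An honest browser issues such a request only by running $\mi{script\_as\_form}$ and selecting an identity (Line~\ref{line:script-as-form-select-id}); since the code's subject is $u$, that identity is $u$, and the state travels through the request JWS created in $\mathsf{START\_LOGIN\_FLOW}$ and is re-checked against the session at the redirection endpoint (Line~\ref{c-redirect-ep-check-state}), so the form-script state equals $S(c).\str{sessions}[\mi{lsid}][\str{state}]$, yielding $\mathsf{authenticated}_\rho^{Q''}(b,c,u,\mi{as},\mi{lsid})$.

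The main obstacle is this last step: rigorously showing that the authorization request binding the code to identity $u$ was emitted by $b$, and that the state seen by $b$'s form-script coincides with $c$'s session state for $\mi{lsid}$. This needs a browser-side analogue of Lemma~\ref{lemma:ekm-signed-by-c-does-not-leak} guaranteeing that $b$'s token-binding private keys never leak and that a referred token-binding message signed over one TLS connection's EKM cannot be reused on a connection controlled by another party, so neither the attacker nor any other process can reproduce $b$'s referred message. All remaining steps are bookkeeping over the cookie, Origin, signature, and PKCE/token-binding checks already present in the model.
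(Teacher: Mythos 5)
Your overall route is the paper's route: part~(1) is traced back exactly as in the paper (service session id $\to$ $\str{redirectEpRequest}$ $\to$ \emph{\_\_Secure-} $\str{sessionId}$ cookie $\to$ $\mathsf{START\_LOGIN\_FLOW}$ $\to$ Origin check $\to$ $\mi{script\_client\_index}$), and the core of part~(2) — id token signed by $\mi{as}$ and non-leaking, hence the token response came from $\mi{as}$; code and browser TB-ID both originating from the redirection request (Lemma~\ref{lemma:code-token-req-from-redirect-ep}); the AS's PKCE check equating the submitted $k_{B,C}$ with the referred TB-ID stored in the code's record at $\str{/auth2}$; possession of the private key forcing the $\str{/auth2}$ request to come from $b$, hence $b$ authenticated $u$ — is exactly the paper's argument. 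The browser-side non-leakage of TB private keys that you flag as needing a separate lemma is in fact only asserted in the paper as well, so that is not a real divergence.

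There is, however, one genuine gap: your justification of the second half of the $\mathsf{authenticated}$ predicate, namely that the $\mi{scriptstate}$ of the $\mi{script\_as\_form}$ execution satisfies $\mi{scriptstate}[\str{state}] \equiv S(c).\str{sessions}[\mi{lsid}][\str{state}]$. You argue this from the state check at the redirection endpoint plus ``the state travels through the request JWS,'' but that only shows the state \emph{in the redirection request} matches the session; it does not yet tie that redirection request to the \emph{same} authorization response (and hence the same $\str{/auth2}$ request and the same $\mi{scriptstate}$) as the code whose record carries $u$ and $k_{B,C}$. A priori, an attacker script running in $b$ could splice a code bound to $u$ and $k_{B,C}$ (obtained from a different authorization request carrying a different state) into a self-made request to $\str{/redirect\_ep}$ carrying the state of session $\mi{lsid}$. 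The paper closes this with a final step you omit: the redirection request must contain the first id token (Hybrid flow) resp.\ the response JWS (JARM), which by Lemma~\ref{lemma:id-token-web-server-client} resp.\ Lemma~\ref{lemma:req-jws-does-not-leak} does not leak to the attacker, so the request can only be a genuine Location redirect from $\mi{as}$ — forcing code and state to originate from one and the same $\mi{as}$ response and hence from one $\mi{script\_as\_form}$ execution. This also explains the case split (Hybrid vs.\ JARM) that the paper carries out and your proposal does not; without this provenance argument the state-matching step does not go through.
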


\begin{proof}
  ~

  \begin{description}
    \item[\textbf{Part (1):}]\strut

      This part of the proof is analogous to the proof given in 
      Lemma 10 of \cite{FettKuestersSchmitz-TR-OIDC-2017}. For completeness, we 
      give the full proof for the FAPI model.
      
      Per definition of  
      $\mathsf{loggedIn}_\rho^Q(b, c, u, \mi{as}, \mi{lsid})$ 
      (Definition~\ref{def:user-logged-in}), 
      it holds true that the client $c$ sent the service session id
      to the browser $b$. This happens only in Line~\ref{line:client-send-set-service-session}
      of Algorithm~\ref{alg:client-check-id-token} ($\mathsf{CHECK\_ID\_TOKEN}$),
      where the service session id is sent to $S(c).\str{sessions}[\mi{lsid}][\str{redirectEpRequest}][\str{sender}]$
      (Lines~\ref{c-CIT-retrieves-request} and \ref{line:client-send-set-service-session}
      of Algorithm~\ref{alg:client-check-id-token}). 
      
      This value is only set in Line~\ref{line:client-set-redirect-ep-request-record}
      of Algorithm~\ref{alg:client-fapi-http-request} (at the redirection endpoint of the
      client), where it is set to sender of the redirection request. In other words,
      the browser $b$ sent the request to the redirection endpoint.
      
      This request contains the nonce $\mi{lsid}$ as a session id (in a cookie;
      Line~\ref{c-redir-ep-get-sid-from-cookie} of Algorithm~\ref{alg:client-fapi-http-request}).
      
      As this cookie contains the secure prefix,
      it follows that it was set by the client (i.e., it was not set by the network attacker
      e.g., over a previous HTTP connection).
      
      This means that the client previously sent a response to $b$ in Line~\ref{line:client-send-authorization-redir}
      of Algorithm~\ref{alg:client-fapi-start-login-flow}, as this is the only algorithm in which the client
      sets a cookie containing a login session id (Line~\ref{client-setCookie-lsid}).
      This response is sent to 
      $S''(c).\str{sessions}[\mi{lsid}][\str{startRequest}][\str{sender}]$
      (for a state $S''$ prior to $S$ within the same run)
      (Line~\ref{c-redir-ep-retrieve-startReq}), which is only set in Line~\ref{c-create-initial-session}
      of Algorithm~\ref{alg:client-fapi-http-request}, i.e., the browser $b$ sent a POST request to the path
      $\str{/startLogin}$. This request contains a origin header with an origin of the client
      (checked in Line~\ref{c-startLogin-check-origin-header} of
      Algorithm~\ref{alg:client-fapi-http-request} and due to Lemma~\ref{lemma:https-server-correct-hosts}).

      From the two scripts that could send such a request ($\mi{script\_c\_get\_fragment}$ and
      $\mi{script\_client\_index}$), only $\mi{script\_client\_index}$ (Algorithm~\ref{alg:script-client-index})
      sends such a request. Therefore, it holds true that 
      $\mathsf{started}_\rho^{Q'}(b, c, \mi{lsid})$ (for a processing step $Q'$ that happens before
      $Q$).

    \item[\textbf{Part (2):}]\strut

    \textbf{Login with ID Token from Token Response:}
    From the definition of 
    $\mathsf{loggedIn}_\rho^Q(b, c, u, \mi{as}, \mi{lsid})$
    (Definition~\ref{def:user-logged-in}), 
    it follows that the client $c$ sent a response $m$ to $b$ containing
    the header 
    $\an{\str{Set\mhyphen{}Cookie},
    [\an{\str{\_\_Secure}, \str{serviceSessionId}}{:}\an{\mi{ssid},\top,\top,\top}]}$ for
    some nonce $\mi{ssid}$,
    and it also holds true that
    $S(c).\str{sessions}[\mi{lsid}][\str{serviceSessionId}] \equiv \mi{ssid}$ and
    $S(c).\str{sessions}[\mi{lsid}][\str{loggedInAs}] \equiv \an{d,u}$
    (with $d \in \mathsf{dom}(\mi{as})$).
    Let $m_\text{redir}^\text{req}$ be the request corresponding to the response
    $m$.
    
    The cookie contains the secure-prefix, which means that it was set in a connection
    to the client, i.e., it was set by the client.
    An honest web server client sends such a response only in 
    Line~\ref{line:client-send-set-service-session} of Algorithm~\ref{alg:client-check-id-token}
    ($\mathsf{CHECK\_ID\_TOKEN}$).
    This algorithm is only called in Line~\ref{line:client-call-check-id-token-after-code}
    of Algorithm~\ref{alg:client-fapi-http-response} ($\mathsf{PROCESS\_HTTPS\_RESPONSE}$),
    and therefore, the id token that is used in Algorithm~\ref{alg:client-check-id-token}
    is received in a response with the reference value $\str{TOKEN}$. Let 
    $m_\text{token}^\text{resp}$ denote this token response.
    
    \textbf{Token Response was sent by as:}
    Due to $S(c).\str{sessions}[\mi{lsid}][\str{loggedInAs}] \equiv \an{d,u}$,
    it follows that the id token received in the token response was signed by $\mi{as}$:
    The value of $\mi{issuer}$ chosen in Line~\ref{c-cit-chooses-issuer} of 
    Algorithm~\ref{alg:client-check-id-token} is $d$ (as this value is used
    in $S(c).\str{sessions}[\mi{lsid}][\str{loggedInAs}]$ in Line~\ref{c-CIT-retrieve-sub}).
    Therefore, the public key used for checking the signature of the
    id token is 
    $\mi{jwks}
    \overset{\text{L.~\ref{c-check-id-token-choose-key}, Alg.~\ref{alg:client-check-id-token}}}{\equiv} S(c).\str{jwksCache}[d]  
    \equiv s_0^c.\str{jwksCache}[d]  
    \overset{\text{Def.}}{\equiv} \mathsf{pub}(s^\mi{as}_0.\str{jwk})$, 
    and the id token was signed with the corresponding private key (as checked in Line~\ref{c-CIT-sig};
    we note that the value of $\str{jwksCache}$ is never changed by the client, and therefore, is the same as
    in the initial state).
    This private key is only known to $\mi{as}$ and never sent to any other process,
    which means that the id token was created by $\mi{as}$. 

    As shown in Lemma~\ref{lemma:id-token-conf-from-tokenep}, %
    such an id token
    does not leak to any process other than $c$ and $\mi{as}$, %
    which means that the token response was sent
    by $\mi{as}$. More precisely, all pre-conditions of the lemma are fulfilled, as 
    the authorization server is honest, 
    the client is a web server client (and therefore, it is confidential), the signature of the id token is valid
    and it has the client id of $c$ as its audience value (which is checked in 
    Line~\ref{c-check-id-token-aud} of Algorithm~\ref{alg:client-check-id-token}).
    Furthermore, it contains the attribute $\str{at\_hash}$ (checked in Line~\ref{c-check-at-hash}
    of Algorithm~\ref{alg:client-fapi-http-response}, when receiving the token response), which means
    that it was created at the token endpoint of the authorization server.

    \textbf{Code included in Token Request was provided by b:}
    As the client sends the service session id to $b$, it follows that 
    $S(c).\str{sessions}[\mi{lsid}][\str{redirectEpRequest}][\str{sender}]$ 
    is an IP address of $b$ (Lines~\ref{c-CID-retrieve-session}, \ref{c-CIT-retrieves-request} 
    and \ref{line:client-send-set-service-session} of Algorithm~\ref{alg:client-check-id-token}).
  
    Let $m_\text{token}^\text{req}$ be the token request corresponding to the token response $m_\text{token}^\text{resp}$.
    As shown in Lemma~\ref{lemma:code-token-req-from-redirect-ep}, it holds true that
    $m_\text{token}^\text{req}.\str{body}[\str{code}]  \equiv
    S''(c).\str{sessions}[\mi{lsid}][\str{redirectEpRequest}][\str{data}][\str{code}]$
    (for a state $S''$ prior to $S$).

    This value is only set at the redirection endpoint of the client (Line~\ref{line:client-set-redirect-ep-request-record} 
    of Algorithm~\ref{alg:client-fapi-http-request}),
    which means that the code used for the token request was sent by $b$. (We note that
    as the state value is invalidated at the session of the client (Line~\ref{c-invalidates-state} 
    of Algorithm~\ref{alg:client-fapi-http-request}), for each session, only one request to the redirection endpoint is
    accepted.)
  
    \textbf{Together with the code, b also included a TB-ID:}
    Along with the code, the browser sent a provided Token Binding message
    with the ID $S(c).\str{sessions}[\mi{lsid}][\str{browserTBID}]$ %
    (Line~\ref{c-set-browserTBID} of Algorithm~\ref{alg:client-fapi-http-request}).
    Here, this value is taken from a provided Token Binding message, which the honest browser
    only sets in Line~\ref{browser-TB-prov-msg} of Algorithm~\ref{alg:processresponse}.
    The private key used by the browser is only used for the client (i.e., for each domain, the browser uses
    a different key).  (Here, we again highlight that the response to the redirection endpoint was
    sent by the browser, which is honest).

    We conclude that the values for the code and the PKCE verifier (which
    is the Token Binding ID used by the browser for the client; Line~\ref{line:rw-client-pkce-cv-browserTBID}
    of Algorithm~\ref{alg:client-token-request}) included in the token request
    were both provided by the browser $b$. More precisely, the token request $m_\text{token}^\text{req}$ (with the reference
    value $\str{TOKEN}$) is only sent in Line~\ref{line:jws-oautb-cont-send} 
    of Algorithm~\ref{alg:client-token-request} ($\mathsf{SEND\_TOKEN\_REQUEST}$), as the client is
    a web server client of type $\str{conf\_OAUTB}$ (only read-write clients can be of this type).
   
    \textbf{The identity u was authenticated by b:}
    As noted above, the token response was sent by $\mi{as}$, which means that all checks done by $\mi{as}$
    passed successfully. Therefore, it holds true that 
    $m_\text{token}^\text{req}.\str{body}[\str{pkce\_verifier}] \equiv \mi{record}[\str{pkce\_challenge}]$
    (Line~\ref{line:as-checks-conf-OAUTB-pkce} of Algorithm~\ref{alg:as-fapi}),
    with $\mi{record} \in S'''(\mi{as}).\mi{records}$ such that
    $\mi{record}[\str{code}] \equiv m_\text{token}^\text{req}.\str{body}[\str{code}]$ (Lines~\ref{as-token-ep-get-code-from-body}
    and \ref{as-token-ep-retrieves-record}
    of Algorithm~\ref{alg:as-fapi}) (for a state $S'''$ prior to $S$).  %
    As noted above, $m_\text{token}^\text{req}.\str{body}[\str{pkce\_verifier}] \equiv 
    S(c).\str{sessions}[\mi{lsid}][\str{browserTBID}]$ 
    (Line~\ref{line:rw-client-pkce-cv-browserTBID} of Algorithm~\ref{alg:client-token-request}),
    which is a Token Binding ID used by $b$.
  
    The value of $\mi{record}[\str{pkce\_challenge}]$ is only set in 
    Line~\ref{line:as-auth2-conf-OAUTB-pkce-cc} of Algorithm~\ref{alg:as-fapi}
    (as the client is a web server client of type $\str{conf\_OAUTB}$; we note that this client id is 
    included in the record), %
    where it is set to the value
    $\mi{TB\_referred\_pub} \equiv 
    \bar{m}.\str{headers}[\str{Sec\mhyphen{}Token\mhyphen{}Binding}][\str{ref}][\str{id}]$
    (Lines~\ref{as-auth2-get-ref-msg} and
    \ref{as-auth2-get-ref-id}
    of  Algorithm~\ref{alg:as-fapi}), with $\bar{m}$ being the message which the AS receives at the authorization endpoint.
  
    To sum up the previous paragraphs, it holds true that the message
    $\bar{m}$ contains a valid Token Binding message (i.e., with a valid signature,
    as this is always checked by the AS) %
    with the Token Binding ID
    $\bar{m}.\str{headers}[\str{Sec\mhyphen{}Token\mhyphen{}Binding}][\str{ref}][\str{id}]
    \equiv \mi{record}[\str{pkce\_challenge}]
    \equiv m_\text{token}^\text{req}.\str{body}[\str{pkce\_verifier}] 
    \equiv S(c).\str{sessions}[\mi{lsid}][\str{browserTBID}]
    $, which is a Token Binding ID of the browser $b$.
    As only $b$ knows the corresponding private key (and does not reveal this key
    to another process), we conclude that 
    $\bar{m}$ was sent by $b$.

    When sending the token response, $\mi{as}$ does not only check the
    PKCE verifier, but also retrieves the identity that is then included in the id token
    from $\mi{record}$ (Line~\ref{as-token-ep-add-sub-to-id-token} of Algorithm~\ref{alg:as-fapi}). 

    The identity is added to the record in the $\str{/auth2}$ path and taken from
    $\bar{m}$ (Lines~\ref{line:as-auth2-id1} and \ref{line:as-auth2-id2} of Algorithm~\ref{alg:as-fapi}),
    which means that the identity $u$ was authenticated by $b$, i.e., $b = \mathsf{ownerOfID}(u)$.

    \textbf{The redirection request was sent from as (to b):}

    \emph{Case 1: OIDC Hybrid Flow:} Let $\mi{idt}_1$ be the id token contained in $m_\text{redir}^\text{req}$.
    (As the flow used for the session with the session identifier $\mi{lsid}$ 
    is the OIDC Hybrid flow, such an id token is always required to be included in the request).
    When receiving the token response, the client checks if the $\str{sub}$ and $\str{iss}$
    attributes have the same values in both id tokens, and only continues the flow if the values are the same
    (Lines~\ref{c-compare-sub-of-both-id-tokens} and \ref{c-compare-iss-of-both-id-tokens}
    of Algorithm~\ref{alg:client-fapi-http-response}).

    As we know that the second id token (which is used for logging in the end-user) contains the
    subject $u$ and an issuer being a domain of $\mi{as}$, it follows that $\mi{idt}_1$ contains the same values.
    Furthermore, the id token is signed by $\mi{as}$ (with the same reasoning as above)
    and contains the client identifier $\mi{clientId}$ as its audience value.

    As shown in Lemma~\ref{lemma:id-token-web-server-client}, such an id token does not leak to the attacker
    (we note that $b$ is honest and the client is a web server, i.e., all conditions of the lemma
    are fulfilled).

    Analogous to the proof of Lemma 10 of ~\cite{FettKuestersSchmitz-TR-OIDC-2017},
    it follows that the request $m_\text{redir}^\text{req}$ was caused by a redirection from $\mi{as}$.
    Instead of the state value, the id token $\mi{idt}_1$ is a secret value that does not leak to 
    the attacker. In short, the request $m_\text{redir}^\text{req}$ 
    was not caused by the attacker, as the id token does not leak.
    The redirect was also not caused by the client $c$, as $c$ does not send messages containing
    an id token.

    As the request $m_\text{redir}^\text{req}$ contains a state value (which is checked in 
    Line~\ref{c-redirect-ep-check-state} of Algorithm~\ref{alg:client-fapi-http-request}),
    it follows that 
    the request was not created by the scripts $\mi{script\_client\_index}$
    or $\mi{script\_as\_form}$, as these scripts do not send messages containing a state parameter.

    The script $\mi{script\_c\_get\_fragment}$ sends only data that is contained in the fragment 
    part of its own URI, and only to itself. This means that the script was sent from the client to
    the browser, which happens only in Line~\ref{line:c-send-script-get-fragment}
    of Algorithm~\ref{alg:client-fapi-http-request}, i.e., at the redirection endpoint.

    Altogether, we conclude that there was a location redirect which was sent 
    from $\mi{as}$ to $b$ containing the id token.

    \emph{Case 2: Authorization Code Flow with JARM:}
    Let the flow used in the session with the session identifier $\mi{lsid}$ 
    be a Code Flow with JARM.
    As shown above, it holds true that $b = \mathsf{ownerOfID}(u)$. 
    The code that the client uses at the token endpoint was sent by the browser, 
    and as an id token with the identity of $u$ is returned by the authorization server,
    it follows that there is a record that contains the code provided by the browser
    and the identity $u$ (in $S(\mi{as}).\str{records}$). 
    Furthermore, the code is contained in a response JWS. %
    The issuer value of the JWS is a domain of $\mi{as}$ (as the id token was signed by $\mi{as}$,
    it follows that for this particular session, the client is using a domain of $\mi{as}$
    as the expected issuer (in $\mi{issuerCache}[\mi{session}][\str{identity}]$), to which the
    client also compares the issuer of the response JWS.
    Therefore, its audience value is $\mi{clientId}$. Now, all conditions of Lemma~\ref{lemma:req-jws-does-not-leak}
    are fulfilled, which means that such a response JWS does not leak to the attacker. %
    The remaining argumentation is the same as in the first case.

  \QED

  \end{description}

\end{proof}

\begin{lemma} [Session Integrity Property for Authorization for Web Server Clients with OAUTB]\label{theorem:si-authz}
  For every run $\rho$
  of a FAPI web system $\fapiwebsystem$
  with a network attacker, 
  every processing step $Q$ in $\rho$ with
  $$Q = (S, E, N) \xrightarrow[]{}
  (S', E', N')$$  (for some $S$, $S'$, $E$, $E'$, $N$, $N'$),
  every browser $b$
  that is honest in $S$,
  every $\mi{as} \in \fAP{AS}$,
  every identity $u$, %
  every web server client $c\in \fAP{C}$
  of type $\str{conf\_OAUTB}$
  that is honest in $S$,
  every $\mi{rs} \in \fAP{RS}$ that is honest in $S$,
  every nonce $r$, 
  every nonce $\mi{lsid}$,
  we have that if
  $\mathsf{accessesResource_\rho^Q(b, r, u, c, \mi{rs}, \mi{lsid})}$
  and $s_0^\mi{rs}.\str{authServ} \in \mathsf{dom}(\mi{as})$, then
  (1)
  there exists a processing step $Q'$
  in $\rho$
  (before $Q$)
  such that $\mathsf{started}_\rho^{Q'}(b, c, \mi{lsid})$,
  and (2) if $\mi{as}$
  is honest in $S$,
  then there exists a processing step $Q''$
  in $\rho$
  (before $Q$)
  such that
  $\mathsf{authenticated}_\rho^{Q''}(b, c, u, \mi{as}, \mi{lsid})$.
\end{lemma}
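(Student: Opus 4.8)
The plan is to follow the structure of the proof of Lemma~\ref{theorem:si-authn} (session integrity for authentication), since the two statements differ only in their trigger: instead of the client issuing a service session for identity $u$, here the honest browser $b$ accesses, through its session $\mi{lsid}$, a resource nonce $r \in s^\mi{rs}_0.\str{rNonce}[u] \cup s^\mi{rs}_0.\str{wNonce}[u]$ stored at the honest resource server $\mi{rs}$ (Definition~\ref{def:browser-accesses-resource}). I first note that a web server client of type $\str{conf\_OAUTB}$ is necessarily a read-write client using either the OIDC Hybrid flow or the Authorization Code flow with JARM (the read-only profile does not employ OAUTB), so the OAUTB-specific mechanisms of Section~\ref{fapi:OAUTB} are in force.

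For part~(1) I would reuse the argument of part~(1) of Lemma~\ref{theorem:si-authn} essentially verbatim. Access to $r$ through session $\mi{lsid}$ requires $b$ to hold a cookie $\an{\an{\str{\_\_Secure}, \str{sessionid}}, \an{\mi{lsid}, \dots}}$ for a domain of $c$. Because of the \_\_Secure- prefix this cookie can only have been set by $c$ over a TLS connection, which happens exclusively in $\mathsf{START\_LOGIN\_FLOW}$ as a response to a $\str{/startLogin}$ POST carrying $c$'s origin header; the only script issuing such a request is $\mi{script\_client\_index}$, yielding a processing step $Q'$ with $\mathsf{started}_\rho^{Q'}(b, c, \mi{lsid})$.

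For part~(2) the new work is to connect the resource access to a record at $\mi{as}$ whose subject is $u$. Since the client accesses the resource via $\str{/resource\mhyphen{}rw}$ and $\mi{rs}$ returns $r \in \str{rNonce}[u] \cup \str{wNonce}[u]$, the RS must have validated an access token $t$ (stored in $S(c).\str{sessions}[\mi{lsid}][\str{token}]$) as associated with $u$; by Lemma~\ref{lemma:oautb-bound-at} the request reaching $\mi{rs}$ came from $c$, and the corresponding entry $\an{\str{OAUTB}, u, \mi{clientId}, t, \mi{key}, \str{rw}}$ lies in $S(\mi{as}).\str{accessTokens}$. This entry is created only at the token endpoint from a record $\mi{rec}$ with $\mi{rec}[\str{subject}] \equiv u$ and $\mi{rec}[\str{pkce\_challenge}]$ equal to the referred Token Binding ID supplied in the corresponding authorization request. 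From here the argument coincides with part~(2) of Lemma~\ref{theorem:si-authn}: I would establish that the token response delivering $t$ was sent by $\mi{as}$ (using that the second id token, resp.\ the response JWS, does not leak, by Lemmas~\ref{lemma:id-token-web-server-client} and~\ref{lemma:req-jws-does-not-leak}), that the authorization code was supplied to $c$ at its redirection endpoint (Lemma~\ref{lemma:code-token-req-from-redirect-ep}), and that the verified $\str{pkce\_verifier}$ in the token request equals $S(c).\str{sessions}[\mi{lsid}][\str{browserTBID}]$, i.e.\ the referred TB ID recorded for the code. Both flow variants (Hybrid and JARM) would be treated as separate cases, exactly as in the authentication proof.

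The crux, and the step I expect to be the main obstacle, is closing the token-binding loop to show that the authorization request that selected identity $u$ was genuinely sent by $b$. The recorded PKCE challenge is a referred TB ID whose matching referred TB message carried a valid signature over the EKM of the browser--AS connection, while $\str{browserTBID}$ was obtained from a provided TB message proving possession of the same private key at $c$'s redirection endpoint. Since the token binding private keys of an honest browser never leave the browser (by the browser additions in Appendix~\ref{appendix-additions-to-the-wim}), this key pair must belong to $b$, so both the authorization request and the redirection request were produced by $b$; consequently $b$ executed $\mi{script\_as\_form}$ selecting $u$, giving $\mathsf{authenticated}_\rho^{Q''}(b, c, u, \mi{as}, \mi{lsid})$. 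Making this key-ownership argument precise---ruling out that the attacker replayed a leaked authorization request on a fresh connection with a different EKM---is the technical heart of the proof and is exactly what distinguishes web server OAUTB clients from the configurations vulnerable to the authorization request leak attack of Appendix~\ref{app:authorizationrequestleakattack}.
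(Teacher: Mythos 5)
Your overall plan matches the paper's proof: part~(1) is lifted verbatim from the authentication lemma via the \emph{\_\_Secure-} cookie argument, and part~(2) proceeds exactly as the paper does --- trace $r$ back through $\str{/resource\mhyphen{}rw}$ to the entry $\an{\str{OAUTB}, u, \mi{clientId}, t, k, \str{rw}}$ in $S(\mi{as}).\str{accessTokens}$, establish that the token response came from $\mi{as}$, pull the code back to the redirection endpoint via Lemma~\ref{lemma:code-token-req-from-redirect-ep}, and close the loop by matching the referred Token Binding ID recorded in $\mi{rec}[\str{pkce\_challenge}]$ against $\str{browserTBID}$ and invoking non-leakage of the browser's private Token Binding key. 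Your identification of that last step as the crux is exactly where the paper puts the weight as well.

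There is, however, one step in your plan that would fail as stated: in the JARM case you claim the token response was sent by $\mi{as}$ ``using that \dots the response JWS does not leak.'' Non-leakage of the response JWS (Lemma~\ref{lemma:req-jws-does-not-leak}) only authenticates the \emph{redirection} request; it says nothing about the token endpoint, which in the read-write profile may be misconfigured to point at the attacker, who could then return a phished access token. In the Hybrid case this is closed because the \emph{second} id token carries $\str{at\_hash}$ and, by Lemma~\ref{lemma:id-token-conf-from-tokenep}, cannot be forged or replayed; but in the JARM case the only $\str{at\_hash}$ check is against the response JWS received earlier at the redirection endpoint. The paper therefore needs a separate contradiction argument here: if the attacker's token endpoint returned a token whose hash matches the signed response JWS, then $\mi{as}$ must have issued that token for $c$ in some flow, the corresponding code must have been redeemed by $c$ (client authentication, Lemma~\ref{lemma:client-authentication}), hence $c$ received the \emph{same} response JWS --- with the same state --- in two distinct redirection requests, contradicting the state-invalidation check at the redirection endpoint (Line~\ref{c-check-if-state-equiv-bot} of Algorithm~\ref{alg:client-fapi-http-request}). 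Your sketch, which defers to ``exactly as in the authentication proof,'' misses this, because the authentication lemma's JARM case can instead rely on the non-leaking id token used for login and never has to confront a forged token response.
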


\begin{proof}
  ~

  \begin{description}
    \item[\textbf{Part (1):}]\strut

      Per definition of $\mathsf{accessesResource}$
      (Definition~\ref{def:browser-accesses-resource}), it holds true that 
      the browser $b$
      has a cookie with the session identifier $\mi{lsid}$ for the origin of the client $c$.
      As this cookie has the secure prefix set, it follows that the cookie was set by $c$, which
      happens only in Line~\ref{line:client-send-authorization-redir}
      of Algorithm~\ref{alg:client-fapi-start-login-flow}. The 
      remaining reasoning is the same as in the proof of Lemma~\ref{theorem:si-authn}.

    \item[\textbf{Part (2) (using the OIDC Hybrid Flow)}]\strut

      Here, we also first prove the 
      property for the OIDC Hybrid Flow and 
      then show the parts that differ when using the Authorization Code Flow
      in conjunction with JARM.

      \textbf{Resource was sent from rs:}

      Per Definition of $\mathsf{accessesResource}$, it holds true that
      $c$ saved the resource access nonce $r$ in 
      $S(c).\str{sessions}[\mi{lsid}][\str{resource}]$.
      An honest client stores a resource access nonce only in 
      Line~\ref{client-saves-resource} of Algorithm~\ref{alg:client-fapi-http-response}.
      Here, $r$ was contained in response to a request $m_\text{resource}^\text{req}$
      with the reference value $\str{RESOURCE\_USAGE}$ (Line~\ref{client-PHResp-case-resource-usage}
      of Algorithm~\ref{alg:client-fapi-http-response}).
      The client sends requests with this reference value only in 
      Line~\ref{c-sends-msg-to-resource-rw} of Algorithm~\ref{alg:client-use-access-token}
      (this holds true as $c$ is a read-write client).

      $m_\text{resource}^\text{req}$ is sent to the ${/resource\mhyphen{}rw}$ path of
      $S(c).\str{sessions}[\mi{lsid}][\str{RS}]$ (Lines~\ref{c:UAT-retrieve-session},
      \ref{c:UAT-retrieve-rsHost} and \ref{c:UAT-create-rw-message} 
      of Algorithm~\ref{alg:client-use-access-token}). 

      Per Definition of $\mathsf{accessesResource}$, this is a domain of
      $\mi{rs}$, which means that $r$ was sent to the client by $\mi{rs}$.
      More precisely, the response of the resource server was sent in 
      Line~\ref{rs-send-wNonce} of Algorithm~\ref{alg:rs-oidc} (within
      the ${resource\mhyphen{}rw}$ path).

      \textbf{Second ID Token was created by as:} 
      
      As the checks done in Lines~\ref{rs:check-if-at-iss-in-body}
      and \ref{rs-check-at-iss} of Algorithm~\ref{alg:rs-oidc} passed successfully,
      it follows that $m_\text{resource}^\text{req}.\str{body}[\str{at\_iss}]
      \in \mathsf{dom}(\mi{as})$ (per assumption, it holds
      true that $s_0^\mi{rs}.\str{authServ} \in \mathsf{dom}(\mi{as})$).

      This means that the id token contained in the token response was signed by
      $\mi{as}$. More precisely, $S''(c).\str{sessions}[\mi{lsid}][\str{idt2\_iss}]
      \in \mathsf{dom}(\mi{as})$ (Line~\ref{c-sets-at-iss} of 
      Algorithm~\ref{alg:client-use-access-token}; here, we are only considering
      the OIDC Hybrid Flow), for some state $S''$ prior to $S$. 

      This value is only set in Line~\ref{c-set-idt2-iss} of 
      Algorithm~\ref{alg:client-fapi-http-response}, where it is set to 
      $\mathsf{extractmsg}(m_\text{token}^\text{resp}.\str{body}[\str{id\_token}])[\str{iss}]$,
      where $m_\text{token}^\text{resp}$ is the token response received in 
      Algorithm~\ref{alg:client-fapi-http-response}.
      Therefore, the $\str{iss}$ value of this id token is a domain of $\mi{as}$.
      Due to Lines~\ref{c-checksig-second-idt} and \ref{c-token-resp-check-issuer}
      of Algorithm~\ref{alg:client-fapi-http-response}, it follows that (with
      $\mi{dom_\mi{as}} \in \mathsf{dom}(\mi{as})$)
      $$\mathsf{checksig}(m_\text{token}^\text{resp}.\str{body}[\str{id\_token}], 
      s_0^\mi{as}.\str{jwksCache}[\mi{dom_\mi{as}}]) \equiv \top$$
      $$\overset{\text{Def.}}{\Rightarrow} \mathsf{checksig}(m_\text{token}^\text{resp}.\str{body}[\str{id\_token}], 
      \mathsf{pub}(s^\mi{as}_0.\str{jwk})) \equiv \top$$

      Therefore, we conclude that the id token was signed by $\mi{as}$. (The values
      of $\str{jwksCache}$ are never changed by the client, which means that they are the same
      as in the initial state).

      \textbf{Token Response was sent by as: }

      This id token contains a value for the attribute $\str{at\_hash}$ (Line~\ref{c-check-at-hash}
      of Algorithm~\ref{alg:client-fapi-http-response}). As shown in Lemma~\ref{lemma:id-token-conf-from-tokenep},
      such an id token does not leak to the attacker (the id token contains the client id of $c$ due to
      the check done in Line~\ref{c-hybrid-token-resp-check-aud} of Algorithm~\ref{alg:client-fapi-http-response}).
      Thus, we conclude that the token response $m_\text{token}^\text{resp}$ was sent by $\mi{as}$ (clients do not send any messages containing id tokens).

      \textbf{Access Token used by c:}

      The access token $t$ used by the client in the request to $rs$ was contained in the token response.
      More precisely, the message $m_\text{resource}^\text{req}$ was sent by the client in Algorithm~\ref{alg:client-use-access-token}
      (as noted above). In Line~\ref{c:UAT-set-authorization-header} of this algorithm, the client includes the
      access token in the header of the message. The access token used here is an input parameter of
      of the function ($\mathsf{USE\_ACCESS\_TOKEN}$). A read-write client calls this function either
      in Line~\ref{mtls-authn-rs-use-access-token}  %
      or Line~\ref{line:PHResp-UAT-oautb-call} 
      of Algorithm~\ref{alg:client-fapi-http-response}. In both cases, the access token
      is taken from $\overline{S}.\str{sessions}[\mi{lsid}][\str{token}]$ (Line~\ref{PHResp-mtls-rs-retrieve-token} or 
      Line~\ref{PHResp-oautb-rs-retrieve-token}), for some state $\overline{S}$.
      This value is only set in Line~\ref{c-save-access-token-in-session} of 
      Algorithm~\ref{alg:client-prepare-use-access-token}, which is only called in 
      Line~\ref{line:client-call-use-access-token-2} of Algorithm~\ref{alg:client-fapi-http-response},
      where the access token is taken from the body of $m_\text{token}^\text{resp}$.
      Therefore, the access token was sent to the client by $\mi{as}$. 

      \textbf{Sequence in $\str{accessTokens}$ (state of as):}

      We note that the token response was sent by $\mi{as}$ to $c$, which means that the corresponding token request was sent
      by $c$ and contains the client identifier $\mi{clientId}$ (as $c$ is honest). 

      Before sending the token response in Line~\ref{as-send-token-ep-readwrite} of
      Algorithm~\ref{alg:as-fapi}, the authorization server adds a sequence
      to $\bar{S}'(\mi{as}).\str{accessTokens}$ (for some state $\bar{S}'$ prior to $S$).

      Let $\mi{ATSeq}$ be the sequence added to the state directly before sending the token response
      $m_\text{token}^\text{resp}$. As the client identifier received in the token request belongs
      to a client using OAUTB, this sequence is equal to %
      $\an{\str{OAUTB}, u', \mi{clientId}, t, k, \str{rw}}$, 
      for some identity $u'$ %
      and key $k$. %
      The access token $t$ is the same that is included in the token response.
      In the following, we will show that the $u' \equiv u$. %

      We first note that for each access token, there is at most one sequence in $\str{accessTokens}$ (in the state
      of the authorization server) containing this access token. This holds true because the authorization server
      creates fresh authorization codes and access tokens in the $\str{/auth2}$ path (Lines~\ref{as-creates-code}
      and \ref{as-creates-at} of Algorithm~\ref{alg:as-fapi})
      for each authorization request received at $\str{/auth2}$. These values are stored in a record in $\str{records}$.
      When the authorization server receives a request to the token endpoint, it chooses the record depending on the
      authorization code contained in the token request and invalidates the authorization code contained in the record
      before creating the sequence for $\str{accessTokens}$ (Line~\ref{as-invalidates-code}).
      Therefore, the access token can only be added once to such a sequence. 

      As the resource server sent a response in Line~\ref{rs-send-wNonce}
      of Algorithm~\ref{alg:rs-oidc} to $c$, it follows that all (applicable) checks passed
      successfully. As the request $m_\text{resource}^\text{req}$ was sent by $c$, it does not
      contain the key $\str{MTLS\_AuthN}$ in its body (Line~\ref{client-uat-if-client-type-mtls}
      of Algorithm~\ref{alg:client-use-access-token}). Therefore, %
      it holds true that
      $\mathsf{check\_oautb\_AT}(u, t, k, S(\mi{rs}).\str{authServ}) \equiv \True$ (Line~\ref{rs-check-oautb-binding}
      of Algorithm~\ref{alg:rs-oidc}).
      (This holds true as the resource server provides access to a resource of the identity $u$). 
      From the definition of
      $\mathsf{check\_oautb\_AT}$, it follows that this identity is
      contained in the sequence $\mi{ATSeq}$, and therefore, $u' \equiv u$. 

      \textbf{The identity u was authenticated by b:}

      The browser $b$
      has a cookie with the session identifier $\mi{lsid}$ for the origin of the client $c$,
      which means that the request to the redirection endpoint of the client was sent by 
      $b$ (as this value is only known to $b$) %
      With the same reasoning as in Lemma~\ref{theorem:si-authn}, it follows that the identity
      $u$ was authenticated by $b$. We briefly summarize the argumentation:
      As the request to the redirection endpoint of $c$ %
      was sent by $b$, it follows that the code used in the token request was provided by $b$.
      Furthermore, $b$ proved possession of a Token Binding ID.
      As the token response was sent by $as$, it follows that the process that authenticated the identity $u$
      proved possession of the Token Binding ID used by the
      browser when sending the request to the redirection endpoint.
      As the private key of this Token Binding ID is only known to $b$, it follows
      that $b$ authenticated $u$.

      \textbf{ID Token contained in the Redirection Request :}

      As we are looking at the Hybrid Flow, the authorization response is required
      to contain an id token. 

      Let $\mi{idt_1}$ be this id token. When receiving this id token in the redirection request,
      the client stores it in the session (using the key $\str{redirectEpRequest}$;
      Line~\ref{line:client-set-redirect-ep-request-record}
      of Algorithm~\ref{alg:client-fapi-http-request}).

      In Lines~\ref{c-compare-sub-of-both-id-tokens} and \ref{c-compare-iss-of-both-id-tokens}
      of Algorithm~\ref{alg:client-fapi-http-response}
      (i.e., after receiving the token response), the client only continues the flow
      if the subject and issuer values of both id tokens have the same value.
      As shown above, the issuer of the second id token is a domain of $\mi{as}$,
      and the 
      signature of $\mi{idt_1}$ is checked in Line~\ref{c-check-first-idt-sig}
      of Algorithm~\ref{alg:client-check-first-id-token}. With the same reasoning as
      above, it can be seen that this id token is signed by $\mi{as}$ (as the issuer
      value is a domain of $\mi{as}$). 
      Furthermore, we note that this function is called in 
      Line~\ref{line:c-call-check-id-token-immediately} of 
      Algorithm~\ref{alg:client-fapi-http-request} (in the Hybrid flow, this function is always called
      when receiving the redirection request).
      The client identifier contained in the id token is $\mi{clientId}$ (checked
      in Line~\ref{c-check-first-idt-aud}
      of Algorithm~\ref{alg:client-check-first-id-token}, and as the issuer is a domain of $\mi{as}$).

      As showed above, the token response (containing both the id token and the access token)
      was sent by $\mi{as}$. Therefore, the id token contained in the token response
      contains the identity associated with the access token, which is $u$. 
      This means that $\mi{idt_1}$ also has the subject $u$. %
      As shown in Lemma~\ref{lemma:id-token-web-server-client}, such an id token does not
      leak to the attacker. 

      \textbf{Redirection Request was sent to from as (to b):}

      Analogous to Lemma~\ref{theorem:si-authn},
      it follows that the redirection request was sent to the browser by $\mi{as}$, 
      as the id token contained in the redirection request does not leak to the attacker.

    \item[\textbf{Part (2) (using the Code Flow with JARM)}]\strut

      Here, we focus on the parts that are different from the proof for the OIDC Hybrid flow.

      \textbf{Resource was sent from rs:} This part is the same as in the Hybrid Flow.

      \textbf{Response JWS was created by as:} 

      As above, it holds true that 
      $m_\text{resource}^\text{req}.\str{body}[\str{at\_iss}]
      \in \mathsf{dom}(\mi{as})$ (as this is checked at the resource server).

      From this, it follows that
      $S''(c).\str{sessions}[\mi{lsid}][\str{JARM\_iss}] \in \mathsf{dom}(\mi{as})$,
      as the value of $\str{at\_iss}$ of the request to the resource server is set from this
      value
      (Line~\ref{c-sets-at-iss-JARM} of  Algorithm~\ref{alg:client-use-access-token};
      here, we are considering the Code Flow with JARM).

      This value is only set in Line~\ref{c-set-JARM-iss} of 
      Algorithm~\ref{alg:client-check-response-jws} ($\mathsf{CHECK\_RESPONSE\_JWS}$),
      where it is set to 
       $ \mi{data}[\str{iss}]
        \equiv \mathsf{extractmsg}(\mi{respJWS})[\str{iss}] $
        (Line~\ref{c-check-respJWS-retrieve-data-from-resp-jws} of Alg.~\ref{alg:client-check-response-jws}),
      with $\mi{respJWS}$ being the input argument of 
      Algorithm~\ref{alg:client-check-response-jws}.

      Here, the first argument of the function (i.e., the session identifier) is $\mi{lsid}$,
      as the issuer of the response JWS is saved in the session identified by $\mi{lsid}$
      (again Line~\ref{c-sets-at-iss-JARM} of  Algorithm~\ref{alg:client-use-access-token}).

      This algorithm is only called in Line~\ref{line:c-call-check-response-jws}
      of Algorithm~\ref{alg:client-fapi-http-request} (at the redirection endpoint),
      where the $\mi{respJWS}$ is set to $\mi{data}{[\str{responseJWS}]}$.
      Let 
      $m_\text{redirect}^\text{req}$ be the request which the client received at the redirection 
      endpoint (i.e., at the path $\str{/redirect\_ep}$ and for the session identifier $\mi{lsid}$,
      i.e., the cookie $\str{sessionId}$ contained in the request has the value $\mi{lsid}$).

      As we are looking at the Code Flow using JARM, the value of $\mi{data}$
      is equal to $m_\text{redirect}^\text{req}.\str{parameters}$ (Line~\ref{c-redirect-ep-data-from-param}),
      hence, we conclude that the $\str{iss}$ value of the JWS $m_\text{redirect}^\text{req}.\str{parameters}[\str{responseJWS}]$ 
      is a domain of $\mi{as}$. 

      During the checks that is done in $\mathsf{CHECK\_RESPONSE\_JWS}$, 
      the client also checks the signature of the JWS (Line~\ref{c-check-response-jws-check-signature} of 
      Algorithm~\ref{alg:client-check-response-jws}). With the same reasoning as in the case of the
      Hybrid Flow, it follows that the response JWS contained in 
      $m_\text{redirect}^\text{req}$ was signed by $\mi{as}$. 

      \textbf{Token Response was sent by as:} 
      As the client used the access token $t$ it received in the token response at
      the resource server, it follows that the check of the hash of the access token 
      done in Line~\ref{JARM-c-check-at-hash}
      of Algorithm~\ref{alg:client-fapi-http-response}
      passed successfully. Furthermore, the $\str{aud}$ value of $\mi{respJWS}$ 
      is the client identifier $\mi{clientId}$ (checked in Line~\ref{c-check-response-jws-aud} of
      Algorithm~\ref{alg:client-check-response-jws}).

      In the following, we assume that the token response $m_\text{resp}^\text{token}$ was sent by the attacker.

      As the hash of the access token was included in a response JWS signed by 
      $\mi{as}$, and as this JWS contains the client identifier $\mi{clientId}$, we conclude
      that $\mi{as}$ created this access token for $c$, i.e., there is a record
      $\mi{rec}$ within the state of the AS with $\mi{rec}[\str{access\_tokens}] \equiv t$
      and $\mi{rec}[\str{aud}] \equiv \mi{clientId}$. %

      As we assume that this access token was sent from the attacker, it follows that it previously
      leaked to the attacker. In order to leak, the access token must first be sent from the AS.

      Let $Q''$ be the processing step in which $\mi{as}$ sent the access token $t$. 
      As the access token is contained in a record which also contains the client identifier
      $\mi{clientId}$, and as this is the client identifier of a web server client (which means that the
      client is confidential), it follows that the corresponding code $\mi{code}$ 
      was sent by $c$ (due to Lemma~\ref{lemma:client-authentication}). 

      However, this means that the client received the code $\mi{code}$ at the redirection endpoint
      (for this particular flow in which the access token leaks). As $c$ is a read-write client, it only
      accepts signed authorization responses, i.e, the response is either %
      a response JWS or an id token.

      We note that this response (either response JWS or id token) was signed by $\mi{as}$:
      The client sent (in this particular flow) the code (i.e., the token request) to $\mi{as}$.
      The token request is required to contain the redirection uri at which the redirection response
      was received (checked at the AS in Line~\ref{as-check-redir-uri-at-token-endpoint} of Algorithm~\ref{alg:as-fapi}
      (in the FAPI flows, this is  always contained in the token request, as the client is required to 
      previously include this in the authorization request). The check of $\mi{as}$ passed, which means that this
      URI is a redirection URI used by the client for $\mi{as}$. The FAPI requires these sets of URIs to be disjunct
      (i.e., for each AS, the client has a different set of redirection URIs). Therefore, this URI belongs to the set of
      URIs the client uses for $\mi{as}$. This means that (for this particular session), the issuer (i.e. the authorization server)
      the client expects
      is a domain of $\mi{as}$.  %

      From this, we conclude that the response JWS or id token which the client
      received in this (previous) flow at the redirection endpoint
      was signed by $\mi{as}$ (as both id tokens and response JWS always
      contain an issuer, which is the same that the client stored at the corresponding session. The signature
      is checked with a key of this issuer).

      The request to the redirection endpoint contained a response JWS,
      as the AS $\mi{as}$ included the access token corresponding to the code in
      a response JWS. (Otherwise, it would mean that the client would have created an additional
      id token with the attribute $\mi{c\_hash}$ being the hash of the code
      (due to the check done at the client in Line~\ref{c-check-first-idt-check-c-hash}
      of Algorithm~\ref{alg:client-check-first-id-token}.
      However, the AS creates either a response JWS or an id token).

      The response JWS the client received (in the main flow, i.e., the one 
      in which the client receives the resource)
      was created by $\mi{as}$ and contains the authorization code $\mi{code}$.
      However, the AS creates exactly one response JWS with a particular code
      (as codes are nonces that are freshly chosen). 

      This implies that in both flows that we are looking at
      (i.e., the original flow in which the honest $\mi{as}$ sends out the access token,
      and the flow for which we assume that the token endpoint is controlled by the attacker),
      the client received the same response JWS, and in particular, the same state value
      (which is contained in the response JWS).

      However, the state is unique to each session, as it is chosen by the client as a fresh nonce
      (Line~\ref{c-slf-chooses-state-value-here} of Algorithm~\ref{alg:client-fapi-start-login-flow}).

      This is a contradiction to the assumption that the check done in Line 
      Line~\ref{JARM-c-check-at-hash}
      of Algorithm~\ref{alg:client-fapi-http-response}
      was executed successfully, as this would mean that the client previously accepted the check
      done in Line~\ref{c-check-if-state-equiv-bot} of Algorithm~\ref{alg:client-fapi-http-request} (where it is checked if
      the state was already invalidated) for the response JWS received in the second flow.

      Therefore, we conclude that the token endpoint is controlled by the honest $\mi{as}$,
      i.e., the token response was sent by $\mi{as}$.

      \textbf{Access Token used by c:}
      As above, the access token that the client uses for the request
      to the resource server was contained in 
      the token response, i.e., the access token was sent by $\mi{as}$.

      \textbf{Sequence in $\str{accessTokens}$ (state of as):}
      As in the case of the Hybrid flow,
      the state of the authorization server $\mi{as}$ contains
      the sequence
      $\an{\str{OAUTB}, u, \mi{clientId}, t, k, \str{rw}}$.

      \textbf{The identity $u$ was authenticated by $b$:}
      As above, the identity $u$ was authenticated by $b$
      (due to the check of the Token Binding ID used by the browser for the client,
      which happens at the AS).

      \textbf{Response JWS contained in the Redirection Request:}
      As the identity $u$ is governed by an honest browser, and due to 
      Lemma~\ref{lemma:req-jws-does-not-leak}, it follows that the response JWS does not leak to the attacker.

      The rest of the proof is the same as above, as now, the response JWS is 
      a value that does not leak to the attacker (instead of an id token).

  \QED

  \end{description}

\end{proof}

\subsection{Proof of Theorem}
Theorem~\ref{thm:theorem-1} follows immediately from Lemmas~\ref{theorem:authorization},~\ref{theorem:authentication},~\ref{theorem:si-authn}, 
and~\ref{theorem:si-authz}.

\fi

\end{document}

